\theoremstyle{plain}
\newtheorem{theorem}{Theorem}[section]
\newtheorem{lemma}[theorem]{Lemma}
\newtheorem{corollary}[theorem]{Corollary}
\theoremstyle{definition}
\newtheorem{definition}[theorem]{Definition}
\newtheorem{assumption}[theorem]{Assumption}
\theoremstyle{remark}
\newtheorem{remark}[theorem]{Remark}
\icmltitlerunning{Learning Low-dimensional Latent Dynamics from High-dimensional Observations}
\newcommand{\kk}{_{k\in[K]}}
\newcommand{\tk}{_{t=0}^{T_k}}
\DeclareMathAlphabet{\mathscr}{U}{dutchcal}{m}{n}
\SetMathAlphabet{\mathscr}{bold}{U}{dutchcal}{b}{n}
\newacro{hdsysid}[\texttt{HDSYSID}]{High-dimensional System Identification Problem}
\newacro{metasysid}[\texttt{MetaSYSID}]{High-dimensional Meta System Identification Problem}
\newacro{oracle}[Sys-Oracle]{System Identification Oracle}
\newacro{subspaceID}[Col-Approx]{Column Space Approximation}
\newacro{col-algo}[Col-SYSID]{Column Space Projection SYSID}
\newacro{meta-algo}[Meta-Col-SYSID]{Meta Column Space Projection SYSID}
\begin{document}

\twocolumn[
\icmltitle{Learning Low-dimensional Latent Dynamics from High-dimensional Observations: Non-asymptotics and Lower Bounds}



\icmlsetsymbol{equal}{*}

\begin{icmlauthorlist}
\icmlauthor{Yuyang Zhang}{harvard}
\icmlauthor{Shahriar Talebi}{harvard}
\icmlauthor{Na Li}{harvard}
\end{icmlauthorlist}

\icmlaffiliation{harvard}{SEAS, Harvard University, Cambridge, USA}

\icmlcorrespondingauthor{Yuyang Zhang}{yuyangzhang@g.harvard.edu}

\icmlkeywords{Machine Learning, ICML}

\vskip 0.3in
]



\printAffiliationsAndNotice{} 

\begin{abstract}
    In this paper, we focus on learning a linear time-invariant (LTI) model with low-dimensional latent variables but high-dimensional observations. We provide an algorithm that recovers the high-dimensional features, i.e. column space of the observer, embeds the data into low dimensions and learns the low-dimensional model parameters. Our algorithm enjoys a sample complexity guarantee of order $\tilde{\calO}(n/\epsilon^2)$, where $n$ is the observation dimension. We further establish a fundamental lower bound indicating this complexity bound is optimal up to logarithmic factors and dimension-independent constants.
    We show that this inevitable linear factor of $n$ is due to the learning error of the observer's column space in the presence of high-dimensional noises.
    Extending our results, we consider a meta-learning problem inspired by various real-world applications, where the observer column space can be collectively learned from datasets of multiple LTI systems. An end-to-end algorithm is then proposed, facilitating learning LTI systems from a meta-dataset which breaks the sample complexity lower bound in certain scenarios.
\end{abstract}

\section{Introduction}
Analyzing high-dimensional time series data is essential for numerous real-world applications in finance \cite{mudassir_time-series_2020}, economics \cite{maliar_merging_2015, masini_machine_2023} and biology \cite{churchland_neural_2012, hajnal_continuous_2023, xia_stable_2021, gallego_long-term_2020, stringer_high-dimensional_2019}. 
High-dimensional time series observations often find succinct representation through a set of low-dimensional latent variables. 
In this paper, the focus is to learn the low-dimensional dynamics capturing the very essence of the time series, which becomes useful in various down-stream tasks like prediction and inference~\cite{churchland_neural_2012,mudassir_time-series_2020, pandarinath_inferring_2018}.

Popular techniques for such analysis range from linear models, such as linear time-invariant (LTI) systems~\cite{sikander_linear_2015, bui-thanh_model_2008, hespanha_linear_2018} and auto-regressive models~\cite{dong_extracting_2022, poloni_closed-form_2019, qin_latent_2022}, to more complex nonlinear models, exemplified by recurrent neural networks~\cite{yu_analysis_2021,sussillo_opening_2013,medsker_recurrent_1999}. Herein, we focus on LTI models as they are interpretable and require much less computational power in many real-world applications~\cite{gallego_long-term_2020, churchland_neural_2012, dong_extracting_2022}.
%

Specifically, we learn partially observed LTI systems in the following form
\begin{equation}\label{eq:sing_sys}
    x_{t+1} = Ax_t + Bu_t + w_t, \quad y_t = Cx_t + \eta_t,
\end{equation}
where $y_t\in\bbR^n$ are  \emph{high-dimensional} observations, $x_t\in\bbR^r$ are low-dimensional latent variables with $n\gg r$, $u_t\in\bbR^m$ are inputs and $w_t, \eta_t$ are process and observation noises, respectively. Matrices $A$ and $B$ are the system parameters for the dynamics and inputs, respectively, and $C$ can be seen as the observer, mapping latent variables to \emph{noisy observations}. 
Recently, a contemporary perspective has revitalized well-established system identification (SYSID) algorithms for similar setups \cite{shirani_faradonbeh_finite_2018,sarkar_near_2019,oymak_non-asymptotic_2019,sarkar_finite_2021,zheng_non-asymptotic_2021}. Without any structural assumptions, they have established finite time convergence results with a rate of $\tilde{\calO}(\sqrt{n\cdot\poly\b{r,m}/T})$, leading to a sample complexity of $\tilde{\calO}(n\cdot\poly\b{r,m}/\epsilon^2)$ for an $\epsilon$-well approximation of model parameters.
Such results are not satisfying for LTI systems with a large observation dimension $n$.
One alternative way to directly applying the SYSID algorithm is to initially project the high-dimensional data to lower dimensions before conducting further analysis \cite{saul_introduction_2001,churchland_neural_2012}, either for efficiency or for interpretability. Moreover, these projections may often consist of ``meta information'' that is shared across different tasks. This is consistent with many scenarios in meta-learning settings where we deal with more complex datasets collected from similar observers but possibly with different latent dynamics. 
The above facts motivate us to investigate the provable efficiency of this type of projection method and its generalization ability. 

\textbf{Contributions.} 
In this paper, we study learning LTI systems with \emph{high-dimensional noisy observations}. We adopt a two-stage procedure to first extract high-dimensional features, i.e. column space of the observer. We subsequently perform standard SYSID on the resulting low-dim data and recover the original model parameters. We establish sample complexity for this bifold procedure that scales as 
$\tilde{\calO}([n + \poly\b{r,m}]/\epsilon^2)$ which further reduces to only $\tilde{\calO}(\poly\b{r,m}/\epsilon^2)$ in the absence of observation noises. This then naturally leads to the following question:

\noindent\emph{Can this linear dependence on observer dimension $n$ be possibly improved by some carefully designed algorithm?}
 
We show that the linear dependence on $n$ is unavoidable in the presence of observation noises --- as shown in our lower bound result \Cref{thm:1lowerbound}, requiring at least $\calO(n/\epsilon^2)$ samples from our learning problem.
Unfortunately, observation noises always exist for real-world applications, as the sensors always involve uncertainties. The lower bound also indicates that the sample complexity of the proposed algorithm is optimal up to some logarithmic factors and dimension-independent constants.

Additionally, our pragmatic solution--namely, the separation of learning the high-dimensional features (i.e. the observer column space) and learning the rest of the model parameters-- 
can be extended to meta-learning setups by collecting metadata from a set of dynamical systems that share the same observer model. 
By adopting a ``leave-one-out'' strategy for statistical consistency, we show that as long as the metadata is \emph{collectively} large in order of $\tilde{\calO}(n/\epsilon^2)$, we can successfully obtain an $\epsilon$-well approximation of all the systems parameters involved in the meta-data (\Cref{thm:2meta}).  We finally note that such metadata with the same observer is common in real-world applications \cite{hajnal_continuous_2023, xia_stable_2021, marks_stimulus-dependent_2021, gallego_long-term_2020}. One example is in neuroscience where neuron activities are measured over a long period by the same set of electrodes or imaging devices. Although the subject may perform different tasks or demonstrate behavior corresponding to different latent dynamics, the observer model (i.e. the electrodes or the imaging device) remains the same, resulting in the metadata considered here.

\textit{To summarize our contributions:} Firstly, we provide \acf{col-algo} (\Cref{alg:1single} in \Cref{sec:3single}) that learns LTI systems through high-dimensional observations with complexity $\tilde{\calO}(n/\epsilon^2)$ (\Cref{thm:1single} in \Cref{sec:3single}). As compared with existing algorithms, ours reduces the multiplicative polynomial dependency on the latent dimension $r$ and the input dimension $m$. We then establish a sample complexity lower bound for this problem, indicating the optimality of the above result (\Cref{thm:1lowerbound} in \Cref{sec:3lower}). 
Our algorithmic idea is further extended to a meta-learning setting, where we have access to datasets from multiple similar systems.
We provide an end-to-end framework for handling this meta-dataset and learning all included systems (\Cref{alg:2meta} in \Cref{sec:4meta}). With the help of the meta-dataset, we break the sample complexity lower bound in certain scenarios (\Cref{thm:2meta} in \Cref{sec:4meta}).

Before proceeding, we set the following notations throughout the paper.




\textit{Notations:} Without further explanation, let $\delta$ be any probability in $(0,1/e)$. 
We use $\poly\b{\cdot}$ to denote polynomial and logarithmic dependences. We use $\lesssim$ and $\gtrsim$ to hide absolute constants. We use $\tilde{\calO}(\cdot)$ to hide all problem-related constants, absolute constants and logarithmic factors. Let $[N]$ be the set of integers $\{1, \dots, N\}$. Let $\calM_{[N]}$ denote the set $\{\calM_n\}_{n\in[N]}$ and let $\calM_{-n}$ denote the set $\{\calM_{n'}\}_{n'\in[N]\setminus \{n\}}$ when the full set $[N]$ is clear from the context. For any matrix $M\in\bbR^{m\times n}$, let $\sigma_1(M) \geq \sigma_2(M) \geq \cdots \geq \sigma_{\max\{m,n\}}(M)$ be its singular values, and let $\sigma_{\min}(M)$ be the \textit{minimum non-zero} singular value. Let $\norm{M}$ denote its operator norm, and $M\t$ denote its transpose. Let $\col(M)$ be the column space of $M$ and let $\Phi_M$ denote any orthonormal matrix whose columns form a basis of $\col(M)$. For any orthonormal matrix $\Phi$, we use $\Phi^\perp$ to denote the matrix such that $\begin{bmatrix}
    \Phi & \Phi^\perp
\end{bmatrix}$ is unitary, and we refer to $\Phi^\perp$ as the orthogonal complement of $\Phi$. 
For any \textit{positive semi-definite matrix} $\Sigma\in\bbR^{n\times n}$, we slightly abuse the notation and let $\calN(0,\Sigma)$ denote the distribution of $\Sigma^{1/2}x$, where $x\sim\calN(0,I_n)$ follows the standard Gaussian distribution.


\subsection{Other Related Works}
\paragraph{Other Linear Models with Low-Dimensional Dynamics. } 
Recently, there are lines of research on autoregressive models with low-dimensional representations \cite{qin_latent_2022, dong_extracting_2022, poloni_closed-form_2019}. Indeed, we may also be able to reconstruct low-dimensional dynamics from the model with the learned representations. However, sample complexity results are not currently available for such models. Another related line of research is on dynamic factor models \cite{hallin_factor_2023, anderson_linear_2022, breitung_dynamic_2006}. \citet{anderson_linear_2022} only provides asymptotic convergence guarantees. Finite time convergence is indeed developed in \citet{hallin_factor_2023}. However, their results can not imply dependence on the system dimensions and are only about the convergence rates. Moreover, certain assumptions might not be easily verified in our setting. 

\paragraph{General SYSID Algorithms. } The most relevant sample complexity results are provided in recent SYSID literature for learning partially observed LTI systems  \cite{zheng_non-asymptotic_2021,lee_improved_2022,djehiche_efficient_2022,sarkar_finite_2021,oymak_non-asymptotic_2019}. The last four papers focus on systems with stable latent dynamics, while the first paper extends to unstable latent dynamics. Unfortunately, all algorithms have suboptimal sample complexities. 
There also exists a line of work providing sample complexity for learning fully observed systems \cite{sarkar_near_2019, shirani_faradonbeh_finite_2018}, whose analysis techniques is applicable to unstable dynamics. 
This paper focuses on stable latent dynamics, and we leave for future directions to extend the results to unstable latent dynamics. 

\paragraph{Existing SYSID Lower Bounds. } There exists literature providing lower bounds for learning partially observed LTI systems. However, their results are not directly comparable to ours in our setting. The most relevant papers are \citet{sun_finite_2023, fattahi_learning_2021}. The former provides a lower bound depending on $r/n$, which is tailored for systems with $r \gg n$. The latter provides a lower bound proportional to $1/\epsilon^4$. The lower bound follows a slower rate than ours, though being logarithmic in the system dimensions. \citet{sun_finite_2022} provides lower bounds defined for noise-free settings. \citet{mao_finite-time_2021} provides lower bounds for systems with $C = I$. \citet{bakshi_new_2023} provides lower bounds for learning almost uncontrollable and unobservable systems. Other related works include \citet{djehiche_non_2021, jedra_finite-time_2023, simchowitz_learning_2018}, which develop lower bounds for fully observed systems. 

\paragraph{General Subspace Learning Algorithms.} To learn the observer column space, several existing literature may be related. \citet{vaswani_robust_2018,balzano_streaming_2018, candes_robust_2011,  blanchard_statistical_2007} focus on analysis for PCA subspace learning. However, they assume that the dataset is i.i.d. sampled, which is not the case for dynamic systems. Dynamic factor model techniques are also related \cite{hallin_factor_2023}. As discussed previously, the results can not imply the dependence on the system dimensions. Other ideas include \citet{deng_invariant_2020}, whose sample complexity remains an open question.

\vspace{-0.2cm}
\section{Preliminaries and The Problem Setup}\label{sec:2prelim}

Consider linear dynamical systems in the form of \Cref{eq:sing_sys} with $x_0 = 0$ (for simplicity), process noises $w_t\overset{i.i.d}{\sim}\calN(0, \Sigma_w)$ and isotropic observation noises $\eta_t\overset{i.i.d}{\sim} \calN(0, \sigma_\eta^2 I)$. Here $\Sigma_w$ and $\sigma_\eta^2 I$ are \textit{positive semi-definite matrices.}
We denote such systems by $\calM = (r,n,m,A,B,C,\Sigma_w,\sigma_\eta^2 I)$. 
In standard system identification setup, 
given the input-output trajectory data from the system,
the objective is to learn system parameters up to the well-known similarity transformation class, i.e., to learn tuple $(\htA, \htB, \htC)$ such that for some invertible matrix $S$,\footnote{Here, $S$ represents a change of basis for the latent variables resulting in a modification of the system representations from $(A,B,C)$ to $(S^{-1}A S, S^{-1} B, C S)$--yet describing exactly the same input-output behavior. See Section 4.4 in \cite{hespanha_linear_2018}.}
\begin{equation*}\begin{split}
    \htA = S^{-1}A S, \quad \htB = S^{-1}B, \quad \htC = C S.
\end{split}\end{equation*}
 Furthermore, in order to ensure this learning problem is well-posed, we assume $(A,B)$ is controllable and $(A,C)$ is observable. This is often referred to as a \textit{minimal realization} (a state-space description of minimal size that explains the given input-output data \cite{schutter_minimal_2000}). The corresponding system $\calM$ is then called a minimal system. 

\paragraph{\acf{hdsysid}:} Consider learning minimal system $\calM = (r,n,m,A,B,C,\Sigma_w,\sigma_\eta^2I)$ with \textit{high-dimensional observations} and \textit{low-dimensional latent states and inputs}; namely, $r,m \ll n$. Here, covariances $\Sigma_w$, $\sigma_\eta^2 I$ are \textit{positive semi-definite covariances}. For $k=1, 2$, we choose input trajectories $\calU_k = \{u_{k,t}\}_{t=0}^{T_k-1}$ and get corresponding observations $\calY_k = \{y_{k,t}\}_{t=0}^{T_k}$. 
Here, every input $u_{k,t}\overset{\iid}{\sim}\calN(0, \Sigma_u)$ is sampled independently with \textit{positive definite covariance $\Sigma_u\succ 0$}.\footnote{Our approach can be easily extended to a weaker assumption on $\Sigma_u$: $(A,(\Sigma_w+B\Sigma_uB\t)^{1/2})$ being controllable. Here we assume $\Sigma_u\succ 0$ to keep the results clean and interpretable.}
With the two datasets $\calD_1 = \calY_1\cup\calU_1$ and $\calD_2 = \calY_2\cup\calU_2$, our objective is to output approximate system matrices $\b{\htA, \htB, \htC}$ such that with high probability,
\begin{equation*}\begin{split}\label{eq:objective}
    {}& \max\left\{\norm{S^{-1}AS - \htA}, \norm{S^{-1}B - \htB}, \norm{CS - \htC}\right\}
    \leq \epsilon
\end{split}\end{equation*}
for some invertible matrix $S$.

In \ac{hdsysid} problem, we specifically consider systems with $r,m\ll n$ and isotropic observation noises. 
We select inputs with positive definite covariance $\Sigma_u$ so that the system is fully excited and therefore learnable. 
The problem setup can be extended to the case where we have access to $K \geq 2$ independent data trajectories. Our proposed approach and theoretical analysis in the following sections can also be readily adapted. 





\section{SYSID with Column Space Projection}
\label{sec:3single}
Before diving into the details, we first state the necessary assumption and definition for \ac{hdsysid}.
\begin{assumption}\label{assmp:sys1}
    There exist constants $\psi_A\geq 1$ and $\rho_A \in (0,1)$, which are independent of system dimensions, such that
    \begin{equation*}\begin{split}
     \norm{A^i} \leq \psi_A\b{\rho_A}^{i-1}, \quad\forall i\in\bbN.
    \end{split}\end{equation*}
\end{assumption}


The existence of $\psi_A$ and $\rho_A$ is a standard assumption \cite{oymak_non-asymptotic_2019} and is guaranteed by Lemma 6 in \citet{talebi_data-driven_2023} as long as the spectral radius of $A$ is smaller than $1$. However, the constants may be dependent on system dimensions in the literature.
Here we assume $\psi_A$ and $\rho_A$ to be independent of system dimensions.

%


\begin{definition}[\ac{oracle}]\label{def:idoracle}
   Consider system $\calM=(r,n,m,A,B,C,\Sigma_w,\Sigma_\eta)$ with a minimal realization $(A,B,C)$ and arbitrary dimensions $(r,m,n)$. Assume $\calM$ satisfies Assumption \ref{assmp:sys1}. Given an input-output trajectory $\calD=\calY\cup\calU$, where $\calU = \{u_t\}_{t=0}^{T-1}$ are the inputs with $u_{t}\overset{\iid}{\sim}\calN(0, \Sigma_u)$ and $\calY = \{y_t\}_{t=0}^{T-1}$ are the corresponding outputs, the \ac{oracle} outputs approximation $\b{\htA, \htB, \htC}$ such that with probability at least $1-\delta$,
    \begin{equation*}\begin{split}
        {}& \max\left\{\norm{S^{-1}AS - \htA}, \norm{S^{-1}B - \htB}, \norm{CS - \htC}\right\}\\
        {}&\quad \leq \frac{\poly\b{r,n,m}}{\sqrt{T}}\cdot \kappa_2,
    \end{split}\end{equation*}
    for any $T \geq \kappa_1\cdot \poly\b{r,n,m}$ and some invertible matrix $S$. Here $\kappa_1=\kappa_1\b{\calM,\calU,\delta}, \kappa_2=\kappa_2\b{\calM,\calU,\delta}$ are problem-related constants independent of system dimensions modulo logarithmic factors.
\end{definition}

The $\ac{oracle}$ defined above is applicable to all minimal systems with arbitrary dimensions $(r,n,m)$, which does not necessarily lie in the regime where $n \gg r, m$. 
It represents standard system identification algorithms, including the celebrated Ho-Kalman algorithm \cite{sarkar_finite_2021, oymak_non-asymptotic_2019}. The constraints of such existing algorithms are captured by Assumption \ref{assmp:sys1}. These algorithms also require the inputs and noises to be sampled independently and fully exciting, which is captured by the definition of system $\calM$ and inputs $\calU$ in \ac{oracle}. In \Cref{sec:idoracle}, we will provide an example of such an oracle for the completeness of this paper. 

\subsection{The Proposed Algorithm}
\begin{algorithm}[ht]
    \caption{\acf{col-algo}}
    \label{alg:1single}
 \begin{algorithmic}[1]
 \small
 \setstretch{0.5}
    \STATE {\bfseries Inputs:} Data $\calY_1, \calD_2$;
    Subroutines Col-Approx, \ac{oracle}; \\
    
    \STATE Approximate the observer column space:
    \[\wh{\Phi}_C \gets \text{Col-Approx}\b{\calY_1}\]
    \STATE Project dataset $\calD_2 = \{y_{2,t}\}_{t=0}^{T_2}\cup\{u_{2,t}\}_{t=0}^{T_2-1}$ onto the column space:
    \[\tilde{\calD}_2 \gets \big\{\wh{\Phi}_C\t y_{2,t}\big\}_{t=0}^{T_2} \cup \left\{u_{2,t}\right\}_{t=0}^{T_2-1}\]
    \STATE Identify low-dimensional parameters:
    \[\htA, \htB, \tilC \gets \ac{oracle}\b{\tilde{\calD}_2}\]
    \STATE Recover the high-dimensional observer:
    \[\htC \gets \wh{\Phi}_C\tilC\]

    \STATE {\bfseries Outputs:} $\b{\htA, \htB, \htC}$
 \end{algorithmic}
\end{algorithm}

Our proposed \Cref{alg:1single} consists of two components. Firstly in Line 2, using the first data trajectory, we approximate the column space of the high-dimensional observer $C$ and get matrix $\wh{\Phi}_C\in\bbR^{n\times \rank(C)}$. The columns of $\wh{\Phi}_C$ form an approximate orthonormal basis of this column space. This approximation is accomplished by the \ac{subspaceID} subroutine, which essentially calculates the covariance of the observations $\{y_{1,t}\}_{t=0}^{T_1}$ and extracts the eigenspace associated with the large eigenvalues. The details of the subroutine is postponed to \Cref{sec:proof-of-1single}.

Secondly, in Lines 3-5, we learn the system parameters with the second data trajectory. With the help of $\wh{\Phi}_C$, we can project the high-dimensional observations onto lower dimensions, i.e. we let $\tily_{2,t} = \wh{\Phi}_C\t y_{2,t}$. This projected sequence of observations satisfies the following dynamics
\begin{equation*}\begin{split}
    x_{t+1} = {}& Ax_t + Bu_t + w_t,\\
    \tily_{2,t} = {}& \wh{\Phi}_C\t Cx_t + \wh{\Phi}_C\t \eta_t.
\end{split}\end{equation*}
that is generated by an equivalent system $\wh{\calM}$ denoted by  
\begin{equation}\label{eq:mhat}
    \wh{\calM} = (r,\rank(\wh{\Phi}_C),m,A,B,\wh{\Phi}_C\t C, \Sigma_w, \sigma_\eta^2I).
\end{equation}

\begin{remark}[Why two trajectories?] The reason for us to switch to the second trajectory for parameter learning is to \textit{ensure the independence between $\wh{\Phi}_C$ and the second trajectory,} which will ensure that $\wh{\Phi}_C\t \eta_t$ is still independent of other system variables. This is critical for the application of \ac{oracle}. Note that if the algorithm is equipped with any \ac{oracle} that handles dependent noises, one can easily simplify \Cref{alg:1single} to a single trajectory. Developing such \ac{oracle} is an interesting future direction and ideas from \cite{tian_toward_2023, simchowitz_learning_2018} may be related.\qed
\end{remark}

Lastly, we then feed this low-dimensional dataset to the \ac{oracle} for learning the corresponding low-dimensional parameters $\b{\htA, \htB, \tilC} \approx \b{A, B, \wh{\Phi}_C\t C}$. Finally, we recover $C$ from $\tilC$ through $\wh{\Phi}_C$. If $\wh{\Phi}_C$ approximates the column space of $C$ well, then $\wh{\Phi}_C\wh{\Phi}_C\t$ should also be close to $\Phi_C\Phi_C\t$, which is the projection onto the column space of $C$. Therefore, it is intuitive to expect that $\wh{\Phi}_C\wh{\Phi}_C\t C \approx \Phi_C\Phi_C\t C = C$.

\subsection{Sample Complexity of the Algorithm}
We now provide the following sample complexity result for \Cref{alg:1single}.
\begin{theorem}\label{thm:1single}
    Consider system $\calM$ and datasets $\calD_1=\calU_1\cup\calY_1, \calD_2=\calU_2\cup\calY_2$ (with lengths $T_1$, $T_2$ respectively) in \ac{hdsysid}. Suppose $\calM$ satisfies Assumption \ref{assmp:sys1}.
    If 
    \begin{equation*}\begin{split}
        T_1 \gtrsim \kappa_3\cdot n^2r^3, \quad T_2 \geq \kappa_1\cdot \poly\b{r,m}, \\ 
    \end{split}\end{equation*}
    then ($\htA,\htB,\htC$) from \Cref{alg:1single} satisfy the following for some invertible matrix $S$ with probability at least $1-\delta$
    \begin{equation*}\begin{split}
        {}& \max\left\{\norm{S^{-1}AS - \htA}, \norm{S^{-1}B - \htB}, \norm{CS - \htC}\right\}\\
        {}& \lesssim \kappa_4 \cdot \sqrt{\frac{n}{T_1}}\norm{\htC} + \kappa_2\cdot \sqrt{\frac{\poly\b{r,m}}{T_2}}.
    \end{split}\end{equation*}
    Here $\kappa_1=\kappa_1(\wh{\calM}, \calU_2,\delta), \kappa_2=\kappa_2(\wh{\calM}, \calU_2,\delta),
    \kappa_3 = \kappa_3\b{\calM,\calU_{[2]},\delta}$ and $\kappa_4 = \kappa_4\b{\calM,\calU_{1},\delta}$ are all problem-related constants independent of system dimensions modulo logarithmic factors with $\wh{\calM}$ defined in \Cref{eq:mhat}. Also, $\kappa_1$ and $\kappa_2$ are defined in \Cref{def:idoracle}, while the definitions of $\kappa_3, \kappa_4$ are summarized in \Cref{thm:1single_full} in the appendix.
\end{theorem}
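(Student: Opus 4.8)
The plan is to decompose the total error along the two stages of \Cref{alg:1single} and the split between the two independent trajectories, working conditionally on the first trajectory so that $\wh{\Phi}_C$ may be treated as a fixed matrix that is independent of the second trajectory. For the dynamics matrices, note that the projected system $\wh{\calM}$ in \Cref{eq:mhat} has exactly the same $(A,B)$ as $\calM$, so the errors $\norm{S^{-1}AS-\htA}$ and $\norm{S^{-1}B-\htB}$ are produced entirely in the second stage and will be charged to the oracle term. For the observer, I would insert the projection $\wh{\Phi}_C\wh{\Phi}_C\t$ and split via the triangle inequality,
\[ CS-\htC=\b{I-\wh{\Phi}_C\wh{\Phi}_C\t}CS+\wh{\Phi}_C\b{\wh{\Phi}_C\t CS-\tilC}. \]
The second summand has norm $\norm{\wh{\Phi}_C\t CS-\tilC}$ because $\wh{\Phi}_C$ has orthonormal columns, and this is precisely the oracle's observer error on $\wh{\calM}$; the first summand is the part of $CS$ lying outside the estimated column space, which I would bound by $\norm{\b{I-\wh{\Phi}_C\wh{\Phi}_C\t}\Phi_C}\cdot\norm{CS}$, thereby reducing the whole observer error to a subspace distance times $\norm{CS}$, the latter replaced by $\norm{\htC}$ up to the already-controlled oracle error.

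For the second (oracle) stage I must first certify that $\wh{\calM}$ is admissible for \Cref{def:idoracle}. It inherits $A$, so \Cref{assmp:sys1} holds verbatim; its projected observation noise $\wh{\Phi}_C\t\eta_t\sim\calN(0,\sigma_\eta^2 I)$ is still isotropic and, because $\wh{\Phi}_C$ is a function of the first trajectory only, is independent of the second trajectory's inputs and states (this is exactly the role of the two-trajectory split); and it remains minimal, since once $\wh{\Phi}_C$ is close enough to $\col(C)$ the matrix $\wh{\Phi}_C\t C$ has full rank, preserving observability of $(A,\wh{\Phi}_C\t C)$. Applying the oracle to $\wh{\calM}$ then yields all three quantities $\htA,\htB,\tilC$ up to $\kappa_2\sqrt{\poly\b{r,m}/T_2}$. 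The key point making this $\poly\b{r,m}$ rather than $\poly\b{r,n,m}$ is that the observation dimension of $\wh{\calM}$ is $\rank(\wh{\Phi}_C)\le r$, not $n$; this is where the advertised collapse of the polynomial factor occurs.

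The heart of the argument is the analysis of Col-Approx. I would first show that the stationary observation covariance is $\Sigma_y=C\Sigma_x C\t+\sigma_\eta^2 I$, where $\Sigma_x$ is the steady-state latent covariance driven by $\Sigma_w$ and $B\Sigma_u B\t$. Its leading $r$ eigenvectors span exactly $\col(C)$, the spectral gap at index $r$ being $\sigma_{\min}(C\Sigma_x C\t)$, a dimension-independent quantity to be absorbed into $\kappa_4$. A Davis--Kahan (Wedin) perturbation bound then controls the subspace distance $\norm{\b{I-\wh{\Phi}_C\wh{\Phi}_C\t}\Phi_C}$ by $\norm{\wh{\Sigma}_y-\Sigma_y}$ divided by this gap, so it remains to prove the concentration inequality $\norm{\wh{\Sigma}_y-\Sigma_y}\lesssim\sqrt{n/T_1}$ for the empirical covariance of $\{y_{1,t}\}$. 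Combining these gives the first term $\kappa_4\sqrt{n/T_1}\norm{\htC}$, and summing with the oracle term over a union bound of the two high-probability events yields the stated inequality.

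\textbf{Main obstacle.} The decisive difficulty is the operator-norm concentration of the $n\times n$ empirical covariance $\wh{\Sigma}_y$ for data that are \emph{temporally correlated} rather than i.i.d.\ --- the standard covariance-concentration machinery does not apply directly, so I must quantify the mixing of the Gaussian process $\{y_{1,t}\}$ to recover the $\sqrt{n/T_1}$ operator-norm deviation. This correlated-data concentration, together with requiring $\norm{\wh{\Sigma}_y-\Sigma_y}$ to be small enough that Davis--Kahan is valid, is what produces the burn-in requirement $T_1\gtrsim\kappa_3\,n^2 r^3$. Tracking how the mixing rate ($\psi_A,\rho_A$) and the spectral gap enter $\kappa_3$ and $\kappa_4$ while keeping them independent of $n$ is the most delicate bookkeeping in the proof.
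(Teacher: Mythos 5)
Your top-level architecture coincides with the paper's: the same triangle-inequality split $CS-\htC=\b{I-\wh{\Phi}_C\wh{\Phi}_C\t}CS+\wh{\Phi}_C\b{\wh{\Phi}_C\t CS-\tilC}$, the same bound $\norm{\wh{\Phi}_C^{\perp}{}\t\Phi_C}\cdot\norm{CS}$ on the projection term, the same conversion of $\norm{CS}$ into $\norm{\htC}$ by rearrangement, and the same certification that $\wh{\calM}$ is minimal and oracle-admissible (though note the paper does not assume $C$ has full column rank, so "$\wh{\Phi}_C\t C$ has full rank" is not enough; one must perturb the entire observability matrix, as in Step 1 of the paper's proof, to cover $\rank(C)<r$).

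The genuine gap is in your analysis of Col-Approx. You reduce everything to the concentration claim $\norm{\wh{\Sigma}_y-\Sigma_y}\lesssim\sqrt{n/T_1}$ followed by plain Davis--Kahan, but that claim, with constants independent of $r$, is precisely what fails with off-the-shelf tools. Decompose $\sum_t y_ty_t\t=\Sigma_C+\sigma_\eta^2(T_1+1)I+\sum_t(\eta_t\eta_t\t-\sigma_\eta^2I)+\Delta_1+\Delta_1\t$ with $\Delta_1=\sum_tCx_t\eta_t\t$. Conditionally on the state trajectory, $\Delta_1$ is distributed as $\sigma_\eta\Sigma_C^{1/2}G$ for a standard Gaussian matrix $G$, so $\norm{\Delta_1}\asymp\sigma_\eta\sqrt{n\,\norm{\Sigma_C}}$; the available high-probability bound on the state Gram matrix (Lemma~\ref{lem:gram_upper}, via the trace) gives $\norm{\Sigma_C}\lesssim rT_1$, hence $\norm{\Delta_1}\lesssim\sqrt{nrT_1}$ and plain Davis--Kahan over the gap $\asymp T_1$ yields only $\sqrt{nr/T_1}$ --- off by $\sqrt{r}$ from the theorem's $\kappa_4\sqrt{n/T_1}$ with $\kappa_4$ dimension-free. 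The paper's proof avoids this by bounding $\Delta_1$ in a \emph{self-normalized} form, $\norm{(\Sigma_C+T_1I)^{-1/2}\Delta_1}\lesssim\sqrt{n}$, via the martingale tools of Abbasi-Yadkori et al., and then invoking a custom \emph{relative} eigenspace perturbation lemma (Lemma~\ref{lem:perturbation}, adapted from Li's relative $\sin\Theta$ theorem) that exploits the fact that the large part of $\Delta_1$ lives inside $\col(C)$ and therefore does not tilt that subspace. To rescue your route you would instead need a two-sided concentration of $\sum_tx_tx_t\t$ around $T_1\Sigma_x$ sharp enough to give $\norm{\Sigma_C}\lesssim T_1$ with a dimension-free constant --- itself a nontrivial correlated-data argument that you do not supply. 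As written, your plan proves at best a $\sqrt{nr/T_1}$ column-space error, not the stated bound.
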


The above error bound consists of two terms. The first term, also the dominating term, is the error of observer column space approximation and the second term is due to learning of the rest of the system. Together, the error bound directly translates to a sample complexity of $\tilde{\calO}([n+\poly\b{r,m}]/\epsilon^2)$. This complexity is equivalent to $\tilde{\calO}(n/\epsilon^2)$ in our setting with $r,m\ll n$. 

\begin{remark}
    Here the first term indicates an interesting insight: \textit{the norm of our approximated observation matrix may affect the performance of the algorithm}. 
    Therefore, in practice, one might want to choose \ac{oracle} that outputs $\tilC$ with a reasonable norm (line 4 of \Cref{alg:1single}). This ensures $\norm{\htC}$ is not too large because 
    $\norm{\htC} \leq \norm{\wh{\Phi}_C}\norm{\tilC} = \norm{\tilC}.$ \qed
\end{remark}


For more concrete results, we instantiate \Cref{alg:1single} with Ho-Kalman algorithm \cite{sarkar_finite_2021, oymak_non-asymptotic_2019},  \Cref{alg:hokalman} in \Cref{sec:idoracle} as \ac{oracle}. Then the above theorem leads to the following corollary.
\begin{corollary}\label{cor:e2e}
    Consider the special minimal system $\calM = (r, n, m, A, B, C, \sigma_w^2I, \sigma_\eta^2I)$ and datasets $\calD_1=\calU_1\cup\calY_1,\calD_2=\calU_2\cup\calY_2$ (with length $T_1, T_2$ respectively), where the inputs are sampled independently from $\calN(0,\sigma_u^2I)$. Suppose $\calM$ satisfies Assumption \ref{assmp:sys1}.
    If
    \begin{equation*}\begin{split}
        &T_1 \gtrsim \tilde{\kappa}_3\cdot n^2r^3, \quad T_2 \geq \tilde{\kappa}_1\cdot r^3(r+m),
    \end{split}\end{equation*}
    then $(\htA,\htB,\htC)$ from Algorithm \ref{alg:1single} satisfy the following for some invertible matrix $S$ with probability at least $1-\delta$
    \begin{equation*}\begin{split}
        {}& \max\left\{\norm{S^{-1}AS - \htA}, \norm{S^{-1}B - \htB}, \norm{CS - \htC}\right\}\\
        {}& \lesssim \tilde{\kappa}_4 \cdot \sqrt{\frac{n}{T_1}}\norm{\htC} + \tilde{\kappa}_2\cdot\sqrt{\frac{r^5(r+m)}{T_2}}.
    \end{split}\end{equation*}
    Here $\tilde{\kappa}_1, \tilde{\kappa}_2, \tilde{\kappa}_3, \tilde{\kappa}_4$ are all problem-related constants only dependent of system $\calM$ and inputs $\calU_1\cup\calU_2$ (as compared to $\kappa_1$ and $\kappa_2$ in \Cref{thm:1single} that depend on $\wh{\calM}$). They are also independent of system dimensions modulo log factors. Detailed definitions of the constants are listed in \Cref{cor:e2e_full}.
\end{corollary}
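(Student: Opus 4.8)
The plan is to apply \Cref{thm:1single} with the Ho-Kalman procedure (\Cref{alg:hokalman}) instantiating \ac{oracle}, and then to discharge the two obstructions that keep \Cref{thm:1single} from being stated purely in terms of $\calM$: the abstract polynomial $\poly\b{r,m}$, and the constants $\kappa_1,\kappa_2$ that are attached to the \emph{random} projected system $\wh{\calM}$ of \Cref{eq:mhat} rather than to $\calM$ itself. The first term of the bound costs nothing: $\kappa_3,\kappa_4$ in \Cref{thm:1single} already depend only on $\b{\calM,\calU_1}$ (and on $\calU_{[2]}$), so I would simply rename them $\tilde{\kappa}_3,\tilde{\kappa}_4$ and carry $\tilde{\kappa}_4\cdot\sqrt{n/T_1}\norm{\htC}$ across verbatim.

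\textbf{Pinning down the polynomial.} Since $(A,C)$ is observable and $n\geq r$, the stationary signal covariance $C\Sigma_x C\t$ (with $\Sigma_x\succ0$ by controllability of $(A,B)$ and $\Sigma_u\succ0$) has rank $\rank(C)\leq r$, so Col-Approx returns $\wh{\Phi}_C\in\bbR^{n\times\rank(C)}$ and the projected observation dimension of $\wh{\calM}$ is at most $r$. Feeding the $(r,\rank(C),m)$-dimensional trajectory $\tilde{\calD}_2$ to \Cref{alg:hokalman} converts its generic dependence $\poly\b{r,\rank(C),m}$ into the explicit exponents $r^3(r+m)$ in the sample threshold and $r^5(r+m)$ in the error, as recorded in the appendix analysis of Ho-Kalman. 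I would also verify that the special isotropic structure survives the projection: because $\wh{\Phi}_C$ is orthonormal, the projected observation noise satisfies $\wh{\Phi}_C\t\eta_t\sim\calN(0,\sigma_\eta^2\wh{\Phi}_C\t\wh{\Phi}_C)=\calN(0,\sigma_\eta^2 I)$ and stays isotropic, while $\Sigma_w=\sigma_w^2 I$ and $\Sigma_u=\sigma_u^2 I$ are untouched; thus $\wh{\calM}$ is again of the special form required.

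\textbf{Replacing $\kappa_i(\wh{\calM})$ by constants depending only on $\calM$ (main step).} The constants $\kappa_1,\kappa_2$ of \Cref{def:idoracle} are assembled from system-theoretic quantities of $\wh{\calM}$ (observability/controllability Gramians, the observer norm, noise-to-signal ratios), and $\wh{\calM}$ is random through $\wh{\Phi}_C$. I would compare $\wh{\calM}$ against the \emph{ideal} projected system $\bar{\calM}=(r,\rank(C),m,A,B,\Phi_C\t C,\Sigma_w,\sigma_\eta^2 I)$ built from the true column space, whose constants are functions of $\calM$ alone. The controllability side and the process noise coincide for $\calM,\bar{\calM},\wh{\calM}$ because $A,B,\Sigma_w,\Sigma_u$ never change. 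On the observability side, $\Phi_C\Phi_C\t C=C$ gives $C\t\Phi_C\Phi_C\t C=C\t C$, so the observability Gramian $\sum_i (A^i)\t C\t\Phi_C\Phi_C\t C A^i$ of $\bar{\calM}$ equals that of $\calM$ exactly, and $\Phi_C\t C$ shares the nonzero singular values of $C$; hence $\bar{\calM}$ inherits $\calM$'s observability conditioning. Finally, the threshold $T_1\gtrsim\tilde{\kappa}_3 n^2 r^3$ drives the subspace distance $\norm{\wh{\Phi}_C\wh{\Phi}_C\t-\Phi_C\Phi_C\t}$ below a fixed dimension-independent constant, and since $\norm{\wh{\Phi}_C\wh{\Phi}_C\t C - C}\leq\norm{\wh{\Phi}_C\wh{\Phi}_C\t-\Phi_C\Phi_C\t}\norm{C}$, a singular-value/eigenvalue perturbation argument shows the nonzero singular values of $\wh{\Phi}_C\t C$ and the eigenvalues of its observability Gramian stay within a constant factor of $\bar{\calM}$'s.

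Consequently every quantity defining $\kappa_1(\wh{\calM}),\kappa_2(\wh{\calM})$ is bounded by a dimension-independent multiple of the corresponding quantity of $\calM$, which I collect into $\tilde{\kappa}_1,\tilde{\kappa}_2$ depending only on $\b{\calM,\calU_1\cup\calU_2}$; substituting these together with the explicit $r^5(r+m)$ into \Cref{thm:1single} and adding the unchanged first term gives the claim. The hard part is precisely this last comparison: a priori the oracle constants of $\wh{\calM}$ could grow with the ambient dimension because $\wh{\Phi}_C$ is a high-dimensional random object, and the delicate point to establish is that once $T_1$ clears the $n^2 r^3$ threshold the projection degrades observability and conditioning only by a constant factor, rather than by a factor growing in $n$.
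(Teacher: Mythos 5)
Your proposal is correct and follows essentially the same route as the paper: instantiate \Cref{thm:1single} with Ho-Kalman, use the $T_1$ threshold to force the subspace error below a fixed constant, and then show the oracle constants of the random projected system $\wh{\calM}$ are within a constant factor of those of $\calM$ (the paper does this in \Cref{lem:9hokalman_1,lem:hokalman_1} by perturbing $\phi_\calH$, the minimum Hankel singular value, which is equivalent to your observability-conditioning argument since the Hankel matrix factors through the observability and controllability matrices). Your intermediate ideal system $\bar{\calM}$ is a harmless presentational detour; the paper compares $\wh{\calM}$ to $\calM$ directly.
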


Based on the above corollary, we now compare the error of our algorithm instantiated with the example oracle, denoted by $\Delta_1$, with the error of the example oracle when it is directly applied to the dataset, denoted by $\Delta_2$. 
For a fair comparison, we consider $\Delta_2$ as the error with datasets generated by inputs $\calU=\{u_t\}_{t=0}^{T_1+T_2-1}$ of length $T_1+T_2$. Here the inputs are sampled independently from $\calN(0, \sigma_u^2I)$. 

The error of our algorithm is upper bounded by
\begin{equation*}
    \Delta_1 \leq \tilde{\calO}\b{\sqrt{\frac{n}{T_1}} + \sqrt{\frac{r^5(r+m)}{T_2}}}.
\end{equation*}
The above error directly translates to the sample complexity of $\tilde{\calO}([n+r^5(r+m)]/\epsilon^2) = \tilde{\calO}(n/\epsilon^2)$ when $r,m\ll n$. 
On the other hand, from \Cref{thm:9hokalman}
\begin{equation*}
    \Delta_2 \leq \tilde{\calO}\b{\sqrt{\frac{r^5(r+n+m)}{T_1+T_2}}}.
\end{equation*}
And this gives a sample complexity of order $\tilde{\calO}([nr^5+r^5(r+m)]/\epsilon^2) = \tilde{\calO}(nr^5/\epsilon^2)$ when $r,m\ll n$. 
Therefore, in the regime where $r,m\ll n$, it is clear that our algorithm enjoys a better sample complexity as compared to directly applying the example oracle. 


\subsection{Proof Ideas of \Cref{thm:1single}}\label{sec:proof-of-1single}
Although conceptually simple, justifying this tight sample complexity result is hard. The difficulty lies in ensuring the accuracy of the observer column space approximation $\wh{\Phi}_C$, which is learned mainly by PCA on the given data (detailed in the following algorithm). 
\begin{algorithm}[H]
    \caption{Column Space Approximation (Col-Approx)}
    \label{alg:column_space}
 \begin{algorithmic}[1]
    \small
    \setstretch{0.5}
    \STATE {\bfseries Input:} Observations $\calY=\{y_{t}\}_{t=0}^{T}$;\\
    \STATE Calculating the data covariance
    \begin{equation*}
        \Sigma_y \gets \sum_{t=0}^T y_{t}y_t\t
    \end{equation*}
    \STATE Estimating the observer rank
    \begin{equation*}
        \htr_c \gets \mathop{\arg\max}_i \b{\sigma_i(\Sigma_y) - \sigma_{i+1}(\Sigma_y) > T^{3/4}}
    \end{equation*}

    \STATE Estimating the column space
    \begin{equation*}
        \wh{\Phi}_C \gets \text{ first $~\htr_c~$ eigenvectors of } ~\Sigma_y
    \end{equation*}
    \STATE {\bfseries Output:} $\wh{\Phi}_C$
 \end{algorithmic}
 \end{algorithm}
Existing analysis of PCA \cite{candes_robust_2011, vaswani_robust_2018, chen_consistent_2015}, matrix factorization \cite{gribonval_sample_2015} and other subspace learning techniques \cite{tripuraneni_provable_2021,zhang_meta-learning_2023} do not apply to our setting. This is because these methods assume i.i.d. data samples, while our dataset $\calD_1$ consists of correlated data points generated by a dynamical system. We need to apply martingale tools \cite{abbasi-yadkori_improved_2011,sarkar_near_2019} for our analysis. 

Moreover, the noises in the dynamical system accumulate across time steps. Naively applying the martingale tools leads to loose sample complexity bounds. Since we are learning a subspace embedded in $\bbR^n$, it is reasonable to expect a sample complexity of $\calO(n/\epsilon^2)$ for an $\epsilon$-good approximation of the subspace
\footnote{Other intuitions for this complexity comes from standard covariance concentration results for i.i.d. data (Corollary 2.1 of \cite{rudelson_random_1999}). In these results, $\calO(n/\epsilon^2)$ samples are required to learn an $\epsilon$-good covariance, which directly translates to a good column space estimation of the covariance. }. We achieve this result by developing our own subspace perturbation lemma specifically tailored to this setting. 
The above ideas on analyzing \ac{subspaceID} translate to the following lemma.
\begin{lemma}\label{lem:1col}
    Consider system $\calM$ and dataset $\calD_1=\calU_1\cup\calY_1$ in \ac{hdsysid}. Suppose $\calM$ satisfies \Cref{assmp:sys1}.
    If
    \begin{equation*}\begin{split}
        T_1 \gtrsim \kappa_5\cdot n^2r^3,
    \end{split}\end{equation*}
    then $\wh{\Phi}_C=\text{Col-Approx}(\calY_1,\Sigma_\eta)$ satisfies the following with probability at least $1-\delta$
    \begin{equation*}\begin{split}
        \rank (\wh{\Phi}_C) = \rank(C), \quad \norm{\wh{\Phi}_C^\perp\t\Phi_C} \lesssim \kappa_4\cdot \sqrt{\frac{n}{T_1}}.
    \end{split}\end{equation*}
    Here $\kappa_4 = \kappa_4\b{\calM,\calU_1,\delta}$ and $\kappa_5 = \kappa_5\b{\calM,\calU_1,\delta}$ are both problem-related constants independent of system dimensions modulo logarithmic factors (details in \Cref{lem:1col_full}).
\end{lemma}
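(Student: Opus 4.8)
The plan is to show that the empirical covariance $\Sigma_y=\sum_{t=0}^{T_1}y_{1,t}y_{1,t}\t$ decomposes into a rank-$\rank(C)$ signal living exactly in $\col(C)$, an isotropic noise floor, and a genuinely small perturbation. Substituting $y_{1,t}=Cx_{1,t}+\eta_{1,t}$ and writing $\Sigma_x=\sum_t x_{1,t}x_{1,t}\t$, I would first establish the identity
\[
\Sigma_y \;=\; \underbrace{C\Sigma_x C\t}_{\col(\cdot)=\col(C)} \;+\; \underbrace{(T_1+1)\sigma_\eta^2 I_n}_{\text{noise floor}} \;+\; \tilde E ,
\]
where $\tilde E=\sum_t\left(Cx_{1,t}\eta_{1,t}\t+\eta_{1,t}x_{1,t}\t C\t\right)+\left(\sum_t\eta_{1,t}\eta_{1,t}\t-(T_1+1)\sigma_\eta^2 I_n\right)$ collects the state–noise cross terms and the deviation of the observation-noise covariance from its mean. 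The structural point that makes the whole argument work is that $C\Sigma_x C\t$ has column space \emph{exactly} $\col(C)$ for every $\Sigma_x\succ 0$, so the temporal correlations of $x_{1,t}$ (which forced us to abandon i.i.d.\ subspace-learning tools) only perturb the eigenvalues \emph{inside} $\col(C)$ and never tilt the signal subspace; the isotropic floor is a uniform spectral shift that leaves all eigenvectors untouched. Hence all error is funneled into $\tilde E$.

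The first technical step is to bound $\norm{\tilde E}\lesssim \kappa\sqrt{T_1 n}$ (with $\kappa$ carrying system and $\mathrm{poly}(r)$ factors). Because $\eta_{1,t}$ is mean-zero and independent of $x_{1,t}$ and of the past, $\{Cx_{1,t}\eta_{1,t}\t\}$ is a matrix martingale-difference sequence; I would control its operator norm by a self-normalized / matrix-martingale concentration inequality \citet{abbasi-yadkori_improved_2011, sarkar_near_2019}, using an $\epsilon$-net of dimension $\rank(C)\le r$ on the left factor and dimension $n$ on the right, which produces the $\sqrt{T_1 n}$ scaling. For the noise-deviation term I would invoke standard Gaussian covariance concentration, giving $\norm{\sum_t\eta_{1,t}\eta_{1,t}\t-(T_1+1)\sigma_\eta^2 I_n}\lesssim\sigma_\eta^2\sqrt{T_1 n}$ once $T_1\gtrsim n$. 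In parallel I would prove a lower bound $\Sigma_x\succeq c\,T_1\bar\Sigma_x$ on the empirical state covariance (using controllability, $\Sigma_u\succ 0$, Assumption~\ref{assmp:sys1}, and a martingale concentration argument), so that the smallest nonzero signal eigenvalue obeys $\sigma_{\rank(C)}(C\Sigma_x C\t)\gtrsim T_1\,\sigma_{\min}(C)^2\lambda_{\min}(\bar\Sigma_x)=\Theta(T_1)$.

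Rank recovery then follows from Weyl's inequality: the unperturbed matrix has its top $\rank(C)$ eigenvalues of order $T_1$ and its remaining $n-\rank(C)$ eigenvalues all equal to $(T_1+1)\sigma_\eta^2$, so every eigenvalue of $\Sigma_y$ moves by at most $\norm{\tilde E}$. The hypothesis $T_1\gtrsim\kappa_5 n^2 r^3$ is precisely what forces $\norm{\tilde E}\ll T_1^{3/4}\ll T_1$: the noise-floor eigenvalues are squeezed into an interval of width $\tilde{\calO}(\sqrt{T_1 n})$, so every consecutive gap among indices $>\rank(C)$ stays below the threshold $T_1^{3/4}$, while the signal-to-floor gap at index $\rank(C)$ is $\Theta(T_1)-\tilde{\calO}(\sqrt{T_1 n})>T_1^{3/4}$. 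Therefore the largest index with a super-threshold gap is exactly $\rank(C)$, yielding $\htr_c=\rank(C)$. For the angle, I would apply a Davis--Kahan / Wedin-type bound with the signal-to-floor gap as the relevant eigengap, obtaining $\norm{\wh{\Phi}_C^\perp\t\Phi_C}\lesssim \norm{\tilde E}/\sigma_{\rank(C)}(C\Sigma_x C\t)\lesssim \sqrt{T_1 n}/T_1=\sqrt{n/T_1}$, with the $\mathrm{poly}(r)$ and conditioning factors absorbed into $\kappa_4$.

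The main obstacle is simultaneously (i) obtaining the tight $\sqrt{T_1 n}$ operator-norm bound on the cross term and (ii) converting $\norm{\tilde E}$ into the optimal $\sqrt{n/T_1}$ subspace error. A crude matrix-martingale bound, or a generic Davis--Kahan applied to $\Sigma_y$ without first peeling off the isotropic floor, leaks extra factors of $n$ or $r$ and destroys the linear-in-$n$ sample complexity. This is exactly why a purpose-built perturbation lemma is needed: it must treat the degenerate, unstable noise-floor eigenspace and the well-separated signal eigenspace asymmetrically, and be phrased in operator norm so that the martingale bound on $\tilde E$ plugs in directly and the $\Theta(T_1)$ eigengap is the only quantity in the denominator.
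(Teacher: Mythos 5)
Your overall architecture coincides with the paper's: the same decomposition of $\Sigma_y$ into the signal $C\Sigma_x C^\top$, the isotropic floor $(T_1+1)\sigma_\eta^2 I$, and two noise terms; a $\Theta(T_1)$ lower bound on the signal eigenvalues from controllability and full excitation; Weyl's inequality for rank recovery; and a subspace perturbation step for the angle. The rank-recovery step goes through as you describe. The gap is in the angle bound. The self-normalized martingale inequality does not deliver $\norm{\sum_t Cx_{1,t}\eta_{1,t}^\top}\lesssim\sqrt{T_1 n}$ with a dimension-free constant; what it delivers is the \emph{relative} bound $\norm{(\Sigma_C+T_1 I)^{-1/2}\sum_t Cx_{1,t}\eta_{1,t}^\top}\lesssim\sqrt{n}$ up to logarithms, and un-normalizing multiplies by $\norm{(\Sigma_C+T_1 I)^{1/2}}$, which the paper can only control by $\sqrt{rT_1}$ times problem constants (its high-probability upper bound on $\sum_t x_{1,t}x_{1,t}^\top$ goes through the trace of the Gramian and therefore carries a factor of $r$). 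So your $\norm{\tilde E}$ is really $\tilde{\mathcal{O}}(\sqrt{nrT_1})$, and dividing by the $\Theta(T_1)$ eigengap yields $\sqrt{nr/T_1}$: the ``poly$(r)$ and conditioning factors absorbed into $\kappa_4$'' are precisely what the lemma forbids, since $\kappa_4$ must be independent of system dimensions modulo logarithms.

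You correctly sense that a purpose-built perturbation lemma is required, but the feature that saves the $\sqrt{r}$ is not the asymmetric treatment of the degenerate noise-floor eigenspace, nor having only the eigengap in the denominator --- it is that the lemma must accept the cross term in its \emph{relative} form. The paper's perturbation lemma (adapted from Li's relative perturbation theory) takes as hypotheses $\norm{M^{-1/2}\Delta_1}\le a$ and $\norm{\Delta_2}\le b$ with $M=\Sigma_C+T_1 I$, and its cross-term contribution is of order $a\,\sigma_1(\Lambda_2)/(\sqrt{\phi_M}\,\delta_M)=\sqrt{n}\cdot T_1/(\sqrt{T_1}\cdot T_1)=\sqrt{n/T_1}$; the large top signal eigenvalue $\psi_M\sim rT_1$ never multiplies $a$. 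Any argument that first converts the cross term into an absolute operator-norm bound and then applies Davis--Kahan or Wedin loses this in the worst case, because knowledge of $\norm{\Delta_1}$ alone cannot exclude perturbations that tilt the subspace by $\norm{\Delta_1}/\delta_M$. To complete your proof you would need either to state and prove such a relative perturbation bound, or to establish an $r$-free high-probability upper bound on $\sigma_1(\sum_t x_{1,t}x_{1,t}^\top)$; neither appears in your proposal.
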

Recall $\Phi_C$ denotes the orthonormal matrix whose columns form a basis of $\col(C)$ and $\wh{\Phi}_C^\perp$ denote the matrix such that $\begin{bmatrix}
    \wh{\Phi}_C^\perp & \wh{\Phi}_C
\end{bmatrix}$ is unitary. 


\begin{proof}[Proof Sketch of \Cref{lem:1col}]
    $\wh{\Phi}_C$ is constructed from the eigenvectors of matrix $\Sigma_y=\sum_{t=0}^{T_1}y_{1,t}y_{1,t}\t$. We decompose it as follows
    \begin{equation*}\begin{split}
        {}& \sum_{t=0}^{T_1}y_{1,t}y_{1,t}\t = \underbrace{C \b{\sum_{t=0}^{T_1}x_{1,t}x_{1,t}\t} C\t}_{\Sigma_C} + \sigma_\eta^2(T_1+1)I\\
        + {}& \underbrace{\sum_{t=0}^{T_1} \b{\eta_{1,t}\eta_{1,t}\t-\sigma_\eta^2I}}_{\Delta_2} + \underbrace{\sum_{t=0}^{T_1} \b{Cx_{1,t}\eta_{1,t}\t + \eta_{1,t}x_{1,t}\t C\t}}_{\Delta_1+\Delta_1\t}.
    \end{split}\end{equation*}
    Here $\Sigma_C$ contains the information on the observer column space $\Phi_C$, while $\Delta_1$ and $\Delta_2$ are noise terms.
    The rest of the proof is decomposed into three steps. In the first step, we upper bound the noise terms $\Delta_1$, $\Delta_2$ and lower bound the latent state covariance $\sum_{t=0}^{T_1}x_{1,t}x_{1,t}$. In the second step, we show that the eigenvectors of $\Sigma_y$ in the column space of $C$ has eigenvalues much larger then the other ones. This gap eigenvalue gap enables us to identify the dimension of the column space of $C$. Finally, in the third step, we apply our tailored subspace perturbation result to bound the accuracy of the approximated column space.

    \textbf{Step 1.} With system $(A,B)$ being controllable, the Gaussian inputs with $\Sigma_u \succ 0$ can fully excite the latent dynamics and thus lead to the following lower bound on $\sum_{t=0}^{T_1}x_{1,t}x_{1,t}$:
    \begin{equation*}\begin{split}
        \sum_{t=0}^{T_1}x_{1,t}x_{1,t} \succ \tilde{\calO}(T_1)I.
    \end{split}\end{equation*}
    The upper bound on $\Delta_2$, i.e. $\norm{\Delta_2} \leq \tilde{\calO}(\sqrt{T_1})$, follows from standard Guassian concentration arguments. To upper bound $\Delta_1$, we apply the martingale tools from previous works \cite{abbasi-yadkori_improved_2011,sarkar_near_2019} and get the following relative error bound
    \begin{equation}\begin{split}\label{eq:1col_sketch1}
        \norm{(\Sigma_C+T_1I)^{-\frac{1}{2}}\Delta_1} \leq \tilde{\calO}(\sqrt{T_1}).
    \end{split}\end{equation}

    \textbf{Step 2.} Since the perturbations are small, standard eigenvalue perturbation bounds give
    \begin{equation*}\begin{split}
        \sigma_{\rank(C)}(\Sigma_y) - \sigma_{\rank(C)+1}(\Sigma_y) \geq \tilde{\calO}(T_1).
    \end{split}\end{equation*}
    This eigenvalue gap as large as $\tilde{\calO}(T_1)$ makes it easy to determine the rank of $C$. Therefore, with high probability, $\rank(\wh{\Phi}_C) = \rank(C)$. 

    \textbf{Step 3.} Finally, we bound the angle between the eigenspace of the first $\rank(C)$ eigenvalues of $\Sigma_y = \Sigma_C + \Delta_2 + (\Delta_1+\Delta_1\t)$ and that of $\Sigma_C$ --- which coincides with the column space of $C$. Here $\Delta_1$ is the so-called relative perturbation, because we bound its norm by comparing it with the information $(\Sigma_C+T_1I)^{\frac{1}{2}}$ (in \Cref{eq:1col_sketch1}). Since $\norm{(\Sigma_C+T_1I)^{\frac{1}{2}}}$ can be large, directly applying standard subspace perturbation results leads to loose error bound. For this, we develop a specific relative subspace perturbation bound in \Cref{lem:perturbation}, adapted from classic eigenspace perturbation results. It ensures both $\Delta_2$ and the relative noise $\Delta_1$ will not perturb the eigenspace of $\Sigma_y$ too much. This finishes the proof.
\end{proof}

With the above lemma, the proof of \Cref{thm:1single} can then be decomposed into two steps. \textit{In the first step}, we guarantee the accuracy of learning the low-dimensional system parameters $(A,B,\wh{\Phi}_C\t C)$ from the projected dataset (Line 3 in \Cref{alg:1single}). Intuitively, as $\wh{\Phi}_C$ is accurate enough, the projected dataset preserves almost all the information in the original dataset. This also ensures that the equivalent system generating the projected dataset ($\wh{\calM}$ in \Cref{eq:mhat}) is observable and controllable. Therefore, \ac{oracle} outputs an accurate estimation of the low-dimensional system parameters. 
\textit{In the second step}, we analyze the errors of the recovered high-dimensional parameters $(\htA, \htB, \wh{\Phi}_C \tilC)$, where $\tilC$ is our estimation of $\wh{\Phi}_C\t C$. The errors on $\htA$ and $\htB$ are automatically bounded by the definition of \ac{oracle}. The error on the recovered high-dimensional observer $\htC = \wh{\Phi}_C\tilC$ can be decomposed as follows 

\vspace{-15pt}{\small
\begin{equation*}\begin{split}
    {} \norm{CS - \htC}
    \leq {}& \norm{CS - \wh{\Phi}_C\wh{\Phi}_C\t CS} + \norm{\wh{\Phi}_C\wh{\Phi}_C\t CS - \wh{\Phi}_C\tilC}\\
    = {}& \norm{\wh{\Phi}_C^\perp\b{\wh{\Phi}_C^\perp}\t CS} + \norm{\wh{\Phi}_C\t CS - \tilC}.
\end{split}\end{equation*}
}

Here the first term is the column space approximation error and the second term is the error learning the low-dimensional observer $\tilC\approx \wh{\Phi}_C\t C$. Combining the above inequality with the subspace perturbation bound in \Cref{lem:1col} and the \ac{oracle} in \Cref{def:idoracle}, along with some algebraic arguments, completes the proof.

\section{Lower Bounds}\label{sec:3lower}
Now that \Cref{alg:1single} accomplishes \ac{hdsysid} with sample complexity $\tilde{\calO}(n/\epsilon^2)$, one may ask: 
\noindent\emph{Is this linear dependence on the observer dimension $n$ unavoidable?}
In this section, we provide the following theorem to show that this linear dependence is in fact necessary.

While we present this lower bound for learning LTI systems with a single input-output data trajectory, it can be extended to multiple trajectories. The more general version is deferred to the appendix. (\Cref{thm:1lowerbound_full}).
\begin{theorem}\label{thm:1lowerbound}
    Suppose $n \geq m \geq r$ with $n \geq 2$, and choose any positive scalars $\delta \leq \frac{1}{2}$. 
    Consider the class of minimal systems $\calM=(r,n,m,A,B,C,\Sigma_w,\Sigma_\eta)$ with different $A$,$B$,$C$ matrices. All parameters except $A, B, C$ are fixed and known. Moreover, $r<n$ and $\Sigma_\eta$ is positive definite. Let $ \calD = \{y_t\}_{t=0}^T\cup\{u_t\}_{t=0}^{T-1}\cup\text{\{all known parameters\}}$ denote the associated single trajectory dataset. 
    Here the input $u_t$ satisfies: 1). $u_t$ is independently sampled; 2). $\bbE(u_t) = 0$. 
    Consider any estimator $\htf$ mapping $\calD$ to $(\htA(\calD), \htB(\calD),\htC(\calD))\in\bbR^{r\times r}\times \bbR^{r\times m}\times \bbR^{n\times r}$. If 
    \begin{equation*}
        T < \frac{\phi_\eta(1-2\delta)\log1.4}{50\b{\psi_w + \psi_u}}\cdot\frac{n+\log\frac{1}{2\delta}}{\epsilon^2},    
    \end{equation*}
    there exists a system $\calM_0=(r,n,m,A_0,B_0,C_0,\Sigma_w,\Sigma_\eta)$ with dataset $\calD$ such that
    \begin{equation*}\begin{split}
        \bbP {}& 
        \left\{\norm{C_0B_0-\htC\b{\calD}\htB\b{\calD}} \geq \epsilon\right\} \geq \delta.
    \end{split}\end{equation*}
    Here $\bbP$ denotes the distribution of $\calD$ generated by system $\calM_0$. Related constants are defined as follows
    \begin{equation*}\begin{split}
        \phi_\eta &\coloneqq \sigma_{\min}(\Sigma_\eta), \quad \psi_w \coloneqq \sigma_{1}(\Sigma_w), \\\quad \psi_u &\coloneqq \max_{t\in[0,T-1]} \sigma_{1}\b{\bbE(u_tu_t\t)}.    \end{split}\end{equation*}
        
\end{theorem}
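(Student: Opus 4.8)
The plan is to prove this minimax lower bound by reducing the estimation of the first Markov parameter $C_0B_0$ to a multiple-hypothesis testing problem and then invoking an information-theoretic (Fano / Le Cam--type) inequality. The key observation is that $C_0B_0$ is the most \emph{accessible} part of the system --- it is the coefficient relating the input $u_0$ to the observation $y_1$ --- yet estimating it still forces one to resolve an object whose column space lives in the full ambient space $\bbR^n$. I would therefore isolate the hardness into this single Markov parameter and exhibit a large family of systems that are pairwise well-separated in $CB$ but statistically nearly indistinguishable from their trajectory data.

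For the construction I would take $A=0$ (which trivially satisfies any exponential-stability bound and, as long as $B$ has full row rank and each candidate $C$ has full column rank, still yields a minimal realization) and fix $B$, so that the dynamics collapse to the conditionally memoryless form $y_t = (CB)u_{t-1} + Cw_{t-1} + \eta_t$ for $t\ge 1$ and $y_0=\eta_0$. All the freedom is then placed in one column of $C$. To realize the dimension factor $n$, I would vary that column over a Varshamov--Gilbert packing of the hypercube scaled onto a sphere in $\bbR^n$: this yields $2^{\Omega(n)}$ candidate systems $\{\calM_v\}$ of equal norm $\norm{C_v}$, with all pairwise distances $\norm{C_vB-C_{v'}B}$ simultaneously bounded below (so any estimator accurate to within $\epsilon$ identifies $v$, using separation $\ge 2\epsilon$) and above (so the pairwise information stays controlled). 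Keeping $\norm{C_v}$ constant across the packing matters so that the noise covariance $C_v\Sigma_wC_v\t+\Sigma_\eta$ has uniformly bounded operator norm and the process noise contributes only through the $\psi_w$ term.

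The information-theoretic step controls the pairwise Kullback--Leibler divergence between the data laws. Conditioned on the zero-mean, independent inputs, each $y_t$ is Gaussian with mean $(C_vB)u_{t-1}$ and covariance $\Sigma_v \succeq \phi_\eta I$. Summing the per-step divergences and using $\bbE(u_tu_t\t)\preceq\psi_u I$, $\norm{\Sigma_v}\lesssim \psi_w+\phi_\eta$, and $\Sigma_v^{-1}\preceq \phi_\eta^{-1}I$, I would obtain $\mathrm{KL}(\bbP_v\,\|\,\bbP_{v'}) \lesssim \tfrac{(\psi_u+\psi_w)T}{\phi_\eta}\,\norm{C_vB-C_{v'}B}^2 \lesssim \tfrac{(\psi_u+\psi_w)T}{\phi_\eta}\,\epsilon^2$. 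This is precisely where the three scalars enter: the excitation $\psi_u$ and the disturbance $\psi_w$ both mask the signal, while the noise floor $\phi_\eta$ sets the resolution. The covariance mismatch between instances (present when $\Sigma_w\neq 0$) is absorbed into the $\psi_w$ contribution via the equal-norm choice of $C_v$.

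Finally, I would combine the $2^{\Omega(n)}$-point packing with this KL bound to conclude that once $T \lesssim \tfrac{\phi_\eta}{\psi_u+\psi_w}\cdot\tfrac{n+\log(1/2\delta)}{\epsilon^2}$, no estimator succeeds with probability exceeding $1-\delta$. The main obstacle --- and the reason the bound takes the additive shape $n+\log\tfrac{1}{2\delta}$ rather than a cruder multiplicative one --- is producing both terms at once: the dimension term $n$ is the packing entropy $\log|\mathcal{V}|$ delivered by the Fano step, whereas the confidence term $\log\tfrac{1}{2\delta}$ together with the factor $(1-2\delta)$ emerges from the sharp two-point / Bretagnolle--Huber inequality (testing error $\ge \tfrac12 e^{-\mathrm{KL}}$). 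Reconciling these into one clean statement --- i.e.\ designing the packing and the testing argument so the two contributions add rather than multiply, while keeping every instance a valid minimal realization --- is the delicate part; the remaining constants ($\log 1.4$ and the numerical factor) are then bookkeeping from the chosen packing radius and divergence inequality.
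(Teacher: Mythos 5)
Your overall strategy is the same as the paper's: perturb a single column of the observer over a $2^{\Omega(n)}$ packing so that the hypotheses are $2\epsilon$-separated in $\norm{CB}$, bound the pairwise KL divergence of the trajectory laws by $O\bigl(\tfrac{(\psi_u+\psi_w)T}{\phi_\eta}\epsilon^2\bigr)$, and close with a Fano-type inequality. Two points, one of which is a genuine gap. First, the gap: by conditioning only on the inputs and marginalizing out the process noise, you make each $y_t$ Gaussian with hypothesis-dependent covariance $\Sigma_v = C_v\Sigma_w C_v\t + \Sigma_\eta$. The KL divergence between Gaussians with different covariances contains the term $\tfrac12\bigl(\tr(\Sigma_{v'}^{-1}\Sigma_v) - n + \log\det(\Sigma_{v'}\Sigma_v^{-1})\bigr)$ in addition to the mean-shift term, and your stated fix --- keeping $\norm{C_v}$ constant across the packing --- does not control it: equal operator norms do not make $C_v\Sigma_w C_v\t$ equal, so this term does not vanish and is simply not bounded in your argument. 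The paper sidesteps the issue entirely by bounding $\kl(\bbP^{UY}_i\|\bbP^{UY}_j)$ by $\kl(\bbP^{UXY}_i\|\bbP^{UXY}_j)$ (marginalization only decreases KL) and then factoring the joint law along the dynamics: since $A$ and $B$ are common to all hypotheses, the laws of $u$ and $x$ cancel, and conditionally on $x_t$ the observation $y_t$ has hypothesis-\emph{independent} covariance $\Sigma_\eta$ and hypothesis-dependent mean $C_i x_t$ only, so the KL reduces to the clean quadratic form $\tfrac12\sum_t \bbE\bigl[x_t\t(C_i-C_j)\t\Sigma_\eta^{-1}(C_i-C_j)x_t\bigr]$. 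You could repair your route with a second-order expansion of the covariance-mismatch term, but conditioning on the latent state is both simpler and exact.

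Second, a smaller structural difference: you propose to obtain the additive form $n+\log\tfrac{1}{2\delta}$ by reconciling a Fano step (for the $n$) with a separate Bretagnolle--Huber two-point bound (for the $\log\tfrac{1}{2\delta}$), and you flag this reconciliation as the delicate part. The paper gets both terms in a single application of Birg\'e's inequality, $(1-2\delta)\log\bigl(\tfrac{1-\delta}{\delta}(N-1)\bigr) \le \sup_i \kl(\bbP_i\|\bbP_1)$, whose left side already splits as $(1-2\delta)\bigl(n\log 1.4 + \log\tfrac{1}{2\delta}\bigr)$ once $N-1\ge 1.4^n$. So the step you identify as the main obstacle dissolves with the right choice of information inequality. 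Your choice $A=0$ in place of the paper's $A=\tfrac12 I$ is fine for minimality and immaterial to the argument.
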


\vspace{-1em}
The above theorem indicates that with less than $\calO(n/\epsilon^2)$ samples, any estimator has a constant probability to fail learning the product $CB$. This error lower bound on $\htC(\calD)\htB(\calD)$ can be translated to corresponding errors on
\begin{equation*}\begin{split}
    \max\left\{\norm{S^{-1}B-\htB\b{\calD}}, \norm{CS-\htC\b{\calD}}\right\} \geq \epsilon,
\end{split}\end{equation*}
under mild conditions.
This indicates that the estimator fails to learn either $B$ or $C$ well. (The proof details are provided in Corollary \ref{cor:thm1_1}). 
Another perspective to understand this error on $\htC(\calD)\htB(\calD)$ is as follows. In the celebrated Ho-Kalman algorithm for \ac{hdsysid}, the estimation error of the Hankel operator is lower bounded by the estimation error of $\htC(\calD)\htB(\calD)$. Therefore, our lower bound indicates the failure of estimating the Hankel operator, which prevents any algorithm from giving satisfying estimates.

This results in an $\calO(n/\epsilon^2)$ sample complexity lower bound on \ac{hdsysid}. Therefore, the sample complexity of \Cref{alg:1single}, i.e. $\tilde{\calO}(n/\epsilon^2)$, is \textit{optimal} up to logarithmic and problem-related constants.

We defer the proof of \Cref{thm:1lowerbound} to \Cref{sec:lowerbound}, but we briefly discuss the reason behind this inevitable linear dependence on $n$. Therein, we construct a set of system instances with different observation matrices such that their column spaces are ``relatively close'' but ``quantifiably distinct''. In fact, we take advantage of the fact that in the high dimensional space $\bbR^n$, there are too many distinct subspaces that are close to each other. And therefore, any estimator needs correspondingly many samples, i.e. $\calO(n/\epsilon^2)$, to distinguish the true column space from many other candidates.
This difficulty is indeed due to the high dimensional observation noises. 

In the following section, we introduce a more general problem pertinent to high-dimensional ``meta-datasets.'' While this lower bound still has implications even for this general problem, we show how to confine its implied difficulty on the total length of the meta-dataset and not on each personalized portions.

\section{Meta SYSID}\label{sec:4meta}
Thus far, we have introduced \Cref{alg:1single} for \ac{hdsysid} that provides near-optimal dependence on system dimensions utilizing the idea of low-dimensional embedding. Now we will investigate how this algorithmic idea facilitates learning from the so-called meta-datasets (or metadata)---multiple datasets from different systems sharing the same observer. As shown in \cite{hajnal_continuous_2023, xia_stable_2021, marks_stimulus-dependent_2021, gallego_long-term_2020}, this is a real-world setting of significant importance. 

Following our analysis in \Cref{sec:proof-of-1single}, we observe that the $\tilde{\calO}(n/\epsilon^2)$ sample complexity arises from the error of column space $\wh{\Phi}_C$. 
Modulo this error, the sample complexity reduces vastly to $\tilde{\calO}(\poly\b{r,m}/\epsilon^2)$. This motivates us to again separate the learning of the observer column space $\Phi_C$ from the learning of personalized system parameters. By utilizing multiple datasets from different systems, we can collectively learn the observer column space much more accurately. Subsequently, equipped with this accurate column space approximation, there is hope to learn every single system with much fewer samples. 
To formalize the above idea, we introduce the following ``\acf{metasysid}.'' 

\textbf{\acf{metasysid}:}
Consider the identification of $K$ minimal systems $\calM_k = (r,n,m,A_k,B_k,C,\Sigma_{w,k},\sigma_{\eta,k}^2I), k\in[K]$ with the same dimensions ($r,m\ll n$) and observation matrix $C$. Here covariances $\{\Sigma_{w,k}, \sigma_{\eta,k}^2I\}_{k\in[K]}$ are \textit{positive semi-definite matrices}.
For every system $\calM_k$, we choose an input sequence $\calU_k=\{u_{k,t}\}_{t=0}^{T_k-1}$ and get observations $\calY_k = \{y_{k,t}\}_{t=0}^{T_k}$. Here every input $u_{k,t}\overset{\iid}{\sim}\calN(0, \Sigma_{u,k})$ is sampled independently from both the system variables and other inputs with \textit{positive definite covariance $\Sigma_{u,k}$}. This single trajectory dataset is denoted by $\calD_k=\calY_k\cup\calU_k$.  
With the $K$ datasets $\calD_{[K]} \coloneqq \bigcup_{k\in[K]} \calD_k$, our objective is to learn all system parameters \textit{uniformly well}. Namely, we aim to identify $\{\htA_k,\htB_k, \htC_k\}_{k=1}^K$ such that, for every $k\in[K]$, the following holds for some invertible matrix $S_k$ with high probability 
\begin{equation*}\begin{split}
    \max\left\{\norm{S_k^{-1}A_kS_k - \htA_k}, \norm{S_k^{-1}B_k - \htB_k},\right.\\
    \left.\norm{CS_k - \htC_k} \right\} &\leq \epsilon.
\end{split}\end{equation*}

\subsection{The Meta-Learning Algorithm and Its Sample Complexity}
\begin{algorithm}[h]
    \caption{\acf{meta-algo}}
    \label{alg:2meta}
 \begin{algorithmic}[1]
 \small
 \setstretch{0.5}
    \STATE {\bfseries Inputs:} Meta Datasets $\calD_{[K]}=\calY_{[K]}\cup\calU_{[K]}$;\\ 
    Subroutines \text{Col-Approx}, \ac{oracle}; \\
    
    \FOR{$k\in[K]$}

    
        \STATE Leave one out and approximate observer column space
        \[\wh{\Phi}_{C,k} \gets \text{Col-Approx}\b{\calY_{-k}}\]
        \STATE Project dataset $\calD_k = \{y_{k,t}\}_{t=0}^{T_k}\cup\{u_{k,t}\}_{t=0}^{T_k-1}$ onto the column space:
        \[\tilde{\calD}_k \gets \big\{\wh{\Phi}_{C,k}\t y_{k,t}\big\}_{t=0}^{T_k} \cup \left\{u_{k,t}\right\}_{t=0}^{T_k-1}\]
        \STATE Identify low-dimensional parameters:
        \[\htA_k, \htB_k, \tilC_k \gets \ac{oracle}\b{\tilde{\calD}_k}\]
        \STATE Recover the high-dimensional observer:
        \[\htC_k \gets \wh{\Phi}_{C,k}\tilC_k\]
    \ENDFOR
    
    \STATE {\bfseries Outputs:} $\{\htA_k, \htB_k, \htC_k\}_{k \in [K]}$
 \end{algorithmic}
\end{algorithm}

\Cref{alg:2meta} follows similar steps as \Cref{alg:1single}. For every system $k$, we approximate the column space, learn the low-dimensional system parameters, and finally project them back into the high-dimension. 
And similarly, to ensure the independence between $\wh{\Phi}_{C,k}$ and the projected dataset $\tilde{\calD}_k$, the dataset $\calD_k$ itself is left out in the first step. This independence is critical for applying the \ac{oracle} subroutine. 

Now we provide the theoretical guarantee the proposed algorithm.
\begin{theorem}\label{thm:2meta}
    Consider systems $\calM_{[K]}$ and datasets $\calD_{[K]}=\calU_{[K]}\cup\calY_{[K]}$ in \ac{metasysid}. Suppose $\calM_{[K]}$ satisfy Assumption \ref{assmp:sys1}.
    Then for any fixed system $k_0\in[K]$, 
    if $T_{k_0}$  and $T_{-k_0} \coloneqq \sum_{k\neq k_0} T_k$ satisfy
    \begin{equation*}\begin{split}
        & T_{k_0} \geq \kappa_1\cdot \poly\b{r,m} \;\;\text{ and }\;\; T_{-k_0} \gtrsim \kappa_3\cdot n^2r^3\log^8K,
    \end{split}\end{equation*}
    then $(\htA_{k_0},\htB_{k_0},\htC_{k_0})$ from \Cref{alg:2meta} satisfy the following for some invertible matrix $S$ with probability at least $1-\delta$
    
    \vspace{-15pt}{\small\begin{equation*}\begin{split}
        {}& \max\left\{\norm{S^{-1}A_{k_0}S - \htA_{k_0}}, \norm{S^{-1}B_{k_0} - \htB_{k_0}}, \norm{C_{k_0}S - \htC_{k_0}}\right\}\\
        {}& \lesssim \kappa_2\cdot \sqrt{\frac{\poly\b{r,m}}{T_{k_0}}} + \kappa_4 \cdot \sqrt{\frac{n}{T_{-k_0}}}\norm{\htC_{k_0}}.
    \end{split}\end{equation*}}
    Here $\kappa_1=\kappa_1(\wh{\calM}_{k_0}, \calU_{k_0},\delta), \kappa_2=\kappa_2(\wh{\calM}_{k_0}, \calU_{k_0},\delta), \kappa_3 = \kappa_3\b{\calM_{[K]},\calU_{[K]},\delta}, \kappa_4 = \kappa_4\b{\calM_{-k_0},\calU_{-k_0},\delta}$ are all problem-related constants independent of system dimensions modulo logarithmic factors. $\kappa_1, \kappa_2$ are defined in \Cref{def:idoracle}, while the definitions of $\kappa_3, \kappa_4$ are summarized in \Cref{thm:2meta_full}. $\wh{\calM}_{k_0}$ is defined as follows
    \begin{equation*}\begin{split}
        \wh{\calM}_{k_0} = ({}&r,\rank(\wh{\Phi}_{C,k_0}),m,\\
        {}& A_{k_0},B_{k_0},\wh{\Phi}_{C,k_0}\t C, \Sigma_w, \sigma_{\eta,k}^2I). 
    \end{split}\end{equation*}
\end{theorem}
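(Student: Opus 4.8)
The plan is to mirror the two-stage analysis of \Cref{thm:1single}, now letting the pooled leave-one-out data $\calD_{-k_0}$ (total length $T_{-k_0}$) play the role of the column-space trajectory $\calD_1$, and the held-out $\calD_{k_0}$ play the role of the identification trajectory $\calD_2$. Since $\calD_{k_0}$ is excluded when forming $\wh{\Phi}_{C,k_0}$ in Line 3 of \Cref{alg:2meta}, the projected noise $\wh{\Phi}_{C,k_0}\t\eta_{k_0,t}$ stays independent of $\wh{\Phi}_{C,k_0}$, so \ac{oracle} applies verbatim to the projected system $\wh{\calM}_{k_0}$ of \Cref{eq:mhat}. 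Consequently the only genuinely new ingredient is a \emph{meta} version of \Cref{lem:1col}: I must show that $\wh{\Phi}_{C,k_0}=\text{Col-Approx}(\calY_{-k_0})$ satisfies $\rank(\wh{\Phi}_{C,k_0})=\rank(C)$ and $\norm{\wh{\Phi}_{C,k_0}^\perp\t\Phi_C}\lesssim \kappa_4\sqrt{n/T_{-k_0}}$ once $T_{-k_0}\gtrsim \kappa_3\, n^2r^3\log^8K$.

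First I would decompose the pooled covariance that Col-Approx forms on $\calY_{-k_0}$. Writing $\Sigma_C \coloneqq C\b{\sum_{k\neq k_0}\sum_t x_{k,t}x_{k,t}\t}C\t$ for the aggregated signal term, which still has column space exactly $\col(C)$, the same algebra as in the \Cref{lem:1col} sketch yields $\Sigma_y = \Sigma_C + \big(\sum_{k\neq k_0}\sigma_{\eta,k}^2(T_k+1)\big)I + \Delta_2 + (\Delta_1+\Delta_1\t)$, where $\Delta_1,\Delta_2$ are the aggregated cross-noise and centered-noise terms summed over the $K-1$ trajectories. The signal lower bound becomes $\sum_{k\neq k_0}\sum_t x_{k,t}x_{k,t}\t\succ\tilde{\calO}(T_{-k_0})I$, which I would obtain by applying the excitation argument of Step 1 in \Cref{lem:1col} to each controllable, fully excited system $\calM_k$ ($k\neq k_0$) separately and summing, so that the controllability constants of all systems fold into $\kappa_4=\kappa_4(\calM_{-k_0},\calU_{-k_0},\delta)$.

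Second, I would re-run Steps 1--3 of the \Cref{lem:1col} argument on these pooled quantities. The key distinction is that $\Delta_1$ and $\Delta_2$ are now sums of $K-1$ \emph{independent} per-trajectory terms, so I would bound each trajectory's contribution with the self-normalized martingale inequality and Gaussian concentration exactly as in \Cref{lem:1col}, then take a union bound across the $K-1$ trajectories and across the covering step internal to the relative perturbation bound \Cref{lem:perturbation}; this union bound, together with chaining the logarithmic factors already present in the single-trajectory martingale estimate, is what inflates the sample requirement by the extra $\log^8K$ factor while keeping $\kappa_3,\kappa_4$ dimension-independent. The resulting relative bound $\norm{(\Sigma_C+T_{-k_0}I)^{-1/2}\Delta_1}\lesssim\tilde{\calO}(\sqrt{T_{-k_0}})$ produces an eigenvalue gap of order $\tilde{\calO}(T_{-k_0})$, which simultaneously pins down $\rank(\wh{\Phi}_{C,k_0})=\rank(C)$ and, through \Cref{lem:perturbation}, yields the stated angle estimate.

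With the meta column-space guarantee in hand, the remainder is identical to the second step of \Cref{thm:1single}: accuracy of $\wh{\Phi}_{C,k_0}$ keeps $\wh{\calM}_{k_0}$ minimal, so \ac{oracle} returns $(\htA_{k_0},\htB_{k_0},\tilde{C}_{k_0})$ with error $\kappa_2\sqrt{\poly\b{r,m}/T_{k_0}}$, and the observer recovery splits as $\norm{C S-\htC_{k_0}}\leq\norm{\wh{\Phi}_{C,k_0}^\perp\b{\wh{\Phi}_{C,k_0}^\perp}\t C S}+\norm{\wh{\Phi}_{C,k_0}\t C S-\tilde{C}_{k_0}}$, whose two terms are controlled by the angle estimate (scaling with $\norm{\htC_{k_0}}$) and by \ac{oracle} respectively. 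I expect the main obstacle to be the pooled column-space concentration over \emph{heterogeneous} systems: unlike the single-trajectory case, the per-system signal strengths and noise variances differ, so I must argue that the aggregate signal still dominates the aggregate relative noise uniformly, and carefully track how combining $K-1$ independent self-normalized martingales (with the covering inside \Cref{lem:perturbation}) forces the $\log^8K$ blow-up without introducing any hidden dependence on the ambient dimension $n$.
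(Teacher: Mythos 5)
Your high-level architecture matches the paper's: the leave-one-out construction makes $\wh{\Phi}_{C,k_0}$ independent of $\calD_{k_0}$ so that \ac{oracle} applies verbatim to the projected system, the only genuinely new ingredient is a pooled version of \Cref{lem:1col} (this is exactly \Cref{lem:2col_full}), and the final assembly is word-for-word Step 2 of \Cref{thm:1single_full}. The gap is in how you propose to prove that pooled column-space lemma. You plan to (i) obtain the signal lower bound $\sum_{k\neq k_0}\sum_t x_{k,t}x_{k,t}\t\succeq \tilde{\calO}(T_{-k_0})I$ by running the single-trajectory excitation argument on each system $\calM_k$ separately and summing, and (ii) bound the aggregated cross term $\Delta_1$ by applying the self-normalized martingale inequality per trajectory and union-bounding over the $K-1$ trajectories. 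Neither step works as stated. For (i), the single-trajectory lower bound (\Cref{lem:inv}) only activates once $T_k$ exceeds a threshold of order $\poly\b{r}$ times problem constants; in \ac{metasysid} only the aggregate $T_{-k_0}$ is assumed large and individual trajectories may be arbitrarily short, so summing per-system guarantees yields nothing for the short ones. The paper's \Cref{lem:inv_multi} instead runs the backward recursion \emph{jointly} over all trajectories (restricting to $\calK=\{k: T_k\geq r\}$ and absorbing the remainder into a $Kr$ slack term), so that only the total $\calT$ must clear the threshold.

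For (ii), a per-trajectory self-normalized bound is relative to the per-trajectory normalizer $(\Sigma_{C,k}+T_kI)^{1/2}$, and summing such bounds does not produce the bound relative to the pooled normalizer $(\Sigma_C+T_{-k_0}I)^{1/2}$ that \Cref{lem:perturbation} requires; the relative structure is lost in the union bound. The paper's \Cref{lem:inv_crossConc_stable_multi} instead applies the Abbasi--Yadkori inequality \emph{once} to the concatenated sequence across all trajectories (legitimate because inputs and noises are independent across systems), which directly gives the pooled relative bound with a single $\log\frac{1}{\delta}$. The $\log K$ inflation visible in $\kappa_3$ (hence the $\log^8K$ in the theorem) enters only through the union bound over the per-trajectory Gram upper bounds (\Cref{lem:gram_upper} at level $\delta/(2K)$) used inside the determinant estimate, not from union-bounding the martingale term itself, and \Cref{lem:perturbation} is deterministic so no covering argument contributes there. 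You would need to adopt the pooled-recursion and pooled-martingale formulation to close these two steps.
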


In the above result, we only require $T_{k_0} = \tilde{\calO}(\poly\b{r,m}/\epsilon^2)$ data points for an $\epsilon$ accurate approximation whenever $T_{-k_0} = \tilde{\calO}(n/\epsilon^2)$. Namely, as long as we have enough metadata, the number of samples needed from each single system is vastly reduced and \textit{is independent of the observer dimension $n$}. In real-world applications when we have a large, i.e. $\tilde{\calO}(n/\epsilon^2)$, dataset from a single system, the above result significantly helps for few-shot learning of other similar systems. Also, whenever we have access to numerous, i.e. $\tilde{\calO}(n)$ similar systems, the number of samples required from every system is also independent of $n$.


\begin{figure*}[ht]
    \begin{minipage}[h]{0.33\linewidth}
       \centering
        \includegraphics[width=1\linewidth]{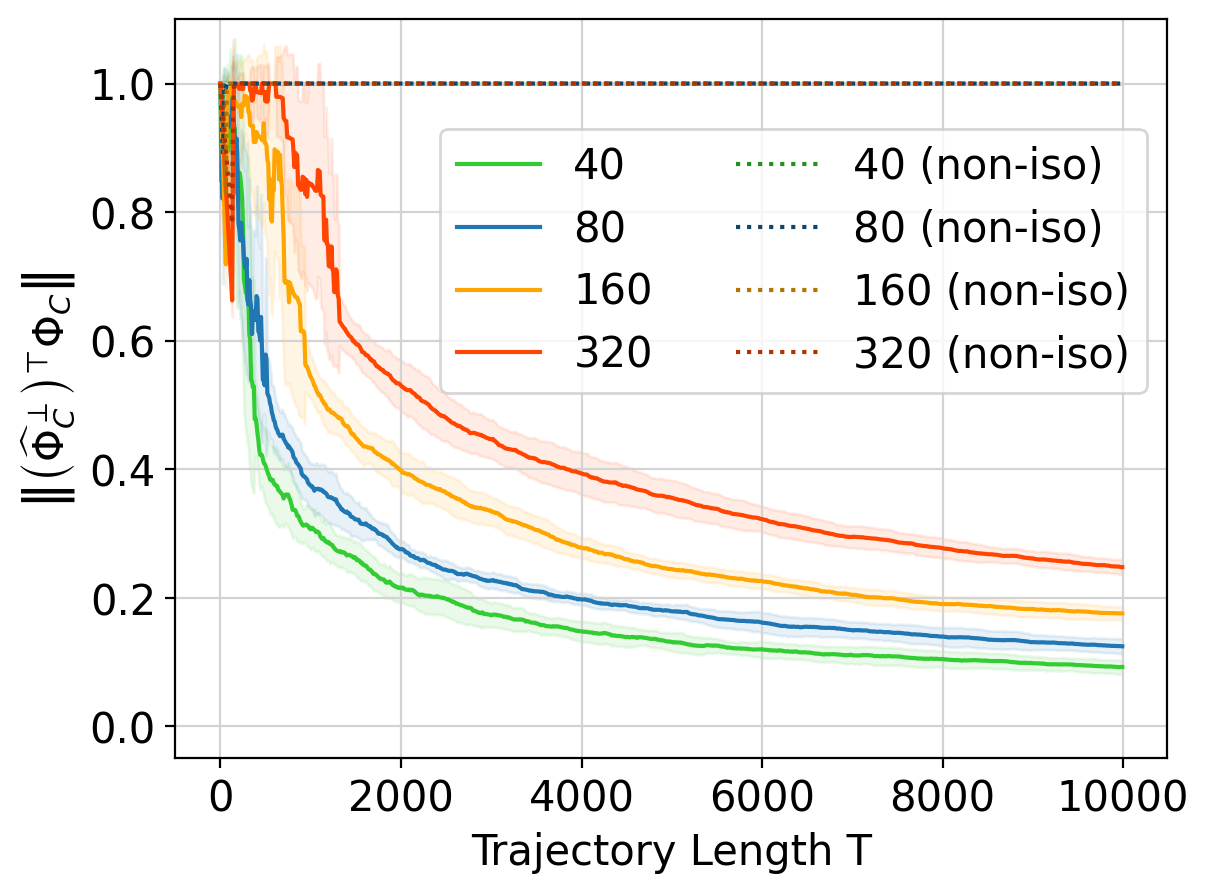}
    \end{minipage}
    \begin{minipage}[h]{0.33\linewidth}
       \centering
        \includegraphics[width=1\linewidth]{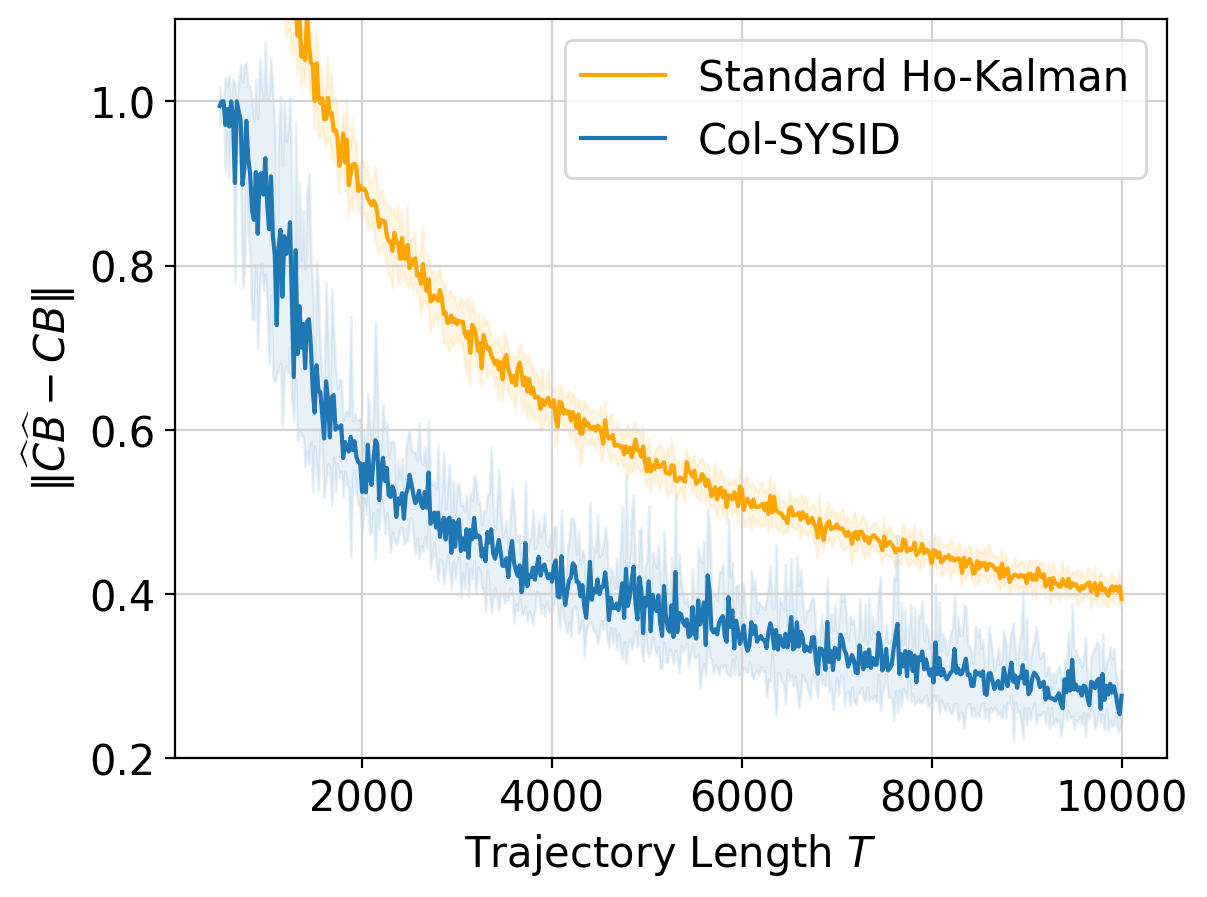}
    \end{minipage}
    \begin{minipage}[h]{0.33\linewidth}
       \centering
        \includegraphics[width=1\linewidth]{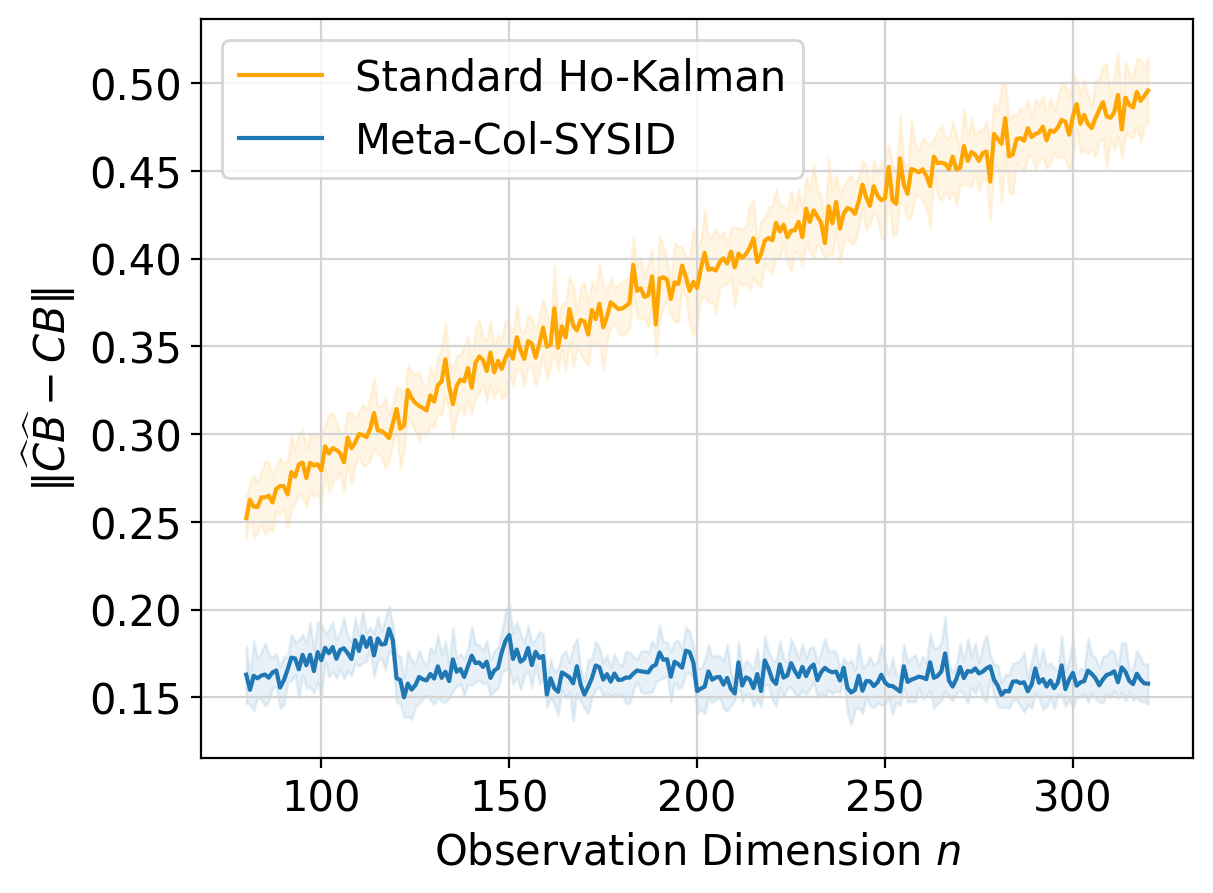}
    \end{minipage}
    \caption{\small {Left}: Error for \text{Col-Approx} (Algorithm \ref{alg:column_space}). {Center}: Error for \ac{col-algo} (Algorithm \ref{alg:1single}) and Standard Ho-Kalman with $n=320$. {Right}: Error for \ac{meta-algo} (Algorithm \ref{alg:2meta}) and Standard Ho-Kalman for $n\in[80,320]$.}
    \label{fig:simulation}
\end{figure*}
\section{Simulations}\label{sec:5sim}

We simulate our algorithm for a set of simple systems where $r=m=1$, $\Sigma_w = 0$ and $\sigma_\eta=1$. We use $A=0.9, B=1$ and randomly sample $C$ with orthonormal columns. We choose inputs with covariance $\Sigma_u=0.1$. The choice of $n$ will be clear in the context. The results are reported in \Cref{fig:simulation}.

We first simulate Col-Approx (Algorithm \ref{alg:column_space}) with $n=40, 80, 160, 320$ and a single trajectory data of length $T=10000$ separately (\Cref{fig:simulation}.Left). Recall from \Cref{lem:1col} that the quality of an estimation $\widehat{\Phi}_C$ of the column space $\Phi_C$ can be quantified by the error term $\|{(\widehat{\Phi}_C^\perp)}^\intercal \Phi_C\|$. For isotropic observation noises as in \ac{hdsysid}, i.e. $\Sigma_\eta = \sigma_\eta^2 I$, the approximated column space converges to the true column space as is shown by reduction in $\|{(\widehat{\Phi}_C^\perp)}^\intercal \Phi_C\|$ for the solid lines. Moreover, the algorithm is simulated for non-isotropic observation noises where the covariance is set to be $\Phi_C\Phi_C\t$. As shown in by the dotted curves, the algorithm fails to converge. This is because non-isotropic noises perturb the eigenspace of the observation covariance $\Sigma_y$ too much so that the information on $\Phi_C$ is drowned. 

Next, we simulate \ac{col-algo} (\Cref{alg:1single}) with $T=10000$ and $n=320$ (\Cref{fig:simulation}.Center). We plot error $\|\htC\htB - CB\|$ during the learning process as a proxy for the estimation error in \ac{hdsysid} problem. This error, i.e. the error of learning the first Markov parameter $CB$, is independent of any similarity transformation, yet relates to estimation errors for $B$ and $C$.  For comparison, the same error for the standard Ho-Kalman \cite{oymak_non-asymptotic_2019} is also included. It is clear that our Algorithm \ref{alg:2meta} outperforms the standard Ho-Kalman. 

Finally, we consider the \ac{metasysid} setting for $n \in [80, 320]$ with a meta-dataset from $\lfloor n/40\rfloor+1$ randomly sampled systems, each with trajectory length $4000$. As the algorithm is treating each system identically, without loss of generality, we only plot the error $\|\htC\htB - CB\|$ for learning the first system. For \ac{meta-algo} (Algorithm \ref{alg:2meta}), it is clear that the average learning error always stays below $0.2$ as the observation dimension increases. This is because our algorithm utilizes information from the meta-dataset to learn the observer column space, which constitutes the major challenge of system learning. However, the standard Ho-Kalman algorithm learns every system independently, and therefore the error grows significantly as the observation dimension increases.

\section{Conclusions and Future Directions}
In conclusion, our focus has been on learning a linear time-invariant (LTI) model characterized by low-dimensional latent variables and high-dimensional observations. The introduced \ac{col-algo} Algorithm serves as a solution with a commendable complexity of $\tilde{\calO}(n/\epsilon^2)$. Our analysis also delves into the fundamental limitations of this problem, establishing a sample complexity lower bound that essentially underscores the optimality of our proposed algorithm. 
Extending the scope of our results, we address a meta-learning setting where datasets from multiple analogous systems are available. This leads to the \ac{meta-algo} algorithm, an end-to-end framework adept at managing the meta-dataset and effectively learning all included systems.

While our current work lays a solid foundation, future directions could explore extensions to non-linear settings, or investigate adaptive approaches to handle varying or non-isotropic observation noises. Additionally, incorporating real-world applications and practical considerations could further enrich the utility of our results.


\section*{Impact Statement}
The goal of this work is to advance theory in high-dimension machine learning, specifically in latent space space learning. There might be minor potential societal consequences, but none of those are considered as necessary to be specifically highlighted here.

\section*{Acknowledgments}
This work is supported by  NSF AI institute 2112085, NSF ECCS 2328241, and NIH R01LM014465.

\bibliographystyle{icml2024}
\bibliography{references}

\appendix 
\label{appendicies}

\onecolumn


\newpage
\section{Upper Bounds for \ac{col-algo} --- Proof of \Cref{thm:1single}}\label{sec:hdsysid}
Recall the setting in \ac{hdsysid}. We consider system $\calM = (r,n,m,A,B,C,\Sigma_w,\sigma_\eta^2I)$ and inputs $\calU_1 = \{u_{1,t}\}_{t=0}^{T_1-1}$, $\calU_2=\{u_{2,t}\}_{t=0}^{T_2-1}$ sampled independently from $\calN(0, \Sigma_u)$. To simplify future analysis, we define the following notations
\begin{equation}\begin{split}
    {}& \psi_C = \sigma_1\b{C},\quad \psi_\eta = \sigma_1(\sigma_\eta^2I), \quad \psi_w = \sigma_1(\Sigma_w + B\Sigma_uB\t),\\
    {}& \phi_C = \sigma_{\min}(C), \quad \phi_O = \sigma_{\min}\b{\begin{bmatrix}
        C\\
        CA\\
        \vdots\\
        CA^{r-1}
    \end{bmatrix}}, \quad \phi_R = \sigma_{\min}\b{\begin{bmatrix}
        B & AB & \dots & A^{r-1}B
    \end{bmatrix}},\quad \phi_u = \sigma_{\min}(\Sigma_u).\\
\end{split}\end{equation}
Here we assume all $\psi$'s satisfy $\psi\geq 1$, otherwise we define $\psi$ to be $\max\{1, \sigma_1(\cdot)\}$. Similarly, we assume all $\phi$'s satisfy $\phi\leq 1$, otherwise we define $\phi$ to be $\min\{1, \sigma_{\min}(\cdot)\}$.

Recall that auxiliary system $\wh{\calM}$ is defined as follows with $\wh{\Phi}_C$ being the approximated observer column space:
\begin{equation}\begin{split}\label{eq:auxiliary_1}
    x_{t+1} = {}& Ax_t + Bu_t + w_t,\\
    y_{t} = {}& \wh{\Phi}_C\t Cx_t + \wh{\Phi}_C\t \eta_t.
\end{split}\end{equation}
Now we are ready to restate \Cref{thm:1single} in full details.
\begin{theorem}[\Cref{thm:1single} Restated]\label{thm:1single_full}
    Consider $\calM$, datasets $\calD_1=\calU_1\cup\calY_1,\calD_2=\calU_2\cup\calY_2$ in \ac{hdsysid} and constants defined above. Suppose system $\calM$ satisfies Assumption \ref{assmp:sys1} with constants $\psi_A$ and $\rho_A$.
    If $T_1$  and $T_2$ satisfy
    \begin{equation}\begin{split}\label{eq:3col_1}
        &T_1 \gtrsim \kappa_3\cdot n^2r^3, \quad T_2 \geq \kappa_1\cdot \poly\b{r,m},
    \end{split}\end{equation}
    then $(\htA,\htB,\htC)$ from Algorithm \ref{alg:1single} satisfy the following for some invertible matrix $S$ with probability at least $1-\delta$
    \begin{equation}\begin{split}
        {}& \max\left\{\norm{S^{-1}AS - \htA}, \norm{S^{-1}B - \htB}, \norm{CS - \htC}\right\}\lesssim \kappa_4 \cdot \sqrt{\frac{n}{T_1}}\norm{\htC} + \kappa_2\cdot\sqrt{\frac{\poly\b{r,m}}{T_2}}.
    \end{split}\end{equation}
    Here $\kappa_1=\kappa_1(\wh{\calM}, \calU_2,\delta)$ and $\kappa_2=\kappa_2(\wh{\calM}, \calU_2,\delta)$ are defined in \Cref{def:idoracle}.
    $\kappa_3 = \kappa_3\b{\calM,\calU_{[2]},\delta}, \quad \kappa_4 = \kappa_4\b{\calM, \calU_1,\delta}$ are detailed below. All of them are problem-related constants independent of system dimensions  modulo logarithmic factors.
    \begin{equation}\begin{split}
        \kappa_3\b{\calM, \calU_{[2]},\delta} = {}& \max\left\{\kappa_4^2\frac{\psi_A^2\psi_C^2}{(1-\rho_A^2)\phi_O^2}, ~\b{\frac{\psi_\eta^2\psi_C^2\psi_w\psi_A^2}{(1-\rho_A^2)\phi_u^2\phi_C^4\phi_R^4}}^2\log^2\b{\frac{\psi_C^2\psi_w\psi_A^4}{1-\rho_A^2} r\log\frac{r}{\delta}}\log^4(\frac{r}{\delta})\right\},\\
        \kappa_4\b{\calM, \calU_1,\delta} = {}& \frac{\psi_\eta}{\phi_u\phi_C^2\phi_R^2}\sqrt{\log\frac{1}{\delta}}.
    \end{split}\end{equation}
\end{theorem}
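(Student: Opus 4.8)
The plan is to glue together two ingredients already available in the excerpt: the column space guarantee of \Cref{lem:1col} and the black-box guarantee of the \ac{oracle} from \Cref{def:idoracle}, with a perturbation analysis of the projected system $\wh{\calM}$ as the connective tissue. First I would invoke \Cref{lem:1col} on the first trajectory. Under $T_1 \gtrsim \kappa_3 n^2 r^3$ --- which subsumes the $\kappa_5 n^2 r^3$ requirement of the lemma, since the second term in the definition of $\kappa_3$ is exactly the $\kappa_5$ threshold --- we obtain, on an event of probability at least $1-\delta/2$, that $\rank(\wh{\Phi}_C)=\rank(C)$ and $\norm{\wh{\Phi}_C^\perp\t\Phi_C}\lesssim \kappa_4\sqrt{n/T_1}$. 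The first term $\kappa_4^2\psi_A^2\psi_C^2/[(1-\rho_A^2)\phi_O^2]$ inside $\kappa_3$ is present precisely to force this subspace error below a small absolute threshold, a smallness I will exploit next.

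Next I would condition on $\calD_1$, hence on $\wh{\Phi}_C$. Since $\calD_2$ is an independent trajectory, the projected data $\tilde{\calD}_2$ is generated exactly by the LTI system $\wh{\calM}$ of \Cref{eq:mhat}, whose observation noise $\wh{\Phi}_C\t\eta_{2,t}\sim\calN(0,\sigma_\eta^2 I_{\rank(C)})$ remains i.i.d.\ Gaussian and independent of the inputs because $\wh{\Phi}_C$ has orthonormal columns and is independent of $\calD_2$ --- this is exactly why the two-trajectory design is used. Before calling the \ac{oracle} I must verify that $\wh{\calM}$ is a minimal system whose constants stay dimension-free. Controllability of $(A,B)$ and \Cref{assmp:sys1} are inherited verbatim from $\calM$, and $\sigma_1(\wh{\Phi}_C\t C)\le\psi_C$ since $\wh{\Phi}_C\t$ has orthonormal rows. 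For observability I would use $C=\Phi_C\Phi_C\t C$ to write each block of the observability matrix of $(A,\wh{\Phi}_C\t C)$ as $\wh{\Phi}_C\t C A^i=(\wh{\Phi}_C\t\Phi_C)\,\Phi_C\t C A^i$, so that this matrix factors through $\wh{\Phi}_C\t\Phi_C$; its minimum singular value is therefore at least $\sigma_{\min}(\wh{\Phi}_C\t\Phi_C)=\sqrt{1-\norm{\wh{\Phi}_C^\perp\t\Phi_C}^2}$ times $\phi_O$, the observability margin of $(A,\Phi_C\t C)$. The subspace smallness from Step~1 keeps this bounded away from zero, so $\wh{\calM}$ is observable with a dimension-independent margin and the oracle constants $\kappa_1(\wh{\calM}),\kappa_2(\wh{\calM})$ stay dimension-free. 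Applying the \ac{oracle} (using $\rank(\wh{\Phi}_C)=\rank(C)\le r$ to collapse $\poly\b{r,\rank(\wh{\Phi}_C),m}$ into $\poly\b{r,m}$), I obtain, with probability $1-\delta/2$ and for a single invertible $S$,
\[\max\left\{\norm{S^{-1}AS-\htA},\,\norm{S^{-1}B-\htB},\,\norm{\wh{\Phi}_C\t CS-\tilC}\right\}\le \kappa_2\sqrt{\poly\b{r,m}/T_2}.\]

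Finally I would assemble the observer error. Using $\htC=\wh{\Phi}_C\tilC$ and the split $CS=\wh{\Phi}_C\wh{\Phi}_C\t CS+\wh{\Phi}_C^\perp(\wh{\Phi}_C^\perp)\t CS$, the triangle inequality gives $\norm{CS-\htC}\le\norm{(\wh{\Phi}_C^\perp)\t CS}+\norm{\wh{\Phi}_C\t CS-\tilC}$, where the second summand is the oracle bound above. For the first, $C=\Phi_C\Phi_C\t C$ yields $\norm{(\wh{\Phi}_C^\perp)\t CS}\le\norm{\wh{\Phi}_C^\perp\t\Phi_C}\,\norm{CS}$, and bounding $\norm{CS}\le\norm{\htC}+\norm{CS-\htC}$ and rearranging --- legitimate since the subspace error is below $1/2$ --- converts $\norm{CS}$ into $\norm{\htC}$, producing the advertised $\kappa_4\sqrt{n/T_1}\norm{\htC}$ term. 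The errors on $\htA$ and $\htB$ are already the oracle's bounds, so taking the maximum and a union bound over the two probability-$\delta/2$ events (the $\log(1/\delta)$ factors inside the constants absorbing the rescaling) completes the argument. I expect the main obstacle to be the middle step: showing that projecting by the \emph{estimated} $\wh{\Phi}_C$ rather than the true $\Phi_C$ preserves minimality and, crucially, keeps the observability margin --- and hence the oracle's $\kappa_1,\kappa_2$ --- independent of $n$. This quantitative perturbation bound, enabled by the subspace accuracy in \Cref{lem:1col}, is what makes the otherwise black-box oracle legitimately applicable to $\wh{\calM}$.
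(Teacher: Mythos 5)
Your proposal is correct and follows essentially the same route as the paper's proof: invoke \Cref{lem:1col} on $\calD_1$, verify that the projected system $\wh{\calM}$ is minimal so the \ac{oracle} applies to the independent second trajectory, and then assemble $\norm{CS-\htC}$ via the identical decomposition $\norm{\wh{\Phi}_C^\perp{}^\intercal CS}+\norm{\wh{\Phi}_C^\intercal CS-\tilC}$ followed by the same rearrangement converting $\norm{CS}$ into $\norm{\htC}$ using $\Delta_\Phi\le 1/2$. The only local deviation is your observability argument, which factors each block as $\wh{\Phi}_C^\intercal CA^i=(\wh{\Phi}_C^\intercal\Phi_C)(\Phi_C^\intercal CA^i)$ and lower-bounds the margin multiplicatively by $\sqrt{1-\Delta_\Phi^2}\,\phi_O$, whereas the paper subtracts the perturbation $\wh{\Phi}_C^\perp\wh{\Phi}_C^{\perp\intercal}CA^i$ and applies an additive Weyl bound to get $\phi_O/2$; both are valid and rely on the same smallness condition encoded in the first term of $\kappa_3$.
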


\begin{proof}
Based on \Cref{eq:3col_1}, $T_1$ satisfies the condition of \Cref{lem:1col_full}. We apply \Cref{lem:1col_full} on $(\calD_1, \Sigma_\eta)$ and get the following with probability at least $1-\frac{\delta}{2}$
\begin{equation}
    \norm{\wh{\Phi}_C^\perp\t \Phi_C} \lesssim \kappa_4 \sqrt{\frac{n}{T_1}} \coloneqq \Delta_{\Phi}.
\end{equation}
The system generating dataset $\tilde{\calD}_2$ (line 3 in \Cref{alg:1single}), denoted by $\wh{\calM}$, is rewritten as in \Cref{eq:auxiliary_1}.

\paragraph{Step 1. We first show that $\wh{\calM}$ is still a minimal system.} The controllability directly comes from the fact that $R = \begin{bmatrix}
        B & AB & \dots & A^{r-1}B
    \end{bmatrix}$ is full row rank, because system $\calM$ is minimal. On the other hand, for the observability matrix $O$, we know that 
    \begin{equation}\begin{split}\label{eq:1_col_full_1}
        {}& \rank(O) = \rank \b{\begin{bmatrix}
            \wh{\Phi}_C\t C\\
            \wh{\Phi}_C\t CA\\
            \vdots\\
            \wh{\Phi}_C\t CA^{r-1}
        \end{bmatrix}}\\
        \geq {}& \rank\b{\diag\b{\wh{\Phi}_C, \dots, \wh{\Phi}_C}
        \begin{bmatrix}
            \wh{\Phi}_C\t C\\
            \wh{\Phi}_C\t CA\\
            \vdots\\
            \wh{\Phi}_C\t CA^{r-1}
        \end{bmatrix}}
        = \rank \b{\begin{bmatrix}
            \wh{\Phi}_C\wh{\Phi}_C\t C\\
            \wh{\Phi}_C\wh{\Phi}_C\t CA\\
            \vdots\\
            \wh{\Phi}_C\wh{\Phi}_C\t CA^{r-1}
        \end{bmatrix}}\\
        = {}& \rank \b{\begin{bmatrix}
            C\\
            CA\\
            \vdots\\
            CA^{r-1}
        \end{bmatrix} - \begin{bmatrix}
            \wh{\Phi}_C^\perp\wh{\Phi}_C^\perp\t C\\
            \wh{\Phi}_C^\perp\wh{\Phi}_C^\perp\t CA\\
            \vdots\\
            \wh{\Phi}_C^\perp\wh{\Phi}_C^\perp\t CA^{r-1}
        \end{bmatrix}}.
    \end{split}\end{equation}
    Consider the second term. We first observe that the following holds for all $i\in[r-1]$
    \begin{equation}\begin{split}
        \norm{\wh{\Phi}_C^\perp\wh{\Phi}_C^\perp\t CA^i} = {}& \norm{\wh{\Phi}_C^\perp\wh{\Phi}_C^\perp\t \Phi_C\Phi_C\t CA^i} \leq \norm{\wh{\Phi}_C^\perp\wh{\Phi}_C^\perp\t \Phi_C} \norm{\Phi_C\t CA^i}\\
        = {}& \norm{\wh{\Phi}_C^\perp\t \Phi_C} \norm{CA^i}\\
        \leq {}& \Delta_\Phi \psi_A\psi_C \rho_A^{i-1}.
    \end{split}\end{equation}Moreover, for $i=0$, we know 
    \begin{equation}\begin{split}
        \norm{\wh{\Phi}_C^\perp\wh{\Phi}_C^\perp\t C} \leq \norm{\wh{\Phi}_C^\perp\t \Phi_C} \norm{C}
        \leq \Delta_\Phi \psi_C.
    \end{split}\end{equation}
    Therefore, 
    \begin{equation}\begin{split}\label{eq:1_col_full_2}
        \norm{\begin{bmatrix}
            \wh{\Phi}_C^\perp\wh{\Phi}_C^\perp\t C\\
            \wh{\Phi}_C^\perp\wh{\Phi}_C^\perp\t CA\\
            \vdots\\
            \wh{\Phi}_C^\perp\wh{\Phi}_C^\perp\t CA^{r-1}
        \end{bmatrix}}
        = {}& \sqrt{\norm{\sum_{i=0}^{r-1} A^i\t C\t\wh{\Phi}_C^\perp\wh{\Phi}_C^\perp\t CA^i}}\\
        \leq {}& \sqrt{\sum_{i=0}^{r-1} \norm{\wh{\Phi}_C^\perp\t CA^i}^2}\\
        \leq {}& \sqrt{(\Delta_{\Phi}\psi_C)^2+\sum_{i=1}^{r-1} \b{\Delta_\Phi\psi_A\psi_C\rho_A^{i-1}}^2}\\
        \leq {}& 2\frac{\psi_A\psi_C}{\sqrt{1-\rho_A^2}}\Delta_\Phi
    \end{split}\end{equation}
    From \Cref{eq:3col_1}, we know that $T_{1} \gtrsim \kappa_4^2\frac{\psi_A^2\psi_C^2}{(1-\rho_A^2)\phi_O^2}n$. Combining Theorem 1 in \cite{stewart_matrix_1990} gives
    \begin{equation}\begin{split}
        {}& \sigma_r\b{\begin{bmatrix}
            C\\
            CA\\
            \vdots\\
            CA^{r-1}
        \end{bmatrix} - \begin{bmatrix}
            \wh{\Phi}_C^\perp\wh{\Phi}_C^\perp\t C\\
            \wh{\Phi}_C^\perp\wh{\Phi}_C^\perp\t CA\\
            \vdots\\
            \wh{\Phi}_C^\perp\wh{\Phi}_C^\perp\t CA^{r-1}
        \end{bmatrix}}\\
        \geq {}& \sigma_r \b{\begin{bmatrix}
            C\\
            CA\\
            \vdots\\
            CA^{r-1}
        \end{bmatrix}} - \norm{\begin{bmatrix}
            \wh{\Phi}_C^\perp\wh{\Phi}_C^\perp\t C\\
            \wh{\Phi}_C^\perp\wh{\Phi}_C^\perp\t CA\\
            \vdots\\
            \wh{\Phi}_C^\perp\wh{\Phi}_C^\perp\t CA^{r-1}
        \end{bmatrix}}\\
        \geq {}& \frac{\phi_O}{2} > 0.
    \end{split}\end{equation}
    This implies that 
    \begin{equation}\begin{split}
        r \geq \rank(O) \geq r.
    \end{split}\end{equation}
    Namely, the system is observable. Since $\b{A, B, \wh{\Phi}_C\t C}$ is controllable and observable, we conclude that $\wh{\calM}$ is minimal.

    \textbf{Step 2. We now apply \ac{oracle} on this minimal system. } 
    Recall the dynamics of $\wh{\calM}$:
    \begin{equation}\begin{split}\label{eq:auxiliary_2}
        x_{t+1} = {}& Ax_t + Bu_t + w_t,\\
        y_{t} = {}& \wh{\Phi}_C\t Cx_t + \wh{\Phi}_C\t \eta_t.
    \end{split}\end{equation}
    Since $\wh{\Phi}_C$ is independent of the second trajectory, $\{\wh{\Phi}_C\eta_{2,t}\}_{t=0}^{T_2}$ are i.i.d. noises independent of other variables of the second trajectory. Moreover, $\wh{\calM}$ satisfy Assumption \ref{assmp:sys1}. Therefore, we can apply \ac{oracle}. 
    Let $r_c = \rank(C)$. With $T_2 \geq \kappa_1\poly\b{r,r_c,m} = \kappa_1\poly\b{r,m}$, outputs $\htA, \htB, \tilC$ satisfy the following for some invertible matrix $S$ with probability at least $1-\frac{\delta}{2}$
    \begin{equation}\begin{split}\label{eq:proj_1}
        {}& \max\left\{\norm{S^{-1}AS - \htA}, \norm{S^{-1}B - \htB}, \norm{\wh{\Phi}_C\t CS - \tilC}\right\} \leq  \kappa_2 \sqrt{\frac{\poly\b{r,r_c,m}}{T_2}}
        \eqqcolon \Delta_{\wh{\calM}}.
    \end{split}\end{equation}
    Therefore, our final approximation for $C$ satisfies
    \begin{equation}\begin{split}\label{eq:1_col_full_3}
        \norm{CS - \htC} ={}&  \norm{CS - \wh{\Phi}_C\wh{\Phi}_C\t CS + \wh{\Phi}_C\wh{\Phi}_C\t CS - \wh{\Phi}_C\tilC}\\
        \leq {}& \norm{CS - \wh{\Phi}_C\wh{\Phi}_C\t CS} + \norm{\wh{\Phi}_C\wh{\Phi}_C\t CS - \wh{\Phi}_C\tilC}\\
        \leq {}& \norm{\wh{\Phi}_C^\perp\b{\wh{\Phi}_C^\perp}\t CS} + \norm{\wh{\Phi}_C\t CS - \tilC}\\
        \leq {}& \norm{\wh{\Phi}_C^\perp\b{\wh{\Phi}_C^\perp}\t CS} + \Delta_{\wh{\calM}}.
    \end{split}\end{equation}
    For the first term, we know that
    \begin{equation}\begin{split}
        \norm{\wh{\Phi}_C^\perp\b{\wh{\Phi}_C^\perp}\t CS} ={}&  \norm{\wh{\Phi}_C^\perp\b{\wh{\Phi}_C^\perp}\t \Phi_C\Phi_C\t CS} = \norm{\b{\wh{\Phi}_C^\perp}\t \Phi_C\Phi_C\t CS}\\
        \leq {}& \norm{\b{\wh{\Phi}_C^\perp}\t \Phi_C}\norm{\Phi_C\t}\norm{CS}\\
        \leq {}& \Delta_\Phi\norm{CS}.
    \end{split}\end{equation}
    Therefore, 
    \begin{equation}\begin{split}
        \norm{CS-\htC} \leq \Delta_\Phi\norm{CS} + \Delta_{\wh{\calM}}.
    \end{split}\end{equation}
    To get the upper bound w.r.t. $\norm{\htC}$, we notice that 
    \begin{equation}\begin{split}
        \norm{CS} \leq \norm{\htC} + \norm{CS-\htC} \leq \norm{\htC} + \Delta_{\wh{\calM}} + \Delta_{\Phi}\norm{CS}.
    \end{split}\end{equation}
    Rearraging the terms gives
    \begin{equation}\begin{split}
        \norm{CS} \leq \frac{\norm{\htC} + \Delta_{\wh{\calM}}}{1-\Delta_\Phi}.
    \end{split}\end{equation}
    Substituting back gives the following with probability at least $1-\delta$
    \begin{equation}\begin{split}
        \norm{CS-\htC} \leq {}& \frac{1}{1-\Delta_\Phi}\Delta_{\wh{\calM}} + \frac{\Delta_\Phi}{1-\Delta_\Phi}\norm{\htC}
        \leq 2\Delta_{\wh{\calM}} + \frac{\Delta_\Phi}{1-\Delta_\Phi}\norm{\htC}\\
        \overset{(i)}{\leq} {}& 2\Delta_{\wh{\calM}} + 2\Delta_{\Phi}\norm{\htC}\\
        \lesssim {}& \kappa_2\cdot\sqrt{\frac{\poly\b{r,r_c,m}}{T_2}} + \kappa_4 \cdot \sqrt{\frac{n}{T_{1}}}\norm{\htC}\\
        \leq {}& \kappa_2\cdot\sqrt{\frac{\poly\b{r,m}}{T_2}} + \kappa_4 \cdot \sqrt{\frac{n}{T_{1}}}\norm{\htC}\\
    \end{split}\end{equation} 
    Here $(i)$ is because $\Delta_{\Phi} \leq 1/2$ due to \Cref{eq:3col_1}. 
    Finally, we conclude that 
    \begin{equation}
        \max\left\{\norm{S^{-1}AS - \htA}, \norm{S^{-1}B - \htB}, \norm{CS - \htC}\right\} \lesssim \kappa_2\cdot\sqrt{\frac{\poly\b{r,m}}{T_2}} + \kappa_4 \cdot \sqrt{\frac{n}{T_1}}\norm{\htC}.
    \end{equation}
\end{proof}

\subsection[]{Upper Bounds for \ac{subspaceID}}



The theoretical guarantee for Col-Approx is presented in the following lemma.
\begin{lemma}[\Cref{lem:1col} Restated]\label{lem:1col_full}
    Consider system $\calM$, dataset $\calD_1=\calU_1\cup\calY_1$ in \ac{hdsysid} and constants defined at the beginning of \Cref{sec:hdsysid}. Suppose $\calM$ satisfies Assumption \ref{assmp:sys1} with constants $\psi_A$ and $\rho_A$.
    If
    \begin{equation}\begin{split}\label{eq:thm_col_1}
        T_1 \gtrsim \underbrace{\b{\frac{\psi_\eta^2\psi_C^2\psi_w\psi_A^2}{(1-\rho_A^2)\phi_u^2\phi_C^4\phi_R^4}}^2\log^2\b{\frac{\psi_C^2\psi_w\psi_A^4}{1-\rho_A^2} r\log\frac{r}{\delta}}\log^4(\frac{r}{\delta})}_{\kappa_5\b{\calM,\calU_1, \delta}}\cdot n^2r^3,
    \end{split}\end{equation}
    then $\wh{\Phi}_C = \text{col-approx}(\calY_1, \Sigma_\eta)$ satisfies the following with probability at least $1-\delta$
    \begin{equation}\begin{split}
        \htr_c = \rank(C), \quad \norm{\wh{\Phi}_C^\perp\t\Phi_C} \lesssim \underbrace{\frac{\psi_\eta}{\phi_u\phi_C^2\phi_R^2}\sqrt{\log\frac{1}{\delta}}}_{\kappa_4\b{\calM,\calU_1,\delta}}\cdot\sqrt{\frac{n}{T_1}}.
    \end{split}\end{equation}
\end{lemma}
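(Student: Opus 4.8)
The plan is to study the empirical covariance $\Sigma_y = \sum_{t=0}^{T_1} y_{1,t}y_{1,t}\t$ computed in \Cref{alg:column_space} through the decomposition
\begin{equation*}
\Sigma_y = \underbrace{C\b{\textstyle\sum_{t=0}^{T_1} x_{1,t}x_{1,t}\t}C\t}_{\Sigma_C} + \sigma_\eta^2(T_1+1)I + \Delta_2 + \b{\Delta_1 + \Delta_1\t},
\end{equation*}
with $\Delta_2 = \sum_t(\eta_{1,t}\eta_{1,t}\t - \sigma_\eta^2 I)$ and $\Delta_1 = \sum_t C x_{1,t}\eta_{1,t}\t$. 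The guiding observation I would exploit is that the isotropic term $\sigma_\eta^2(T_1+1)I$ shifts every eigenvalue by the same constant and therefore moves no eigenvector; consequently all information about $\col(C)$ is carried by the top-$\rank(C)$ eigenspace of $\Sigma_C$, which coincides with $\col(\Phi_C)$. The whole lemma then reduces to showing that the two noise terms $\Delta_2$ and $\Delta_1+\Delta_1\t$ disturb neither the eigenvalue gap at index $\rank(C)$ nor this eigenspace too severely. This is also exactly why isotropy is indispensable, matching the divergence observed for non-isotropic noise in the simulations.

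The first substantive ingredient is three concentration estimates. I would first lower-bound the latent Gram matrix as $\sum_t x_{1,t}x_{1,t}\t \succeq \tilde{\calO}(T_1)\, I$: since $(A,B)$ is controllable and the inputs are Gaussian with $\Sigma_u\succ0$, the expected state covariance dominates a positive-definite reachability Gramian (quantified by $\phi_u,\phi_R$), and a concentration bound for the correlated Gaussian sequence $\{x_{1,t}\}$ promotes this to a high-probability statement; combined with $\sigma_{\min}(C)=\phi_C$ this yields $\sigma_{\rank(C)}(\Sigma_C)\gtrsim \phi_C^2\phi_R^2\phi_u\, T_1$. Next I would bound $\norm{\Delta_2}\lesssim \sigma_\eta^2\sqrt{nT_1}$ by matrix/Wishart concentration for the centered sum of $T_1$ i.i.d.\ rank-one terms in $\bbR^n$ (valid once $T_1\gtrsim n$). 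Finally, and most delicately, I would control the cross term $\Delta_1$ only in a \emph{relative} sense, $\norm{\b{\Sigma_C+T_1I}^{-1/2}\Delta_1}\lesssim\tilde{\calO}(\sqrt{T_1})$: because the trajectory is correlated rather than i.i.d., ordinary matrix concentration does not apply, so I would instead invoke a self-normalized martingale inequality in the style of \cite{abbasi-yadkori_improved_2011,sarkar_near_2019}, with $\eta_{1,t}$ playing the role of the increment against the filtration generated by the states.

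With these bounds, Weyl's inequality yields a clean separation of scales: the top $\rank(C)$ eigenvalues of $\Sigma_y$ sit $\gtrsim T_1$ above $\sigma_\eta^2(T_1+1)$, while the remaining $n-\rank(C)$ eigenvalues stay within $\norm{\Delta_2}+2\norm{\Delta_1}\lesssim\sqrt{nT_1}$ of the noise floor. The threshold $T^{3/4}$ in \Cref{alg:column_space} is deliberately placed between these scales: the signal gap $\sim T_1$ exceeds $T_1^{3/4}$, whereas the spread of the noise-floor eigenvalues $\sim\sqrt{nT_1}$ falls below $T_1^{3/4}$ exactly once $T_1\gtrsim n^2$. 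This is precisely where the burn-in $T_1\gtrsim n^2r^3$ originates, and it guarantees that the unique inter-eigenvalue gap exceeding the threshold occurs at index $\rank(C)$, so that $\htr_c=\rank(C)$ with high probability.

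The final and hardest step is to turn these estimates into the angle bound $\norm{\b{\wh{\Phi}_C^\perp}\t\Phi_C}\lesssim \kappa_4\sqrt{n/T_1}$. A direct Davis--Kahan argument would divide $\norm{\Delta_1}+\norm{\Delta_2}$ by the gap; this is acceptable for $\Delta_2$ (giving $\sqrt{nT_1}/T_1=\sqrt{n/T_1}$, the target rate), but the \emph{absolute} norm of $\Delta_1$ can be as large as $\sqrt{\norm{\Sigma_C}\,T_1}\sim T_1$, comparable to the gap, and would render the bound vacuous. I expect this to be the main obstacle. The resolution is to use that $\Delta_1$ is a relative perturbation---large only in directions where $\Sigma_C$ is itself large---and to feed the relative bound from the previous step together with the absolute bound on $\Delta_2$ into the tailored perturbation result \Cref{lem:perturbation}, which must certify that a perturbation bounded through $\b{\Sigma_C+T_1I}^{1/2}$ plus an absolute term of order $\sqrt{nT_1}$ rotates the leading eigenspace by only $\sqrt{n/T_1}$, rather than the much weaker guarantee a naive application would give. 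Chaining this through the gap $\gtrsim T_1$ delivers the claimed rate and completes the argument.
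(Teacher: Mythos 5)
Your proposal follows the paper's proof essentially step for step: the same decomposition of $\Sigma_y$ into $\Sigma_C$, the isotropic shift $\sigma_\eta^2(T_1+1)I$, $\Delta_2$, and $\Delta_1+\Delta_1\t$; the same three concentration estimates (persistence of excitation via controllability for the latent Gram matrix, sub-exponential concentration for $\Delta_2$, and a self-normalized martingale bound for the cross term); the same Weyl-type argument locating the unique super-threshold eigenvalue gap at index $\rank(C)$; and the same application of the tailored relative perturbation result \Cref{lem:perturbation} to $\bar{\Sigma}_C=\Sigma_C+T_1I$.

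The one place where your write-up, taken literally, would not close is the size of the self-normalized bound. You state $\norm{\b{\Sigma_C+T_1I}^{-1/2}\Delta_1}\leq\tilde{\calO}(\sqrt{T_1})$. Feeding a relative perturbation of size $a\asymp\sqrt{T_1}$ into \Cref{lem:perturbation} with $\phi_M\asymp T_1$, $\sigma_1(\Lambda_2)\asymp T_1$ and gap $\delta_M\asymp T_1$ yields the middle term $a\,\sigma_1(\Lambda_2)/\b{\sqrt{\phi_M}\,\delta_M}\asymp a/\sqrt{T_1}\asymp 1$, which is vacuous --- and no perturbation lemma can do better, since a relative perturbation of norm $\sqrt{T_1}$ against $\bar{\Sigma}_C^{1/2}$ genuinely can rotate the leading eigenspace by a constant angle. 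What the Abbasi-Yadkori/Sarkar argument actually delivers, and what the paper's appendix uses, is $\norm{\bar{\Sigma}_C^{-1/2}\Delta_1}\lesssim\tilde{\calO}(\sqrt{n})$ with \emph{no} polynomial dependence on $T_1$: the $\log\det$ ratio in the self-normalized inequality contributes only $\tilde{\calO}(\sqrt{r})$ because the states live in an $r$-dimensional space, and the $\sqrt{n}$ comes from the covering of the observation-noise directions. It is precisely this $T_1$-free scaling $a\asymp\sqrt{n}$ that makes the middle term of \Cref{lem:perturbation} equal to $\sqrt{n/T_1}$ and match the $\Delta_2$ contribution $b/\delta_M\asymp\sqrt{nT_1}/T_1$. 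With that quantitative correction, your argument coincides with the paper's.
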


\begin{proof}
For simplicity, we omit all subscript $1$ for the rest of this section. From the system dynamics, we know that
\begin{equation}\begin{split}
    \Sigma_y = {}& \sum_{t=0}^{T} y_ty_t\t = \sum_{t=0}^T C x_tx_t\t C\t + \sum_{t=0}^T \eta_t\eta_t\t + \sum_{t=0}^T \b{Cx_t\eta_t\t + \eta_t x_t\t C\t}\\
    = {}& \sum_{t=0}^T C x_tx_t\t C\t + \sum_{t=0}^T \b{\eta_t\eta_t\t-\sigma_\eta^2I} + \sum_{t=0}^T \b{Cx_t\eta_t\t + \eta_t x_t\t C\t} + (T+1)\sigma_\eta^2I\\
\end{split}\end{equation}
Here the first term is the information on $\col(C)$, while the second and third terms are noises. 
For the rest of the proof, we first upper bound norms of the noise terms (\textit{step 1}). With this, we show that $\htr_c = \rank(C)$ with high probability (\textit{step 2}). We then apply our subspace perturbation result to upper bound the influence of the noises on the eigenspace of the first term (\textit{step 3}).

\textbf{Step 1: Noise Norm Upper Bounds.} Define $r_c=\rank(C)$. Notice that we can write $C = \Phi_C\alpha$, where $\Phi_C\in\bbR^{n\times r_c}$ consists of orthonormal columns that form a basis of $\col(C)$ and $\alpha \in\bbR^{r_c\times r}$ is a full row rank matrix. 
It is then clear that
\begin{equation}\begin{split}
    \sigma_{\min}(\alpha ) = \sigma_{\min}(C) \geq \phi_C, \quad \sigma_1(\alpha ) = \sigma_1(C) \leq \psi_C.
\end{split}\end{equation}
Let $\Sigma_C = \sum_{t=0}^T C x_tx_t\t C\t$, $\bar{\Sigma}_C = \Sigma_C +  T  I$ and $\Sigma_\alpha = \sum_{t=0}^T \alpha  x_tx_t\t \alpha \t$. From the definitions, it is clear that $\Sigma_C = \Phi_C\Sigma_\alpha \Phi_C\t$ and $\bar{\Sigma}_C = \Phi_C\Sigma_\alpha \Phi_C\t +  T I$.
Then from Lemma \ref{lem:inv}, Lemma \ref{lem:inv_crossConc_stable}, Lemma \ref{lem:inv_cov_noise}, and Lemma \ref{lem:gram_upper}, we have the following for $T\gtrsim \frac{\psi_w^2\psi_C^4\psi_A^4}{\phi_u^2\phi_{C}^4\phi_R^4}r^3\log\frac{r}{\delta}\cdot\bar{\psi}^2$ (from Equation \ref{eq:thm_col_1}) with probability at least $1-\delta$ 
\begin{equation}\begin{split}
    {}& \frac{\phi_u\phi_C^2\phi_R^2}{8} T I \preceq \Sigma_\alpha \precsim \frac{\psi_C^2\psi_w\psi_A^2}{1-\rho_A^2}r T \log\frac{1}{\delta}I,\\
    {}& \norm{\b{\bar{\Sigma}_C}^{-\frac{1}{2}}\sum_{t=0}^T Cx_t\eta_t\t} \lesssim \bar{\psi}\sqrt{\psi_\eta n\log\frac{1}{\delta}},\\
    {}& \norm{\sum_{t=0}^T \b{\eta_t\eta_t\t - \sigma_{\eta}^2I}} \lesssim \psi_\eta\sqrt{n T \log\frac{1}{\delta}}.
\end{split}\end{equation}
Here $\bar{\psi} = \sqrt{\log\b{\frac{\psi_C^2\psi_w\psi_A^4}{1-\rho_A^2} r\log\frac{r}{\delta}}}$.

\textbf{Step 2: Order Estimation Guarantee.} Consider the following matrix
\begin{equation}\begin{split}
    \Sigma_y = {}& \Sigma_C + \underbrace{\sum_{t=0}^T \b{\eta_t\eta_t\t - \sigma_{\eta}^2I} + \sum_{t=0}^T \b{Cx_t\eta_t\t + \eta_t C\t x_t\t}}_{\Delta} + (T+1)\sigma_\eta^2I.
\end{split}\end{equation}
The inequalities of Step 1 imply
\begin{equation}\begin{split}
    \norm{\Delta} \leq {}& \norm{\sum_{t=0}^T \b{\eta_t\eta_t\t - \sigma_{\eta}^2I}} + 2\norm{\sum_{t=0}^T Cx_t\eta_t\t}\\
    \lesssim {}& \psi_\eta\sqrt{n T \log\frac{1}{\delta}} + \bar{\psi}\sqrt{\psi_\eta n\log\frac{1}{\delta}}\norm{(\bar{\Sigma}_C)^{\frac{1}{2}}}\\
    = {}& \psi_\eta\sqrt{n T \log\frac{1}{\delta}} + \bar{\psi}\sqrt{\psi_\eta n\log\frac{1}{\delta}}\b{\norm{( T  I+\Sigma_\alpha)^{\frac{1}{2}}}}\\
    \lesssim {}& \psi_\eta\sqrt{n T \log\frac{1}{\delta}} + \bar{\psi}\sqrt{\psi_\eta n\log\frac{1}{\delta}}\sqrt{\frac{\psi_C^2\psi_w\psi_A^2}{1-\rho_A^2}r T \log\frac{1}{\delta}+ T }\\
    \lesssim {}& \bar{\psi}\psi_\eta \sqrt{n\log\frac{1}{\delta}}\sqrt{\frac{\psi_C^2\psi_w\psi_A^2}{1-\rho_A^2}r T \log\frac{1}{\delta}}\\
    = {}& \frac{\sqrt{\psi_\eta^2\psi_C^2\psi_w\psi_A^2}}{\sqrt{1-\rho_A^2}}\bar{\psi} \log\b{\frac{1}{\delta}} \sqrt{nr T }\\
\end{split}\end{equation}
Therefore, for each $i\in[r_c]$, we have the following for some positive constant $c_{1}$:
\begin{equation}\begin{split}
    \sigma_i\b{\Sigma_y} - (T+1)\sigma_\eta^2 \geq {}& \sigma_i(\Sigma_C) - \norm{\Delta} \geq \sigma_{\min}(\Sigma_\alpha) - \norm{\Delta}\\
    \geq {}& \frac{\phi_u\phi_C^2\phi_R^2}{8} T  - c_1\frac{\sqrt{\psi_\eta^2\psi_C^2\psi_w\psi_A^2}}{\sqrt{1-\rho_A^2}}\bar{\psi} \log\b{\frac{1}{\delta}} \sqrt{nr T }\\
    \geq {}& \frac{\phi_u\phi_C^2\phi_R^2}{16} T.
\end{split}\label{eq:sgima-T-lowerbound}\end{equation}
Here the first inequality holds according to Theorem 1 in \cite{stewart_matrix_1990} and the last line is because $T\gtrsim \frac{\psi_\eta^2\psi_C^2\psi_w\psi_A^2}{(1-\rho_A^2)\phi_u^2\phi_C^4\phi_R^4}\log\b{\frac{\psi_C^2\psi_w\psi_A^4}{1-\rho_A^2} r\log\frac{r}{\delta}}\log^2(\frac{1}{\delta})\cdot nr$ (from Equation \ref{eq:thm_col_1}). 
For $i\in[r_c+1, n]$, 
\begin{equation}\begin{split}
    \sigma_i(\Sigma_y) - (T+1)\sigma_\eta^2 \leq \norm{\Delta} \leq c_1\frac{\sqrt{\psi_\eta^2\psi_C^2\psi_w\psi_A^2}}{\sqrt{1-\rho_A^2}}\bar{\psi} \log\b{\frac{1}{\delta}} \sqrt{nr T }.
\end{split}\end{equation}
Based on the above three inequalities and Equation \ref{eq:thm_col_1}, we conclude that with probability at least $1-\delta$, the following hold 
\begin{equation*}\begin{split}
    \sigma_{i}(\Sigma_y) - \sigma_{j}(\Sigma_y) \leq {}& 2c_1\frac{\sqrt{\psi_\eta^2\psi_C^2\psi_w\psi_A^2}}{\sqrt{1-\rho_A^2}}\bar{\psi} \log\b{\frac{1}{\delta}} \sqrt{nr T }\\
    < {}& T^{3/4},\quad \forall i<j\in[r_c+1,n]\\
    \sigma_{r_c}(\Sigma_y) - \sigma_{r_c+1}(\Sigma_y) \geq {}& \frac{\phi_u\phi_C^2\phi_R^2}{16} T - c_1\frac{\sqrt{\psi_\eta^2\psi_C^2\psi_w\psi_A^2}}{\sqrt{1-\rho_A^2}}\bar{\psi} \log\b{\frac{1}{\delta}} \sqrt{nr T } \geq \frac{\phi_u\phi_C^2\phi_R^2}{32} T\\
    > {}& T^{3/4}.
\end{split}\end{equation*}
The above inequality follows from Equation \ref{eq:thm_col_1} similar to (\ref{eq:sgima-T-lowerbound}).
Therefore, from the definition of $\htr_c$, we know that $\htr_c = r_c$ with probability at least $1-\delta$.

\textbf{Step 3: Column Space Estimation Guarantee.} With $\htr_c = r_c$, now we try to apply our subspace perturbation result, i.e. Lemma \ref{lem:perturbation}, on matrix $\Sigma_y-\sigma_\eta^2(T+1)I + TI$. Notice that this matrix has exactly the same eigenspace as $\Sigma_y$ and therefore the eigenspace of its first $r_c$ eigenvectors is $\wh{\Phi}_C$ (line 9 in Algorithm \ref{alg:column_space}). This matrix can be decomposed as
\begin{equation}\begin{split}
    \Sigma_y -\sigma_\eta^2(T+1)I + TI = {}& \Sigma_C + \sum_{t=0}^T \b{\eta_t\eta_t\t - \sigma_{\eta}^2I} + \sum_{t=0}^T \b{Cx_t\eta_t\t + \eta_t C\t x_t\t} + TI.\\
    = {}& \underbrace{\bar{\Sigma}_C}_{M \text{ in Lemma \ref{lem:perturbation}}} + \underbrace{\sum_{t=0}^T \b{\eta_t\eta_t\t - \sigma_{\eta}^2I}}_{\Delta_2 \text{ in Lemma \ref{lem:perturbation}}} + \underbrace{\sum_{t=0}^T \b{Cx_t\eta_t\t + \eta_t C\t x_t\t}}_{\Delta_1+\Delta_1\t \text{ in Lemma \ref{lem:perturbation}}}.
\end{split}\end{equation}
For matrix $\bar{\Sigma}_C = \Phi_C \Sigma_\alpha \Phi_C\t +  T  I = \Phi_C\b{\Sigma_\alpha+ T  I}\Phi_C\t +  T \Phi_C^\perp\Phi_C^\perp\t$, it is clear that its SVD can be written as
\begin{equation}\begin{split}
    \bar{\Sigma}_C = \begin{bmatrix}
        \Phi_C & \Phi_C^\perp
    \end{bmatrix} \begin{bmatrix}
        \Lambda_1 & \\
        &  T  I
    \end{bmatrix} \begin{bmatrix}
        \Phi_C\t\\
        \Phi_C^\perp\t
    \end{bmatrix}, \quad \Lambda_1 = \diag\b{\sigma_1(\Sigma_\alpha)+ T , \dots, \sigma_{\min}(\Sigma_\alpha)+ T }. 
\end{split}\end{equation}
where $\Phi_C$ is an orthonormal basis of $\col(C)$. 
Then we conclude that the following holds for some large enough positive constant $c_{3}$
\begin{equation}\begin{split}
    {}& \sigma_1\b{\bar{\Sigma}_C} \leq  T  + \sigma_1(\Sigma_\alpha) \leq c_{3} \b{1+\frac{\psi_C^2\psi_w\psi_A^2}{1-\rho_A^2}r\log\frac{1}{\delta}} T \leq 2c_{3} \b{\frac{\psi_C^2\psi_w\psi_A^2}{1-\rho_A^2}r\log\frac{1}{\delta}} T ,\\
    {}& \sigma_{r_c}\b{\bar{\Sigma}_C} - \sigma_{r_c+1}(\bar{\Sigma}_C) = \sigma_{\min}(\Sigma_\alpha) \geq \frac{\phi_u\phi_C^2\phi_R^2}{32} T , \quad \sigma_{\min}(\bar{\Sigma}_C) =  T .
\end{split}\end{equation}
Now we are ready to apply Lemma \ref{lem:perturbation} on $\bar{\Sigma}_C$ with the following constants for $c_{4}, c_{5}$ large enough
\begin{equation}\begin{split}
    {}& a = c_{4}\bar{\psi}\sqrt{\psi_\eta n\log\frac{1}{\delta}}, \quad b = c_{5}\psi_\eta\sqrt{n T \log\frac{1}{\delta}},\\
    {}& \delta_M = \frac{\phi_u\phi_C^2\phi_R^2}{32} T , \quad \psi_M = 2c_{3}\frac{\psi_C^2\psi_w\psi_A^2}{1-\rho_A^2}r T \log\frac{1}{\delta}, \quad \phi_M =  T , \quad \sigma_1(\Lambda_2) =  T .
\end{split}\end{equation}
Again, $\bar{\psi} = \sqrt{\log\b{\frac{\psi_C^2\psi_w\psi_A^4}{1-\rho_A^2} r\log\frac{r}{\delta}}}$.
From Equation \ref{eq:thm_col_1}, it is clear that $\sqrt{\phi_M} \gtrsim \alpha$ and $\delta_M \gtrsim \alpha\sqrt{\psi_M} + \beta$. Therefore,
\begin{equation}\begin{split}
    \norm{\wh{\Phi}_C^\perp\t\Phi_C} = \norm{\wh{\Phi}_C^\perp\t\tilde{\Phi}_C} \lesssim {}& \frac{\psi_\eta}{\phi_u\phi_C^2\phi_R^2}\sqrt{\frac{n}{ T }\log\frac{1}{\delta}} + \bar{\psi}\frac{\sqrt{\psi_\eta}}{\phi_u\phi_C^2\phi_R^2}\sqrt{\frac{n}{ T }\log\frac{1}{\delta}} + \bar{\psi}^2\frac{\psi_\eta\sqrt{\psi_C^2\psi_w\psi_A^2}}{\phi_u\phi_C^2\phi_R^2\sqrt{1-\rho_A^2}}\frac{n\sqrt{r}}{ T }\log^2\frac{1}{\delta}\\
    \lesssim {}& \frac{\psi_\eta}{\phi_u\phi_C^2\phi_R^2}\sqrt{\frac{n}{ T }\log\frac{1}{\delta}} + \bar{\psi}^2\frac{\psi_\eta\sqrt{\psi_C^2\psi_w\psi_A^2}}{\phi_u\phi_C^2\phi_R^2\sqrt{1-\rho_A^2}}\frac{n\sqrt{r}}{ T }\log^2\frac{1}{\delta}\\
    \lesssim {}& \frac{\psi_\eta}{\phi_u\phi_C^2\phi_R^2}\sqrt{\frac{n}{ T }\log\frac{1}{\delta}}.
\end{split}\end{equation}
\end{proof}

\subsubsection{Supporting Details}
\begin{lemma}\label{lem:inv}
    Consider the same setting as \Cref{lem:1col_full}. 
    Consider full row rank matrices $\{\alpha \in\bbR^{a\times r}\}_{k\in[K]}$ for any positive integer $a\leq r$.
    Let
    \begin{equation}\begin{split}
        \Sigma_\alpha = \sum_{t\in[T]} \alpha x_tx_t\t \alpha \t, \quad \phi_\alpha = \min\{\sigma_{\min}\b{\alpha }, 1\}, \quad \psi_\alpha = \max\{\sigma_1\b{\alpha },1\}.
    \end{split}\end{equation}
    If $T  \gtrsim \frac{\psi_w^2\psi_\alpha^4\psi_A^4}{\phi_u^2\phi_{\alpha}^4\phi_R^4}r^3\log\frac{r}{\delta}\cdot\log\b{\frac{\psi_\alpha^2\psi_w\psi_A^4}{1-\rho_A^2} r\log\frac{r}{\delta}}$, then the following holds with probability at least $1-\delta$ 
    \begin{equation}\begin{split}
        \Sigma_\alpha \succeq \frac{\phi_u\phi_\alpha^2\phi_R^2}{8} T I.    
    \end{split}\end{equation}
\end{lemma}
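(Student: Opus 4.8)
The plan is to establish the bound in two stages: first show that the expected Gram matrix grows linearly in $T$ with the claimed constant, and then show that $\Sigma_\alpha$ concentrates around this expectation \emph{from below}, uniformly over directions. Throughout I set $z_t = \alpha x_t \in \bbR^a$, so that $\Sigma_\alpha = \sum_{t\in[T]} z_t z_t\t$, and since the whole trajectory $(x_1,\dots,x_T)$ is a linear image of the i.i.d.\ Gaussian inputs and noises, the sequence $\{z_t\}$ is jointly zero-mean Gaussian. It then suffices to lower bound $\lambda_{\min}(\Sigma_\alpha)$, i.e.\ to show $\sum_t (v\t z_t)^2 \ge \tfrac18 \phi_u\phi_\alpha^2\phi_R^2\,T$ simultaneously for every unit $v\in\bbR^a$.

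\emph{Stage 1 (excitation via the controllability Gramian).} Unrolling the dynamics from $x_0=0$ gives $x_t=\sum_{s=0}^{t-1}A^{t-1-s}\b{Bu_s+w_s}$, hence $P_t:=\bbE[x_tx_t\t]=\sum_{j=0}^{t-1}A^j\b{\Sigma_w+B\Sigma_uB\t}(A^j)\t$. Using $\Sigma_u\succeq\phi_u I$ and keeping only the first $r$ summands, for $t\ge r$ I get $P_t\succeq \phi_u\sum_{j=0}^{r-1}A^jBB\t(A^j)\t=\phi_u RR\t\succeq \phi_u\phi_R^2 I$ with $R=[B,AB,\dots,A^{r-1}B]$. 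Therefore $\bbE[z_tz_t\t]=\alpha P_t\alpha\t\succeq \phi_u\phi_R^2\,\alpha\alpha\t\succeq \phi_u\phi_R^2\phi_\alpha^2 I$ for $t\ge r$, and summing yields $\bbE[\Sigma_\alpha]\succeq (T-r)\phi_u\phi_R^2\phi_\alpha^2 I$. This is the only step invoking controllability, the input excitation $\phi_u$, and the conditioning $\phi_\alpha$ of $\alpha$.

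\emph{Stage 2 (lower-tail concentration, the crux).} Fix a unit $v$ and let $g_t=v\t z_t$, a scalar mean-zero Gaussian process with autocovariance matrix $\Gamma_v\in\bbR^{T\times T}$. Then $\sum_t g_t^2\overset{d}{=}\sum_i\lambda_i(\Gamma_v)\chi^2_{(i)}$ with $\operatorname{tr}\Gamma_v=\bbE\sum_t g_t^2\ge (T-r)\phi_u\phi_R^2\phi_\alpha^2$, so the Laurent--Massart lower-tail inequality gives $\bbP\b{\sum_t g_t^2\le \operatorname{tr}\Gamma_v-2\norm{\Gamma_v}_F\sqrt{x}}\le e^{-x}$. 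The obstacle relative to the i.i.d.\ case is that the $g_t$ are temporally correlated, so I must control $\norm{\Gamma_v}_F\le \sqrt{\operatorname{tr}\Gamma_v\cdot\norm{\Gamma_v}}$ through the operator norm. Here \Cref{assmp:sys1} is indispensable: since $\bbE[x_sx_t\t]=P_{\min(s,t)}(A^{|t-s|})\t$, the cross-covariances decay geometrically, $|\bbE[g_sg_t]|\lesssim \psi_\alpha^2\tfrac{\psi_w\psi_A^2}{1-\rho_A^2}\psi_A\rho_A^{|t-s|}$, and summing the geometric series over one index bounds $\norm{\Gamma_v}$ by a \emph{dimension-free} constant of order $\psi_\alpha^2\psi_w\psi_A^3/\b{(1-\rho_A)(1-\rho_A^2)}$. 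Feeding these in and taking $x$ a fixed fraction of $\operatorname{tr}\Gamma_v$ gives, for each fixed $v$, a failure probability $\exp\b{-c\,T\,\phi_u^2\phi_\alpha^4\phi_R^4/\b{\psi_\alpha^4\psi_w^2\psi_A^4}}$ for the event $\sum_t g_t^2<\tfrac14(T-r)\phi_u\phi_R^2\phi_\alpha^2$.

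\emph{Stage 3 (uniformization and the threshold).} Finally I would discretize $S^{a-1}$ by a $\tfrac14$-net of cardinality at most $9^a\le 9^r$, apply a union bound, and convert the net estimate into a uniform bound on $\lambda_{\min}(\Sigma_\alpha)$ by the standard quadratic-form argument. Absorbing the net cardinality and the burn-in $r$ into the exponent forces $T\gtrsim \tfrac{\psi_\alpha^4\psi_w^2\psi_A^4}{\phi_u^2\phi_\alpha^4\phi_R^4}\cdot\poly\b{r}\log\tfrac{r}{\delta}$, matching the stated threshold up to the polynomial degree in $r$ and the logarithmic factor. The main difficulty, and the only place the argument departs from textbook covariance concentration, is Stage~2's control of the $T\times T$ autocovariance operator norm, which relies entirely on the geometric stability in \Cref{assmp:sys1}. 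An alternative to Stages~2--3 is to verify a block-martingale small-ball condition for $\{z_t\}$---the conditional variance of $v\t z_{t+i}$ given the past exceeds $\phi_u\phi_R^2\phi_\alpha^2$ once $i\ge r$, which with Gaussian anti-concentration yields the small-ball probability---and then invoke the martingale lower bound of \citet{simchowitz_learning_2018}, trading the autocovariance estimate for a block length $k=\Theta(r)$ in the sample complexity.
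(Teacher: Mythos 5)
Your proposal is correct in substance but follows a genuinely different route from the paper. The paper never passes through the expected Gram matrix or the spectrum of the temporal autocovariance: it unrolls the recursion $x_t=Ax_{t-1}+\tilde w_{t-1}$ one step at a time for $r$ steps, at each step splitting $\Sigma_{\alpha,\tau}$ into $\Sigma_{\alpha,\tau+1}$, a noise-covariance term handled by Bernstein (Lemma~\ref{lem:inv_cov_noise}), and a cross term $\sum_t \alpha A x_t(\alpha\tilde w_t)\t$ handled by the self-normalized martingale bound of Abbasi-Yadkori (Lemma~\ref{lem:inv_crossConc_stable}), which is stated \emph{relative} to $(\Sigma_{\alpha,1}+TI)^{1/2}$ precisely so that it can be absorbed as a multiplicative factor $\b{1-c\bar\psi\sqrt{\cdot/T}}$ in front of $\Sigma_{\alpha,\tau+1}$; after $r$ iterations the accumulated deterministic terms assemble into $\phi_u(T-r+1)\,\alpha\b{\sum_i A^iBB\t A^{i\t}}\alpha\t$, which is where controllability enters. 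Your Stage~1 reaches the same controllability floor directly through $\bbE[x_tx_t\t]\succeq\phi_u RR\t$ for $t\ge r$, and your Stage~2 replaces the martingale machinery with the exact chi-square representation of $\sum_t(v\t z_t)^2$, Laurent--Massart, and a Schur-test bound on $\norm{\Gamma_v}$ via the geometric decay in \Cref{assmp:sys1}. What your route buys is a conceptually cleaner lower-tail argument whose per-direction exponent is linear in $T$ with no $r$-fold recursion, so if carried out it would likely yield a milder burn-in than the paper's $r^3$; what it costs is reliance on exact joint Gaussianity of the trajectory (the paper's martingale route only needs sub-Gaussian conditional noise) and a separate uniformization step.

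One quantitative slip in Stage~3: a $\tfrac14$-net does not suffice. The conversion $\lambda_{\min}(\Sigma_\alpha)\ge\min_{u\in N}u\t\Sigma_\alpha u-2\epsilon\norm{\Sigma_\alpha}$ requires $\epsilon\norm{\Sigma_\alpha}$ to be small against the target $\tfrac18\phi_u\phi_\alpha^2\phi_R^2T$, and by Lemma~\ref{lem:gram_upper} one only has $\norm{\Sigma_\alpha}\lesssim\psi_\alpha^2\frac{\psi_w\psi_A^2r}{1-\rho_A^2}T\log\frac1\delta$, which exceeds the target by a factor of order the problem's condition number times $r\log\frac1\delta$. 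You must therefore take $\epsilon$ inversely proportional to that factor; the net log-cardinality then picks up an extra $\log$ of the condition number, which is harmless given the $r^3\log\frac{r}{\delta}\log(\cdot)$ budget in the stated threshold, but it is not the off-the-shelf $9^a$ net argument you invoke.
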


\begin{proof}
    For further analysis, we let $\tilw_{t} \coloneqq Bu_{t} + w_{t} \sim \calN\b{0, \Sigma_{\tilw}}$ with $\Sigma_{\tilw} \coloneqq B\Sigma_{u}B\t + \Sigma_{w}$ and we can rewrite the system dynamics as follows
\begin{equation}\begin{split}\label{eq:system_meta}
    x_{t+1} = {}& Ax_t + Bu_{t} + w_{t} = Ax_t + \tilw_{t},\\
    y_t = {}& Cx_t + \eta_t.
\end{split}\end{equation}

For simplicity, we define $\Sigma_{\alpha,\tau} \coloneqq \alpha A^{\tau}\b{\sum_{t=1}^{T-\tau}x_\tau x_\tau\t} (\alpha A^{\tau})\t$. It is then clear that
    \begin{equation}\begin{split}
        \Sigma_{\alpha} = {}& \Sigma_{\alpha,0} = \sum_{t=1}^{T} \alpha x_tx_t\t \alpha \t\\
        ={}&   \sum_{t=1}^{T} \alpha \b{Ax_{t-1}x_{t-1}\t A\t  + \tilw_{t-1}\tilw_{t-1}\t  + Ax_{t-1}\tilw_{t-1}\t  + \tilw_{t-1}x_{t-1}\t A\t} \alpha \t \\
        ={}& \Sigma_{\alpha,1} +  \sum_{t=0}^{T-1} \alpha \tilw_{t}\tilw_{t}\t \alpha \t  +  \sum_{t=0}^{T-1} \b{\alpha Ax_t\b{\alpha \tilw_{t}}\t  + \b{\alpha \tilw_{t}}x_t\t A\t \alpha \t}.
    \end{split}\label{eq:covA_1}\end{equation}
    By Lemma \ref{lem:inv_crossConc_stable} and \ref{lem:inv_cov_noise}, the following events hold with probability at least $1-\delta/r$, 
    \begin{equation}\begin{split}
        \norm{\sum_{t=0}^{T-1} \alpha \tilw_{t}\tilw_{t}\t \alpha \t - T\cdot \alpha \Sigma_{\tilw}\alpha\t} \lesssim {}& \psi_w\psi_\alpha^2\sqrt{rT\log\frac{2r}{\delta}}, \\
        \norm{\b{\Sigma_{\alpha,1} + TI}^{-\frac{1}{2}} \sum_{t=0}^{T-1} \alpha Ax_t\b{\alpha \tilw_{t}}\t} \lesssim {}& \bar{\psi}\sqrt{r\psi_w\psi_\alpha^2\log\frac{2r}{\delta}},
    \end{split}\end{equation}
    with $\bar{\psi} = \sqrt{\log\b{\frac{\psi_\alpha^2\psi_w\psi_A^4}{1-\rho_A^2} r\log\frac{2r}{\delta}}}$.
    From the first inequality, it is clear that the following holds for some large enough positive constant $c_1$
    \begin{equation}\begin{split}
        \sum_{t=0}^{T-1} \alpha \tilw_{t}\tilw_{t}\t \alpha \t  \succeq T\cdot \alpha \Sigma_{\tilw}\alpha\t -c_1 \psi_w\psi_{\alpha}^2\sqrt{rT\log\frac{2r}{\delta}}I.
    \end{split}\end{equation}
    From the second inequality, we have the following
    \begin{equation}\begin{split}
        {}& \norm{\b{\Sigma_{\alpha,1}+ T  I}^{-\frac{1}{2}} \sum_{t=0}^{T-1} \alpha Ax_t\b{\alpha \tilw_{t}}\t \b{\Sigma_{\alpha,1}+ T  I}^{-\frac{1}{2}}}\\
        \lesssim {}& \bar{\psi}\sqrt{r\psi_w\psi_\alpha^2\log\frac{2r}{\delta}} \norm{\b{\Sigma_{\alpha,1} +  TI}^{-\frac{1}{2}}}\\
        \lesssim {}& \bar{\psi}\sqrt{\frac{r\psi_w\psi_\alpha^2\log\frac{2r}{\delta}}{ T }}.
    \end{split}\end{equation}
    This implies that the following holds for large enough positive constant $c_2$
    \begin{equation}\begin{split}
        {}&  \sum_{t=0}^{T-1} \b{\alpha Ax_t\b{\alpha \tilw_{t}}\t  + \b{\alpha \tilw_{t}}x_t\t A\t \alpha \t} 
        \succeq -c_2\bar{\psi}\sqrt{\frac{r\psi_w\psi_\alpha^2\log\frac{2r}{\delta}}{ T }} \b{\Sigma_{\alpha,1} + TI}.
    \end{split}\end{equation}
    Plugging back into Equation (\ref{eq:covA_1}) gives the following for some positive constant $c_3$
    \begin{equation}\begin{split}\label{eq:covA_2}
        \Sigma_{\alpha,0} \succeq {}& \Sigma_{\alpha,1} + T \alpha \Sigma_{\tilw}\alpha\t - c_1 \psi_w\psi_{\alpha}^2\sqrt{rT\log\frac{2r}{\delta}}I - c_2\bar{\psi}\sqrt{\frac{r\psi_w\psi_\alpha^2\log\frac{2r}{\delta}}{ T }}\b{\Sigma_{\alpha,1} +  T  I}\\
        \succeq {}& \b{1 - c_2\bar{\psi}\sqrt{\frac{r\psi_w\psi_\alpha^2\log\frac{2r}{\delta}}{ T }}}\Sigma_{\alpha,1} + T\alpha\Sigma_{\tilw}\alpha\t\\
        {}& \hspace{2em} - c_1 \psi_w\psi_{\alpha}^2\sqrt{rT\log\frac{2r}{\delta}}I - c_2\bar{\psi}\sqrt{r\psi_w\psi_\alpha^2\log\frac{2r}{\delta}T}I\\
        \succeq {}& \b{1 - c_2\bar{\psi}\sqrt{\frac{r\psi_w\psi_\alpha^2\log\frac{2r}{\delta}}{ T }}}\Sigma_{\alpha,1} + T\alpha\Sigma_{\tilw}\alpha\t - c_3\psi_w\psi_\alpha^2\bar{\psi}\sqrt{rT\log\frac{2r}{\delta}}I\\
    \end{split}\end{equation}
    Similarly, we expand $\Sigma_{\alpha,1}, \Sigma_{\alpha,2}, \cdots, \Sigma_{\alpha,r-1}$ and have the following with probability at least $1-\delta$
    {\small\begin{equation}\begin{split}
        \Sigma_{\alpha,0} \succeq {}& \b{T\alpha\Sigma_{\tilw}\alpha\t - c_3\psi_w\psi_\alpha^2\bar{\psi}\sqrt{rT\log\frac{2r}{\delta}}I} + \b{1 - c_2\bar{\psi}\sqrt{\frac{r\psi_w\psi_\alpha^2\log\frac{2r}{\delta}}{ T }}}\Sigma_{\alpha,1}\\
        \succeq {}& \b{T\alpha\Sigma_{\tilw}\alpha\t - c_3\psi_w\psi_\alpha^2\bar{\psi}\sqrt{rT\log\frac{2r}{\delta}}I}\\
        {}& \hspace{2em} + \b{(T-1)(\alpha A)\Sigma_{\tilw}(\alpha A)\t - c_3\psi_w\psi_\alpha^2\psi_A^2\bar{\psi}\sqrt{r(T-1)\log\frac{2r}{\delta}}I}\cdot \b{1 - c_2\bar{\psi}\sqrt{\frac{r\psi_w\psi_\alpha^2\log\frac{2r}{\delta}}{ T }}}\\
        {}& \hspace{2em} + \b{1 - c_2\bar{\psi}\sqrt{\frac{r\psi_w\psi_\alpha^2\log\frac{2r}{\delta}}{ T }}}\b{1 - c_2\bar{\psi}\sqrt{\frac{r\psi_w\psi_\alpha^2\psi_A^2\log\frac{2r}{\delta}}{ T-1 }}}\Sigma_{\alpha,2}\\
        \succeq {}& \cdots \succeq \b{(T-r+1)\sum_{i=0}^{r-1}\b{\alpha A^i}\Sigma_{\tilw}\b{\alpha A^i}\t - c_3\psi_w\psi_\alpha^2\psi_A^2\bar{\psi} \cdot r\sqrt{rT\log\frac{2r}{\delta}}I} \b{1 - c_2\bar{\psi}\sqrt{\frac{r\psi_w\psi_\alpha^2\psi_A^2\log\frac{2r}{\delta}}{ T-r+1 }}}^{r-1}\\
        {}& \hspace{4em}+\b{1 - c_2\bar{\psi}\sqrt{\frac{r\psi_w\psi_\alpha^2\psi_A^2\log\frac{2r}{\delta}}{ T-r+1 }}}^r\Sigma_{\alpha,r}\\
    \end{split}\end{equation}}
    The above inequality is further simplified as follows
    {\small\begin{equation}\begin{split}
        \Sigma_{\alpha,0} \overset{(i)}{\succeq} {}& \b{\phi_u(T-r+1)\sum_{i=0}^{r-1}\b{\alpha A^i}BB\t\b{\alpha A^i}\t - c_3\psi_w\psi_\alpha^2\psi_A^2\bar{\psi} \cdot r\sqrt{rT\log\frac{2r}{\delta}}I} \b{1 - c_2\bar{\psi}\sqrt{\frac{r\psi_w\psi_\alpha^2\psi_A^2\log\frac{2r}{\delta}}{ T-r+1 }}}^{r}\\
        \overset{(ii)}{\succeq} {}& \b{\phi_u(T-r+1)\alpha \b{\sum_{i=0}^{r-1}A^iB(A^iB)\t}\alpha\t - c_3\psi_w\psi_\alpha^2\psi_A^2\bar{\psi} \cdot r\sqrt{rT\log\frac{2r}{\delta}}I} \b{1 - \frac{1}{2r}}^{r}\\
        \overset{(iii)}{\succeq} {}& \frac{1}{2}\b{\phi_u(T-r+1)\alpha \b{\sum_{i=0}^{r-1}A^iB(A^iB)\t}\alpha\t - c_3\psi_w\psi_\alpha^2\psi_A^2\bar{\psi} \cdot r\sqrt{rT\log\frac{2r}{\delta}}I}\\
        \succeq {}& \frac{1}{2}\b{\phi_u\phi_\alpha^2\phi_R^2(T-r+1) - c_3\psi_w\psi_\alpha^2\psi_A^2\bar{\psi} \cdot r\sqrt{rT\log\frac{2r}{\delta}}}I\\
        \overset{(iv)}{\succeq} {}& \frac{\phi_u\phi_\alpha^2\phi_R^2}{8} T I\\
    \end{split}\end{equation}}
    Here $(i)$ is because $\Sigma_{\alpha,r} \succeq 0$ and $\Sigma_{\tilw} \succeq B\Sigma_u B\t \succeq \phi_u BB\t$, $(ii)$ is because $T  \gtrsim \psi_w\psi_\alpha^2\psi_A^2r^3\log\frac{r}{\delta}\cdot\log\b{\frac{\psi_\alpha^2\psi_w\psi_A^4}{1-\rho_A^2} r\log\frac{r}{\delta}}$, $(iii)$ is because $(1-\frac{1}{2r})^r \geq \frac{1}{2}$ for all positive integers and $(iv)$ is because $T  \gtrsim \frac{\psi_w^2\psi_\alpha^4\psi_A^4}{\phi_u^2\phi_{\alpha}^4\phi_R^4}r^3\log\frac{r}{\delta}\cdot\log\b{\frac{\psi_\alpha^2\psi_w\psi_A^4}{1-\rho_A^2} r\log\frac{r}{\delta}}$.
\end{proof}

\begin{lemma}\label{lem:inv_crossConc_stable}
    Consider the same setting as \Cref{lem:1col_full}. 
    For any positive integer $a$, let $\{\zeta_{t}\in\bbR^a\}_{t=0}^{T}$ be a sequence of i.i.d Gaussian vectors from $\calN(0,\Sigma_{\zeta})$ such that $\zeta_{t}$ is independent of $x_t$. Define $\psi_\zeta = \sigma_1(\Sigma_{\zeta})$.
    We make the following defition for any matrix $P\in\bbR^{b\times r}$ for positive integer $b$
    \begin{equation}\begin{split}
        \bar{\Sigma}_P = \sum_{t\in[T]} Px_tx_t\t P\t +  T  I, \quad \psi_P = \sigma_1(P).
    \end{split}\end{equation}
    Then for any $\delta$, the following holds with probability at least $1-\delta$,
    \begin{equation}\begin{split}
        \norm{\b{\bar{\Sigma}_P}^{-\frac{1}{2}} \sum_{t\in[T]} Px_t\zeta_{t}\t} \lesssim \bar{\psi}\sqrt{\max\{r,a\}\psi_\zeta\log\frac{1}{\delta}}.
    \end{split}\end{equation}
    Here $\bar{\psi} = \sqrt{\log\b{\frac{\psi_P^2\psi_w\psi_A^2}{1-\rho_A^2} r\log\frac{1}{\delta}}}$. 
\end{lemma}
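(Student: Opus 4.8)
The plan is to recognize $\sum_{t\in[T]} Px_t\zeta_t^\top$ as a self-normalized vector-valued martingale and to control its whitened operator norm by the self-normalized tail inequality of \cite{abbasi-yadkori_improved_2011}, combined with a covering argument over the $\zeta$-side unit sphere in $\bbR^a$. First I would fix the martingale structure: set $X_t = Px_t\in\bbR^b$ and let $\{\mathcal{F}_t\}$ be the natural filtration generated by the system, for which $x_t$ (hence $X_t$) is $\mathcal{F}_{t-1}$-measurable while $\zeta_t\mid\mathcal{F}_{t-1}\sim\calN(0,\Sigma_\zeta)$ --- this is guaranteed in all our applications by the independence of $\zeta_t$ from the past states. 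For a \emph{fixed} unit vector $v\in\bbR^a$, the scalar sequence $\zeta_t^\top v$ is then a conditionally $\sqrt{\psi_\zeta}$-sub-Gaussian martingale difference, since $v^\top\Sigma_\zeta v\le\psi_\zeta$. Applying Theorem~1 of \cite{abbasi-yadkori_improved_2011} with regularizer $V=TI$, so that $V+\sum_t X_tX_t^\top=\bar\Sigma_P$, yields with probability at least $1-\delta'$,
\begin{equation*}
  \norm{\bar\Sigma_P^{-1/2}\sum_{t\in[T]} X_t\b{\zeta_t^\top v}}^2 \lesssim \psi_\zeta \log\frac{\det\b{\bar\Sigma_P}^{1/2}\det\b{TI}^{-1/2}}{\delta'}.
\end{equation*}

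Next I would pass from this pointwise-in-$v$ estimate to the operator norm of $M \coloneqq \bar\Sigma_P^{-1/2}\sum_t X_t\zeta_t^\top$ via a standard $1/2$-net argument: on a net $\mathcal{N}\subset S^{a-1}$ of size $|\mathcal{N}|\le 5^a$ one has $\norm{M}\le 2\max_{v\in\mathcal{N}}\norm{Mv}$, so a union bound taking $\delta'=\delta/5^a$ (hence $\log\tfrac{1}{\delta'}=a\log5+\log\tfrac1\delta$) gives $\norm{M}^2\lesssim \psi_\zeta\b{\log\det\b{\bar\Sigma_P}^{1/2}T^{-b/2} + a + \log\tfrac1\delta}$.

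The remaining, and main, work is to show the log-determinant term contributes only the doubly-logarithmic factor $\bar\psi^2$ times $r$, rather than a genuine dimension factor. Here I would use two facts: (i) $\sum_t X_tX_t^\top = P\b{\sum_t x_tx_t^\top}P^\top$ has rank at most $r$, so $\bar\Sigma_P = TI + \sum_t X_tX_t^\top$ has at most $r$ eigenvalues exceeding $T$ and $\log\tfrac{\det(\bar\Sigma_P)^{1/2}}{T^{b/2}}=\tfrac12\sum_{i=1}^{r}\log\tfrac{\sigma_i(\bar\Sigma_P)}{T}$; and (ii) the a priori Gram upper bound $\norm{\sum_t x_tx_t^\top}\lesssim \tfrac{\psi_w\psi_A^2}{1-\rho_A^2}rT\log\tfrac1\delta$ from \Cref{lem:gram_upper}. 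Combining these, every ratio obeys $\tfrac{\sigma_i(\bar\Sigma_P)}{T}\le 1+\tfrac{\psi_P^2}{T}\norm{\sum_t x_tx_t^\top}\lesssim \tfrac{\psi_P^2\psi_w\psi_A^2}{1-\rho_A^2}r\log\tfrac1\delta$, so $\log\tfrac{\sigma_i}{T}\lesssim\bar\psi^2$ and the whole log-determinant is $\lesssim r\bar\psi^2$. Finally, using $\bar\psi\ge1$ and $\log\tfrac1\delta\ge1$ (as $\delta<1/e$), the three additive contributions collapse, $r\bar\psi^2 + a + \log\tfrac1\delta \lesssim \max\{r,a\}\,\bar\psi^2\log\tfrac1\delta$, and taking square roots delivers the claimed bound.

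I expect the principal obstacle to be the joint handling of the two ambient dimensions: the self-normalized inequality whitens the $X_t$ side (producing the rank-$r$ log-determinant) but is inherently scalar in the noise, so the $\bbR^a$ noise must be discretized by a net, and one must verify that the additive $a$, the determinant's $r\bar\psi^2$, and the confidence $\log\tfrac1\delta$ recombine into the single clean factor $\max\{r,a\}\,\bar\psi^2\log\tfrac1\delta$ while keeping the $\psi/\phi$ dependence confined to the doubly-logarithmic $\bar\psi$. The determinant control in steps (i)--(ii) is precisely what prevents a spurious polynomial-in-dimension blow-up.
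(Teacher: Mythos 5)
Your proposal is correct and follows essentially the same route as the paper's proof: a covering argument over the unit sphere in $\bbR^a$ (the paper's Lemma~\ref{lem:gen_norm2Net}, yielding the same $5^a$ factor), Theorem~1 of \cite{abbasi-yadkori_improved_2011} applied to the scalar sequence $\{\zeta_t\t v\}$ with regularizer $TI$, and control of the log-determinant through the rank-$r$ structure of $\sum_t Px_tx_t\t P\t$ together with the Gram upper bound of Lemma~\ref{lem:gram_upper}. No substantive differences.
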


\begin{proof}
    From Lemma \ref{lem:gen_norm2Net}, we know that the following holds for some vector $v\in\mathbb{S}^{a-1}$
    \begin{equation}\label{eq:thm1_1}\begin{split}
        \bbP\b{\norm{\b{\bar{\Sigma}_P}^{-\frac{1}{2}} \sum_{t\in[T]} Px_t\zeta_{t}\t}>z} \leq 5^a \bbP\b{\norm{\b{\bar{\Sigma}_P}^{-\frac{1}{2}} \sum_{t\in[T]} Px_t\zeta_{t}\t v}>\frac{z}{2}}.
    \end{split}\end{equation}
    Notice that $\zeta_{t}\t v$ are independent Gaussian variables from distribution $\calN(0, v\t \Sigma_{\zeta}v)$ for all $t\in[T]$, which is $c_{1}\sqrt{\psi_\zeta}$-subGaussian for some positive constant $c_{1}$. Then applying Theorem 1 in \cite{abbasi-yadkori_improved_2011} on sequence $\{Px_t\}_{t\in[T]}$ and sequence $\{\zeta_{t}\t v\}_{t\in[T]}$, gives the following inequality
    \begin{equation}\begin{split}
        \bbP\b{\norm{\b{\bar{\Sigma}_P}^{-\frac{1}{2}} \sum_{t\in[T]} Px_t\zeta_{t}\t v} >  \sqrt{2c_{1}^2\psi_\zeta  \log\b{\frac{\det(\bar{\Sigma}_P)^{\frac{1}{2}}\det( T  I)^{-\frac{1}{2}}}{\delta}}}} \leq \delta.
    \end{split}\end{equation}
    Substituting the above result back gives the following inequality
    \begin{equation}\begin{split}
        \bbP\b{\norm{\b{\bar{\Sigma}_P}^{-\frac{1}{2}} \sum_{t\in[T]} Px_t\zeta_{t}} >  \sqrt{8c_{1}^2\psi_\zeta  \log\b{\frac{\det(\bar{\Sigma}_P)^{\frac{1}{2}}\det( T  I)^{-\frac{1}{2}}}{\delta}}}} \leq 5^a\delta,
    \end{split}\end{equation}
    which implies the following inequality holds with probability at least $1-\delta/2$
    \begin{equation}\label{eq:cross_1}\begin{split}
        \norm{\b{\bar{\Sigma}_P}^{-\frac{1}{2}} \sum_{t\in[T]} Px_t\zeta_{t}} \leq \sqrt{8c_{1}^2\psi_\zeta\log\b{\frac{2\det(\bar{\Sigma}_P)^{\frac{1}{2}}\det( T  I)^{-\frac{1}{2}}}{\delta}} + 8c_{1}^2\psi_\zeta a\log5}.
    \end{split}\end{equation}

    Now consider $\bar{\Sigma}_P = \sum_{t\in[T]} Px_tx_t\t P\t  +  T  I$. Then
    \begin{equation}\begin{split}
        \det( T  I) ={}&  T^b, \quad \det(\bar{\Sigma}_P) = T^{b-r}\prod_{i=1}^{r} \b{ T  +\lambda_i\b{\sum_{t\in[T]} Px_tx_t\t P\t }}.
    \end{split}\end{equation}
    From Lemma \ref{lem:gram_upper}, we have the following with probability at least $1-\delta/2$
    \begin{equation}\begin{split}
        \norm{\sum_{t\in[T]} x_tx_t\t} \lesssim \frac{\psi_w\psi_A^2r}{1-\rho_A^2}T\log\frac{2}{\delta}\lesssim \frac{\psi_w\psi_A^2r}{1-\rho_A^2}T\log\frac{1}{\delta}.
    \end{split}\end{equation}
    Therefore,
    \begin{equation}\begin{split}
        \lambda_i\b{\sum_{t\in[T]} Px_tx_t\t P\t }
        \leq {}& \norm{\sum_{t\in[T]} x_tx_t\t}\norm{ PP\t} \\
        \lesssim {}& \frac{\psi_P^2\psi_w\psi_A^2}{1-\rho_A^2}r T \log\frac{1}{\delta}\\
    \end{split}\end{equation}
    Substituting back gives the following for some positive constant $c_{2}$
    \begin{equation}\begin{split}\label{eq:cross_2}
        \det(\bar{\Sigma}_P) ={}&   T^{b-r}\prod_{i=1}^{r} \b{ T  +\lambda_i\b{\sum_{t\in[T]} Px_tx_t\t P\t }}\\
        \leq {}&  T^b \b{1+c_{2}\frac{\psi_P^2\psi_w\psi_A^2}{1-\rho_A^2} r\log\frac{1}{\delta}}^r,
    \end{split}\end{equation}
    which gives
    \begin{equation}\begin{split}
        \det(\bar{\Sigma}_P)^{\frac{1}{2}}\det( T  I)^{-\frac{1}{2}} \leq \b{1+c_{2}\frac{\psi_P^2\psi_w\psi_A^2}{1-\rho_A^2} r\log\frac{1}{\delta}}^{\frac{r}{2}}.
    \end{split}\end{equation}

    Finally, with a union bound over above events, we get the following with probability at least $1-\delta$ from Equation \ref{eq:cross_1} and \ref{eq:cross_2}
    \begin{equation}\begin{split}
        {}& \norm{\b{\bar{\Sigma}_P}^{-\frac{1}{2}} \sum_{t\in[T]} Px_t\zeta_{t}}\\
        \leq {}& \sqrt{8c_{1}^2\psi_\zeta\log\b{\frac{2\det(\bar{\Sigma}_P)^{\frac{1}{2}}\det( T  I)^{-\frac{1}{2}}}{\delta}} + 8c_{1}^2\psi_\zeta a\log5}\\
        \leq {}& \sqrt{4c_{1}^2r\psi_\zeta\log\b{2+2c_{2}\frac{\psi_P^2\psi_w\psi_A^2}{1-\rho_A^2} r\log\frac{1}{\delta}} + 8c_{1}^2\psi_\zeta\log\frac{1}{\delta} + 8c_{1}^2\psi_\zeta a\log5}\\
        \lesssim {}& \sqrt{\max\{r,a\}\psi_\zeta\log\frac{1}{\delta}}\cdot\sqrt{\log\b{\frac{\psi_P^2\psi_w\psi_A^2}{1-\rho_A^2} r\log\frac{1}{\delta}}}.
    \end{split}\end{equation}
    This completes the proof.
\end{proof} 

\begin{lemma}\label{lem:inv_cov_noise} 
    For any positive integer $a$, let $\{w_{t}\in\bbR^a\}_{t\in[T]}$ be independent Gaussian vectors from $w_t\sim\calN(0, \Sigma_{w,t})$. Let 
    \begin{equation}\begin{split}
        \psi_w = \max_{t\in[T]} \sigma_1(\Sigma_{w,t}), \quad \phi_w = \min_{t\in[T]} \sigma_{\min}(\Sigma_{w,t}).
    \end{split}\end{equation}
    Then with probability at least $1-\delta$, 
    \begin{equation}\begin{split}
        \norm{\sum_{t=1}^T w_{t}w_{t}\t - \sum_{t=1}^T \Sigma_{w,t}} \lesssim \psi_w \sqrt{aT\log\frac{1}{\delta}}.
    \end{split}\end{equation}
\end{lemma}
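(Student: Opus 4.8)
The plan is to recognize the centered sum
\[
\sum_{t=1}^T w_t w_t\t - \sum_{t=1}^T \Sigma_{w,t} = \sum_{t=1}^T\b{w_t w_t\t - \bbE[w_t w_t\t]}
\]
as a sum of independent, mean-zero, symmetric random matrices in $\bbR^{a\times a}$, and to control its operator norm by a covering argument combined with a scalar Bernstein inequality. First I would pass from the operator norm to a supremum over a net: for any symmetric $M\in\bbR^{a\times a}$ and any $1/4$-net $\calN$ of the sphere $\mathbb{S}^{a-1}$ (which can be taken with $\abs{\calN}\leq 9^a$), one has $\norm{M}\leq 2\max_{v\in\calN}\abs{v\t M v}$; this is the symmetric analogue of the reduction already used in \Cref{lem:gen_norm2Net}. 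Applying it with $M=\sum_t\b{w_t w_t\t - \Sigma_{w,t}}$ reduces the claim to a uniform control of the scalar quadratic forms $v\t M v = \sum_{t=1}^T\b{(v\t w_t)^2 - v\t\Sigma_{w,t}v}$ over $v\in\calN$.

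Next, fix $v\in\mathbb{S}^{a-1}$. Since $w_t\sim\calN(0,\Sigma_{w,t})$, the projection $v\t w_t$ is a centered Gaussian with variance $s_t \coloneqq v\t\Sigma_{w,t}v \leq \psi_w$, so $(v\t w_t)^2 - s_t$ is a centered sub-exponential random variable with sub-exponential norm $\lesssim \psi_w$. Because the $w_t$ are independent across $t$, Bernstein's inequality for sums of independent sub-exponential variables yields, for every $z\geq 0$,
\[
\bbP\left(\abs{\sum_{t=1}^T\b{(v\t w_t)^2 - s_t}} > z\right) \leq 2\exp\left(-c\,\min\left\{\frac{z^2}{\psi_w^2 T},\ \frac{z}{\psi_w}\right\}\right)
\]
for an absolute constant $c>0$.

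Then I would take a union bound over the $9^a$ net points and absorb the cardinality into the exponent: choosing $z \asymp \psi_w\bigl(\sqrt{(a+\log\tfrac{1}{\delta})\,T} + (a+\log\tfrac{1}{\delta})\bigr)$ makes the right-hand side at most $\delta$ after accounting for the factor $9^a = e^{a\log 9}$ and the factor $2$ from the net reduction. In the regime relevant to this paper --- where $T$ is far larger than the ambient dimension $a$ (recall sample-size conditions such as $T\gtrsim n^2 r^3$, and here $a\leq n$) --- the square-root term dominates the linear term, giving the claimed bound $\norm{\sum_t w_t w_t\t - \sum_t\Sigma_{w,t}} \lesssim \psi_w\sqrt{aT\log\tfrac{1}{\delta}}$ with probability at least $1-\delta$.

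The main obstacle is the sub-exponential (rather than sub-Gaussian) nature of the chi-squared summands $(v\t w_t)^2$: their heavier tails force the two-regime Bernstein bound, and one must verify that the crossover lies in the Gaussian regime so that the linear-in-$(a+\log\tfrac{1}{\delta})$ correction is dominated by $\sqrt{aT\log\tfrac{1}{\delta}}$. A cleaner-to-cite alternative would be to invoke a matrix Bernstein / matrix sub-exponential inequality directly, but since $w_t w_t\t$ is unbounded this would require either a truncation step or the unbounded (Gaussian) version, so I prefer the explicit net-plus-scalar-Bernstein route above.
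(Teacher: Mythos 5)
Your proposal is correct and follows essentially the same route as the paper's proof: a net reduction of the operator norm to scalar (quadratic/bilinear) forms via the analogue of \Cref{lem:gen_norm2Net}, the observation that each centered summand is sub-exponential with norm $\lesssim \psi_w$, scalar Bernstein, and a union bound over the $e^{\Theta(a)}$ net points absorbed into the deviation parameter. The only differences are cosmetic --- you use a single symmetric net for $v\t M v$ where the paper uses a two-sided net for $y\t M x$, and you are slightly more explicit than the paper about checking that the sub-exponential (linear-in-$a$) regime of Bernstein is dominated by the $\sqrt{aT\log\frac{1}{\delta}}$ term when $T\gtrsim a$.
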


\begin{proof}
    From Lemma \ref{lem:gen_norm2Net}, we know that there exist $x, y\in\mathbb{S}^{a-1}$ s.t.
    \begin{equation}\begin{split}\label{eq:cov_dependent_1}
        \bbP\b{\norm{\sum_{t=1}^T w_tw_t\t - \sum_{t=1}^T \Sigma_{w,t}} > z} \leq 5^{2a}\bbP\b{\left|\sum_{t=1}^T \Big[(y\t w_{t}) (x\t w_{t}) - y\t \Sigma_{w,t}x\Big]\right| > \frac{1}{4}z}
    \end{split}\end{equation}
    For any $x, y$, following the definition of $w_{t}$, we have 
    \begin{equation}\begin{split}
        \|(y\t w_{t}) (x\t w_{t}) - y\t \Sigma_{w,t}x\|_{\psi_1} \lesssim \norm{(y\t w_{t}) (x\t w_{t})}_{\psi_1} \leq \norm{y\t w_t}_{\psi_2}\norm{x\t w_t}_{\psi_2} \lesssim \psi_w.
    \end{split}\end{equation}
    Here the first and second inequalities are from Lemma 2.6.8 and Lemma 2.7.7 in \cite{vershynin_high-dimensional_2018}, where $\|\cdot\|_{\psi_1}$ and $\|\cdot\|_{\psi_2}$ denote the sub-exponential and sub-Gaussian norms, respectively.
    Directly applying Bernstein's inequality on $\{(y\t w_{t}) (x\t w_{t}) - y\t \Sigma_{w,t}x\}_{t=1}^T$ (as sub-exponential random variables) gives the following for some positive constant $c_{1}$ with probability at least $1-\delta'$
    \begin{equation}\begin{split}
        \left|\sum_{t=1}^T \Big[(y\t w_{t}) (x\t w_{t}) - y\t \Sigma_{w,t}x\Big] \right| \leq c_{1} \psi_w \sqrt{T \log(\frac{2}{\delta'})},
    \end{split}\end{equation}
    which is equivalent to
    \begin{equation}\begin{split}
        5^{2a}\bbP\b{\left|\sum_{t=1}^T \Big[(y\t w_{t}) (x\t w_{t}) - y\t \Sigma_{w,t}x\Big]\right| > c_{1}\psi_w \sqrt{T \log(\frac{2}{\delta'})}} \leq 5^{2a}\delta'.
    \end{split}\end{equation}
    Letting $\delta = 5^{2a}\delta'$ gives
    \begin{equation}\begin{split}
        5^{2a}\bbP\b{\left|\sum_{t=1}^T \Big[(y\t w_{t}) (x\t w_{t}) - y\t \Sigma_{w,t}x\Big]\right| > c_{1}\psi_w \sqrt{2aT \log(\frac{10}{\delta})}} \leq \delta.
    \end{split}\end{equation}
    Substituting back into Equation \ref{eq:cov_dependent_1} gives
    \begin{equation}\begin{split}
        \bbP\b{\norm{\sum_{t=1}^T w_tw_t\t - \sum_{t=1}^T \Sigma_{w,t}} > 4c_{1}\psi_w \sqrt{2aT \log(\frac{10}{\delta})}} \leq \delta.
    \end{split}\end{equation}
    This completes the proof.

\end{proof}

\begin{lemma}\label{lem:gram_upper}
    Consider the same setting as \Cref{lem:1col_full}. For any $\delta$, with probability at least $1-\delta$, 
    \begin{equation}\begin{split}
        \norm{ \sum_{t=0}^T x_tx_t\t} \lesssim \frac{\psi_w\psi_A^2r}{1-\rho_A^2} T \log\frac{1}{\delta}.
    \end{split}\end{equation}
\end{lemma}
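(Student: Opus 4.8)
The plan is to trade the operator norm for the trace and then concentrate a single Gaussian quadratic form. Since $\sum_{t=0}^{T}x_tx_t\t\succeq 0$,
\[
\norm{\sum_{t=0}^{T}x_tx_t\t}\;\le\;\operatorname{tr}\b{\sum_{t=0}^{T}x_tx_t\t}\;=\;\sum_{t=0}^{T}\norm{x_t}^2 ,
\]
and it is exactly this trace-for-operator-norm step that accounts for the factor $r$ in the claim. It is harmless here because the invocation of this lemma in \Cref{lem:inv_crossConc_stable} only uses $\norm{\sum_t x_tx_t\t}$ through $\lambda_i\b{\sum_t Px_tx_t\t P\t}\le\norm{\sum_t x_tx_t\t}\,\norm{PP\t}$. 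So it remains to bound $\sum_t\norm{x_t}^2$.

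Next I would unroll the dynamics. Writing $\tilde w_t:=Bu_t+w_t\sim\calN(0,\Sigma_{\tilde w})$ with $\Sigma_{\tilde w}:=B\Sigma_uB\t+\Sigma_w$ and using $x_0=0$, one has $x_t=\sum_{s=0}^{t-1}A^{t-1-s}\tilde w_s$. Stacking, $\mathbf x:=(x_0\t,\dots,x_T\t)\t=\mathcal L\,W$ is a linear image of the stacked noise $W\sim\calN(0,I_T\otimes\Sigma_{\tilde w})$, where $\mathcal L$ is the block lower-triangular Toeplitz (convolution) matrix with $(t,s)$ block $A^{t-1-s}$. Hence $\mathbf x\sim\calN(0,\Sigma_{\mathbf x})$ with $\Sigma_{\mathbf x}=\mathcal L(I_T\otimes\Sigma_{\tilde w})\mathcal L\t$, and $\sum_t\norm{x_t}^2=\norm{\mathbf x}^2$ is a Gaussian quadratic form to which a Hanson--Wright / $\chi^2$ tail bound \cite{vershynin_high-dimensional_2018} applies: with probability at least $1-\delta$,
\[
\norm{\mathbf x}^2\;\le\;\operatorname{tr}(\Sigma_{\mathbf x})+C\Big(\sqrt{\operatorname{tr}(\Sigma_{\mathbf x}^2)\,\log\tfrac1\delta}+\norm{\Sigma_{\mathbf x}}\,\log\tfrac1\delta\Big).
\]

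The two quantities feeding this bound are controlled by \Cref{assmp:sys1}. For the trace, $\norm{A^i}\le\psi_A\rho_A^{i-1}$ and $\psi_A\ge1$ give $\sum_{j\ge0}\norm{A^j}^2\le 2\psi_A^2/(1-\rho_A^2)$, so each $\operatorname{Cov}(x_t)=\sum_{j<t}A^j\Sigma_{\tilde w}(A^j)\t$ has trace at most $2\psi_w\psi_A^2 r/(1-\rho_A^2)$, and therefore
\[
\operatorname{tr}(\Sigma_{\mathbf x})=\sum_{t=0}^{T}\operatorname{tr}\b{\operatorname{Cov}(x_t)}\;\lesssim\;\frac{\psi_w\psi_A^2}{1-\rho_A^2}\,rT ,
\]
which is precisely the target rate with an absolute constant. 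For the operator norm I would invoke the standard $\ell_1$ bound for block convolution operators, $\norm{\mathcal L}\le\sum_{k\ge0}\norm{A^k}\le 1+\psi_A/(1-\rho_A)$ (each shifted block has norm $\le\norm{A^k}$ because the shift is a contraction), giving the $T$-independent estimate $\norm{\Sigma_{\mathbf x}}\le\psi_w\norm{\mathcal L}^2\lesssim \psi_w\psi_A^2/(1-\rho_A)^2$. Using $\operatorname{tr}(\Sigma_{\mathbf x}^2)\le\norm{\Sigma_{\mathbf x}}\operatorname{tr}(\Sigma_{\mathbf x})$, the middle term scales like $\sqrt T$ and the last like $T^0$, so the leading trace term dominates and $\norm{\mathbf x}^2\lesssim\frac{\psi_w\psi_A^2 r}{1-\rho_A^2}T\log\frac1\delta$, as claimed.

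The main obstacle is the bookkeeping in this last domination. The trace term comes out cleanly with the advertised $1/(1-\rho_A^2)$ dependence, but the Hanson--Wright remainder carries $\norm{\Sigma_{\mathbf x}}$, whose sharp value is $\asymp\psi_w\psi_A^2/(1-\rho_A)^2$ (the peak of the underlying AR spectral density), i.e. a worse power of $1-\rho_A$. Verifying that this remainder is genuinely absorbed into the stated bound therefore requires either a mild lower bound on $T$ or treating $\tfrac{1+\rho_A}{1-\rho_A}$ as a problem constant; establishing the convolution-operator norm inequality for $\mathcal L$ and carrying out this comparison carefully is the delicate step, whereas the trace computation and the quadratic-form tail are routine.
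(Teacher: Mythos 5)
Your proposal is correct, but it takes a genuinely different route from the paper. The paper's proof is a two-line citation: it invokes Proposition 8.4 of \citet{sarkar_near_2019}, which directly yields $\norm{\sum_{t}x_tx_t\t}\lesssim \psi_w\tr\b{\sum_{t}\Gamma_t(A)}\log\frac{1}{\delta}$ with probability $1-\delta$, and then bounds $\tr\b{\sum_t\Gamma_t(A)}\leq r\sum_\tau\norm{A^\tau}^2\lesssim \psi_A^2 r/(1-\rho_A^2)$ exactly as you do for $\tr(\Sigma_{\mathbf x})$. You instead re-derive the concentration from scratch: pass to the trace, write the stacked state as a linear image $\mathcal{L}W$ of the noise, and apply Hanson--Wright to the resulting Gaussian quadratic form. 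What your route buys is a self-contained argument that makes transparent where the factors $r$ (trace-for-operator-norm) and $\log\frac{1}{\delta}$ (the quadratic-form tail) come from; what it costs is the extra bookkeeping you flag at the end, since the remainder terms carry $\norm{\Sigma_{\mathbf x}}\lesssim\psi_w\psi_A^2/(1-\rho_A)^2$ and are only absorbed into the leading term $\tr(\Sigma_{\mathbf x})\lesssim\frac{\psi_w\psi_A^2 r}{1-\rho_A^2}T$ once $T\gtrsim \frac{1}{(1-\rho_A^2)r}$, whereas the paper's statement carries no explicit restriction on $T$. This caveat is harmless in every invocation of the lemma (it is only used under $T_1\gtrsim\kappa_5 n^2r^3$ with $\kappa_5$ already containing powers of $1/(1-\rho_A^2)$), and since $1-\rho_A^2=(1-\rho_A)(1+\rho_A)$ the two decay rates differ only by a factor of at most $2$, so treating them interchangeably as problem constants is consistent with the paper's conventions; but if you want the lemma verbatim for all $T$ you would need either this mild side condition or the sharper handling of $\norm{\Sigma_{\mathbf x}}$ that the cited proposition packages for you.
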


\begin{proof}
    Directly applying Proposition 8.4 in \cite{sarkar_near_2019} gives the following with probability at least $1-\delta$
    \begin{equation}\begin{split}
        \norm{\sum_{t=0}^T x_tx_t\t} \lesssim {}& \psi_w\tr\b{\sum_{t=0}^{T-1} \Gamma_t(A)}\log\frac{1}{\delta} \leq \psi_w r \norm{\sum_{t=0}^{T-1} \Gamma_t(A)} \log\frac{1}{\delta}.
    \end{split}\end{equation}
    From Assumption \ref{assmp:sys1}, we know that 
    \begin{equation}\begin{split}
        \norm{\Gamma_t(A)} = \norm{\sum_{\tau=0}^t A^{\tau}A^{\tau}\t} \preceq \sum_{\tau=0}^t \norm{A^\tau}^2 \leq 1 + \frac{\psi_A^2}{1-\rho_A^2} \lesssim \frac{\psi_A^2}{1-\rho_A^2}.
    \end{split}\end{equation}
    Substituting back gives
    \begin{equation}\begin{split}
        \norm{\sum_{t=0}^T x_tx_t\t} \lesssim \frac{\psi_w\psi_A^2r}{1-\rho_A^2}T\log\frac{1}{\delta}.
    \end{split}\end{equation}
\end{proof}

\subsubsection{Other Lemmas}
\begin{lemma}\label{lem:perturbation}
    Given perturbation matrices $\Delta_1, \Delta_2$ and p.d square matrix $M$ of the same dimension. Suppose they satisfy the following for some positive constant $a\in[0,1)$ and $b \geq 0$
    \begin{equation}\begin{split}
        \norm{M^{-\frac{1}{2}}\Delta_1} \leq a < \sqrt{\phi_M},\quad \norm{\Delta_2} \leq b.
    \end{split}\end{equation}
    Denote the SVD of the matrices as follows
    \begin{equation}\begin{split}
        M = \begin{bmatrix}
            U_1 & U_2
        \end{bmatrix} \begin{bmatrix}
            \Lambda_1 &\\
            & \Lambda_2
        \end{bmatrix}\begin{bmatrix}
            U_1\t\\
            U_2\t
        \end{bmatrix}, \quad \tilM \coloneqq M + \b{\Delta_1 + \Delta_1\t}+\Delta_2 = \begin{bmatrix}
            \tilU_1 & \tilU_2
        \end{bmatrix} \begin{bmatrix}
            \tilde{\Lambda}_1 &\\
            & \tilde{\Lambda}_2
        \end{bmatrix}\begin{bmatrix}
            \tilU_1\t\\
            \tilU_2\t
        \end{bmatrix},
    \end{split}\end{equation}
    and define the following constants
    \begin{equation}\begin{split}
        \psi_M = \sigma_1(M), \quad \phi_M = \sigma_{\min}(M), \quad 0< \delta_M \leq \sigma_{\min}(\Lambda_1) - \sigma_1(\Lambda_2).
    \end{split}\end{equation}
    Then we have
    \begin{equation}\begin{split}
        \norm{\tilU_2\t U_1} \leq \frac{a^2 + b}{\delta_M - 4a\sqrt{\psi_M} - 3a^2 - b} + \frac{a}{\sqrt{\phi_M}} \b{1 + \frac{2-a/\sqrt{\phi_M}}{1-a/\sqrt{\phi_M}}\frac{\sigma_1\b{\Lambda_2} + 2a\sqrt{\psi_M} + a^2}{\delta_M - 2a\sqrt{\psi_M} - a^2}}
    \end{split}\end{equation}
    In the case where $\sqrt{\phi_M} \geq Ca$, $\delta_M \geq C\b{a\sqrt{\psi_M} + b}$ for large enough $C$, the above result is further simplified to
    \begin{equation}\begin{split}
        \norm{\tilU_2\t U_1} \lesssim \frac{b}{\delta_M} + \frac{a\sigma_1(\Lambda_2)/\sqrt{\phi_M}}{\delta_M}+\frac{a^2\sqrt{\psi_M/\phi_M}}{\delta_M}.
    \end{split}\end{equation}
\end{lemma}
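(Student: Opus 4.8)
The plan is to prove this as a $\sin\Theta$-type bound by adapting the classical Davis--Kahan theorem, the one twist being that $\Delta_1$ is a \emph{relative} perturbation: the hypothesis $\norm{M^{-\frac12}\Delta_1}\le a$ only yields the crude absolute bound $\norm{\Delta_1}\le a\sqrt{\psi_M}$, which is far too lossy to feed into an off-the-shelf $\sin\theta$ bound. First I would isolate the two perturbations by introducing the intermediate matrix $M_1\coloneqq M+\Delta_1+\Delta_1\t$ and writing $\tilM=M_1+\Delta_2$. Letting $W_1,W_2$ denote the leading and trailing invariant subspaces of $M_1$, the identity $\tilU_2\t U_1=(\tilU_2\t W_1)(W_1\t U_1)+(\tilU_2\t W_2)(W_2\t U_1)$ together with $\norm{W_1\t U_1},\norm{\tilU_2\t W_2}\le 1$ gives the sub-additivity of principal angles $\norm{\tilU_2\t U_1}\le \norm{W_2\t U_1}+\norm{\tilU_2\t W_1}$. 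This cleanly separates the relative effect ($\Delta_1$, first term) from the additive effect ($\Delta_2$, second term), and I expect these to supply the second and first summands of the claimed bound, respectively.

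For the additive part I would apply the standard Davis--Kahan $\sin\theta$ theorem to $\tilM=M_1+\Delta_2$: since $\norm{\Delta_2}\le b$, one obtains $\norm{\tilU_2\t W_1}\lesssim b/\mathrm{gap}(M_1)$, where the eigengap of $M_1$ is estimated from that of $M$ through Weyl's inequality combined with $\norm{\Delta_1+\Delta_1\t}\le 2a\sqrt{\psi_M}$. This is what produces the $2a\sqrt{\psi_M}$ and $a^2$ corrections appearing in the denominators $\delta_M-4a\sqrt{\psi_M}-3a^2-b$ and $\delta_M-2a\sqrt{\psi_M}-a^2$. The $a^2$ in the numerator of the first summand then comes from the quadratic Riccati correction of the relative step (below), so that the additive contribution $b$ and the second-order relative contribution $a^2$ merge into the $\tfrac{a^2+b}{\delta_M-4a\sqrt{\psi_M}-3a^2-b}$ term.

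The crux is the relative part. Here I would factor $M_1=M^{\frac12}(I+S)M^{\frac12}$ with $S=RM^{-\frac12}+M^{-\frac12}R\t$ and $R=M^{-\frac12}\Delta_1$, so that $\norm{S}\le 2a/\sqrt{\phi_M}$ is genuinely \emph{small}; that is, the relative perturbation becomes a near-identity congruence rather than a large additive one. Passing to the eigenbasis $U\coloneqq\begin{bmatrix}U_1 & U_2\end{bmatrix}$ of $M$ and writing $M_1=\Lambda+\Lambda^{\frac12}\tilde S\Lambda^{\frac12}$ with $\tilde S=U\t SU$, the leading invariant subspace of $M_1$ is a graph $U_1+U_2\hat Z$ over $\col(U_1)$, where $\hat Z$ solves the Riccati equation
\[
(M_1)_{22}\hat Z-\hat Z(M_1)_{11}+\Lambda_2^{\frac12}\tilde S_{21}\Lambda_1^{\frac12}-\hat Z(M_1)_{12}\hat Z=0 ,
\]
and where $\norm{W_2\t U_1}\le\norm{\hat Z}$. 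The key estimate is $\norm{\hat Z}\lesssim \norm{S}\,\sigma_1(\Lambda_2)/\delta_M$: the $\Lambda^{\frac12}$-scaling of the off-diagonal block makes the coupling enter at the scale of the \emph{small} eigenvalues $\sigma_1(\Lambda_2)$ rather than $\psi_M$, which is exactly what removes the spurious factor $\sqrt{\psi_M/\sigma_1(\Lambda_2)}$ (a $\sqrt{r}$ factor in the application). Solving the Sylvester operator $X\mapsto (M_1)_{22}X-X(M_1)_{11}$ by a fixed-point/Neumann argument and tracking the feedback through $(M_1)_{12}\hat Z$ produces the factor $\tfrac{2-a/\sqrt{\phi_M}}{1-a/\sqrt{\phi_M}}$ and the leading $\tfrac{a}{\sqrt{\phi_M}}$ from the graph normalization, yielding the second summand.

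Finally I would assemble the two pieces, verify the denominators stay positive under $a<\sqrt{\phi_M}$, and simplify under $\sqrt{\phi_M}\ge Ca$ and $\delta_M\ge C(a\sqrt{\psi_M}+b)$ to reach $\norm{\tilU_2\t U_1}\lesssim \tfrac{b}{\delta_M}+\tfrac{a\sigma_1(\Lambda_2)/\sqrt{\phi_M}}{\delta_M}+\tfrac{a^2\sqrt{\psi_M/\phi_M}}{\delta_M}$. The main obstacle is precisely this weighted control of the relative perturbation: obtaining $\sigma_1(\Lambda_2)$ instead of $\psi_M$ in the numerator forces one to solve the Riccati/Sylvester equation in the $M^{\frac12}$-scaled coordinates, since a direct Davis--Kahan bound applied to $\Delta_1$ as an additive perturbation of size $a\sqrt{\psi_M}$ is off by the $\sqrt{r}$ factor that this lemma is designed to save.
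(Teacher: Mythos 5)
Your high-level architecture coincides with the paper's: introduce an intermediate matrix that separates the congruence-type perturbation $\Delta_1$ from the additive one $\Delta_2$, handle the additive step with Davis--Kahan plus Weyl, handle the relative step by exploiting the multiplicative structure, and chain the two angles through the intermediate eigenbasis. (The paper's intermediate matrix is $M+\Delta_1+\Delta_1\t+\Delta_1\t M^{-1}\Delta_1=(I+D)\t M(I+D)$ with $D=M^{-1}\Delta_1$, so the quadratic term of norm at most $a^2$ is shunted into the additive step; your symmetric factorization $M+\Delta_1+\Delta_1\t=M^{1/2}(I+S)M^{1/2}$ is a legitimate variant, modulo checking invertibility of $I+S$, which only $\norm{S}\le 2a/\sqrt{\phi_M}<2$ does not guarantee under the stated hypothesis $a<\sqrt{\phi_M}$.)

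The genuine gap is the key estimate of the relative step, which you assert rather than prove, and which does not follow from the setup you describe. Inverting the Sylvester operator $X\mapsto (M_1)_{22}X-X(M_1)_{11}$ against the driving term $\Lambda_2^{1/2}\tilde S_{21}\Lambda_1^{1/2}$ gives only
$\norm{\hat Z}\lesssim \norm{S}\,\sqrt{\sigma_1(\Lambda_2)\,\sigma_1(\Lambda_1)}/\delta_M=\norm{S}\sqrt{\sigma_1(\Lambda_2)\,\psi_M}/\delta_M$,
the geometric mean, not the claimed $\norm{S}\,\sigma_1(\Lambda_2)/\delta_M$. In the intended application ($\Lambda_2=TI$, $\psi_M\sim rT\log(1/\delta)$, $\norm{S}\le 2a/\sqrt{T}$) the geometric-mean bound equals $\approx a\sqrt{rT}/\delta_M$, which is exactly what naive Davis--Kahan applied to $\norm{\Delta_1}\le a\sqrt{\psi_M}$ already gives, so the $\sqrt{r}$ that this lemma exists to save is lost. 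The $\sigma_1(\Lambda_2)$ scaling is not obtained from a $\Lambda^{1/2}$-sandwiched off-diagonal block; it comes from a different Sylvester identity tied to the multiplicative form. Writing the perturbed matrix as $X\t M X$ and letting $\tilU_2$, $\tilde\Lambda_2$ denote its trailing eigenvectors and eigenvalues, one has
$\Lambda_1\b{U_1\t X\tilU_2}-\b{U_1\t X\tilU_2}\tilde\Lambda_2=U_1\t\b{X^{-\intercal}-X}\tilU_2\tilde\Lambda_2$,
so the residual is multiplied by $\tilde\Lambda_2$ \emph{on the right} and the resulting bound is $\norm{X^{-\intercal}-X}\,\sigma_1(\tilde\Lambda_2)/\mathrm{gap}$. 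This is precisely Theorem~3.2 of \citet{li_relative_1998}, which the paper invokes on the exact congruence $(I+D)\t M(I+D)$ after checking $\norm{D}\le a/\sqrt{\phi_M}<1$ (whence the factor $\frac{2-a/\sqrt{\phi_M}}{1-a/\sqrt{\phi_M}}$ from $\norm{(I+D)^{-1}}$). Unless you replace your Riccati step with this multiplicative identity, or cite Li's theorem directly, the relative part of your argument does not reach the claimed bound.
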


\begin{remark}
    This result is adapted from previous subspace relative perturbation results (Theorem 3.2 in \cite{li_relative_1998}). It provides tighter bound for matrices with smaller condition number, i.e. when $\phi_M$ is close to $\psi_M$, as compared to the standard Davis-Kahan $\sin\Theta$ theorem. To be more specific, Davis-Kahan gives the bound of order $\frac{a\sqrt{\psi_M}}{\delta_M}$. Here our result features the bound $\frac{a\sigma_1(\Lambda_2)/\sqrt{\phi_M}}{\delta_M}$. 
\end{remark}

\begin{proof}
    Define
    \begin{equation}\begin{split}
        \htM \coloneqq M + \b{\Delta_1+\Delta_1\t} + \Delta_1\t M^{-1}\Delta_1 = \begin{bmatrix}
            \htU_1  & \htU_2    
        \end{bmatrix} \begin{bmatrix}
            \wh{\Lambda}_1 &\\
            & \wh{\Lambda}_2
        \end{bmatrix}\begin{bmatrix}
            \htU_1\t\\
            \htU_2\t
        \end{bmatrix}.
    \end{split}\end{equation}
    From its definition, we immediately have
    \begin{equation}\begin{split}\label{eq:pert_1}
        \norm{\htM - M} \leq {}& 2\norm{\Delta_1} + \norm{\Delta_1\t M^{-1}\Delta_1}\\
        \overset{(i)}{\leq} {}& 2a\sqrt{\psi_M} + \norm{M^{-\frac{1}{2}}\Delta_1}^2\\
        \leq {}& 2a\sqrt{\psi_M} + a^2.
    \end{split}\end{equation}
    Here the first term in $(i)$ is because 
    \begin{equation}\begin{split}
        a \geq \norm{M^{-\frac{1}{2}}\Delta_1} \geq \sigma_{\min}\b{M^{-\frac{1}{2}}}\norm{\Delta_1} = \frac{1}{\sqrt{\psi_M}}\norm{\Delta_1}.
    \end{split}\end{equation}
    And the second term is because $\norm{A\t A} = \norm{A}^2$. Then from Theorem 1 in \cite{stewart_matrix_1990}, we know that
    \begin{equation}\begin{split}
        \sigma_1\b{\wh{\Lambda}_2} \leq \sigma_1\b{\Lambda_2} + \norm{\htM - M} \leq \sigma_1\b{\Lambda_2} + 2a\sqrt{\psi_M} + a^2.
    \end{split}\end{equation}
    which in turn gives 
    \begin{equation}\begin{split}\label{eq:pert_2}
        \wh{\delta} \coloneqq \sigma_{\min}(\Lambda_1) - \sigma_1(\wh{\Lambda}_2) \geq \delta_M - 2a\sqrt{\psi_M} - a^2.
    \end{split}\end{equation}
    For the rest of the proof, we upper bound $\norm{\htU_2\t U_1}$, $\norm{\tilU_2\t \htU_1}$ and then derives the desired result.
    To upper bound $\norm{\htU_2\t U_1}$, we let $D = M^{-1}\Delta_1$ with

    \begin{equation}\begin{split}
        {}& \norm{D} \leq \norm{M^{-\frac{1}{2}}}\norm{M^{-\frac{1}{2}}\Delta_1} \leq \frac{a}{\sqrt{\phi_M}} < 1,\\
        {}& \norm{\b{I+D}^{-1}} = \frac{1}{\sigma_{\min}\b{I+D}} \overset{(i)}{\leq} \frac{1}{1 - \sigma_1(D)} \leq \frac{1}{1- a/\sqrt{\phi_M}}.
    \end{split}\end{equation}
    Here $(i)$ is because $|\sigma_{\min}(I+D)-\sigma_{\min}(I)| \leq  \sigma_1(D)$ from Thereom 1 in \cite{stewart_matrix_1990}. And the definition of $D$ gives
    \begin{equation}\begin{split}
        \htM = \b{I + \Delta_1\t M^{-1}}M\b{I + M^{-1}\Delta_1} = \b{I+D\t}M\b{I+D}.
    \end{split}\end{equation}
    Now from Equation \ref{eq:pert_2} we apply Theorem 3.2 in \cite{li_relative_1998} which gives
    \begin{equation}\begin{split}
        \norm{\htU_2\t U_1} \leq {}& \norm{I-(I+D\t)} + \b{\sigma_1\b{\Lambda_2} + 2a\sqrt{\psi_M} + a^2}\frac{\norm{(I+D)\t - (I+D)^{-1}}}{\wh{\delta}}\\
        \leq {}& \frac{a}{\sqrt{\phi_M}} + \frac{\sigma_1\b{\Lambda_2} + 2a\sqrt{\psi_M} + a^2}{\wh{\delta}}\b{\norm{(I+D)\t-I}+\norm{I-\b{I+D}^{-1}}}\\
        \leq {}& \frac{a}{\sqrt{\phi_M}} + \frac{\sigma_1\b{\Lambda_2} + 2a\sqrt{\psi_M} + a^2}{\wh{\delta}}\b{\frac{a}{\sqrt{\phi_M}}+\norm{\b{I+D}^{-1}}\norm{I+D-I}}\\
        \leq {}& \frac{a}{\sqrt{\phi_M}} + \frac{\sigma_1\b{\Lambda_2} + 2a\sqrt{\psi_M} + a^2}{\wh{\delta}}\b{\frac{a}{\sqrt{\phi_M}}+\frac{1}{1-a/\sqrt{\phi_M}}\frac{a}{\sqrt{\phi_M}}}\\
        = {}& \frac{a}{\sqrt{\phi_M}} \b{1 + \frac{2-a/\sqrt{\phi_M}}{1-a/\sqrt{\phi_M}}\frac{\sigma_1\b{\Lambda_2} + 2a\sqrt{\psi_M} + a^2}{\wh{\delta}}}.
    \end{split}\end{equation}

    We now upper bound $\norm{\tilU_2\t \htU_1}$. We start from upper bounding $\norm{\htM - \tilM}$
    \begin{equation}\begin{split}
        \norm{\htM - \tilM} = \norm{\Delta_1\t M^{-1}\Delta_1 - \Delta_2} \leq \norm{\Delta_1\t M^{-1}\Delta_1} + \norm{\Delta_2} \leq a^2 + b.
    \end{split}\end{equation}
    Moreover, we know that 
    \begin{equation}\begin{split}
        {}& \sigma_{\min}(\wh{\Lambda}_1) - \sigma_1(\tilde{\Lambda}_2)\\
        ={}&  \sigma_{\min}(\wh{\Lambda}_1) - \sigma_1(\wh{\Lambda}_2) + \sigma_1(\wh{\Lambda}_2) - \sigma_1(\tilde{\Lambda}_2)\\
        \geq {}& \sigma_{\min}(\wh{\Lambda}_1) - \sigma_1(\wh{\Lambda}_2) - a^2 - b\\
        = {}& \sigma_{\min}(\wh{\Lambda}_1) - \sigma_{\min}(\Lambda_1) + \sigma_{\min}(\Lambda_1) - \sigma_1(\Lambda_2) + \sigma_1(\Lambda_2) - \sigma_1(\wh{\Lambda}_2) - a^2 - b\\
        \geq {}& -2\norm{\htM-M} - a^2 - b + \delta_M\\
        \geq {}& - 4a\sqrt{\psi_M} - 2a^2 - a^2 - b + \delta_M.
    \end{split}\end{equation}
    Here the first and second inequalities are due to  \citet[Theorem 1]{stewart_matrix_1990} . Directly applying the Davis-Kahan $\sin\theta$ Theorem in \cite{davis_rotation_1970} gives
    \begin{equation}\begin{split}
        \norm{\tilU_2\t \htU_1} \leq \frac{\norm{\htM-\tilM}}{\sigma_{\min}(\wh{\Lambda}_1) - \sigma_1(\tilde{\Lambda}_2)} \leq \frac{a^2 + b}{- 4a\sqrt{\psi_M} - 3a^2 - b + \delta_M}.
    \end{split}\end{equation}

    Finally, we conclude that
    \begin{equation}\begin{split}
        \norm{\tilU_2\t U_1} = {}& \norm{\tilU_2\t \b{\htU_1\htU_1\t+\htU_2\htU_2\t}U_1} \leq \norm{\tilU_2\t\htU_1} + \norm{\htU_2\t U_1}\\
        \leq {}& \frac{a^2 + b}{\delta_M - 4a\sqrt{\psi_M} - 3a^2 - b} + \frac{a}{\sqrt{\phi_M}} \b{1 + \frac{2-a/\sqrt{\phi_M}}{1-a/\sqrt{\phi_M}}\frac{\sigma_1\b{\Lambda_2} + 2a\sqrt{\psi_M} + a^2}{\wh{\delta}}}\\
        \leq {}& \frac{a^2 + b}{\delta_M - 4a\sqrt{\psi_M} - 3a^2 - b} + \frac{a}{\sqrt{\phi_M}} \b{1 + \frac{2-a/\sqrt{\phi_M}}{1-a/\sqrt{\phi_M}}\frac{\sigma_1\b{\Lambda_2} + 2a\sqrt{\psi_M} + a^2}{\delta_M - 2a\sqrt{\psi_M} - a^2}}.
    \end{split}\end{equation}
\end{proof}

\begin{lemma}\label{lem:gen_norm2Net}
    For any positive integer $a, b$, let $M\in\bbR^{a\times b}$ be any random matrix. There exist  $x\in\mathbb{S}^{b-1}$ and $y\in\mathbb{S}^{a-1}$ such that for all $\varepsilon < 1$
    \begin{equation}\begin{split}
        \bbP(\norm{M} > z) \leq \b{1+\frac{2}{\varepsilon}}^{a+b} \bbP\b{\left|{y\t Mx}\right| > (1-\varepsilon)^2z}.
    \end{split}\end{equation}
    Here, $\mathbb{S}^{b-1}$ denotes the unit $(b-1)$-sphere embedded in $\mathbb{R}^b$.
\end{lemma}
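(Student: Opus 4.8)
The plan is to run the standard $\varepsilon$-net argument for the operator norm, converting the suprema over the two unit spheres into maxima over finite nets and then applying a union bound over the net pairs.

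First I would recall the variational characterization
\[
\norm{M} = \sup_{x\in\mathbb{S}^{b-1}}\norm{Mx} = \sup_{x\in\mathbb{S}^{b-1},\,y\in\mathbb{S}^{a-1}} y\t M x,
\]
which holds for \emph{every} realization of the random matrix $M$. Next, for the fixed $\varepsilon<1$ I would choose $\varepsilon$-nets $\mathcal{N}_b\subseteq\mathbb{S}^{b-1}$ and $\mathcal{N}_a\subseteq\mathbb{S}^{a-1}$; by the standard volumetric covering bound (e.g.\ Corollary 4.2.13 in \cite{vershynin_high-dimensional_2018}) these can be taken with $|\mathcal{N}_b|\le(1+2/\varepsilon)^b$ and $|\mathcal{N}_a|\le(1+2/\varepsilon)^a$.

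I would then establish the deterministic net-comparison inequality in two successive one-sided steps. For any $x\in\mathbb{S}^{b-1}$ pick $x_0\in\mathcal{N}_b$ with $\norm{x-x_0}\le\varepsilon$; then $\norm{Mx}\le\norm{Mx_0}+\norm{M}\norm{x-x_0}$, and taking the supremum over $x$ gives $\norm{M}\le(1-\varepsilon)^{-1}\max_{x_0\in\mathcal{N}_b}\norm{Mx_0}$. Applying the identical argument on the $y$-side to each $\norm{Mx_0}=\sup_y y\t Mx_0$ yields $\norm{Mx_0}\le(1-\varepsilon)^{-1}\max_{y_0\in\mathcal{N}_a}|y_0\t M x_0|$. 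Composing the two bounds produces, for every realization,
\[
\norm{M}\le\frac{1}{(1-\varepsilon)^2}\max_{x_0\in\mathcal{N}_b,\,y_0\in\mathcal{N}_a}\big|y_0\t M x_0\big|.
\]

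Finally I would pass to probabilities. The displayed inequality shows that the event $\{\norm{M}>z\}$ is contained in $\{\max_{x_0,y_0}|y_0\t M x_0|>(1-\varepsilon)^2 z\}$, which is the union over the finitely many net pairs of the events $\{|y_0\t M x_0|>(1-\varepsilon)^2 z\}$. A union bound then gives
\[
\bbP(\norm{M}>z)\le\sum_{x_0,y_0}\bbP\big(|y_0\t M x_0|>(1-\varepsilon)^2 z\big)\le|\mathcal{N}_a||\mathcal{N}_b|\max_{x_0,y_0}\bbP\big(|y_0\t M x_0|>(1-\varepsilon)^2 z\big),
\]
and taking $(x,y)$ to be the maximizing net pair together with $|\mathcal{N}_a||\mathcal{N}_b|\le(1+2/\varepsilon)^{a+b}$ yields exactly the claimed inequality. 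I expect no serious obstacle, as this is a textbook covering argument; the only points needing care are that the net-comparison bound must be derived pointwise (before any probability is taken) and that the existential pair $x,y$ in the statement is realized as the single net pair attaining the maximal failure probability, so it is a deterministic choice independent of the realization of $M$.
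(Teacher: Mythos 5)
Your proof is correct and is essentially the same argument as the paper's: the paper simply invokes Proposition 8.1 of \cite{sarkar_near_2019} twice in succession (first to reduce $\norm{M}$ to $\norm{Mx}$ for a single $x\in\mathbb{S}^{b-1}$, then to reduce $\norm{x\t M\t}$ to $|y\t Mx|$), and that proposition is exactly the $\varepsilon$-net discretization plus maximal-probability-pair selection that you carry out explicitly. Your pointwise derivation of $\norm{M}\le(1-\varepsilon)^{-2}\max_{x_0,y_0}|y_0\t Mx_0|$ before taking probabilities, and your realization of the existential pair $(x,y)$ as the net pair with the largest failure probability, are precisely the two points that make the argument go through.
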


\begin{proof}
    From Proposition 8.1 of \cite{sarkar_near_2019}, we know that there exist an $x\in\mathbb{S}^{b-1}$ s.t. the following holds for any $\varepsilon < 1$
    \begin{equation}\begin{split}
        \bbP(\norm{M} > z) \leq \b{1+\frac{2}{\varepsilon}}^b \bbP\b{\norm{M x} > (1-\varepsilon)z} = \b{1+\frac{2}{\varepsilon}}^b \bbP\b{\norm{x\t M\t} > (1-\varepsilon)z}.
    \end{split}\end{equation}
    Applying the proposition again on $x\t M\t$ gives the following for some vector $y\in\mathbb{S}^{a-1}$
    \begin{equation}\begin{split}
        \bbP\b{\norm{x\t M\t} > (1-\varepsilon)z} \leq \b{1+\frac{2}{\varepsilon}}^{a} \bbP\b{\left|{x\t M y}\right| > (1-\varepsilon)^2z}.
    \end{split}\end{equation}
    Therefore, we have 
    \begin{equation}\begin{split}
        \bbP(\norm{M} > z) \leq \b{1+\frac{2}{\varepsilon}}^{a+b} \bbP\b{\left|{x\t M\t y}\right| > (1-\varepsilon)^2z}
    \end{split}\end{equation}
\end{proof}

\section[]{Lower Bounds for \ac{hdsysid} --- Proof of \Cref{thm:1lowerbound_full}}\label{sec:lowerbound}

An essential part for the proof is the Birge's Inequality stated below.
\begin{theorem}\label{thm:birge}
    Let $\{\bbP_i\}_{i\in[N]}$ be probability distributions on $(\Omega, \calF)$. Let $\{F_i\}_{i\in[N]}\in\calF$ be \textbf{pairwise disjoint events}. If $\delta = \min_{i\in[N]}\bbP_i(F_i) \geq \frac{1}{2}$ (\textit{the success rate}), then
    \begin{equation}\begin{split}
        (2\delta-1)\log\b{\frac{\delta}{1-\delta}(N-1)} \leq \frac{1}{N-1}\sum_{i\in[N]} \kl(\bbP_i||\bbP_1).
    \end{split}\end{equation}
\end{theorem}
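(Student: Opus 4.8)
The plan is to collapse the $N$-way comparison into a one-dimensional Bernoulli comparison via the data-processing inequality, and then solve two elementary single-variable optimizations. First, for each fixed $i$ I would apply the data-processing inequality for KL divergence to the two-valued coarsening $\omega\mapsto\mathbf{1}\{\omega\in F_i\}$. This map pushes $\bbP_i$ forward to $\mathrm{Ber}(p_i)$ and $\bbP_1$ to $\mathrm{Ber}(q_i)$, where $p_i\coloneqq\bbP_i(F_i)$ and $q_i\coloneqq\bbP_1(F_i)$, so the inequality yields $\kl\b{\bbP_i||\bbP_1}\geq d(p_i,q_i)$ with the Bernoulli divergence $d(p,q)\coloneqq p\log\frac{p}{q}+(1-p)\log\frac{1-p}{1-q}$. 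Since $\kl\b{\bbP_1||\bbP_1}=0$, summing over $i\in[N]$ reduces the theorem to establishing $\sum_{i=2}^N d(p_i,q_i)\geq(N-1)(2\delta-1)\log\b{\tfrac{\delta}{1-\delta}(N-1)}$.

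Second, I would extract the two constraints the hypotheses impose on the pairs $(p_i,q_i)$. By assumption $p_i\geq\delta$ for every $i$. Pairwise disjointness of the $F_i$ gives $\sum_{i=1}^N q_i=\bbP_1\b{\bigcup_i F_i}\leq 1$, and since $q_1=\bbP_1(F_1)=p_1\geq\delta$, this forces $\sum_{i=2}^N q_i\leq 1-\delta$. Because $\delta\geq\tfrac12$, each $q_i\leq 1-\delta\leq\delta$, so every relevant $q_i$ sits below $\delta$; this sign information is exactly what makes the subsequent reductions go the right way.

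Third comes the minimization of $\sum_{i=2}^N d(p_i,q_i)$ over the feasible region, in two stages. The map $p\mapsto d(p,q)$ is increasing on $[q,1]$, and since $p_i\geq\delta\geq q_i$, lowering each $p_i$ to $\delta$ only decreases the sum, leaving $\sum_{i=2}^N d(\delta,q_i)$. Next, $q\mapsto d(\delta,q)$ is convex, so Jensen gives $\tfrac{1}{N-1}\sum_{i=2}^N d(\delta,q_i)\geq d\b{\delta,\bar q}$ with $\bar q\coloneqq\tfrac{1}{N-1}\sum_{i=2}^N q_i\leq\tfrac{1-\delta}{N-1}$; and since $q\mapsto d(\delta,q)$ is decreasing on $(0,\delta)$, I may push $\bar q$ up to its largest admissible value $\tfrac{1-\delta}{N-1}$, obtaining $\sum_{i=2}^N d(p_i,q_i)\geq(N-1)\,d\b{\delta,\tfrac{1-\delta}{N-1}}$.

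Finally, it remains to verify $d\b{\delta,\tfrac{1-\delta}{N-1}}\geq(2\delta-1)\log\b{\tfrac{\delta}{1-\delta}(N-1)}$. Writing $s=\tfrac{1-\delta}{N-1}$, the first term of $d(\delta,s)$ is exactly $\delta\log\tfrac{\delta}{s}=\delta\log\b{\tfrac{\delta}{1-\delta}(N-1)}$, so after cancelling the common $\delta\log\tfrac{\delta}{s}$ the target becomes $(1-\delta)\log\tfrac{1-\delta}{1-s}\geq(1-\delta)\log\tfrac{s}{\delta}$, which reduces to the elementary inequality $\delta(1-\delta)\geq s(1-s)$; this holds because $s\leq 1-\delta\leq\tfrac12$ and $x\mapsto x(1-x)$ is increasing on $[0,\tfrac12]$. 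I expect this last algebraic rearrangement—together with checking that every monotonicity and convexity step invokes $\delta\geq\tfrac12$ in the correct direction—to be the only delicate point; the conceptual core, namely the data-processing reduction to Bernoulli variables, is entirely routine.
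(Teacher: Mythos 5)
Your proof is correct, and every monotonicity/convexity step does invoke $\delta\geq\tfrac12$ in the right direction: $p_i\geq\delta\geq\tfrac12\geq 1-\delta\geq q_i$ justifies lowering $p_i$ to $\delta$; Jensen applies since $q\mapsto d(\delta,q)$ has second derivative $\delta/q^2+(1-\delta)/(1-q)^2>0$; the decrease of $d(\delta,\cdot)$ on $(0,\delta]$ lets you raise $\bar q$ to $\tfrac{1-\delta}{N-1}\leq 1-\delta\leq\delta$; and the final reduction to $\delta(1-\delta)\geq s(1-s)$ is valid because $s=\tfrac{1-\delta}{N-1}\leq 1-\delta\leq\tfrac12$. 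However, your route is genuinely different from the paper's in two respects. First, the paper does not prove Birgé's inequality at all: it cites it from Boucheron--Lugosi--Massart as a black box, taking for granted exactly the bound $\frac{1}{N-1}\sum_i \kl(\bbP_i||\bbP_1)\geq d\b{\delta,\tfrac{1-\delta}{N-1}}$ that you re-derive from scratch via data processing, the disjointness budget $\sum_{i\geq 2}q_i\leq 1-\delta$, and the two optimization stages; your argument is thus the self-contained version of the standard textbook proof, passing through the same pivot quantity. Second, the final algebra differs: the paper bounds the denominator via $1-\tfrac{1-\delta}{N-1}\leq 1\leq 2\delta\leq(N-1)\delta$, which only works for $N\geq 3$ and forces a separate (trivial) check at $N=2$, whereas your cancellation of the common term $\delta\log\tfrac{\delta}{s}$ and reduction to $\delta(1-\delta)\geq s(1-s)$ handles all $N\geq 2$ uniformly, with equality at $N=2$. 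What the paper's approach buys is brevity; yours buys self-containedness and a cleaner, case-free endgame. Two cosmetic caveats worth a sentence in a write-up: if some $q_i=0$ then $d(p_i,q_i)=+\infty$ and all steps go through with extended-real conventions (Jensen holds for convex extended-real functions), and at $\delta=1$ the terms carrying the factor $1-\delta$ vanish, so the division by $1-\delta$ in your last step should be flagged as the degenerate case where the inequality holds with equality.
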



\begin{proof}
    From the Birge's inequality \cite{boucheron_concentration_2013}, we directly have
    \begin{equation}\begin{split}
        \delta\log\b{\delta/\frac{1-\delta}{N-1}} + (1-\delta)\log\b{(1-\delta)/(1-\frac{1-\delta}{N-1})} \leq \frac{1}{N-1}\sum_{i\in[N]} \kl(\bbP_i||\bbP_1),
    \end{split}\end{equation}
    On the other hand, the following holds for any $N \geq 3$
    \begin{equation}\begin{split}
        {}& \delta\log\b{\delta/\frac{1-\delta}{N-1}} + (1-\delta)\log\b{(1-\delta)/(1-\frac{1-\delta}{N-1})}\\
        \overset{(i)}{\geq} {}& \delta\log\b{\delta/\frac{1-\delta}{N-1}} + (1-\delta)\log\b{(1-\delta)/((N-1)\delta)}\\
        = {}& (2\delta-1)\log\b{\frac{\delta}{1-\delta}(N-1)}
    \end{split}\end{equation}
    Here $(i)$ holds because $1 - \frac{1-\delta}{N-1} \leq 1 \leq 2\delta \leq (N-1)\delta$. 
    For $N=2$, the above inequality naturally holds.
    Combining the two cases finishes the proof.
\end{proof}

Now we prove \Cref{thm:1lowerbound} by proving the following more general version that holds for multiple trajectories.

\begin{theorem}\label{thm:1lowerbound_full}
    Suppose $n \geq m \geq r$ with $n\geq 2$, $K\geq 1$, and choose positive scalars $\delta \leq \frac{1}{2}$. 
    Consider the class of minimal systems $\calM=(r,n,m,A,B,C,\Sigma_w,\Sigma_\eta)$ with different $A$,$B$,$C$ matrices. All other parameters are fixed and known. Moreover, $r<n$ and $\Sigma_\eta$ is positive definite. For every $k\in[K]$, let $ \calD_k = \{y_{k,t}\}_{t=0}^{T_k}\cup\{u_{k,t}\}_{t=0}^{T_k-1}\cup\text{\{all known parameters\}}$ denote the associated single trajectory dataset. 
    Here the inputs $u_{k,t}$ satisfy: 1). $u_{k,t}$ is sampled independently;2). $\bbE(u_{k,t}) = 0$. Consider any estimator $\htf$ mapping the datasets $\calD\coloneqq\cup_{k\in[K]}\calD_k$ to $(\htA(\calD), \htB(\calD),\htC(\calD))\in\bbR^{r\times r}\times \bbR^{r\times m}\times \bbR^{n\times r}$. If 
    \begin{equation}
        \calT \coloneqq \sum_{k\in[K]}T_k < \frac{\phi_\eta(1-2\delta)\log1.4}{50\b{\psi_w + \psi_u}}\cdot\frac{n+\log\frac{1}{2\delta}}{\epsilon^2},    
    \end{equation}
    there exists a system $\calM_0=(r,n,m,A_0,B_0,C_0,\Sigma_w,\Sigma_\eta)$ with dataset $\calD$ such that
    \begin{equation}\begin{split}
        \bbP {}& 
        \left\{\norm{\htC\b{\calD}\htB\b{\calD} - C_0B_0} \geq \epsilon\right\} \geq \delta.
    \end{split}\end{equation}
    Here $\bbP$ denotes the distribution of $\calD$ generated by system $\calM_0$. Related constants are defined as follows
    \begin{equation*}\begin{split}
        \phi_\eta &\coloneqq \sigma_{\min}(\Sigma_\eta), \quad \psi_w \coloneqq \sigma_{1}(\Sigma_w), \\\quad \psi_u &\coloneqq \max_{k,t} \sigma_{\max}\b{\bbE(u_{k,t}u_{k,t}\t)}.    \end{split}\end{equation*}
\end{theorem}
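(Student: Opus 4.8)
The plan is to prove this lower bound by the standard reduction from estimation to multiple hypothesis testing, using \Cref{thm:birge} (Birge's inequality) as the main information-theoretic device. I would argue by contradiction: suppose some estimator $\htf$ returns $\htC(\calD)\htB(\calD)$ within $\epsilon$ of the true first Markov parameter $CB$ with probability strictly greater than $1-\delta$ \emph{simultaneously} for every system in a finite family $\{\calM_i\}_{i=1}^N$ that I will construct, and let $\bbP_i$ denote the data law under $\calM_i$. Defining $F_i \coloneqq \{\norm{\htC(\calD)\htB(\calD)-C_iB_i} < \epsilon\}$, a separation $\norm{C_iB_i-C_jB_j}\geq 2\epsilon$ for $i\neq j$ makes the $F_i$ pairwise disjoint by the triangle inequality, while the uniform-success hypothesis gives $\bbP_i(F_i) > 1-\delta \geq \tfrac12$. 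Birge's inequality then forces $(1-2\delta)\log\b{\tfrac{1-\delta}{\delta}(N-1)} \leq \max_i \kl(\bbP_i\|\bbP_1)$, and the contradiction will emerge once the right-hand side is shown to be too small when $\calT$ is below the stated threshold.

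The family is designed to exploit the high dimension of the observation space. I would fix $A_i=0$ and $B_i=B$ across the family, where $B\in\bbR^{r\times m}$ has orthonormal rows (possible since $m\geq r$); then \Cref{assmp:sys1} holds trivially and, since $x_{k,0}=0$, the state covariance is uniformly bounded, $\bbE[x_{k,t}x_{k,t}\t] = B\,\bbE[u_{k,t-1}u_{k,t-1}\t]\,B\t + \Sigma_w \preceq (\psi_u+\psi_w)I$ for $t\geq 1$. Only $C$ varies: set $C_i = C_0 + \gamma\,a_i b\t$, where $b\in\bbR^r$ is a fixed unit vector, $\{a_i\}_{i=1}^N\subset \mathbb{S}^{n-1}$ is a sphere packing with pairwise distances in $[\rho_0,2]$ and cardinality $N\gtrsim 1.4^{\,n}$ (a standard sphere-packing bound), and $\gamma$ satisfies $\gamma\rho_0 = 2\epsilon$. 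Since $C_i-C_j=\gamma(a_i-a_j)b\t$ is rank one and $\norm{B\t b}=1$, I get $\norm{C_iB-C_jB} = \gamma\norm{a_i-a_j} \geq 2\epsilon$ (the required separation) while simultaneously $\norm{C_i-C_j}_F = \gamma\norm{a_i-a_j} \leq 2\gamma = \calO(\epsilon)$. Choosing $C_0$ with $\sigma_{\min}(C_0)>\gamma$ keeps each $C_i$ full column rank, so every $(A_i,B_i,C_i)$ is a minimal realization.

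The technical heart is bounding $\kl(\bbP_i\|\bbP_1)$, which is delicate because the $\{y_t\}$ are \emph{not} independent across time: they share the latent trajectory. The key observation is that, since $A,B,\Sigma_w$ and the input law are identical across the family, the joint law of $(\{u_t\},\{x_t\})$ is the \emph{same} for every $\calM_i$, and only the conditional $y_{k,t}\mid x_{k,t}\sim\calN(C_i x_{k,t},\Sigma_\eta)$ differs. Conditioning the divergence on the shared latent path and invoking the data-processing/chain-rule inequality (marginalizing out $x$ only decreases the divergence) yields
\begin{equation*}\begin{split}
\kl(\bbP_i\|\bbP_1) &\leq \tfrac12\sum_{k,t}\bbE\big[\norm{(C_i-C_1)x_{k,t}}_{\Sigma_\eta^{-1}}^2\big]\\
&\leq \frac{1}{2\phi_\eta}\sum_{k,t}\tr\big((C_i-C_1)\t(C_i-C_1)\,\bbE[x_{k,t}x_{k,t}\t]\big)\\
&\leq \frac{(\psi_u+\psi_w)\calT}{2\phi_\eta}\norm{C_i-C_1}_F^2,
\end{split}\end{equation*}
using $\Sigma_\eta^{-1}\preceq\phi_\eta^{-1}I$, the state-covariance bound, and $x_{k,0}=0$ (so each trajectory contributes exactly $T_k$ nonzero terms, summing to $\calT$).

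Assembling the pieces, the separation and Frobenius estimates give $\kl(\bbP_i\|\bbP_1)\lesssim \tfrac{(\psi_u+\psi_w)\calT}{\phi_\eta}\epsilon^2$, while $N\gtrsim 1.4^n$ together with $\tfrac{1-\delta}{\delta}\geq\tfrac{1}{2\delta}$ gives $\log\b{\tfrac{1-\delta}{\delta}(N-1)}\gtrsim \log(1.4)\,\b{n+\log\tfrac{1}{2\delta}}$. Substituting both into Birge's inequality and solving for $\calT$ reproduces exactly the threshold in the statement, with the constant $50$ absorbing the packing geometry $\rho_0$ and the factor $2\phi_\eta$; whenever $\calT$ falls below it, the inequality is violated, contradicting uniform success and hence exhibiting some $\calM_0$ that fails with probability at least $\delta$. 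I expect the main obstacle to be the KL estimate together with the construction that supports it: one must handle the temporal correlation through the shared latent state, and—crucially—engineer the rank-one sphere-packing perturbation so that the separation is read off in the operator norm of $C_iB$ while the divergence feels only the Frobenius size of $C_i-C_1$. Reconciling these two requirements is precisely what permits the packing cardinality to grow like $1.4^n$ and thereby injects the dimension $n$ into the final bound.
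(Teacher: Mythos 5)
Your proposal is correct and follows essentially the same route as the paper's proof: rank-one perturbations of the observer along an exponentially large packing of the unit sphere in $\bbR^n$, identical input/latent laws across the family so that the KL divergence reduces (via the chain rule and non-negativity of the conditional term) to the Gaussian observation channel, a bounded state second moment, and Birg\'e's inequality applied to the disjoint success events. The only differences are cosmetic — you take $A_i=0$ where the paper takes $A_i=\tfrac12 I$ (costing the paper a harmless factor $\Gamma_\infty\preceq\tfrac43 I$), and you phrase the observer family as $C_0+\gamma a_i b\t$ where the paper fixes $C_0=\sum_l E_{l,l}$ and $b=e_1$.
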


\begin{proof}
    \textbf{Step 1: We first construct a set of candidate minimal systems with slightly different observer column spaces.}
    Fix a constant $\epsilon < 0.1$ and a $\frac{1}{2}$-packing of the intersection of the unit ball in $\bbR^n$, denoted by $\tilde{\calP}$. Now let $\calP = \tilde{\calP}\backslash \{(-1, 0,\dots, 0)\t\}$. We know $|\calP| \geq 2^n-1$ and we let $\calP = \{p_i\}_{i=1}^{|\calP|}$.
    Now consider the following set of matrices $\{A_i, B_i, C_i\}_{i=1}^{|\calP|}$ where
    \begin{equation}\begin{split}
        {}& A_i = \frac{1}{2} I \in\bbR^{r\times r}, \quad B_i = \sum_{l\in[r]}E_{l,l} \in\bbR^{r\times m},\quad C_i = 4\epsilon p_ie_1\t + \sum_{l\in[r]}E_{l,l} \in \bbR^{n\times r},
    \end{split}\end{equation}
    where $E_{l,l}$ is a matrix (of appropriate dimensions) with $1$ in the $(l,l)$ entry and $0$ elsewhere, $e_l$ is a vector (of appropriate dimensions) with $1$ in the $l$ entry and $0$ elsewhere.
    It is clear that all systems are minimal, since $B_i$ are full row rank and $C_i$ are full column rank. Moreover, in all systems, only matrices $C_i$ are different for each $i$, and $A_i$ and $B_i$ remain the same. We will see that this ``high-dimensional perturbation'' only in each observer matrix $C_i$ is enough for creating a worst case scenario.

    Now consider any estimator $\htf$ mapping the dataset $\calD$ to $\left(\htA(\calD), \htB(\calD),\htC(\calD)\right)\in\bbR^{r\times r}\times \bbR^{r\times m}\times \bbR^{n\times r}$.
    We define events $\{F_i\}_{i\in[|\calP|]}$ as follows:
    \begin{equation}\begin{split}
        F_i = {}& \left\{\calD : \norm{\htC\b{\calD}\htB\b{\calD} - C_iB_i} < \epsilon \right\}
    \end{split}\end{equation}
    Under our construction, for any $i, j\in[|\calP|]$, we have
    \begin{equation}\begin{split}
        \norm{C_iB_i - C_jB_j} = \norm{(C_i - C_j)B_j} = 4\epsilon\norm{(p_i-p_j)e_1\t} = 4\epsilon\norm{p_i-p_j}_2 \geq 2\epsilon.
    \end{split}\end{equation}
    Therefore, the following holds and all events $\{F_i\}_{i\in[|\calP|]}$ are pairwise disjoint.
    \begin{equation}\begin{split}
        {}& \norm{\htC\b{\calD}\htB\b{\calD} - C_iB_i} + \norm{\htC\b{\calD}\htB\b{\calD} - C_jB_j}
        \geq \norm{C_iB_i - C_jB_j}
        \geq 2\epsilon.
    \end{split}\end{equation}

    \textbf{Step 2: We now bound the KL-divergence of dataset distribution coming from any pair of systems we constructed.}
    For any $k\in[K]$, let $u^k$, $x^k$, $y^k$ denote the corresponding trajectories of the inputs, latent variables and observations. Moreover, let $u = \cup_{k\in[K]}u^k$, $y = \cup_{k\in[K]}y^k$ and $x = \cup_{k\in[K]}x^k$ denote the collection of all trajectories. For any $i\in[|\calP|]$, let $\bbP_{i}^{UXY}$ denote the joint distribution of $\{u,x,y\}$ generated by system $\calM_i = (r,n,m,A_i,B_i,C_i,\Sigma_w,\Sigma_\eta)$. Let $\bbP_{i}^{UY}$, $\bbP_{i}^{UX}$ be the marginal distribution of $\{y,u\}$ and $\{x,u\}$. Similarly, define $\bbP_{i}^{X|U}$, $\bbP_{i}^{X|UY}$ and $\bbP_{i}^{Y|UX}$ to be the conditional distributions. 
    For the rest of the proof, we apply Theorem \ref{thm:birge} on the events $\{F_i\}_{i\in[|\calP|]}$ and distributions $\{\bbP_i^{UY}\}_{i\in[|\calP|]}$.

    We aim to upper bound the KL-divergence between distributions $\{\bbP_i^{UY}\}_{i=1}^{|\calP|}$. To do this, we start from showing that $\kl(\bbP_i^{UY}||\bbP_j^{UY}) \leq \kl(\bbP_i^{UXY}||\bbP_j^{UXY})$. 
    \begin{equation}\begin{split}
        \kl(\bbP_i^{UXY}||\bbP_j^{UXY}) ={}&  \int \bbP_i^{UXY}(u,x, y)\log\frac{\bbP_i^{UXY}(u,x,y)}{\bbP_j^{UXY}(u,x,y)}\d u\d x\d y\\
        = {}& \int \bbP_i^{UY}(u,y)\bbP_i^{X|U=u,Y=y}(x)\log\frac{\bbP_i^{UY}(u,y)\bbP_i^{X|U=u,Y=y}(x)}{\bbP_j^{UY}(u,y)\bbP_j^{X|U=u,Y=y}(x)}\d u\d x\d y\\
        = {}& \int \bbP_i^{UY}(u,y)\bbP_i^{X|U=u,Y=y}(x)\log\frac{\bbP_i^{UY}(u,y)}{\bbP_j^{UY}(u,y)}\d u\d x\d y\\
        {}& \hspace{2em} + \int \bbP_i^{UY}(u,y)\bbP_i^{X|U=u,Y=y}(x)\log\frac{\bbP_i^{X|U=u,Y=y}(x)}{\bbP_j^{X|U=u,Y=y}(x)} \d u\d x\d y\\
        = {}& \int \bbP_i^{UY}(u,y)\b{\int \bbP_i^{X|U=u,Y=y}(x)\d x}\log\frac{\bbP_i^{UY}(u,y)}{\bbP_j^{UY}(u,y)}\d u\d y\\
        {}& \hspace{2em} + \int \bbP_i^{UY}(u,y)\kl\b{\bbP_i^{X|U=u,Y=y}||\bbP_j^{X|U=u,Y=y}}\d u\d y\\
        \overset{(i)}{\geq} {}& \int \bbP_i^{UY}(u,y)\log\frac{\bbP_i^{UY}(u,y)}{\bbP_j^{UY}(u,y)}\d u\d y\\
        = {}& \kl(\bbP_i^{UY}||\bbP_j^{UY}).
    \end{split}\end{equation}
    Here $(i)$ holds because KL-divergence (the second term) is non-negative. Therefore, we only need to upper bound $\kl(\bbP_i^{UXY}||\bbP_j^{UXY})$. Notice that 

    \begin{equation}\begin{split}
        {}& \log\frac{\bbP_i^{UXY}(u,x,y)}{\bbP_j^{UXY}(u,x,y)} = \log\frac{\bbP_i^U(u)\bbP_i^{X|U=u}(x)\bbP_i^{Y|U=u,X=x}(y)}{\bbP_j^U(u)\bbP_j^{X|U=u}(x)\bbP_j^{Y|U=u,X=x}(y)}\\
        = {}& \log\frac{\prod_{k=1,t=0}^{K,T_k-1}\bbP_i(u_{k,t})}{\prod_{k=1,t=0}^{K,T_k-1}\bbP_j(u_{k,t})} + \log\frac{\prod_{k=1,t=1}^{K,T_k}\bbP_i(x_{k,t}|x_{k,0:t-1},u_{k,0:t-1})}{\prod_{k=1,t=1}^{K,T_k}\bbP_j(x_{k,t}|x_{k,0:t-1},u_{k,0:t-1})}\\
        {}& \hspace{2em} + \log\frac{\prod_{k=1,t=0}^{K,T_k}\bbP_i(y_{k,t}|x_{k,0:t},u_{k,0:t})}{\prod_{k=1,t=0}^{K,T_k}\bbP_j(y_{k,t}|x_{k,0:t},u_{k,0:t})}\\
        = {}& \sum_{k=1,t=0}^{K,T_k-1}\log\frac{\bbP_i(u_{k,t})}{\bbP_j(u_{k,t})} + \sum_{k=1,t=1}^{K,T_k}\log\frac{\bbP_i(x_{k,t}|x_{k,t-1},u_{k,t-1})}{\bbP_j(x_{k,t}|x_{k,t-1},u_{k,t-1})}\\
        {}& \hspace{2em} + \sum_{k=1,t=0}^{K,T_k}\log\frac{\bbP_i(y_{k,t}|x_{k,t},u_{k,t})}{\bbP_j(y_{k,t}|x_{k,t},u_{k,t})}\\
        = {}& \sum_{k=1,t=0}^{K,T_k}\log\frac{\bbP_i(y_{k,t}|x_{k,t},u_{k,t})}{\bbP_j(y_{k,t}|x_{k,t},u_{k,t})}.\\
    \end{split}\end{equation}
    Here the last line is because the distribution of $u$ and $x$ doesn't depend on the $i$ or $j$. This is holds for $x$ because for any $i$, $A_i$ and $B_i$ take the same value.
    Therefore, 
    \begin{equation}\begin{split}\label{eq:kl_obs}
        {}& \kl(\bbP_i^{UXY}||\bbP_j^{UXY})\\
        = {}& \int \bbP_i^{UXY}(u,x, y)\log\frac{\bbP_i^{UXY}(u,x,y)}{\bbP_j^{UXY}(u,x,y)}\d x\d y\\
        = {}& \int \bbP_i^{UXY}(u,x, y)\b{\sum_{k=1,t=0}^{K,T_k}\log\frac{\bbP_i(y_{k,t}|x_{k,t},u_{k,t})}{\bbP_j(y_{k,t}|x_{k,t},u_{k,t})}} \d u\d x\d y\\
        = {}& \sum_{k=1,t=0}^{K,T_k}\bbE_i\b{\log\frac{\bbP_i(y_{k,t}|x_{k,t},u_{k,t})}{\bbP_j(y_{k,t}|x_{k,t},u_{k,t})}}\\
        \overset{(i)}{=} {}& \sum_{k=1,t=0}^{K,T_k} \bbE_i \b{\kl\b{\bbP_i^Y(\cdot|x_{k,t},u_{k,t})||\bbP_j^Y(\cdot|x_{k,t},u_{k,t})}}\\
        \overset{(ii)}{=} {}& \frac{1}{2}\sum_{k=1,t=0}^{K,T_k} \bbE_i \b{x_{k,t}\t(C_i-C_j)\t\Sigma_\eta^{-1}(C_i-C_j)x_{k,t}}
    \end{split}\end{equation}
    Here in $(i)$ follows from the definition of KL-divergence and by taking the expectation of $x_t$ (or $y_t$ in the second term). And $(ii)$ follows from the fact that $P_i^X = P_j^X$ (this is because $A_i=A_j$ and $B_i=B_j$) and from the KL-divergence between two gaussians. We further simplify the expression as follows
    \begin{equation}\begin{split}
        \kl(\bbP_i^{UXY}||\bbP_j^{UXY}) = {}& \frac{1}{2}\sum_{k=1,t=0}^{K,T_k} \bbE_i \b{x_{k,t}\t(C_i-C_j)\t\Sigma_\eta^{-1}(C_i-C_j)x_{k,t}}\\
        \leq {}& \frac{1}{2\phi_\eta}\sum_{k=1,t=0}^{K,T_k} \bbE_i \b{x_{k,t}\t(C_i-C_j)\t(C_i-C_j)x_{k,t}}\\
        = {}& \frac{1}{2\phi_\eta} \sum_{k=1,t=0}^{K,T_k} \tr\b{(C_i-C_j)\bbE_i\b{x_{k,t}x_{k,t}\t}(C_i-C_j)\t}\\
    \end{split}\end{equation}
    Now we consider term $\bbE_i\b{x_{k,t}x_{k,t}\t}$. Notice that $x_{k,t} = \sum_{\tau=0}^{t-1}A_i^\tau \b{B_iu_{k,t-1-\tau}+w_{k,t-1-\tau}}$, we have
    \begin{equation}\begin{split}\label{eq:gram_obs}
        \bbE_i(x_{k,t}x_{k,t}\t) ={}&  \bbE_i\bs{\sum_{\tau=0}^{t-1}A_i^{\tau}\b{B_iu_{k,t-1-\tau}+w_{k,t-1-\tau}}\sum_{\tau=0}^{t-1}\b{B_iu_{k,t-1-\tau}+w_{k,t-1-\tau}}\t (A_i^{\tau})\t}\\
        = {}& \sum_{\tau_1,\tau_2=0}^{t-1}A_i^{\tau_1}\bbE_i\bs{\b{B_iu_{k,t-1-\tau_1}+w_{k,t-1-\tau_1}}\b{B_iu_{k,t-1-\tau_2}+w_{k,t-1-\tau_2}}\t} (A_i^{\tau_2})\t\\
        \overset{(i)}{=} {}& \sum_{\tau=0}^{t-1}A_i^{\tau}\bbE_i\bs{\b{B_iu_{k,t-1-\tau}+w_{k,t-1-\tau}}\b{B_iu_{k,t-1-\tau}+w_{k,t-1-\tau}}\t} (A_i^{\tau})\t\\
        = {}& \sum_{\tau=0}^{t-1}A_i^\tau \b{B_i\bbE(u_{k,t}u_{k,t}\t)B_i\t+\Sigma_w} (A_i^\tau)\t\\
        \preceq {}& \b{\psi_w + \norm{B}^2\psi_u}\Gamma_{t-1}(A_i)\\
        = {}& \b{\psi_w + \psi_u}\Gamma_{t-1}(A_i).
    \end{split}\end{equation}
    Here $(i)$ is because $u_{\tau_1}, w_{\tau_1}$ are zero-mean and independent of $u_{\tau_2},w_{\tau_2}$ for $\forall \tau_1\neq \tau_2$. 
    Substituting back gives
    \begin{equation}\begin{split}
        {}& \kl(\bbP_i^{UXY}||\bbP_j^{UXY})\\
        \leq {}& \frac{1}{2\phi_\eta} \sum_{k=1,t=0}^{K,T_k} \tr\b{(C_i-C_j)\bbE_i\b{x_{k,t}x_{k,t}\t}(C_i-C_j)\t}\\
        \leq {}& \frac{\psi_w+\psi_u}{2\phi_\eta} \sum_{k=1,t=1}^{K,T_k}  \tr\b{(C_i-C_j)\Gamma_{t-1}(A_i)(C_i-C_j)\t}.
    \end{split}\end{equation}
    Here the last line is because $x_0=0$. Since $A = \frac{1}{2}I$, we conclude that 
    \begin{equation}\begin{split}
        \Gamma_{t}(A_i) = \sum_{\tau=0}^{t}A_i^t (A_i^t)\t \preceq \sum_{\tau=0}^t \b{\frac{1}{2}}^{2t}I \preceq \frac{4}{3}I.
    \end{split}\end{equation}
    Substituting it back gives
    \begin{equation}\begin{split}
        \kl(\bbP_i^{UXY}||\bbP_j^{UXY}) \leq {}& \frac{2(\psi_w + \psi_u)}{3\phi_\eta} \sum_{k=1,t=0}^{K,T_k-1} \tr\b{(C_i-C_j)(C_i-C_j)\t}\\
    \end{split}\end{equation}
    From the definitions of $C_i$ and $C_j$, we notice that 
    \begin{equation}\begin{split}
        C_i - C_j ={}&  4\epsilon (p_i-p_j)e_1\t.
    \end{split}\end{equation}
    Then 
    \begin{equation}\begin{split}
        \kl(\bbP_i^{UXY}||\bbP_j^{UXY}) \leq {}& \frac{32\b{\psi_w + \psi_u}\epsilon^2}{3\phi_\eta} \sum_{k=1,t=0}^{K,T_k-1}  \tr\b{\b{p_ie_1\t-p_je_1\t}\b{p_ie_1\t-p_je_1\t}\t}\\
        = {}& \frac{32\b{\psi_w + \psi_u}\epsilon^2}{3\phi_\eta} \sum_{k=1,t=0}^{K,T_k-1} \norm{p_ie_1\t-p_je_1\t}_{\f}^2\\
        = {}& \frac{32\b{\psi_w + \psi_u}\epsilon^2}{3\phi_\eta} \sum_{k=1,t=0}^{K,T_k-1} \norm{p_i-p_j}_2^2\\
    \end{split}\end{equation}
    Since $p_i$ and $p_j$ lie in the unit ball, we know that $\norm{p_i-p_j} \leq 2$. Therefore,
    \begin{equation}\begin{split}\label{eq:lowerbound_2_obs}
        \kl(\bbP_i^{UXY}||\bbP_j^{UXY}) \leq {}& \frac{128\b{\psi_w + \psi_u}\epsilon^2}{3\phi_\eta} \sum_{k=1,t=0}^{K,T_k-1}1 = \frac{128\b{\psi_w + \psi_u}\epsilon^2}{3\phi_\eta} \sum_{k=1}^{K} T_k\\
        \leq {}& \frac{50\b{\psi_w + \psi_u}\epsilon^2}{\phi_\eta} \calT.
    \end{split}\end{equation}
    Here the last line is due to the definition of $\calT$.

    \textbf{Step 3: Now we invoke \Cref{thm:birge} and show that learning at least one of those minimal realizations is hard.}
    Now applying Theorem \ref{thm:birge} on distributions $\{\bbP_i^{UY}\}_{i\in[|\calP|]}$ and events $\{F_i\}_{i\in[|\calP|]}$ gives the following result. If $\min_{i\in[|\calP|]}\bbP_i^{UY}(F_i) \geq 1-\delta$, then the following holds
    \begin{equation}\begin{split}
        (1-2\delta)\log\b{\frac{1-\delta}{\delta}(|\calP|-1)} \leq \frac{1}{N-1}\sum_{i\in[|\calP|]} \kl(\bbP_i^{UY}||\bbP_1^{UY}) \leq \sup_{i\in[\calP]}\kl(\bbP_i^{UY}||\bbP_1^{UY}).
    \end{split}\end{equation}
    For the LHS, since $|\calP| \geq 2^n-1$, we have $|\calP|-2 \geq 1.4^n$. Therefore,
    \begin{equation}\begin{split}
        \text{LHS} = {}& (1-2\delta)\log\b{\frac{1-\delta}{\delta}(|\calP|-1)}\\
        \geq {}& (1-2\delta)\log\b{\frac{1-\delta}{\delta}1.4^n}\\
        \geq {}& (1-2\delta)n\log1.4 + (1-2\delta)\log\b{\frac{1-\delta}{\delta}}\\
        \geq {}& (1-2\delta)n\log1.4 + (1-2\delta)\log\frac{1}{2\delta}\\
    \end{split}\end{equation}
    For the RHS, we get the following from Equation \ref{eq:lowerbound_2_obs}
    \begin{equation}\begin{split}
        \text{RHS} \leq \sup_{i\in[\calP]}\kl(\bbP_i^{UXY}||\bbP_1^{UXY}) \leq \frac{50\b{\psi_w + \psi_u}\epsilon^2}{\phi_\eta}\calT.
    \end{split}\end{equation}
    Therefore, we have the following for any $\epsilon$
    \begin{equation}\begin{split}
        \calT \geq \frac{(1-2\delta)\phi_\eta\log1.4}{50\b{\psi_w + \psi_u}}\cdot\frac{n+\log\frac{1}{2\delta}}{\epsilon^2} \geq \frac{\phi_\eta(1-2\delta)\log1.4}{50\b{\psi_w + \psi_u}}\cdot\frac{n+\log\frac{1}{2\delta}}{\epsilon^2}.
    \end{split}\end{equation}
    Namely, if $\min_{i\in[|\calP|]}\bbP_i^{UY}(F_i) \geq 1-\delta$, then $\calT \geq \frac{\phi_\eta(1-2\delta)\log1.4}{50\b{\psi_w + \psi_u}}\cdot\frac{n+\log\frac{1}{2\delta}}{\epsilon^2}$. In other words, if $\calT < \frac{\phi_\eta(1-2\delta)\log1.4}{50\b{\psi_w + \psi_u}}\cdot\frac{n+\log\frac{1}{2\delta}}{\epsilon^2}$, there exists at least one $i\in[|\calP|]$ such that $\bbP_i^{UY}(F_i) < 1-\delta$. For this $i$, we then know that
    \begin{equation}
        \bbP_i^{UY}(F_i^c) \geq \delta.
    \end{equation}
    where $F_i^c$ is the complement event of $F_i$. This finishes the proof.

\end{proof}

\begin{corollary}\label{cor:thm1_1}
    Consider the same setting as \Cref{thm:1lowerbound}. Moreover, assume $\norm{\htB(\calD)}, \norm{\htC(\calD)} \leq \bar{\psi}$, i.e., the estimator outputs bounded approximations. Let $\epsilon$ be any scalar such that $\epsilon\leq \min\{0.2/\bar{\psi}, \norm{\htB(\calD)}\cdot\norm{\htC(\calD)}/\bar{\psi}\}$. Then if 
    \begin{equation}\begin{split}
        \calT < \frac{\phi_\eta(1-2\delta)\log1.4}{450\bar{\psi}^2\b{\psi_w + \psi_u}}\cdot\frac{n+\log\frac{1}{2\delta}}{\epsilon^2}
    \end{split}\end{equation}
    there exists a minimal system $\calM_0=(r,n,m,A_0,B_0,C_0,\Sigma_w,\Sigma_\eta)$ with dataset $\calD$ such that
    \begin{equation}\begin{split}
        \bbP {}& 
        \left\{\max\left\{\norm{\htC\b{\calD} - C_0S}, \norm{\htB\b{\calD} - S^{-1}B_0}\right\} \geq \epsilon \right\} \geq \delta.
    \end{split}\end{equation}
\end{corollary}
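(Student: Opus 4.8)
The plan is to deduce this factor-wise lower bound from the product lower bound already established in \Cref{thm:1lowerbound_full}, by invoking that theorem at a rescaled accuracy. First I would set $\epsilon' \coloneqq 3\bar{\psi}\epsilon$ and observe that $\epsilon'^2 = 9\bar{\psi}^2\epsilon^2$, so that the corollary's hypothesis $\calT < \frac{\phi_\eta(1-2\delta)\log 1.4}{450\bar{\psi}^2(\psi_w+\psi_u)}\cdot\frac{n+\log\frac{1}{2\delta}}{\epsilon^2}$ is literally the same inequality as $\calT < \frac{\phi_\eta(1-2\delta)\log 1.4}{50(\psi_w+\psi_u)}\cdot\frac{n+\log\frac{1}{2\delta}}{\epsilon'^2}$, which is exactly the condition under which \Cref{thm:1lowerbound_full} applies at accuracy $\epsilon'$. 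The bound $\epsilon \le 0.2/\bar\psi$ keeps the rescaled accuracy $\epsilon'$ within the small admissible regime used in the packing construction of that theorem. Applying \Cref{thm:1lowerbound_full} then yields a minimal system $\calM_0 = (r,n,m,A_0,B_0,C_0,\Sigma_w,\Sigma_\eta)$ with dataset $\calD$ satisfying $\bbP\{\norm{\htC(\calD)\htB(\calD) - C_0 B_0} \ge \epsilon'\} \ge \delta$.

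The core step is a purely deterministic implication that converts the product error into the factor errors: for every realization $\calD$ and every invertible $S$,
\[
\max\left\{\norm{\htC(\calD) - C_0 S},\ \norm{\htB(\calD) - S^{-1}B_0}\right\} < \epsilon \ \Longrightarrow\ \norm{\htC(\calD)\htB(\calD) - C_0 B_0} < \epsilon'.
\]
To see this I would write $C' = C_0 S$ and $B' = S^{-1}B_0$, so that $C'B' = C_0 B_0$ — the key observation being that the product $C_0 B_0$ is invariant under the similarity class. The identity $\htC\htB - C'B' = \htC(\htB - B') + (\htC - C')B'$ gives
\[
\norm{\htC\htB - C_0 B_0} \le \norm{\htC}\,\norm{\htB - B'} + \norm{\htC - C'}\,\norm{B'}.
\]
Using $\norm{\htC} \le \bar\psi$, the two factor bounds $\norm{\htB - B'}, \norm{\htC - C'} < \epsilon$, and $\norm{B'} \le \norm{\htB} + \norm{\htB - B'} < \bar\psi + \epsilon$, this yields $\norm{\htC\htB - C_0 B_0} < 2\bar\psi\epsilon + \epsilon^2$. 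The remaining hypothesis $\epsilon \le \norm{\htB(\calD)}\,\norm{\htC(\calD)}/\bar\psi \le \bar\psi$ (the last inequality since $\norm{\htB},\norm{\htC}\le\bar\psi$) is precisely what is needed to absorb the cross term: $\epsilon^2 \le \bar\psi\epsilon$, whence $2\bar\psi\epsilon + \epsilon^2 \le 3\bar\psi\epsilon = \epsilon'$, completing the implication.

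Taking the contrapositive, the event $\{\norm{\htC\htB - C_0 B_0} \ge \epsilon'\}$ is contained in $\{\max\{\norm{\htC - C_0 S}, \norm{\htB - S^{-1}B_0}\} \ge \epsilon\}$ for every invertible $S$ simultaneously, since the former event does not reference $S$ at all. As the former event has probability at least $\delta$ under the distribution generated by $\calM_0$, so does the latter for any (even data-dependent) choice of $S$, which is the claimed statement. I expect the main obstacle to be the bookkeeping of the accuracy rescaling together with the two smallness conditions on $\epsilon$: one must verify that $\epsilon' = 3\bar\psi\epsilon$ makes the two $\calT$-hypotheses coincide, that $\epsilon \le 0.2/\bar\psi$ keeps $\epsilon'$ admissible for the underlying construction, and that $\epsilon \le \norm{\htB}\,\norm{\htC}/\bar\psi$ is exactly the inequality that controls the quadratic cross term. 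The conceptual content — that a bound on a product forces a bound on at least one factor, uniformly across the similarity class — is then routine once the similarity-invariance of $\htC\htB$ versus $C_0 B_0$ is exploited.
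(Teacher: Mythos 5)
Your proposal is correct and follows essentially the same route as the paper: rescale to $\epsilon' = 3\bar{\psi}\epsilon$, invoke the product lower bound of \Cref{thm:1lowerbound_full} at accuracy $\epsilon'$, and then use the deterministic implication that small factor errors (uniformly over the similarity class) force a small product error, taking the contrapositive. The paper packages this last step as a standalone contradiction argument (\Cref{lem:thm1_1}) with thresholds $\epsilon/(3\|\htB\|)$ and $\epsilon/(3\|\htC\|)$ before converting to $\bar{\psi}$, whereas you carry out the same triangle-inequality algebra directly in terms of $\bar{\psi}$; the two are interchangeable.
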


\begin{proof}
    Let $\epsilon' = 3\bar{\psi}\epsilon$. For all $\epsilon \leq 0.2/\bar{\psi}$, we know that $\epsilon'\leq 0.6$. Therefore, applying Theorem \ref{thm:1lowerbound} with $\epsilon'$ gives the following result. If $T < \frac{\phi_\eta(1-2\delta)\log1.4}{450\b{\psi_w + \psi_u}\bar{\psi}^2}\cdot\frac{n+\log\frac{1}{2\delta}}{\epsilon^2}$, there exists $\calM_0=(r,n,m,A_0,B_0,C_0,\Sigma_w,\Sigma_\eta)$ such that 
    \begin{equation}\begin{split}
        \bbP {}& 
        \left\{\norm{\htC\b{\calD}\htB\b{\calD} - C_0B_0} \geq 3\bar{\psi}\epsilon\right\} \geq \delta.
    \end{split}\end{equation}
    Since $\epsilon \leq \frac{\norm{\htB(\calD)}\norm{\htC(\calD)}}{\bar{\psi}}$, we know that $3\bar{\psi}\epsilon \leq 3\norm{\htB(\calD)}\norm{\htC(\calD)}$. Therefore, applying Lemma \ref{lem:thm1_1} gives
    \begin{equation}\begin{split}
        {}& \norm{\htC\b{\calD}\htB\b{\calD} - C_0B_0} \geq 3\bar{\psi}\epsilon\\
        \Rightarrow {}& \norm{\htC(\calD) - C_0S} \geq \frac{3\bar{\psi}\epsilon}{3\norm{\htB}} \geq \epsilon \quad \text{or} \quad \norm{\htB(\calD) - S^{-1}B_0} \geq \frac{\epsilon}{3\norm{\htC}} \geq \epsilon.
    \end{split}\end{equation}
    Therefore, 
    \begin{equation}\begin{split}
        {}& \bbP
        \left\{\max\left\{\norm{\htC\b{\calD} - C_0S}, \norm{\htB\b{\calD} - S^{-1}B_0}\right\} \geq \epsilon \right\}\\
        \geq {}& \bbP
        \left\{\norm{\htC\b{\calD}\htB\b{\calD} - C_0B_0} \geq 3\bar{\psi}\epsilon\right\} \geq \delta.\\
        \geq {}& \delta.
    \end{split}\end{equation}
\end{proof}

\begin{lemma}\label{lem:thm1_1}
    Fix any invertible matrix $S$. Suppose $\norm{\htC\htB - CB} \geq \epsilon$ for $\epsilon \leq 3\norm{\htB}\norm{\htC}$. Then either
    \begin{equation}\begin{split}
        \norm{\htC - CS} \geq \frac{\epsilon}{3\norm{\htB}} \quad \text{ or } \quad \norm{\htB - S^{-1}B} \geq \frac{\epsilon}{3\norm{\htC}}.
    \end{split}\end{equation}
\end{lemma}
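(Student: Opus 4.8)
The plan is to prove the statement by contradiction, after reducing it to a routine product-perturbation bound. The similarity matrix $S$ is a nuisance that I would eliminate up front by writing $CB = \b{CS}\b{S^{-1}B}$; then, setting $\Delta_C \coloneqq \htC - CS$ and $\Delta_B \coloneqq \htB - S^{-1}B$, the quantity controlled by the hypothesis becomes $\norm{\htC\htB - \b{CS}\b{S^{-1}B}}$, while the two target conclusions read exactly $\norm{\Delta_C} \geq \epsilon/\b{3\norm{\htB}}$ and $\norm{\Delta_B} \geq \epsilon/\b{3\norm{\htC}}$. Note the hypothesis $\epsilon \le 3\norm{\htB}\norm{\htC}$ already forces $\norm{\htB},\norm{\htC}>0$ whenever $\epsilon>0$, so these ratios are well defined; the case $\epsilon=0$ is vacuous.

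The key algebraic step is the expansion $\htC\htB - \b{CS}\b{S^{-1}B} = \htC\Delta_B + \Delta_C\htB - \Delta_C\Delta_B$, obtained by substituting $CS = \htC - \Delta_C$ and $S^{-1}B = \htB - \Delta_B$ and cancelling the common $\htC\htB$. Applying the triangle inequality and submultiplicativity of the operator norm then yields $\epsilon \le \norm{\htC}\norm{\Delta_B} + \norm{\htB}\norm{\Delta_C} + \norm{\Delta_C}\norm{\Delta_B}$.

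Now I would assume, toward a contradiction, that both conclusions fail, i.e. $\norm{\Delta_C} < \epsilon/\b{3\norm{\htB}}$ and $\norm{\Delta_B} < \epsilon/\b{3\norm{\htC}}$. The first two terms of the bound are then each strictly below $\epsilon/3$. The delicate term is the bilinear one $\norm{\Delta_C}\norm{\Delta_B}$: multiplying the two assumed bounds gives $\norm{\Delta_C}\norm{\Delta_B} < \epsilon^2/\b{9\norm{\htB}\norm{\htC}}$, and here is precisely where the hypothesis $\epsilon \le 3\norm{\htB}\norm{\htC}$ enters, bounding this last quantity by $\epsilon/3$. Summing the three contributions yields $\epsilon < \epsilon$, the desired contradiction, so at least one of the two inequalities must hold.

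The proof has no genuine obstacle beyond bookkeeping; the only subtlety --- and the sole reason for both the constant $3$ and the hypothesis $\epsilon \le 3\norm{\htB}\norm{\htC}$ --- is the second-order cross term $\Delta_C\Delta_B$, which cannot simply be discarded and must be tamed using the scale of $\norm{\htB}\norm{\htC}$ relative to $\epsilon$. Everything else is the triangle inequality and submultiplicativity.
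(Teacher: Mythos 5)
Your proof is correct and follows essentially the same route as the paper's: both argue by contradiction and arrive at the identical three-term bound $\epsilon/3 + \epsilon/3 + \epsilon^2/(9\|\htB\|\|\htC\|)$, using $\epsilon \le 3\|\htB\|\|\htC\|$ to tame the bilinear cross term. The only cosmetic difference is that you expand $\htC\htB - (CS)(S^{-1}B)$ symmetrically in one step, whereas the paper telescopes through $CS\htB$ and then bounds $\|CS\| \le \|CS-\htC\| + \|\htC\|$; the resulting estimates coincide.
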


\begin{remark}
    In other words, for any similarity transformation, we can not learn either $\tilde{B}$ or $\tilde{C}$ well.     
\end{remark}

\begin{proof}
        
    We show this claim by contradiction. For any invertible matrix $S$, suppose
    \begin{equation}\begin{split}
        \norm{\htC - CS} < \frac{\epsilon}{3\norm{\htB}} \quad \text{ and } \quad \norm{\htB - S^{-1}B} < \frac{\epsilon}{3\norm{\htC}}.
    \end{split}\end{equation}
    Then for any $\epsilon \leq \norm{B}\norm{C}$, we know that
    \begin{equation}\begin{split}
        {}& \norm{\htC\htB - CB}\\
        \leq {}& \norm{\htC\htB - CS\htB} + \norm{CS\htB - CSS^{-1}B}\\
        \leq {}& \norm{\htC - CS}\norm{\htB} + \norm{CS}\norm{\htB - S^{-1}B}\\
        \leq {}& \frac{\epsilon}{3} + \b{\norm{CS-\htC} + \norm{\htC}}\frac{\epsilon}{3\norm{\htC}} \\
        \leq {}& \frac{\epsilon}{3} + \frac{\epsilon}{3} + \frac{\epsilon}{3}\frac{\epsilon}{3\norm{\htB}\norm{\htC}} \\
        \leq {}& \epsilon,
    \end{split}\end{equation}
    which causes a contradiction. 
\end{proof}


\section{Upper Bounds for \ac{meta-algo} --- Proof of \Cref{thm:2meta}}\label{sec:setting_ext}
For this section, we consider a slightly more general setting than \ac{metasysid} and prove the theorem in this setting. Consider $K$ minimal systems $\calM_k = \b{r_k, n, m_k, A_k, B_k, C_k, \Sigma_{w,k}, \sigma_{\eta,k}^2I}$ with $r_k,m_k \ll n (\forall k\in[K])$ and the same observer column space. Namely,
\begin{equation}\begin{split}
    \col(C_1) = \col(C_2) = \cdots = \col(C_K)
\end{split}\end{equation}
For every system $\calM_k$, we choose an independent input sequence $\calU_k = \{u_{k,t}\}_{t=0}^{T_k-1}$ with $u_{k,t}\overset{i.i.d.}{\sim}\calN(0, \Sigma_{u,k})$, and we observe $\calY_k = \{y_{k,t}\}_{t=0}^{T_k}$. 
This single trajectory dataset by $\calD_{k} = \calY_k\cup\calU_k$. 

For the above systems, we define the following notations
\begin{equation}\begin{split}
    {}& r = \max_{k\in[K]} r_k, \quad m = \max_{k\in[K]} m_k,\quad \phi_u = \min_{k\in[K]} \sigma_{\min}(\Sigma_u)\\
    {}& \psi_C = \max_{k\in[K]}\sigma_1\b{C_k},\quad \psi_\eta = \max_{k\in[K]}\sigma_{\eta,k}^2, \quad \psi_w = \max_{k\in[K]}\sigma_1(\Sigma_{w,k} + B_k\Sigma_{u,k}B_k\t),\\
    {}& \phi_C = \min\kk\sigma_{\min}(C_k),\quad \phi_O = \min_{k\in[K]}\sigma_{\min}\b{\begin{bmatrix}
        C_{k}\\
        C_{k}A_{k}\\
        \vdots\\
        C_{k}A_{k}^{r-1}
    \end{bmatrix}}, \quad \phi_R = \min_{k\in[K]}\sigma_{\min}\b{\begin{bmatrix}
        B_k & A_kB_k & \dots & A_k^{r-1}B_k
    \end{bmatrix}}\\
\end{split}\end{equation}

For $\wh{\Phi}_{C,k}$ from \Cref{alg:2meta}, we define the following auxiliary system $\wh{\calM}_{k}$ (the projected version of system $\calM_k$) which will be useful for further analysis. 
\begin{equation}\begin{split}\label{eq:3col_2_multi}
    x_{k,t+1} = {}& A_{k}x_{k,t} + B_{k}u_{k,t} + w_{k,t},\\
    \tily_{k,t} = {}& \wh{\Phi}_{C,k}\t C_{k}x_{k,t} + \wh{\Phi}_{C,k}\t \eta_{k,t}.
\end{split}\end{equation}
 
Now we are ready to restate \Cref{thm:2meta} in full details.
\begin{theorem}[\Cref{thm:2meta} Restated for the More General Setting]\label{thm:2meta_full}
    Consider the systems $\calM_{[K]}$, datasets $\calD_{[K]} = \calY_{[K]}\cup\calU_{[K]}$ and constants defined above. Let $\calM_{[K]}$ satisfy Assumption \ref{assmp:sys1} with constants $\psi_A$ and $\rho_A$. Fix any system $k_0\in[K]$. 
    If $T_{k_0}$  and $T_{-k_0} \coloneqq \sum_{k\neq k_0} T_k$ satisfy
    \begin{equation}\begin{split}\label{eq:3col_1_multi}
        &T_{-k_0} \gtrsim \kappa_3\cdot n^2r^3, \quad T_{k_0} \geq \kappa_1\cdot \poly\b{r,m},
    \end{split}\end{equation}
    then $\b{\htA_{k_0}, \htB_{k_0}, \htC_{k_0}}$ from Algorithm \ref{alg:2meta} satisfy the following for some invertible matrix $S$ with probability at least $1-\delta$
    \begin{equation}\begin{split}
        {}& \max\left\{\norm{S^{-1}A_{k_0}S - \htA_{k_0}}, \norm{S^{-1}B_{k_0} - \htB_{k_0}}, \norm{C_{k_0}S - \htC_{k_0}}\right\}\\
        {}& \lesssim \kappa_4 \cdot \sqrt{\frac{n}{T_{-k_0}}}\norm{\htC_{k_0}} + \kappa_2\cdot\sqrt{\frac{\poly\b{r,m}}{T_{k_0}}}.
    \end{split}\end{equation}
    Here $\kappa_1=\kappa_1(\wh{\calM}_{k_0}, \calU_{k_0},\delta)$ and $\kappa_2=\kappa_2(\wh{\calM}_{k_0}, \calU_{k_0},\delta)$ are defined in \Cref{def:idoracle}.
    $\kappa_3 = \kappa_3\b{\calM_{[K]},\calU_{[K]},\delta}, \quad \kappa_4 = \kappa_4\b{\calM_{-k_0},\calU_{-k_0},\delta}$ are detailed below. All of them are problem-related constants independent of system dimensions  modulo logarithmic factors.
    \begin{equation}\begin{split}
        \kappa_3\b{\calM_{[K]}, \calU_{[K]},\delta} = {}& \max\left\{\kappa_4^2\frac{\psi_A^2\psi_C^2}{(1-\rho_A^2)\phi_O^2}, ~\b{\frac{\psi_\eta^2\psi_C^2\psi_w\psi_A^2}{(1-\rho_A^2)\phi_u^2\phi_C^4\phi_R^4}}^2\log^2\b{\frac{\psi_C^2\psi_w\psi_A^4}{1-\rho_A^2} r\log\frac{Kr}{\delta}}\log^4(\frac{Kr}{\delta})\right\},\\
        \kappa_4\b{\calM_{-k_0}, \calU_{-k_0},\delta} = {}& \frac{\psi_\eta}{\phi_u\phi_C^2\phi_R^2}\sqrt{\log\frac{1}{\delta}}.
    \end{split}\end{equation}
\end{theorem}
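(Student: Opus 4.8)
The plan is to mirror the two-part argument behind \Cref{thm:1single_full}, the only structural change being that the column-space estimate $\wh{\Phi}_{C,k_0}$ is now produced by \textnormal{Col-Approx} run on the \emph{aggregated} leave-one-out data $\calY_{-k_0}$ rather than on a single trajectory. Accordingly, the first and main part establishes a multi-system analogue of \Cref{lem:1col_full}: with probability at least $1-\delta/2$,
\begin{equation*}
    \rank(\wh{\Phi}_{C,k_0}) = \rank(C), \qquad \norm{\wh{\Phi}_{C,k_0}^\perp{}\t \Phi_C} \lesssim \kappa_4\sqrt{\tfrac{n}{T_{-k_0}}}.
\end{equation*}
The key to keeping the signal coherent is the shared-column-space hypothesis $\col(C_1)=\cdots=\col(C_K)$, which lets me write every $C_k = \Phi_C\alpha_k$ with a \emph{common} orthonormal $\Phi_C$ and a full-row-rank $\alpha_k$ obeying $\sigma_{\min}(\alpha_k)\geq \phi_C$ uniformly. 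Writing $\Sigma_\alpha := \sum_{k\neq k_0}\Sigma_{\alpha_k}$ with $\Sigma_{\alpha_k}:=\sum_t \alpha_k x_{k,t}x_{k,t}\t\alpha_k\t$, the covariance assembled by \textnormal{Col-Approx} decomposes as
\begin{equation*}
\begin{split}
\Sigma_y = {}& \Phi_C\Sigma_\alpha\Phi_C\t
+ \underbrace{\sum_{k\neq k_0}\sum_{t}\b{\eta_{k,t}\eta_{k,t}\t - \sigma_{\eta,k}^2 I}}_{\Delta_2}\\
{}& + \underbrace{\sum_{k\neq k_0}\sum_{t}\b{C_k x_{k,t}\eta_{k,t}\t + \eta_{k,t}x_{k,t}\t C_k\t}}_{\Delta_1+\Delta_1\t}
+ \Big(\sum_{k\neq k_0}(T_k+1)\sigma_{\eta,k}^2\Big)I.
\end{split}
\end{equation*}
Since the aggregate isotropic term $\b{\sum_{k\neq k_0}(T_k+1)\sigma_{\eta,k}^2}I$ is a scalar multiple of the identity, it shifts every eigenvalue equally and hence affects neither the eigenvectors nor the spectral gaps used inside \textnormal{Col-Approx}; the differing noise levels $\sigma_{\eta,k}$ therefore cause no obstruction.

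The three steps then follow as in \Cref{lem:1col_full}. In Step 1 I would apply \Cref{lem:inv} to each $\Sigma_{\alpha_k}$ to obtain $\Sigma_{\alpha_k}\succeq \tfrac{\phi_u\phi_C^2\phi_R^2}{8}T_k I$ on $\col(C)$, so that $\Sigma_\alpha\succeq\tfrac{\phi_u\phi_C^2\phi_R^2}{8}T_{-k_0}I$, and apply \Cref{lem:inv_cov_noise} together with the self-normalized martingale bound of \Cref{lem:inv_crossConc_stable} to control $\Delta_2$ and $\bar\Sigma_C^{-1/2}\Delta_1$, where $\bar\Sigma_C := \Phi_C\Sigma_\alpha\Phi_C\t + T_{-k_0}I$. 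Because the $K-1$ trajectories are independent and all regressors $C_k x_{k,t}=\Phi_C\alpha_k x_{k,t}$ live in the fixed subspace $\col(\Phi_C)$, I would treat them as a single concatenated adapted sequence so that the martingale tool applies once against the common normalization $\bar\Sigma_C$. Union-bounding over the $K-1$ blocks (and over the $r$ directions) is exactly what upgrades the $\log\frac{r}{\delta}$ factors to $\log\frac{Kr}{\delta}$ and ultimately produces the $\log^8 K$ in $T_{-k_0}\gtrsim\kappa_3 n^2 r^3\log^8K$. Steps 2 and 3 are then verbatim: the $\tilde{\calO}(T_{-k_0})$ spectral gap gives correct rank recovery, and \Cref{lem:perturbation} applied to $\bar\Sigma_C$ yields the stated angle bound.

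The second part reproduces the proof of \Cref{thm:1single_full} with $\calD_{k_0}$ playing the role of the ``second trajectory''. The crucial point, enforced by the leave-one-out construction, is that $\wh{\Phi}_{C,k_0}$ depends only on $\calY_{-k_0}$ and is thus \emph{independent} of $\calD_{k_0}$; hence $\{\wh{\Phi}_{C,k_0}\t\eta_{k_0,t}\}$ remain i.i.d.\ and independent of the other variables of trajectory $k_0$, which is precisely what \Cref{def:idoracle} requires. I would first verify that the projected system $\wh{\calM}_{k_0}$ of \Cref{eq:3col_2_multi} is minimal --- controllability from minimality of $\calM_{k_0}$, observability via the computation of \Cref{eq:1_col_full_1}--\Cref{eq:1_col_full_2} using the Part~A angle bound and the condition $T_{-k_0}\gtrsim\kappa_4^2\tfrac{\psi_A^2\psi_C^2}{(1-\rho_A^2)\phi_O^2}n$. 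Applying \ac{oracle} to $\tilde{\calD}_{k_0}$ then controls $\htA_{k_0},\htB_{k_0}$ and gives $\tilC_{k_0}\approx\wh{\Phi}_{C,k_0}\t C_{k_0}S$; splitting the observer error as $\norm{C_{k_0}S-\htC_{k_0}}\leq\norm{\wh{\Phi}_{C,k_0}^\perp(\wh{\Phi}_{C,k_0}^\perp)\t C_{k_0}S}+\norm{\wh{\Phi}_{C,k_0}\t C_{k_0}S-\tilC_{k_0}}$ and bounding exactly as in \Cref{eq:1_col_full_3}, a final union bound over the two failure events delivers the claim with probability $1-\delta$.

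The main obstacle is Part~A: controlling $\Delta_1,\Delta_2$ \emph{uniformly} across $K-1$ heterogeneous systems (distinct $A_k,B_k,\sigma_{\eta,k}$) while keeping the signal $\Sigma_\alpha$ concentrated in the single shared subspace $\col(C)$. This coherence is exactly what drives the meta-learning gain --- the signal accumulates like $T_{-k_0}$ whereas, after the union bound, the noise grows only like $\sqrt{T_{-k_0}}$ --- but it requires that the uniform constants $\phi_C,\phi_R,\phi_u,\psi_\eta$ genuinely dominate every per-system quantity and that the $\log K$ penalty from the union bound not erode the $\sqrt{n/T_{-k_0}}$ rate, which is what the $\log^8K$ slack in the sample requirement is there to absorb.
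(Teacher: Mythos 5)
Your overall architecture matches the paper's proof: a multi-trajectory analogue of \Cref{lem:1col_full} for the leave-one-out column-space estimate (the paper's \Cref{lem:2col_full}), followed by a rerun of Steps 1--2 of \Cref{thm:1single_full} on the projected dataset $\tilde{\calD}_{k_0}$, with the leave-one-out construction supplying the independence needed by \ac{oracle}. The decomposition of $\Sigma_y$, the common factorization $C_k=\Phi_C\alpha_k$, the observation that $\sum_{k\neq k_0}(T_k+1)\sigma_{\eta,k}^2 I$ only shifts the spectrum, the concatenated-martingale treatment of the cross term, and the source of the $\log\frac{Kr}{\delta}$ factors are all exactly as in the paper.

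The one step that does not go through as written is your lower bound on the signal. You propose to apply \Cref{lem:inv} \emph{separately to each system} to get $\Sigma_{\alpha_k}\succeq\tfrac{\phi_u\phi_C^2\phi_R^2}{8}T_kI$ and then sum. But \Cref{lem:inv} requires each trajectory length to satisfy $T_k\gtrsim \frac{\psi_w^2\psi_\alpha^4\psi_A^4}{\phi_u^2\phi_{\alpha}^4\phi_R^4}r^3\log\frac{r}{\delta}\cdot\log(\cdots)$, whereas the theorem only constrains the aggregate $T_{-k_0}=\sum_{k\neq k_0}T_k$; an individual $T_k$ may be as small as $1$, in which case $\Sigma_{\alpha_k}$ is low-rank and the claimed per-system bound is simply false. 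Even if every $T_k$ were long enough, a union bound over the $K-1$ per-system events would force an extra per-trajectory requirement scaling with $\log\frac{K}{\delta}$ that is not among the hypotheses. The paper avoids this by proving an aggregated version (\Cref{lem:inv_multi}): it restricts to $\calK=\{k:T_k\geq r\}$, runs the recursive expansion of the Gram matrix jointly over all trajectories in $\calK$ so that a single self-normalized bound controls the cross terms against $\Sigma_{\alpha,1}+\calT I$, and absorbs the discarded trajectories using $\sum_{k\notin\calK}T_k\leq Kr\ll\calT$. Replacing your per-system invocation of \Cref{lem:inv} with this aggregated argument closes the gap; the rest of your plan is correct and follows the paper's route.
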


\begin{proof}[Proof of \Cref{thm:2meta_full}]
For clarity of the proof, we abbreviate all subscripts $k_0$. 
Based on \Cref{eq:3col_1_multi}, $T_{-k_0}$ satisfies the condition of \Cref{lem:2col_full}. We apply \Cref{lem:2col_full} on $(\calD_{-k_0}, \Sigma_{\eta,-k_0})$ and get the following with probability at least $1-\frac{\delta}{2}$
\begin{equation}
    \norm{\wh{\Phi}_C^\perp\t \Phi_C} \lesssim \kappa_4 \sqrt{\frac{n}{T_{-k_0}}} \coloneqq \Delta_{\Phi}.
\end{equation}

The system generating dataset $\tilde{\calD}$, denoted by $\wh{\calM}$, is rewritten in \Cref{eq:3col_2_multi}.
Following exactly the same derivation as in Step 1 of proof of \Cref{thm:1single_full}, we know that $\wh{\calM}$ is minimal. Since $\wh{\Phi}_C$ is independent of the trajectory from $\calM$, the noises $\{\wh{\Phi}_C\t \eta_t\}_{t=0}^T$ are sampled independently of other variables of this trajectory. Moreover, $\wh{\calM}$ satisfy Assumption \ref{assmp:sys1}. Therefore, we can apply \ac{oracle}.
    Following exactly the same derivation as in Step 2 of proofs of \Cref{thm:1single_full}, we get the following for some invertible matrix $S$ with probability at least $1-\delta$
    \begin{equation}
        \max\left\{\norm{S^{-1}AS - \htA}, \norm{S^{-1}B - \htB}, \norm{CS - \htC}\right\} \lesssim \kappa_2\cdot\sqrt{\frac{\poly\b{r,m}}{T}} + \kappa_4 \cdot \sqrt{\frac{n}{T_{-k_0}}}\norm{\htC}.
    \end{equation}
\end{proof}

\subsection[]{Upper Bounds for \text{col-approx}}



The theoretical guarantee for \text{col-approx} with multiple trajectories in \Cref{alg:2meta} is presented in the following lemma.
\begin{lemma}\label{lem:2col_full}
    Fix any positive integer $K$. Consider $K$ systems $\calM_{[K]}$ and their datasets $\calD_{[K]} = \calY_{[K]}\cup\calU_{[K]}$ satisfying the description in \Cref{sec:setting_ext}. Suppose the systems satisfy Assumption \ref{assmp:sys1} with constants $\psi_A$ and $\rho_A$.
    With the notations defined in \Cref{sec:setting_ext}, 
    if $\calT \coloneqq \sum\kk T_k$ satisfies the following inequality
    \begin{equation}\begin{split}\label{eq:thm_col_1_multi}
        \calT \gtrsim \underbrace{\b{\frac{\psi_\eta^2\psi_C^2\psi_w\psi_A^2}{(1-\rho_A^2)\phi_u^2\phi_C^4\phi_R^4}}^2\log^2\b{\frac{\psi_C^2\psi_w\psi_A^4}{1-\rho_A^2} r\log\frac{Kr}{\delta}}\log^4(\frac{Kr}{\delta})}_{\kappa_5\b{\calM_{[K]},\calU_{[K]},\delta}}\cdot n^2r^3,
    \end{split}\end{equation}
    then $\wh{\Phi}_C = \text{col-approx}\b{\calY_{[K]}}$ (\Cref{alg:column_space}) satisfies the following with probability at least $1-\delta$
    \begin{equation}\begin{split}
        \htr_c = \rank(C_1)=\cdots=\rank(C_K), \quad \norm{\wh{\Phi}_C^\perp\t\Phi_C} \lesssim \underbrace{\frac{\psi_\eta}{\phi_u\phi_C^2\phi_R^2}\sqrt{\log\frac{1}{\delta}}}_{\kappa_4\b{\calM_{[K]},\calU_{[K]},\delta}}\cdot\sqrt{\frac{n}{\calT}}.
    \end{split}\end{equation}
\end{lemma}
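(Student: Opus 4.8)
The plan is to mirror the single-trajectory argument of \Cref{lem:1col_full}, treating the concatenation of all $K$ trajectories as one long data stream of total length $\calT=\sum\kk T_k$. Since the systems share a common column space, $\col(C_1)=\cdots=\col(C_K)$, I would fix a single orthonormal basis $\Phi_C$ of this shared space and write $C_k=\Phi_C\alpha_k$ with each $\alpha_k$ full row rank and $\sigma_{\min}(\alpha_k)\ge\phi_C$, $\sigma_1(\alpha_k)\le\psi_C$. Feeding $\calY_{[K]}$ to Col-Approx produces $\Sigma_y=\sum\kk\sum_{t=0}^{T_k}y_{k,t}y_{k,t}\t$, which decomposes exactly as in the single-trajectory proof into a signal term $\Sigma_C=\Phi_C\Sigma_\alpha\Phi_C\t$ with $\Sigma_\alpha=\sum_{k,t}\alpha_k x_{k,t}x_{k,t}\t\alpha_k\t$, an isotropic shift $\b{\sum\kk(T_k+1)\sigma_{\eta,k}^2}I$, and the two noise terms $\Delta_2=\sum_{k,t}\b{\eta_{k,t}\eta_{k,t}\t-\sigma_{\eta,k}^2 I}$ and $\Delta_1+\Delta_1\t=\sum_{k,t}\b{C_kx_{k,t}\eta_{k,t}\t+\eta_{k,t}x_{k,t}\t C_k\t}$. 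The crucial observation is that because every $C_k$ has the same column space, the signal term still has its top eigenspace exactly aligned with $\col(C)$, so the entire single-trajectory template goes through once the aggregate bounds of Step~1 are in place; note that the shift remains a scalar multiple of $I$ even though the $\sigma_{\eta,k}^2$ differ, so it perturbs neither the eigenvectors nor the eigenvalue gaps.

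For Step~1 I would establish the three aggregate estimates with $T$ replaced by $\calT$. The noise-covariance term $\norm{\Delta_2}\lesssim\psi_\eta\sqrt{n\calT\log\frac1\delta}$ follows by applying \Cref{lem:inv_cov_noise} directly to the concatenated sequence $\{\eta_{k,t}\}_{k,t}$, which consists of mutually independent (though not identically distributed) Gaussians --- this lemma is already stated for that case, so no change is needed. The cross term requires a multi-trajectory version of \Cref{lem:inv_crossConc_stable}: since within each trajectory $\eta_{k,t}$ is independent of $x_{k,t}$ and the past, and distinct trajectories are mutually independent, the regressors $\alpha_k x_{k,t}$ and the self-normalized martingale of \citet{abbasi-yadkori_improved_2011} extend over the concatenated filtration (resetting at trajectory boundaries), giving $\norm{\b{\bar\Sigma_C}^{-1/2}\sum_{k,t}C_kx_{k,t}\eta_{k,t}\t}\lesssim\bar\psi\sqrt{\psi_\eta n\log\frac1\delta}$ with $\bar\Sigma_C=\Sigma_C+\calT I$ and the determinant factor controlled by a summed Gram upper bound from \Cref{lem:gram_upper}. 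The signal lower bound $\Sigma_\alpha\succeq\frac{\phi_u\phi_C^2\phi_R^2}{8}\calT I$ comes from a multi-trajectory generalization of \Cref{lem:inv}: running its backward recursion on each trajectory and summing the per-trajectory controllability-Gramian contributions yields a leading term of order $\phi_u\phi_C^2\phi_R^2\,\calT\,I$, while the aggregated error terms are of lower order once $\calT$ is large. A union bound over the $K$ trajectories turns every $\log\frac1\delta$ into $\log\frac{Kr}{\delta}$, which is precisely the source of the $\log^2\b{\cdots}\log^4\b{\frac{Kr}{\delta}}$ factors inside $\kappa_5$.

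Given these estimates, Steps~2 and 3 are verbatim copies of the single-trajectory proof with $T\mapsto\calT$. Because $\sigma_{\min}(\Sigma_\alpha)\gtrsim\calT$ while $\norm{\Delta_1+\Delta_1\t}+\norm{\Delta_2}\lesssim\sqrt{nr\calT}\cdot\mathrm{polylog}$, the top $\rank(C)$ eigenvalues of $\Sigma_y$ exceed the isotropic floor by $\Omega(\calT)$ whereas the remaining eigenvalues stay within $O(\sqrt{n\calT}\,\mathrm{polylog})$ of it; since the Col-Approx threshold scales like $\calT^{3/4}$ and $\calT\gtrsim n^2r^3$ forces both $\calT\gg\sqrt{n\calT}\,\mathrm{polylog}$ and $\calT\gg\calT^{3/4}$, the detected order satisfies $\htr_c=\rank(C)$. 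Finally I would apply the relative perturbation bound \Cref{lem:perturbation} to $\bar\Sigma_C=\Sigma_C+\calT I$ (playing the role of $M$), with $\Delta_1$ as the relative perturbation and $\Delta_2$ as the absolute one, to obtain $\norm{\wh{\Phi}_C^\perp\t\Phi_C}\lesssim\frac{\psi_\eta}{\phi_u\phi_C^2\phi_R^2}\sqrt{\frac n\calT\log\frac1\delta}$.

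The main obstacle I anticipate is the signal lower bound in Step~1 rather than the noise bounds. Unlike the single-trajectory setting, individual $T_k$ may be small, so I cannot invoke the per-trajectory condition of \Cref{lem:inv} for each $k$; I must argue that the \emph{aggregate} excitation over all trajectories is of order $\calT$ even when some trajectories are too short to be individually well-conditioned, and simultaneously keep the dependence on $K$ only logarithmic through careful union bounds. Getting this trade-off right --- summing the PSD contributions so that the controllability floor accumulates linearly in $\calT$ while the cross and warm-up errors remain of strictly lower order --- is the technical heart of the argument.
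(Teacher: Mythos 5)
Your proposal is correct and follows essentially the same route as the paper: the identical decomposition of $\Sigma_y$ into the shared-column-space signal $\Phi_C\Sigma_\alpha\Phi_C\t$, an isotropic shift $\sum_k(T_k+1)\sigma_{\eta,k}^2 I$, and the two noise terms; the same three aggregate Step-1 estimates via multi-trajectory versions of \Cref{lem:inv} and \Cref{lem:inv_crossConc_stable} (the paper's \Cref{lem:inv_multi} and \Cref{lem:inv_crossConc_stable_multi}) with union bounds producing the $\log\frac{Kr}{\delta}$ factors; and verbatim Steps 2--3 with $T\mapsto\calT$ culminating in \Cref{lem:perturbation}. The obstacle you flag --- short individual trajectories --- is exactly what the paper handles in \Cref{lem:inv_multi} by restricting the backward recursion to the set $\calK=\{k: T_k\ge r\}$ and absorbing the discarded $\sum_{k\notin\calK}T_k\le Kr$ samples into the lower-order error.
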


\begin{proof}
From the system dynamics, we know that
\begin{equation}\begin{split}
    \Sigma_y = {}& \sum\kk\sum_{t=0}^{T_k} y_{k,t}y_{k,t}\t =  \sum\kk\sum\tk C_k x_{k,t}x_{k,t}\t C_k\t + \sum\kk\sum\tk \eta_{k,t}\eta_{k,t}\t + \sum\kk\sum\tk \b{C_kx_{k,t}\eta_{k,t}\t + \eta_{k,t} C_k\t x_{k,t}\t}\\
    = {}& \sum\kk\sum\tk C_k x_{k,t}x_{k,t}\t C_k\t + \sum\kk\sum\tk (\eta_{k,t}\eta_{k,t}\t-\sigma_{\eta,k}^2I) + \sum\kk\sum\tk \b{C_kx_{k,t}\eta_{k,t}\t + \eta_{k,t} C_k\t x_{k,t}\t}\\
    {}& \hspace{2em} + \sum_{k\in[K]} (T_k+1)\sigma_{\eta,k}^2I\\
\end{split}\end{equation}
Here the first term is the information on the shared column space of $\{C_k\}\kk$, while the second and third terms are only noises. 
For the rest of the proof, we first upper bound the norms of the noises (\textit{step 1}). With this, we show that $\htr_c = \rank(C_1)$ with high probability (\textit{step 2}). Then we apply our subspace perturbation result to upper bound the influence of the noises on the eigenspace of the first term (\textit{step 3}).

\textbf{Step 1: Noise Norm Upper Bounds.} With $\rank(C_1) = r_c$ and $\col(C_1)=\cdots=\col(C_K)$, we can rewrite $C_k = \Phi_C\alpha_k$ where an orthonormal basis of $\col(C_1)$ forms the columns of $\Phi_C\in\bbR^{n\times r_c}$ and $\alpha_k\in\bbR^{r_c\times r}$ is a full row rank matrix. 
It is then clear that
\begin{equation}\begin{split}
    \sigma_{\min}(\alpha_k) = \sigma_{\min}(C_k) \geq \phi_C, \quad \sigma_{1}(\alpha_k) = \sigma_{1}(C_k) \leq \psi_C, \quad \forall k\in[K].
\end{split}\end{equation}
Let $\Sigma_C = \sum\kk\sum\tk C_k x_{k,t}x_{k,t}\t C_k\t$, $\bar{\Sigma}_C = \Sigma_C + \calT I$ and $\Sigma_\alpha = \sum\kk\sum\tk \alpha_k x_{k,t}x_{k,t}\t \alpha_k\t$. We then have $\Sigma_C = \Phi_C\Sigma_\alpha \Phi_C\t$ and $\bar{\Sigma}_C = \Phi_C\Sigma_\alpha \Phi_C\t + \calT I$.
Then from Lemma \ref{lem:inv_multi}, Lemma \ref{lem:inv_crossConc_stable_multi}, Lemma \ref{lem:inv_cov_noise}, and Lemma \ref{lem:gram_upper}, we have the following with probability at least $1-\delta$
\begin{equation}\begin{split}
    {}& \frac{\phi_u\phi_C^2\phi_R^2}{8}\calT I \preceq \Sigma_\alpha \precsim \frac{\psi_C^2\psi_w\psi_A^2}{1-\rho_A^2}r\calT\log\frac{K}{\delta}I,\\
    {}& \norm{\b{\bar{\Sigma}_C}^{-\frac{1}{2}}\sum\kk\sum\tk C_kx_{k,t}\eta_{k,t}\t} \lesssim \bar{\psi}\sqrt{\psi_\eta n\log\frac{1}{\delta}},\\
    {}& \norm{\sum\kk\sum\tk \b{\eta_{k,t}\eta_{k,t}\t - \sigma_{\eta,k}^2I}} \lesssim \psi_\eta\sqrt{n\calT\log\frac{1}{\delta}},
\end{split}\end{equation}
with $\bar{\psi} = \sqrt{\log\b{\frac{\psi_\alpha^2\psi_w\psi_A^4}{1-\rho_A^2} r\log\frac{2Kr}{\delta}}}$.

\textbf{Step 2: Order Estimation Guarantee.}
\begin{equation}\begin{split}
    \Sigma_y = {}& \Sigma_C + \underbrace{\sum\kk\sum\tk \b{\eta_{k,t}\eta_{k,t}\t - \sigma_{\eta,k}^2I} + \sum\kk\sum\tk \b{C_kx_{k,t}\eta_{k,t}\t + \eta_{k,t} C_k\t x_{k,t}\t}}_{\Delta} + \sum_{k\in[K]} (T_k+1)\sigma_{\eta,k}^2I.
\end{split}\end{equation}
The inequalities of Step 1 imply
\begin{equation}\begin{split}
    \norm{\Delta} \leq {}& \norm{\sum\kk\sum\tk \b{\eta_{k,t}\eta_{k,t}\t - \sigma_{\eta,k}^2I}} + 2\norm{\sum\kk\sum\tk C_kx_{k,t}\eta_{k,t}\t}\\
    \lesssim {}& \psi_\eta\sqrt{n\calT\log\frac{1}{\delta}} + \bar{\psi}\sqrt{\psi_\eta n\log\frac{1}{\delta}}\norm{(\bar{\Sigma}_C)^{\frac{1}{2}}}\\
    = {}& \psi_\eta\sqrt{n\calT\log\frac{1}{\delta}} + \bar{\psi}\sqrt{\psi_\eta n\log\frac{1}{\delta}}\b{\norm{(\calT I+\Sigma_\alpha)^{\frac{1}{2}}}}\\
    \lesssim {}& \psi_\eta\sqrt{n\calT\log\frac{1}{\delta}} + \bar{\psi}\sqrt{\psi_\eta n\log\frac{1}{\delta}}\sqrt{\frac{\psi_C^2\psi_w\psi_A^2}{1-\rho_A^2}r\calT\log\frac{K}{\delta}+\calT}\\
    \lesssim {}& \bar{\psi}\psi_\eta \sqrt{n\log\frac{1}{\delta}}\sqrt{\frac{\psi_C^2\psi_w\psi_A^2}{1-\rho_A^2}r\calT\log\frac{K}{\delta}}\\
    = {}& \sqrt{\frac{\psi_\eta^2\psi_C^2\psi_w\psi_A^2}{1-\rho_A^2}}\bar{\psi}\log\frac{K}{\delta}\sqrt{nr\calT}\\
\end{split}\end{equation}
Therefore, for each $i\in[r_c]$ we have the following for some positive constant $c_1$
\begin{equation}\begin{split}
    \sigma_i\b{\Sigma_y} - \sum_{k\in[K]} (T_k+1)\sigma_{\eta,k}^2 \geq {}& \sigma_i(\Sigma_C) - \norm{\Delta} \geq \sigma_{\min}(\Sigma_\alpha) - \norm{\Delta}\\
    \geq {}& \frac{\phi_u\phi_C^2\phi_R^2}{8}\calT - c_1\sqrt{\frac{\psi_\eta^2\psi_C^2\psi_w\psi_A^2}{1-\rho_A^2}}\bar{\psi}\log\frac{K}{\delta}\sqrt{nr\calT}\\
    \geq {}& \frac{\phi_u\phi_C^2\phi_R^2}{16}\calT.
\end{split}\end{equation}
Here the first inequality holds according to Theorem 1 in \cite{stewart_matrix_1990} and the last line is because of Equation \ref{eq:thm_col_1_multi}. 
For $i\in[r_c+1, n]$, 
\begin{equation}\begin{split}
    \sigma_i(\Sigma_y) - \sum_{k\in[K]} (T_k+1)\sigma_{\eta,k}^2 \leq \norm{\Delta} \leq c_1\sqrt{\frac{\psi_\eta^2\psi_C^2\psi_w\psi_A^2}{1-\rho_A^2}}\bar{\psi}\log\frac{K}{\delta}\sqrt{nr\calT}.
\end{split}\end{equation}
Based on the above three inequalities and Equation \ref{eq:thm_col_1_multi}, we conclude that with probability at least $1-\delta$, the following hold
\begin{equation}\begin{split}
    \sigma_i(\Sigma_y)-\sigma_j(\Sigma_y) \leq {}& 2c_1\sqrt{\frac{\psi_\eta^2\psi_C^2\psi_w\psi_A^2}{1-\rho_A^2}}\bar{\psi}\log\frac{K}{\delta}\sqrt{nr\calT}\\
    < {}& \calT^{3/4}, \quad \forall i<j\in[r_c],\\
    \sigma_{r_c}(\Sigma_y)-\sigma_{r_c+1}(\Sigma_y) \geq {}& \frac{\phi_u\phi_C^2\phi_R^2}{16}\calT - c_1\sqrt{\frac{\psi_\eta^2\psi_C^2\psi_w\psi_A^2}{1-\rho_A^2}}\bar{\psi}\log\frac{K}{\delta}\sqrt{nr\calT}\\
    > {}& \calT^{3/4}.
\end{split}\end{equation}
Therefore, from the definition of $\htr_c$, we know that $\htr_c = r_c$ with probability at least $1-\delta$.

\textbf{Step 3: Column Space Estimation Guarantee.} With $\htr_c = r_c$, now we try to apply our subspace perturbation result, i.e. Lemma \ref{lem:perturbation}, on matrix $\Sigma_y-\sum_{k\in[K]} (T_k+1)\sigma_{\eta,k}^2I+\calT I$. Notice that this matrix has exactly the same eigenspace as $\Sigma_y$ and therefore the eigenspace of its first $r_c$ eigenvectors is $\wh{\Phi}_C$ (line 9 in Algorithm \ref{alg:column_space}). This matrix can be decomposed as
\begin{equation}\begin{split}
    {}& \Sigma_y-\sum_{k\in[K]} (T_k+1)\sigma_{\eta,k}^2I+\calT I\\
    = {}& \underbrace{\bar{\Sigma}_C}_{M \text{ in Lemma \ref{lem:perturbation}}} + \underbrace{\sum\kk\sum\tk \b{\eta_{k,t}\eta_{k,t}\t - \sigma_{\eta,k}^2I}}_{\Delta_2 \text{ in Lemma \ref{lem:perturbation}}} + \underbrace{\sum\kk\sum\tk \b{C_kx_{k,t}\eta_{k,t}\t + \eta_{k,t} C_k\t x_{k,t}\t}}_{\Delta_1+\Delta_1\t \text{ in Lemma \ref{lem:perturbation}}}.
\end{split}\end{equation}
For matrix $\bar{\Sigma}_C = \Phi_C \Sigma_\alpha \Phi_C\t + \calT I = \Phi_C\b{\Sigma_\alpha+\calT I}\Phi_C\t + \calT\Phi_C^\perp\Phi_C^\perp\t$, it is clear that its SVD can be written as
\begin{equation}\begin{split}
    \bar{\Sigma}_C = \begin{bmatrix}
        \tilde{\Phi}_C & \Phi_C^\perp
    \end{bmatrix} \begin{bmatrix}
        \Lambda_1 & \\
        & \calT I
    \end{bmatrix} \begin{bmatrix}
        \tilde{\Phi}_C\t\\
        \Phi_C^\perp\t
    \end{bmatrix}, \quad \Lambda_1 = \diag\b{\sigma_1(\Sigma_\alpha)+\calT, \dots, \sigma_{\min}(\Sigma_\alpha)+\calT}. 
\end{split}\end{equation}
where $\tilde{\Phi}_C$ is an orthonormal basis of $\col(C_1)$. 
Then we conclude that the following hold for some large enough positive constant $c_3$
\begin{equation}\begin{split}
    {}& \sigma_{1}\b{\bar{\Sigma}_C} \leq \calT + \sigma_{1}(\Sigma_\alpha) \leq c_3 \b{1+\frac{\psi_C^2\psi_w\psi_A^2}{1-\rho_A^2}r\log\frac{K}{\delta}}\calT,\\
    {}& \sigma_{r_c}\b{\bar{\Sigma}_C} - \sigma_{r_c+1}(\bar{\Sigma}_C) = \sigma_{\min}(\Sigma_\alpha) \geq \frac{\phi_u\phi_C^2\phi_R^2}{32}\calT, \quad \sigma_{\min}(\bar{\Sigma}_C) = \calT.
\end{split}\end{equation}
Now we are ready to apply Lemma \ref{lem:perturbation} on $\bar{\Sigma}_C$ with the following for positive constants $c_4, c_5$ large enough
\begin{equation}\begin{split}
    {}& \alpha = c_4\bar{\psi}\sqrt{\psi_\eta n\log\frac{1}{\delta}}, \quad \beta = c_5\psi_\eta\sqrt{n\calT\log\frac{1}{\delta}},\\
    {}& \delta_M = \frac{\phi_u\phi_C^2\phi_R^2}{32}\calT, \quad \psi_M = 2c_3\frac{\psi_C^2\psi_w\psi_A^2}{1-\rho_A^2}r\calT\log\frac{K}{\delta}, \quad \phi_M = \calT, \quad \sigma_{1}(\Lambda_2) = \calT.
\end{split}\end{equation}
Again, $\bar{\psi} = \sqrt{\log\b{\frac{\psi_\alpha^2\psi_w\psi_A^4}{1-\rho_A^2} r\log\frac{2Kr}{\delta}}}$.
From Equation \ref{eq:thm_col_1_multi}, it is clear that $\sqrt{\phi_M} \gtrsim \alpha$ and $\delta_M \gtrsim \alpha\sqrt{\psi_M} + \beta$. Therefore,
\begin{equation}\begin{split}
    \norm{\wh{\Phi}_C^\perp\t\Phi_C} = \norm{\wh{\Phi}_C^\perp\t\tilde{\Phi}_C} \lesssim {}& \frac{\psi_\eta}{\phi_u\phi_C^2\phi_R^2}\sqrt{\frac{n}{\calT}\log\frac{1}{\delta}} + \bar{\psi}\frac{\sqrt{\psi_\eta}}{\phi_u\phi_C^2\phi_R^2}\sqrt{\frac{n}{\calT}\log\frac{1}{\delta}} + \bar{\psi}^2\frac{\psi_\eta\sqrt{\psi_C^2\psi_w\psi_A^2}}{\phi_u\phi_C^2\phi_R^2\sqrt{1-\rho_A^2}}\frac{n\sqrt{r}}{\calT}\log^2\frac{K}{\delta}\\
    \lesssim {}& \frac{\psi_\eta}{\phi_u\phi_C^2\phi_R^2}\sqrt{\frac{n}{\calT}\log\frac{1}{\delta}} + \bar{\psi}^2\frac{\psi_\eta\sqrt{\psi_C^2\psi_w\psi_A^2}}{\phi_u\phi_C^2\phi_R^2\sqrt{1-\rho_A^2}}\frac{n\sqrt{r}}{\calT}\log^2\frac{K}{\delta}\\
    \lesssim {}& \frac{\psi_\eta}{\phi_u\phi_C^2\phi_R^2}\sqrt{\frac{n}{\calT}\log\frac{1}{\delta}}.
\end{split}\end{equation}
\end{proof}

\subsubsection{Supporting Details}
\begin{lemma}\label{lem:inv_multi}
    Consider the same setting as Theorem \ref{lem:2col_full}. 
    Consider full row rank matrices $\{\alpha_k\in\bbR^{a\times r_k}\}_{k\in[K]}$ for any positive integer $a\leq \min_kr_k$.
    Let
    \begin{equation}\begin{split}
        \Sigma_\alpha = \sum_{k\in[K],t\in[T_k]} \alpha_kx_{k,t}x_{k,t}\t \alpha_k\t, \quad \phi_\alpha = \min\left\{\min_{k\in[K]} \{\sigma_{a}\b{\alpha_k}\},1\right\}, \quad \psi_\alpha = \max\left\{\max_{k\in[K]} \{\sigma_{1}\b{\alpha_k}\},1\right\}.
    \end{split}\end{equation}
    Then the following holds with probability at least $1-\delta$ if $\calT \gtrsim \frac{\psi_w^2\psi_\alpha^4\psi_A^2}{\phi_w^2\phi_u^2\phi_\alpha^4\phi_R^4}Kr^3\log\frac{r}{\delta}\cdot\log\b{\frac{\psi_\alpha^2\psi_w\psi_A^4}{1-\rho_A^2} r\log\frac{2Kr}{\delta}}$,
    \begin{equation}\begin{split}
        \Sigma_\alpha \succeq \frac{\phi_u\phi_\alpha^2\phi_R^2}{8}\calT I.
    \end{split}\end{equation}
\end{lemma}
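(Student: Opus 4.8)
The plan is to mirror the single-trajectory argument of \Cref{lem:inv}, carrying out the same unrolling recursion but aggregating every concentration step across all $K$ trajectories, so that the noise bounds are controlled by the \emph{total} length $\calT$ rather than any individual $T_k$. First I would set $\tilw_{k,t} \coloneqq B_k u_{k,t} + w_{k,t} \sim \calN\b{0,\Sigma_{\tilw,k}}$ with $\Sigma_{\tilw,k} = B_k\Sigma_{u,k}B_k\t + \Sigma_{w,k}$, so that $x_{k,t+1} = A_k x_{k,t} + \tilw_{k,t}$, and for each shift $\tau$ define $\Sigma_{\alpha,\tau}^{\mathrm{agg}} \coloneqq \sum_{k\in[K]} \alpha_k A_k^{\tau}\b{\sum_{t} x_{k,t}x_{k,t}\t}\b{\alpha_k A_k^{\tau}}\t$, so that $\Sigma_\alpha = \Sigma_{\alpha,0}^{\mathrm{agg}}$. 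Expanding each $x_{k,t}$ one step in every trajectory simultaneously gives
\[\Sigma_{\alpha,0}^{\mathrm{agg}} = \Sigma_{\alpha,1}^{\mathrm{agg}} + \sum_{k,t}\alpha_k\tilw_{k,t}\tilw_{k,t}\t\alpha_k\t + \sum_{k,t}\b{\alpha_k A_k x_{k,t}\b{\alpha_k\tilw_{k,t}}\t + \text{transpose}}.\]

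Next I would bound the two error sums. For the noise-covariance term I would invoke \Cref{lem:inv_cov_noise} on the single collection $\{\alpha_k\tilw_{k,t}\}_{k,t}$ of $\calT$ independent $a$-dimensional Gaussians (each covariance of operator norm $\le\psi_\alpha^2\psi_w$), obtaining $\norm{\sum_{k,t}\alpha_k\tilw_{k,t}\tilw_{k,t}\t\alpha_k\t - \sum_{k,t}\alpha_k\Sigma_{\tilw,k}\alpha_k\t}\lesssim \psi_w\psi_\alpha^2\sqrt{r\calT\log\tfrac1\delta}$, retaining the signal $\sum_{k} T_k\,\alpha_k\Sigma_{\tilw,k}\alpha_k\t$. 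For the cross term I would use the multi-trajectory self-normalized bound \Cref{lem:inv_crossConc_stable_multi}: concatenating the independent trajectories into one adapted sequence (with $\alpha_k A_k x_{k,t}$ predictable and $\alpha_k\tilw_{k,t}$ the martingale difference), its relative form yields $\norm{\b{\Sigma_{\alpha,1}^{\mathrm{agg}}+\calT I}^{-1/2}\sum_{k,t}\alpha_k A_k x_{k,t}\b{\alpha_k\tilw_{k,t}}\t}\lesssim \bar\psi\sqrt{r\psi_w\psi_\alpha^2\log\tfrac{Kr}{\delta}}$ with $\bar\psi=\sqrt{\log\b{\tfrac{\psi_\alpha^2\psi_w\psi_A^4}{1-\rho_A^2}r\log\tfrac{2Kr}{\delta}}}$. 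The decisive point is that the normalizer is $\calT I$, so the induced multiplier on $\Sigma_{\alpha,1}^{\mathrm{agg}}$ is $1 - c\bar\psi\sqrt{r\psi_w\psi_\alpha^2\log(Kr/\delta)/\calT}$, depending only on $\calT$.

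Then I would iterate. Substituting both bounds gives $\Sigma_{\alpha,0}^{\mathrm{agg}}\succeq\b{1-\theta_1}\Sigma_{\alpha,1}^{\mathrm{agg}} + \sum_k T_k\alpha_k\Sigma_{\tilw,k}\alpha_k\t - c\,\psi_w\psi_\alpha^2\bar\psi\sqrt{r\calT\log\tfrac{Kr}\delta}\,I$; repeating on $\Sigma_{\alpha,1}^{\mathrm{agg}},\dots,\Sigma_{\alpha,r-1}^{\mathrm{agg}}$ (each step replacing $\alpha_k$ by $\alpha_k A_k^i$ and inflating constants through $\norm{A_k^i}\le\psi_A\rho_A^{i-1}$) and choosing $\calT$ so that every $\theta_j\le 1/(2r)$ keeps $\prod_j\b{1-\theta_j}\ge\b{1-\tfrac1{2r}}^r\ge\tfrac12$. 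Collapsing the accumulated signal with $\Sigma_{\tilw,k}\succeq\phi_u B_kB_k\t$, the controllability identity $\sum_{i=0}^{r-1}A_k^iB_kB_k\t\b{A_k^i}\t = R_kR_k\t \succeq \phi_R^2 I$, and $\alpha_k M\alpha_k\t\succeq\sigma_{\min}(M)\phi_\alpha^2 I$ gives
\[\Sigma_\alpha \succeq \tfrac12\left[\phi_u\phi_\alpha^2\phi_R^2\b{\calT - K(r-1)} - c\,\psi_w\psi_\alpha^2\psi_A^2\bar\psi\, r\sqrt{r\calT\log\tfrac{Kr}\delta}\right]I \succeq \frac{\phi_u\phi_\alpha^2\phi_R^2}{8}\calT I,\]
where the final step uses \Cref{eq:thm_col_1_multi} to force $\calT - K(r-1)\ge \calT/2$ (this is exactly where the extra factor $K$ enters, through the per-trajectory shift loss) and to render the $r\sqrt{r\calT}$ term negligible. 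A union bound over the $r$ unrolling levels and the $K$ trajectories supplies the $\log(Kr/\delta)$ factors.

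The main obstacle is the cross term: I must set up the concatenated-trajectory self-normalized martingale so that its bound is normalized by $\sqrt\calT$ instead of the per-trajectory Gram matrices. This is what prevents the hypothesis from degenerating into ``every $T_k$ large'' and instead confines the difficulty to the aggregate $\calT$---precisely the meta-learning gain the paper claims. Verifying adaptedness of the cross-term sequence across independent systems, and tracking the $\psi_A$ growth uniformly through all $r$ recursion levels, is the delicate bookkeeping.
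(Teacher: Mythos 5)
Your proposal is correct and follows essentially the same route as the paper's proof: the same one-step unrolling of the aggregated Gram matrix, the same use of \Cref{lem:inv_cov_noise} on the pooled noise collection and of the concatenated self-normalized bound \Cref{lem:inv_crossConc_stable_multi} normalized by $\calT I$, and the same $r$-level recursion collapsed via $(1-\tfrac{1}{2r})^r\ge\tfrac12$ and the controllability Gramian. The only bookkeeping detail you gloss over is that trajectories with $T_k<r$ cannot be unrolled $r$ steps; the paper handles this by restricting the recursion to $\calK=\{k:T_k\ge r\}$ and absorbing the discarded mass (at most $Kr$) into the same $\calT-2Kr\gtrsim\calT$ slack you already invoke.
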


\begin{proof}
    
We first define set $\calK = \{k: T_k \geq r\}$. For further analysis, we let $\tilw_{k,t} \coloneqq B_ku_{k,t} + w_{k,t} \sim \calN\b{0, \Sigma_{\tilw,k}}$ with $\Sigma_{\tilw,k} \coloneqq B_k\Sigma_{u,k}B_k\t + \Sigma_{w,k}$ and we can rewrite the system dynamics as follows
\begin{equation}\begin{split}\label{eq:system_meta_multi}
    x_{k,t+1} = {}& A_kx_{k,t} + \tilw_{k,t} = A_kx_{k,t} + B_ku_{k,t} + w_{k,t},\\
    y_{k,t} = {}& C_kx_{k,t} + \eta_{k,t}.
\end{split}\end{equation}
Then we know that $\sigma_{1}(\Sigma_{\tilw,k}) \leq \psi_w$.
For simplicity, we define $\Sigma_{\alpha,\tau} \coloneqq \sum_{k\in\calK}\sum_{t=1}^{T_k-\tau}\b{\alpha_k A_k^{\tau}}\b{x_{k,\tau} x_{k,\tau}\t} (\alpha_k A_k^{\tau})\t$. It is then clear that
    \begin{equation}\begin{split}\label{eq:covA_1_multi}
        \Sigma_{\alpha} \succeq {}& \Sigma_{\alpha,0} = \sum_{k\in\calK} \sum_{t=1}^{T_k} \alpha_kx_{k,t}x_{k,t}\t \alpha_k\t \\
        ={}& \sum_{k\in\calK} \sum_{t=1}^{T_k} \alpha_k\b{A_kx_{k,t-1}x_{k,t-1}\t A_k\t  + \tilw_{k,t-1}\tilw_{k,t-1}\t  + A_kx_{k,t-1}\tilw_{k,t-1}\t  + \tilw_{k,t-1}x_{k,t-1}\t A_k\t} \alpha_k\t \\
        = {}& \Sigma_{\alpha,1} + \sum_{k\in\calK} \sum_{t=0}^{T_k-1} \alpha_k\tilw_{k,t}\tilw_{k,t}\t \alpha_k\t  + \sum_{k\in\calK} \sum_{t=0}^{T_k-1} \b{\alpha_kA_kx_{k,t}\b{\alpha_k\tilw_{k,t}}\t  + \b{\alpha_k\tilw_{k,t}}x_{k,t}\t A_k\t \alpha_k\t}.
    \end{split}\end{equation}
    By Lemma \ref{lem:inv_crossConc_stable_multi} and \ref{lem:inv_cov_noise}, the following events hold with probability at least $1-\delta/r$, 
    \begin{equation}\begin{split}
        \norm{\sum_{k\in\calK} \sum_{t=0}^{T_k-1} \alpha_k\tilw_{k,t}\tilw_{k,t}\t \alpha_k\t - \sum_{k\in\calK} T_k\alpha_k\Sigma_{\tilw,k}\alpha_k\t} \lesssim {}& \psi_w\psi_\alpha^2\sqrt{r\calT\log\frac{2r}{\delta}},\\
        \norm{\b{\Sigma_{\alpha,1}+\calT I}^{-\frac{1}{2}}\sum_{k\in\calK} \sum_{t=0}^{T_k-1} \alpha_kA_kx_{k,t}\b{\alpha_k\tilw_{k,t}}\t} \lesssim {}& \bar{\psi}\sqrt{r\psi_w\psi_\alpha^2\log\frac{2r}{\delta}},
    \end{split}\end{equation}
    with $\bar{\psi} = \sqrt{\log\b{\frac{\psi_\alpha^2\psi_w\psi_A^4}{1-\rho_A^2} r\log\frac{2Kr}{\delta}}}$.
    From the first inequality, it is clear that the following holds for some large enough positive constant $c_1$
    \begin{equation}\begin{split}
        \sum_{k\in\calK} \sum_{t=0}^{T_k-1} \alpha_k\tilw_{k,t}\tilw_{k,t}\t \alpha_k\t \succeq \sum_{k\in\calK} T_k\alpha_k\Sigma_{\tilw,k}\alpha_k\t - c_1\psi_w\psi_\alpha^2\sqrt{r\calT\log\frac{2r}{\delta}}I
    \end{split}\end{equation}
    
    We have the following from the second inequality
    \begin{equation}\begin{split}
        {}& \norm{\b{\Sigma_{\alpha,1}+\calT I}^{-\frac{1}{2}}\sum_{k\in\calK} \sum_{t=0}^{T_k-1} \alpha_kA_kx_{k,t}\b{\alpha_k\tilw_{k,t}}\t \b{\Sigma_{\alpha,1}+\calT I}^{-\frac{1}{2}}}\\
        \lesssim {}& \bar{\psi}\sqrt{r\psi_w\psi_\alpha^2\log\frac{2r}{\delta}} \norm{\b{\Sigma_{\alpha,1} + \calT I}^{-\frac{1}{2}}}\\
        \lesssim {}& \bar{\psi}\sqrt{\frac{r\psi_w\psi_\alpha^2\log\frac{2r}{\delta}}{\calT}}.
    \end{split}\end{equation}
    This implies the following for some positive constant $c_2$ 
    \begin{equation}\begin{split}
        {}& \sum_{k\in\calK} \sum_{t=0}^{T_k-1} \b{\alpha_kA_kx_{k,t}\b{\alpha_k\tilw_{k,t}}\t  + \b{\alpha_k\tilw_{k,t}}x_{k,t}\t A_k\t \alpha_k\t} 
        \succeq -c_2\bar{\psi}\sqrt{\frac{r\psi_w\psi_\alpha^2\log\frac{2r}{\delta}}{\calT}} \b{\Sigma_{\alpha,1} + \calT I}.
    \end{split}\end{equation}
    Plugging back into Equation \ref{eq:covA_1_multi} gives the following for some positive constant $c_3$
    \begin{equation}\begin{split}\label{eq:covA_2_multi}
        \Sigma_{\alpha,0} \succeq {}& \Sigma_{\alpha,1} + \sum_{k\in\calK} T_k \alpha_k\Sigma_{\tilw,k}\alpha_k\t - c_1\psi_w\psi_\alpha^2\sqrt{r\calT\log\frac{2r}{\delta}}I - c_2\bar{\psi}\sqrt{\frac{r\psi_w\psi_\alpha^2\log\frac{2r}{\delta}}{\calT}}\b{\Sigma_{\alpha,1} + \calT I}\\
        \succeq {}& \b{1 - c_2\bar{\psi}\sqrt{\frac{r\psi_w\psi_\alpha^2\log\frac{2r}{\delta}}{\calT}}}\Sigma_{\alpha,1} + \b{\sum_{k\in\calK} T_k \alpha_k\Sigma_{\tilw,k}\alpha_k\t - c_3\psi_w\psi_\alpha^2\bar{\psi}\sqrt{r\calT\log\frac{2r}{\delta}}I}\\
    \end{split}\end{equation}
    Similarly, we expand $\Sigma_{\alpha,1}, \Sigma_{\alpha,2}, \cdots, \Sigma_{\alpha,r-1}$ and have the following with probability at least $1-\delta$
    \begin{equation*}\begin{split}
        \Sigma_{\alpha,0} \succeq {}& \b{\sum_{k\in\calK}(T_k-r+1)\sum_{i=0}^{r-1}\b{\alpha_k A_k^i}\Sigma_{\tilw,k}\b{\alpha_k A_k^i}\t - c_3\psi_w\psi_\alpha^2\psi_A^2\bar{\psi} \cdot r\sqrt{r(\calT-Kr+K)\log\frac{2r}{\delta}}I} \\
        {}& \hspace{2em}\cdot\b{1 - c_{2}\bar{\psi}\sqrt{\frac{r\psi_w\psi_\alpha^2\psi_A^2\log\frac{2r}{\delta}}{ \calT-Kr+K }}}^{r-1} + \b{1 - c_{2}\bar{\psi}\sqrt{\frac{r\psi_w\psi_\alpha^2\psi_A^2\log\frac{2r}{\delta}}{ \calT-Kr+K }}}^r\Sigma_{\alpha,r}\\
    \end{split}\end{equation*}
    The above inequality is further simplified as follows
    \begin{equation*}\begin{split}
        \Sigma_{\alpha,0} \overset{(i)}{\succeq} {}& \b{\phi_u\sum_{k\in\calK}(T_k-r+1)\sum_{i=0}^{r-1}\b{\alpha_k A_k^i}B_kB_k\t\b{\alpha_k A_k^i}\t - c_3\psi_w\psi_\alpha^2\psi_A^2\bar{\psi} \cdot r\sqrt{r\calT\log\frac{2r}{\delta}}I}\\
        {}& \hspace{2em} \cdot \b{1 - c_2\bar{\psi}\sqrt{\frac{r\psi_w\psi_\alpha^2\psi_A^2\log\frac{2r}{\delta}}{ \calT-Kr+K }}}^{r}\\
        \overset{(ii)}{\succeq} {}& \b{\phi_u\sum_{k\in\calK}(T_k-r)\alpha_k \b{\sum_{i=0}^{r-1}A_k^iB_k(A_k^iB_k)\t}\alpha_k\t - c_3\psi_w\psi_\alpha^2\psi_A^2\bar{\psi} \cdot r\sqrt{r\calT\log\frac{2r}{\delta}}I} \b{1 - \frac{1}{2r}}^{r}\\
        \overset{(iii)}{\succeq} {}& \frac{1}{2}\b{\phi_u\sum_{k\in\calK}(T_k-r)\alpha_k \b{\sum_{i=0}^{r-1}A_k^iB_k(A_k^iB_k)\t}\alpha_k\t - c_3\psi_w\psi_\alpha^2\psi_A^2\bar{\psi} \cdot r\sqrt{r\calT\log\frac{2r}{\delta}}I}\\
        \overset{(iv)}{\succeq} {}& \frac{1}{2}\b{\phi_u\phi_\alpha^2\phi_R^2(\calT-2Kr) - c_3\psi_w\psi_\alpha^2\psi_A^2\bar{\psi} \cdot r\sqrt{r\calT\log\frac{2r}{\delta}}}I \\
        \overset{(v)}{\succeq} {}& \frac{\phi_u\phi_\alpha^2\phi_R^2}{8} \calT I\\
    \end{split}\end{equation*}
    Here $(i)$ is because $\Sigma_{\alpha,r} \succeq 0$ and $\Sigma_{\tilw,k} \succeq B_k\Sigma_{u,k} B_k\t \succeq \phi_u B_kB_k\t$, $(ii)$ is because $\calT \gtrsim \psi_w\psi_\alpha^2\psi_A^2Kr^3\log\frac{r}{\delta}\cdot\log\b{\frac{\psi_\alpha^2\psi_w\psi_A^4}{1-\rho_A^2} r\log\frac{Kr}{\delta}}$, $(iii)$ is because $(1-\frac{1}{2r})^r \geq \frac{1}{2}$ for all positive integers, $(iv)$ is because $\sum_{k\in\calK}(T_k-r) \geq -Kr + \sum_{k\in\calK} T_k = -Kr + \calT - \sum_{k\notin \calK} T_k \geq -Kr + \calT - Kr$, and $(v)$ is because $\calT \gtrsim \frac{\psi_w^2\psi_\alpha^4\psi_A^4}{\phi_u^2\phi_{\alpha}^4\phi_R^4}Kr^3\log\frac{r}{\delta}\cdot\log\b{\frac{\psi_\alpha^2\psi_w\psi_A^4}{1-\rho_A^2} r\log\frac{Kr}{\delta}}$. 
\end{proof}

\begin{lemma}\label{lem:inv_crossConc_stable_multi}
    Consider the same setting as Theorem \ref{lem:2col_full}. 
    For any positive integer $a$, let $\{\zeta_{k,t}\in\bbR^a\}_{t=0}^{T_k}$ be a sequence of i.i.d Gaussian vectors from $\calN(0,\Sigma_{\zeta,k})$ such that $\zeta_{k,t}$ is independent of $x_{k,t}$. Define $\psi_\zeta = \max_k\sigma_{1}(\Sigma_{\zeta,k})$.
    We make the following definition for any $b\times r$ matrix $P_{[K]}$ for any positive integer $b$
    \begin{equation}\begin{split}
        \bar{\Sigma}_P = \sum_{k\in[K],t\in[T_k]} P_kx_{k,t}x_{k,t}\t P_k\t + \calT I, \quad \psi_P = \max_{k\in[K]} \sigma_1(P_k). 
    \end{split}\end{equation}
    Then the following holds with probability at least $1-\delta$,
    \begin{equation}\begin{split}
        \norm{\b{\bar{\Sigma}_P}^{-\frac{1}{2}} \sum_{k\in[K],t\in[T_k]} P_kx_{k,t}\zeta_{k,t}\t} \lesssim \bar{\psi}\sqrt{\max\{r,a\}\psi_\zeta\log\frac{1}{\delta}}.
    \end{split}\end{equation}
    Here $\bar{\psi} = \sqrt{\log\b{\frac{\psi_P^2\psi_w\psi_A^2}{1-\rho_A^2} r\log\frac{K}{\delta}}}$. 
\end{lemma}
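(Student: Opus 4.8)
The plan is to follow the single-trajectory argument of \Cref{lem:inv_crossConc_stable} almost line for line, the only genuinely new ingredient being the concatenation of the $K$ independent trajectories into a single self-normalized martingale. First I would strip the operator norm down to a scalar test direction via \Cref{lem:gen_norm2Net}, applied on the $a$-dimensional noise side: there is some $v\in\mathbb{S}^{a-1}$ with
\begin{equation*}
    \bbP\b{\norm{\bar\Sigma_P^{-\frac12}\sum_{k\in[K],t\in[T_k]}P_kx_{k,t}\zeta_{k,t}\t}>z}\leq 5^a\,\bbP\b{\norm{\bar\Sigma_P^{-\frac12}\sum_{k\in[K],t\in[T_k]}P_kx_{k,t}(\zeta_{k,t}\t v)}>\tfrac z2}.
\end{equation*}
Because each $\zeta_{k,t}\sim\calN(0,\Sigma_{\zeta,k})$ is independent of $x_{k,t}$ and $v\t\Sigma_{\zeta,k}v\leq\psi_\zeta$, the scalars $\zeta_{k,t}\t v$ are $c_1\sqrt{\psi_\zeta}$-subGaussian, which sets up a self-normalized concentration of the remaining $b$-dimensional vector.

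The crux is the martingale construction. I would impose a linear order on the index pairs $(k,t)$ and build a filtration $\calF_{s-1}$ that reveals $x_{k,t}$ (hence the regressor $P_kx_{k,t}$) and all previously processed noises, but \emph{not} $\zeta_{k,t}$. Mutual independence across trajectories together with $\zeta_{k,t}\perp x_{k,t}$ within each trajectory guarantees that $\zeta_{k,t}\t v$ remains a conditionally $c_1\sqrt{\psi_\zeta}$-subGaussian increment given $\calF_{s-1}$, so Theorem~1 of \cite{abbasi-yadkori_improved_2011} applies to the concatenated sequence with regularizer $\calT I$ and Gram matrix $\bar\Sigma_P$, giving with probability at least $1-\delta$
\begin{equation*}
    \norm{\bar\Sigma_P^{-\frac12}\sum_{k\in[K],t\in[T_k]}P_kx_{k,t}(\zeta_{k,t}\t v)}\lesssim\sqrt{\psi_\zeta\log\frac{\det(\bar\Sigma_P)^{1/2}\det(\calT I)^{-1/2}}{\delta}}.
\end{equation*}

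It then remains to control the determinant ratio. Writing $M=\sum_{k}P_k\big(\sum_t x_{k,t}x_{k,t}\t\big)P_k\t$, its number of nonzero eigenvalues is at most the ambient dimension $b$ (matching the net dimension $a$ in the applications, so bounded by $\max\{r,a\}$), while each eigenvalue obeys $\lambda_i(M)\leq\norm M\leq\psi_P^2\,\max_k\big(\norm{\sum_t x_{k,t}x_{k,t}\t}/T_k\big)\,\calT$. Applying \Cref{lem:gram_upper} to each of the $K$ trajectories with a union bound at level $\delta/K$ — the source of the $\log\frac K\delta$ inside $\bar\psi$ — yields $\lambda_i(M)/\calT\lesssim\frac{\psi_P^2\psi_w\psi_A^2}{1-\rho_A^2}r\log\frac K\delta$, so that $\det(\bar\Sigma_P)^{1/2}\det(\calT I)^{-1/2}\leq\b{1+c_2\frac{\psi_P^2\psi_w\psi_A^2}{1-\rho_A^2}r\log\frac K\delta}^{\max\{r,a\}/2}$. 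Substituting this, absorbing the $5^a$ net factor as an additive $\psi_\zeta a\log5$ term inside the root, and rescaling $\delta$ through the two union bounds reproduces the claimed estimate with $\bar\psi=\sqrt{\log\b{\frac{\psi_P^2\psi_w\psi_A^2}{1-\rho_A^2}r\log\frac K\delta}}$.

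I expect the main obstacle to be the martingale bookkeeping across trajectories: one must exhibit a single filtration under which every regressor $P_kx_{k,t}$ is predictable while its paired increment $\zeta_{k,t}\t v$ stays conditionally subGaussian, so that the aggregate sum is a bona fide self-normalized martingale and the full $K$-trajectory matrix $\bar\Sigma_P$ legitimately serves as its normalizer. The determinant estimate is then routine, but requires care that the effective dimension entering the log-determinant remains of order $\max\{r,a\}$ rather than $Kr$; this is precisely why the operator norm of $M$ is bounded by $\psi_P^2\calT$ times the worst per-step Gram growth instead of being summed trajectory by trajectory.
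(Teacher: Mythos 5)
Your proposal follows essentially the same route as the paper's proof: reduce to a scalar test direction via \Cref{lem:gen_norm2Net} (paying the $5^a$ factor), apply Theorem 1 of \cite{abbasi-yadkori_improved_2011} to the concatenated sequences $\{P_kx_{k,t}\}$ and $\{\zeta_{k,t}\t v\}$ with regularizer $\calT I$, and control the log-determinant by invoking \Cref{lem:gram_upper} on each trajectory with a union bound at level $\delta/K$, which is exactly where the $\log\frac{K}{\delta}$ in $\bar\psi$ enters. Your added care about the filtration ordering across trajectories makes explicit a step the paper leaves implicit, but it is the same argument, and the determinant and dimension bookkeeping match the paper's.
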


\begin{proof}
    From Lemma \ref{lem:gen_norm2Net}, we know that the following holds for some vector $v\in\mathbb{S}^{a-1}$
    \begin{equation}\label{eq:thm1_1_full}\begin{split}
        \bbP\b{\norm{\b{\bar{\Sigma}_P}^{-\frac{1}{2}} \sum_{k\in[K],t\in[T_k]} P_kx_{k,t}\zeta_{k,t}\t}>z} \leq 5^a \bbP\b{\norm{\b{\bar{\Sigma}_P}^{-\frac{1}{2}} \sum_{k\in[K],t\in[T_k]} P_kx_{k,t}\zeta_{k,t}\t v}>\frac{z}{2}}.
    \end{split}\end{equation}
    Notice that $\zeta_{k,t}\t v$ are independent Gaussian varaibles from distribution $\calN(0, v\t \Sigma_{\zeta,k}v)$ for all $k\in[K],t\in[T_k]$, which is $c_1\sqrt{\psi_\zeta}$-subGaussian for some positive constant $c_1$. Then applying Theorem 1 in \cite{abbasi-yadkori_improved_2011} on sequence $\{P_kx_{k,t}\}_{t,k}$ and sequence $\{\zeta_{k,t}\t v\}_{t,k}$, gives the following inequality
    \begin{equation}\begin{split}
        \bbP\b{\norm{\b{\bar{\Sigma}_P}^{-\frac{1}{2}} \sum_{k\in[K],t\in[T_k]} P_kx_{k,t}\zeta_{k,t}\t v} >  \sqrt{2c_1^2\psi_\zeta  \log\b{\frac{\det(\bar{\Sigma}_P)^{\frac{1}{2}}\det(\calT I)^{-\frac{1}{2}}}{\delta}}}} \leq \delta.
    \end{split}\end{equation}
    Substituting the above result back gives the following inequality
    \begin{equation}\begin{split}
        \bbP\b{\norm{\b{\bar{\Sigma}_P}^{-\frac{1}{2}} \sum_{k\in[K],t\in[T_k]} P_kx_{k,t}\zeta_{k,t}} >  \sqrt{8c_1^2\psi_\zeta  \log\b{\frac{\det(\bar{\Sigma}_P)^{\frac{1}{2}}\det(\calT I)^{-\frac{1}{2}}}{\delta}}}} \leq 5^a\delta,
    \end{split}\end{equation}
    which implies the following inequality holds with probability at least $1-\frac{\delta}{2}$
    \begin{equation}\label{eq:cross_1_full}\begin{split}
        \norm{\b{\bar{\Sigma}_P}^{-\frac{1}{2}} \sum_{k\in[K],t\in[T_k]} P_kx_{k,t}\zeta_{k,t}} \leq \sqrt{8c_1^2\psi_\zeta\log\b{\frac{2\det(\bar{\Sigma}_P)^{\frac{1}{2}}\det(\calT I)^{-\frac{1}{2}}}{\delta}} + 8c_1^2\psi_\zeta a\log5}.
    \end{split}\end{equation}

    Now consider $\bar{\Sigma}_P = \sum_{k\in[K],t\in[T_k]} P_kx_{k,t}x_{k,t}\t P_k\t  + \calT I$. Then
    \begin{equation}\begin{split}
        \det(\calT I) ={}&  (\calT )^b, \quad \det(\bar{\Sigma}_P) = (\calT )^{b-r}\prod_{i=1}^{r} \b{\calT +\lambda_i\b{\sum_{k\in[K],t\in[T_k]} P_kx_{k,t}x_{k,t}\t P_k\t }}.
    \end{split}\end{equation}
    From Lemma \ref{lem:gram_upper}, we have the following with probability at least $1-\delta/(2K)$ for every $k\in[K]$
    \begin{equation}\begin{split}
        \norm{\sum_{t\in[T_k]} x_{k,t}x_{k,t}\t} \lesssim \frac{\psi_w\psi_A^2r}{1-\rho_A^2}T_k\log\frac{2K}{\delta} \lesssim \frac{\psi_w\psi_A^2r}{1-\rho_A^2}T_k\log\frac{K}{\delta}.
    \end{split}\end{equation}
    Therefore,
    \begin{equation}\begin{split}
        {}& \lambda_i\b{\sum_{k\in[K],t\in[T_k]} P_kx_{k,t}x_{k,t}\t P_k\t } \\
        \leq {}& \sum_{k\in[K]}\norm{\sum_{t\in[T_k]} P_kx_{k,t}x_{k,t}\t P_k\t }\\
        \leq {}& \sum\kk \norm{\sum_{t\in[T_k]} x_{k,t}x_{k,t}\t}\norm{ P_kP_k\t} \\
        \lesssim {}& \frac{\psi_P^2\psi_w\psi_A^2}{1-\rho_A^2}r\calT\log\frac{K}{\delta}\\
    \end{split}\end{equation}
    Substituting back gives the following for some positive constant $c_2$
    \begin{equation}\begin{split}\label{eq:cross_2_full}
        \det(\bar{\Sigma}_P) ={}&  \calT^{b-r}\prod_{i=1}^{r} \b{\calT +\lambda_i\b{\sum_{k\in[K],t\in[T_k]} P_kx_{k,t}x_{k,t}\t P_k\t }}\\
        \leq {}& \calT^b \b{1+c_2\frac{\psi_P^2\psi_w\psi_A^2}{1-\rho_A^2} r\log\frac{K}{\delta}}^r,
    \end{split}\end{equation}
    which gives
    \begin{equation}\begin{split}
        \det(\bar{\Sigma}_P)^{\frac{1}{2}}\det(\calT I)^{-\frac{1}{2}} \leq \b{1+c_2\frac{\psi_P^2\psi_w\psi_A^2}{1-\rho_A^2} r\log\frac{K}{\delta}}^{\frac{r}{2}}.
    \end{split}\end{equation}

    Finally, with a union bound over all above events, we get the following with probability at least $1-\delta$ from Equation \ref{eq:cross_1_full} and \ref{eq:cross_2_full}
    \begin{equation}\begin{split}
        {}& \norm{\b{\bar{\Sigma}_P}^{-\frac{1}{2}} \sum_{k\in[K],t\in[T_k]} P_kx_{k,t}\zeta_{k,t}}\\
        \leq {}& \sqrt{8c_1^2\psi_\zeta\log\b{\frac{2\det(\bar{\Sigma}_P)^{\frac{1}{2}}\det(\calT I)^{-\frac{1}{2}}}{\delta}} + 8c_1^2\psi_\zeta a\log5}\\
        \leq {}& \sqrt{4c_1^2r\psi_\zeta\log\b{2+2c_2\frac{\psi_P^2\psi_w\psi_A^2}{1-\rho_A^2} r\log\frac{K}{\delta}} + 8c_1^2\psi_\zeta\log\frac{1}{\delta} + 8c_1^2\psi_\zeta a\log5}\\
        \lesssim {}& \sqrt{\max\{r,a\}\psi_\zeta\log\frac{1}{\delta}}\cdot\sqrt{\log\b{\frac{\psi_P^2\psi_w\psi_A^2}{1-\rho_A^2} r\log\frac{K}{\delta}}}.
    \end{split}\end{equation}
    This completes the proof.
\end{proof}

\section[]{Example \ac{oracle} --- The Ho-Kalman Algorithm \cite{oymak_non-asymptotic_2019}}\label{sec:idoracle}

We now introduce an example oracle for systems with \textit{isotropic noise covariances}, $\calM = (r, n, m, A, B, C, \sigma_w^2I, \sigma_\eta^2I)$. Before going into details, we define necessary notations. Let $\calG_d(\calM)$ and $\calH_d(\calM)$ as follows for any positive integer $d$
\begin{equation}\begin{split}\label{eq:oracle_1}
    \calG_d(\calM) = {}& \begin{bmatrix}
        CB &  CAB & \cdots & CA^{2d-1}B
    \end{bmatrix}\in\bbR^{n\times 2dm},\\
    \calH_d(\calM) ={}&  \begin{bmatrix}
        CB & CAB & \cdots & CA^{d-1}B & CA^dB\\
        CAB & CA^2B & \cdots & CA^dB & CA^{d+1}B\\
        \vdots & \vdots & \ddots & \vdots & \vdots\\
        CA^{d-1}B & CA^dB & \cdots & CA^{2d-2}B & CA^{2d-1}B
    \end{bmatrix} \in\bbR^{dn\times(d+1)m}.
\end{split}\end{equation}
Moreover, let $\calH_d^-(\calM)\in\bbR^{dn\times dm}$ and $\calH_d^+(\calM)\in\bbR^{dn\times dm}$ be the first and last $dm$ columns of $\calH_d(\calM)$, respectively. 
Now we define all necessary constants for the above $\calM$
\begin{equation}\begin{split}\label{eq:oracle_2}
    {}& \phi_{\calH}(\delta) = \sigma_{\min}\b{\calH_d^-(\calM)}, \quad \psi_{\calH}(\delta) = \sigma_{1}\b{\calH_d(\calM)}, \quad d = \max\left\{r, \lceil\log\frac{1}{\delta}\rceil \right\},\\
    {}& \psi_B = \sigma_{1}(B), \quad \phi_B = \sigma_{r}(B), \quad \psi_C = \sigma_{1}(C),\quad \phi_C = \sigma_{\min}(C)
\end{split}\end{equation}


Here we assume all $\psi$'s satisfy $\psi\geq 1$, otherwise we define $\psi$ to be $\max\{1, \sigma_1(\cdot)\}$. Similarly, we assume all $\phi$'s satisfy $\phi\leq 1$, otherwise we define $\phi$ to be $\min\{1, \sigma_{\min}(\cdot)\}$.
The algorithm is summarized as follows. We note that this algorithm is not as general as we defined in \ac{oracle}. To be more specific, it only works with identity noise covariances and requires to know the dimension of the latent variables. 

\begin{algorithm}[H]
\begin{algorithmic}[1]
    \caption{Ho-Kalman}
    \label{alg:hokalman}
    \STATE {\bfseries Input:} Latent Variable Dimension $r$. Dataset $\{y_t\}_{t=0}^T, \{u_t\}_{t=0}^{T-1}$. Failure Probability $\delta$\\

    \STATE Estimating markov parameters $\wh{\calG} = \begin{bmatrix}
        \wh{CB} & \wh{CAB} & \cdots & \wh{CA^{2d-1}B}
    \end{bmatrix}$
    \begin{equation*}\begin{split}
        d \gets \max\left\{r, \lceil\log\frac{1}{\delta}\rceil \right\}, \quad \wh{\calG} \gets \mathop{\arg\min}_{\calG\in\bbR^{n\times 2dm}} \sum_{t=2d}^{T} \norm{y_t - \calG \begin{bmatrix}
            u_{t-1}\\
            u_{t-2}\\
            \vdots\\
            u_{t-2d}
        \end{bmatrix}}^2
    \end{split}\end{equation*}\\
    \STATE Constructing Hankel matrices $\wh{\calH}^-$, $\wh{\calH}^+$
    \begin{equation*}\begin{split}
        \wh{\calH}^- = \begin{bmatrix}
            \wh{CB} & \wh{CAB} & \cdots & \wh{CA^{d-1}B} \\
            \wh{CAB} & \wh{CA^2B} & \cdots & \wh{CA^dB} \\
            \vdots & \vdots & \ddots & \vdots \\
            \wh{CA^{d-1}B} & \wh{CA^dB} & \cdots & \wh{CA^{2d-2}B} 
        \end{bmatrix}, \quad \wh{\calH}^+ = \begin{bmatrix}
            \wh{CAB} & \wh{CA^2B} & \cdots & \wh{CA^dB} \\
            \wh{CA^2B} & \wh{CA^3B} & \cdots & \wh{CA^{d+1}B} \\
            \vdots & \vdots & \ddots & \vdots \\
            \wh{CA^dB} & \wh{CA^{d+1}B} & \cdots & \wh{CA^{2d-1}B} 
        \end{bmatrix}
    \end{split}\end{equation*}\\
    \STATE Truncated (rank-$r$) SVD on $\wh{\calH}^-$
    \begin{equation*}\begin{split}
        U_r, \Sigma_r, V_r \gets \text{Top }r \text{ singular vectors \& values of }\wh{\calH}^-
    \end{split}\end{equation*}
    \STATE Estimating system parameters
    \begin{equation*}\begin{split}
        \htA \gets {}& \b{U_r\Sigma_r^{1/2}}^\dagger \wh{\calH}^+ \b{\Sigma_r^{1/2}V_r\t}^\dagger,\\
        \htB \gets {}&  \text{ First $m$ columns of } \Sigma_r^{1/2}V_r\t,\\
        \htC \gets {}& \text{ First $n$ rows of $U_r\Sigma_r^{1/2}$ }
    \end{split}\end{equation*}
\end{algorithmic}
\end{algorithm}

For the paper to be self-contained, a theoretical guarantee of the above algorithm is provided. 
\begin{corollary}\label{thm:9hokalman}
    Consider $\calM = (r, n, m, A, B, C, \sigma_w^2I, \sigma_\eta^2I)$ with any $r,n,m$, independent inputs $u_t\overset{\text{i.i.d.}}{\sim} \calN(0,\sigma_u^2I)$ and notations in \Cref{eq:oracle_1,eq:oracle_2}. Suppose $\calM$ satisfy Assumption \ref{assmp:sys1} with constants $\psi_A$ and $\rho_A$. Consider single trajectory dataset $\calY=\{y_t\}_{t=0}^T, \calU=\{u_t\}_{t=0}^{T-1}$ from the system. If
    \begin{equation}\begin{split}\label{eq:hokalman_1}
        {}& T \gtrsim \underbrace{\frac{\sigma_\eta^2+(\sigma_w^2+\sigma_u^2\psi_B^2)\psi_C^2\psi_A^6/(1-\rho_A^2)^2}{\phi_\calH^2\sigma_u^2}\log^5(r(r+n+m))\log^2T\log^9\frac{1}{\delta}}_{\kappa_1(\calM,\calU,\delta) \text{ in \Cref{def:idoracle}}}\cdot r^3(r+n+m).\\
    \end{split}\end{equation}
    then with probability at least $1-\delta$, Algorithm \ref{alg:hokalman} outputs $\b{\htA, \htB, \htC}$ s.t. there exists an invertible matrix $S$
    \begin{equation}\begin{split}\label{eq:hokalman_2}
        {}& \max\left\{\norm{S^{-1}AS - \htA}, \norm{S^{-1}B - \htB}, \norm{CS - \htC}\right\}\\
            \lesssim {}& \underbrace{\frac{\psi_\calH}{\phi_\calH^2}\frac{\sigma_\eta + \psi_A^3\psi_C\sqrt{\b{\sigma_w^2 + \sigma_u^2\psi_B^2}/(1-\rho_A^2)}}{\sigma_u}\sqrt{\log^5(r(r+n+m))\log^2T\log^9\frac{1}{\delta}}}_{\kappa_2(\calM,\calU,\delta) \text{ in \Cref{def:idoracle}}}\cdot\sqrt{\frac{r^5(r+n+m)}{T}}.
    \end{split}\end{equation}
    Here $\phi_\calH=\phi_{\calH}(\delta)$, $\psi_\calH = \psi_{\calH}(\delta)$.
\end{corollary}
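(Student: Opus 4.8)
The plan is to follow the standard two-stage analysis of the Ho-Kalman algorithm (as in \cite{oymak_non-asymptotic_2019, sarkar_finite_2021}): first control the estimation error of the Markov parameters $\calG_d(\calM)$ produced by the least-squares step, and then feed this error into a deterministic perturbation bound for the balanced realization recovered from the Hankel matrix. Writing $z_t = [u_{t-1}\t, \dots, u_{t-2d}\t]\t \in \bbR^{2dm}$ for the regressor, the least-squares estimate satisfies $\wh{\calG} - \calG_d(\calM) = \b{\sum_t z_t z_t\t}^{-1}\b{\sum_t z_t e_t\t}$, where $e_t$ is the effective regression noise. The first thing I would do is decompose $e_t$ into three pieces: the observation noise $\eta_t$, the part of $y_t = Cx_t + \eta_t$ driven by process noise $w_s$, and the ``truncation tail'' coming from inputs older than $2d$ steps. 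The key structural observation is that, since the inputs are i.i.d.\ and the tail only uses inputs strictly older than the regressor window, $e_t$ is independent of $z_t$; hence the cross term $\sum_t z_t e_t\t$ is amenable to the self-normalized martingale bound (Theorem 1 of \cite{abbasi-yadkori_improved_2011}, lifted from vectors to matrices via \Cref{lem:gen_norm2Net}) in exactly the same way as \Cref{lem:inv_crossConc_stable}, once the appropriate filtration exploiting input independence is set up.

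For the Markov-parameter error I would then proceed in two sub-steps. First, by stability (\Cref{assmp:sys1}) the truncation tail has norm of order $\psi_A \rho_A^{2d}$ and, since $d \geq \max\{r, \lceil \log\tfrac{1}{\delta}\rceil\}$, it is dominated by the statistical error; its sub-Gaussian proxy together with $\eta_t$ and the $w$-driven part of $x_t$ yields a per-step effective-noise variance of order $\sigma_\eta^2 + (\sigma_w^2 + \sigma_u^2\psi_B^2)\psi_C^2\psi_A^2/(1-\rho_A^2)$, the remaining powers of $\psi_A$ and $(1-\rho_A^2)^{-1}$ in the stated $\kappa_1,\kappa_2$ entering later through the Hankel scale $\psi_\calH$. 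Second, the Gram matrix $\sum_t z_t z_t\t$ concentrates around $\sigma_u^2 T I$ (again via \Cref{lem:gen_norm2Net} and a Bernstein argument as in \Cref{lem:inv_cov_noise}), so $\|\b{\sum_t z_t z_t\t}^{-1/2}\| \lesssim (\sigma_u^2 T)^{-1/2}$. Combining the self-normalized numerator, whose effective dimension is $2dm + n$, with this denominator gives $\|\wh{\calG} - \calG_d(\calM)\| \lesssim \tfrac{\sigma_e}{\sigma_u}\sqrt{(dm+n)\log(\cdots)/T}$, with $\sigma_e$ the effective noise level above.

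Next I would pass from the Markov parameters to the Hankel matrices. Because $\wh{\calH}^-$ and $\wh{\calH}^+$ are block rearrangements of $\wh{\calG}$ in which each block appears at most $d$ times, one gets $\|\wh{\calH}^\pm - \calH_d^\pm(\calM)\| \lesssim \sqrt{d}\,\|\wh{\calG} - \calG_d(\calM)\|$. I would then invoke the deterministic Ho-Kalman robustness lemma: provided $\|\wh{\calH}^- - \calH_d^-(\calM)\| \le \phi_\calH/2$, the truncated rank-$r$ SVD of $\wh{\calH}^-$ is stable (Weyl plus a Davis--Kahan subspace bound), and the recovered triple satisfies, for a common similarity transform $S$, a bound of order $\tfrac{\sqrt{\psi_\calH}}{\phi_\calH}\|\wh{\calH}^- - \calH_d^-(\calM)\|$ for $\htB,\htC$ and $\tfrac{\psi_\calH}{\phi_\calH^2}\b{\|\wh{\calH}^- - \calH_d^-(\calM)\| + \|\wh{\calH}^+ - \calH_d^+(\calM)\|}$ for $\htA$. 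The hypothesis $\|\wh{\calH}^- - \calH_d^-(\calM)\| \le \phi_\calH/2$ is exactly what forces the threshold \eqref{eq:hokalman_1}: substituting the Markov-parameter bound and demanding it be at most $\phi_\calH/2$ yields $T \gtrsim r^3(r+n+m)$ times the stated constant, after absorbing the $d = \tilde{\calO}(r)$ factors into the polynomial and logarithmic terms and taking a union bound over the three high-probability events (Gram concentration, self-normalized deviation, and noise-covariance concentration).

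The main obstacle I anticipate is not conceptual but the bookkeeping of constants and dimension powers: threading the effective-noise proxy and the Hankel scale $\psi_\calH$ so that the $\psi_A^6/(1-\rho_A^2)^2$ and $\psi_C^2$ factors land exactly as written, and tracking how the $d$-fold block repetition in the Hankel matrices, the $2dm+n$ dimension of the self-normalized term, the rank-$r$ truncation, and the $\psi_\calH/\phi_\calH^2$ amplification combine into the final $r^5(r+n+m)/T$ rate. The one genuinely delicate point is ensuring the truncation level $\rho_A^{2d}$ is truly negligible against the statistical error, which is where the choice $d \ge \max\{r,\lceil\log\tfrac{1}{\delta}\rceil\}$ is essential; everything else is a careful but routine propagation of the Markov-parameter error through the realization step.
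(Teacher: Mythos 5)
Your outline is correct and tracks the standard Ho--Kalman analysis, but it is essentially a plan to \emph{re-derive} the results that the paper simply invokes. The paper's proof of this corollary is a citation-plus-bookkeeping argument: Step 1 applies Theorem 3.1 of \cite{oymak_non-asymptotic_2019} verbatim to bound $\norm{\wh{\calG}-\calG}$, then uses Assumption \ref{assmp:sys1} only to simplify $\norm{C\b{\sum_i A^iA^i\t}C\t}$ and $\norm{\Gamma_\infty}$ into the $\psi_A,\psi_C,\rho_A$ constants, and spends most of its effort verifying that the cited theorem's composite failure probability collapses to $1-\delta$ under the choice $d=\max\{r,\lceil\log\frac{1}{\delta}\rceil\}$; Step 2 checks $\sqrt{d}\,\norm{\wh{\calG}-\calG}\le\phi_\calH/4$ and applies Corollary 5.4 of \cite{oymak_non-asymptotic_2019}, which already packages the $\sqrt{d}$ Hankel blow-up, the rank-$r$ SVD stability, and the $\psi_\calH/\phi_\calH^2$ amplification into a single bound of order $\frac{\psi_\calH}{\phi_\calH^2}\sqrt{r^2d}\,\norm{\wh{\calG}-\calG}$. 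So your decomposition of the regression noise, the Gram-matrix concentration, and the deterministic realization-robustness lemma are all the right ingredients --- they are just the internals of the two cited theorems rather than steps the paper carries out itself. Your route buys self-containedness; the paper's buys brevity and outsources the hardest probabilistic step.

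The one place where your sketch is genuinely under-specified is the cross term $\sum_t z_te_t\t$. You are right that for each fixed $t$ the effective noise $e_t$ is independent of the regressor $z_t$, but the sequence $\{e_t\}$ is \emph{not} a martingale-difference sequence with respect to any filtration that makes $z_t$ predictable: consecutive $e_t$'s share the process noises $w_s$ and the pre-window inputs through the state $x_{t-2d}$, so $\bbE[e_t\mid \calF_{t-1}]\neq 0$ for the natural filtration. A direct application of \cite{abbasi-yadkori_improved_2011} in the style of \Cref{lem:inv_crossConc_stable} therefore does not go through as stated; the actual argument (inside Theorem 3.1 of \cite{oymak_non-asymptotic_2019}) splits the cross term by noise source and time offset and pays a $(2d+1)$-fold union bound, which is precisely where several of the extra $\log$ factors in $\kappa_1$ and $\kappa_2$ originate. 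If you intend to re-derive rather than cite, this is the step that needs a real argument; everything else in your plan is routine propagation and matches the paper.
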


\begin{proof}
    Proof of this Theorem is mainly the applications of Theorems in \cite{oymak_non-asymptotic_2019}. Let $q\coloneqq r+n+m$ and recall $d = \max\left\{r, \lceil\log\frac{1}{\delta}\rceil \right\}$ in \Cref{alg:hokalman}. For simplicity, let $\calG = \calG_d(\calM)$.

    \textbf{Step 1: $\bm{\wh{\calG}}$ Estimation Guarantee. } From \Cref{eq:hokalman_1}, we know that $T \gtrsim dn/(1-\rho_A^{2d+1}) \geq n(2d+1)/(1-\rho_A^{2d+1})$ and $T \gtrsim dn\log^4(dn)\log^2(T) \gtrsim (2d+1)q\log^2((4d+2)q)\log^2(2Tq)$. We then apply Theorem 3.1 of \cite{oymak_non-asymptotic_2019} and get
    \begin{equation}\begin{split}
        \norm{\wh{\calG} - \calG} \lesssim {}& \frac{1}{\sigma_u}\b{\sigma_\eta + \sigma_w\sqrt{\norm{C\b{\sum_{i=0}^{2d-1}A^iA^i\t}C\t}} + \psi_A\norm{C}\norm{A^{2d}}\sqrt{\frac{(2d+1)\norm{\Gamma_{\infty}}}{1-\rho_A^{4d+2}}}}\\
        {}& \cdot\sqrt{\frac{(2d+1)q\log^2((4d+2)q)\log^2(2Tq)}{T}},
    \end{split}\end{equation}
    Here $\Gamma_\infty = \sigma_w^2\sum_{i=0}^{\infty}A^iA^i\t + \sigma_u^2\sum_{i=0}^{\infty}A^iBB\t A^i\t$.
    From Assumption \ref{assmp:sys1}, it is clear that 
    \begin{equation}\begin{split}
        {}& \norm{C\b{\sum_{i=0}^{2d-1}A^iA^i\t}C\t} \leq \psi_C^2\norm{\sum_{i=0}^{2d-1}A^iA^i\t} \leq \psi_C^2\sum_{i=0}^{2d-1} \norm{A^i}^2 \lesssim \frac{\psi_C^2 \psi_A^2}{1-\rho_A^2},\\
        {}& \norm{\Gamma_\infty} \leq \b{\sigma_w^2 + \sigma_u^2\psi_B^2} \sum_{i=0}^{\infty} \norm{A^i}^2 \lesssim \frac{\b{\sigma_w^2 + \sigma_u^2\psi_B^2}\psi_A^2}{1-\rho_A^2}.
    \end{split}\end{equation}
    Substituting back gives
    \begin{equation}\begin{split}
        \norm{\wh{\calG} - \calG} \lesssim {}& \frac{1}{\sigma_u}\b{\sigma_\eta + \sigma_w\sqrt{\frac{\psi_C^2 \psi_A^2}{1-\rho_A^2}} + \psi_A^2\psi_C\rho_A^{2d-1}\sqrt{\frac{d}{1-\rho_A^{4d+2}}\frac{\b{\sigma_w^2 + \sigma_u^2\psi_B^2}\psi_A^2}{1-\rho_A^2}\log\frac{24d}{\delta}}}\sqrt{\frac{dq\log^4(dq)\log^2T}{T}}\\
        \lesssim {}& \frac{1}{\sigma_u}\b{\sigma_\eta + \sigma_w\sqrt{\frac{\psi_C^2 \psi_A^2}{1-\rho_A^2}} + \psi_A^2\psi_C\sqrt{\frac{\b{\sigma_w^2 + \sigma_u^2\psi_B^2}\psi_A^2}{(1-\rho_A^2)^2}d\log\frac{d}{\delta}}}\sqrt{\frac{dq\log^4(dq)\log^2T\log\frac{1}{\delta}}{T}}\\
        \overset{(i)}{\lesssim} {}& \frac{1}{\sigma_u}\b{\sigma_\eta + \psi_A^3\psi_C\sqrt{\frac{\b{\sigma_w^2 + \sigma_u^2\psi_B^2}}{(1-\rho_A^2)^2}}}\sqrt{\frac{d^2q\log^5(dq)\log^2T\log\frac{1}{\delta}}{T}}.\\
        \lesssim {}& \frac{1}{\sigma_u} \b{\sigma_\eta + \psi_A^3\psi_C\sqrt{\frac{\b{\sigma_w^2 + \sigma_u^2\psi_B^2}}{(1-\rho_A^2)^2}}}\sqrt{\frac{r^2q\log^5(rq)\log^2T\log^8\frac{1}{\delta}}{T}}.\\
    \end{split}\end{equation}
    Here $(i)$ is because $d \lesssim r\log(\frac{1}{\delta})$ and $\log(dq) \lesssim \log(rq)\log(\frac{1}{\delta})$.
    The above inequalities hold with probability at least 
    {\small\begin{equation*}
        1 - \b{2\exp(-(2d+1)q) + 3(2Tm)^{-\log(2Tm)\log^2((4d+2)m)} + (2d+1)\b{\exp(-100(2d+1)q) + 2\exp(-100 n \log\frac{24d}{\delta})}}\footnote{The last term is because we use $\tau'=\tau\log(\frac{24d}{\delta})$ in Corollary D.3 of \cite{oymak_non-asymptotic_2019}.}
    \end{equation*}}
    We further simplify the probability as follows
    {\small\begin{equation}\begin{split}
        {}& 1 - \b{2\exp(-(2d+1)q) + 3(2Tm)^{-\log(2Tm)\log^2((4d+2)m)} + (2d+1)\b{\exp(-100(2d+1)q) + 2\exp(-100 n\log\frac{24d}{\delta})}}\\
        \overset{(i)}{\geq} {}& 1 - \frac{\delta}{4} - \b{2\exp(-2dq) + (2d+1)\b{\exp(-100dq) + 2\exp(-100 n\log\frac{24d}{\delta})}}\\
        \overset{(ii)}{\geq} {}& 1 - \frac{\delta}{4} - \frac{\delta}{4} - \b{(2d+1)\b{\exp(-100dq) + 2\exp(-100 n\log\frac{24d}{\delta})}}\\
        \overset{(iii)}{\geq} {}& 1 - \frac{\delta}{4} - \frac{\delta}{4} - \frac{\delta}{4} - (4d+2)\exp\b{-100 n\log\frac{24d}{\delta}}\\
        \geq {}& 1 - \frac{\delta}{4} - \frac{\delta}{4} - \frac{\delta}{4} - \frac{4d+2}{24d}\delta\\
        \geq {}& 1 - \delta. 
    \end{split}\end{equation}}
    Here $(i)$ is because $3(2Tm)^{-\log(2Tm)\log^2(4dm)} \leq 3(2\cdot \frac{1}{\delta})^{-7} \leq \frac{\delta}{4}$, where the first inequality is because $\log^2(4dm)\geq\log^2(16)$. $(ii)$ is because $d \geq \log\frac{1}{\delta}$ and $2\exp(-2dq) \leq 2\exp(-6\log(1/\delta)) \leq 2\delta^6 \leq \delta/4$ for $\delta \in(0,e^{-1})$ and $(iii)$ is because $(2d+1)\exp(-100dq) \leq (2d+1)\exp(-2d-1)\exp(-90dq) \leq \exp(-270d) \leq \delta/4$ for $\delta\in(0,e^{-1})$. 

    \textbf{Step 2: Parameter Recovery Guarantee. } 
    From the above result, we get the following for some constant $c_1$
    \begin{equation}\begin{split}
        \sqrt{d}\norm{\wh{\calG} - \calG} \leq \frac{c_1}{\sigma_u}\b{\sigma_\eta + \psi_A^3\psi_C\sqrt{\frac{\b{\sigma_w^2 + \sigma_u^2\psi_B^2}}{(1-\rho_A^2)^2}}}\sqrt{\frac{r^2q\log^5(rq)\log^2T\log^8\frac{1}{\delta}}{T}}\cdot\sqrt{r\log\frac{1}{\delta}} \leq \frac{\phi_\calH}{4}.
    \end{split}\end{equation}
    Here the last inequality is because of Equation \ref{eq:hokalman_1}. Then we apply Corollary 5.4\footnote{The following inequality can be obtained by substituting all $r\norm{L-\wh{L}}$ by $\frac{r^2\norm{L-\wh{L}}^2}{\sigma_{\min}(L)}$ in the original result. This is valid because the original proof used $\frac{2}{\sqrt{2}-1}\frac{\norm{L-\wh{L}}_F^2}{\sigma_{\min}(L)} \leq 5r\norm{L-\wh{L}}$ in Lemma B.1 of \cite{oymak_non-asymptotic_2019}} from \cite{oymak_non-asymptotic_2019} and get the following for some balanced realization $\bar{A}, \bar{B}, \bar{C}$ and some unitary matrix $U$
    \begin{equation}\begin{split}
        {}& \max\left\{\norm{\bar{A} - U\t \htA U}, \norm{\bar{B} - U\t \htB}, \norm{\bar{C} - \htC U}\right\}\\
        \lesssim {}& \frac{\psi_\calH}{\phi_\calH^2}\sqrt{r^2d}\norm{\wh{\calG}-\calG}\\
        \lesssim {}& \frac{\psi_\calH}{\phi_\calH^2}\frac{\sigma_\eta + \psi_A^3\psi_C\sqrt{\b{\sigma_w^2 + \sigma_u^2\psi_B^2}/(1-\rho_A^2)^2}}{\sigma_u}\sqrt{\frac{r^5q\log^5(rq)\log^2T\log^9\frac{1}{\delta}}{T}}.
    \end{split}\end{equation}
    Since $\barA, \barB, \barC$ is a balanced realization, there exist some invertible matrix $S$ s.t.
    \begin{equation}\begin{split}
        {}& \max\left\{\norm{S^{-1}AS - \htA}, \norm{S^{-1}B - \htB}, \norm{CS - \htC}\right\}\\
        \lesssim {}& \frac{\psi_\calH}{\phi_\calH^2}\frac{\sigma_\eta + \psi_A^3\psi_C\sqrt{\b{\sigma_w^2 + \sigma_u^2\psi_B^2}/(1-\rho_A^2)^2}}{\sigma_u}\sqrt{\frac{r^5q\log^5(rq)\log^2T\log^9\frac{1}{\delta}}{T}}.
    \end{split}\end{equation}
\end{proof}

\subsection{Upper Bounds for \ac{col-algo} (\Cref{alg:1single}) with Oracle \Cref{alg:hokalman}}
\begin{corollary}[\Cref{cor:e2e} Restated]\label{cor:e2e_full}
    Consider $\calM = (r, n, m, A, B, C, \sigma_w^2I, \sigma_\eta^2I)$ and datasets $\calD_1=\calU_1\cup\calY_1,\calD_2=\calU_2\cup\calY_2$ (with length $T_1, T_2$ respectively) where the inputs are sampled independently from $\calN(0,\sigma_u^2I)$. Consider constants defined for $\calM$ in \Cref{sec:hdsysid} and in \Cref{eq:oracle_2}. Suppose $\calM$ satisfies Assumption \ref{assmp:sys1} with constants $\psi_A$ and $\rho_A$.
    If
    \begin{equation}\begin{split}\label{eq:e2e_1}
        &T_1 \gtrsim \tilde{\kappa}_3\cdot n^2r^3, \quad T_2 \gtrsim \tilde{\kappa}_1\cdot r^3(r+m),
    \end{split}\end{equation}
    then $(\htA,\htB,\htC)$ from Algorithm \ref{alg:1single} satisfy the following for some invertible matrix $S$ with probability at least $1-\delta$
    \begin{equation}\begin{split}
        {}& \max\left\{\norm{S^{-1}AS - \htA}, \norm{S^{-1}B - \htB}, \norm{CS - \htC}\right\}\lesssim \tilde{\kappa}_4 \cdot \sqrt{\frac{n}{T_1}}\norm{\htC} + \tilde{\kappa}_2\cdot\sqrt{\frac{r^5(r+m)}{T_2}}.
    \end{split}\end{equation}
    $\kappa_{[4]}$ are detailed below. All of them are problem-related constants independent of system dimensions modulo log factors.
    \begin{equation}\begin{split}
        \tilde{\kappa}_1 = {}& \frac{\psi_\eta+\psi_w\psi_C^2\psi_A^6/(1-\rho_A^2)^2}{\phi_\calH^2\sigma_u^2}\log^5(r(r+m))\log^2T\log^9\frac{1}{\delta},\\
        \tilde{\kappa}_2 ={}&  \frac{\psi_\calH}{\phi_\calH^2}\frac{\sqrt{\psi_\eta} + \psi_A^3\psi_C\sqrt{\psi_w^2/(1-\rho_A^2)}}{\sigma_u}\sqrt{\log^5(r(r+m))\log^2T\log^9\frac{1}{\delta}},\\
        \tilde{\kappa}_3 = {}& \frac{1}{\phi_\calH^2}\max\left\{\tilde{\kappa}_4^2\frac{\psi_A^4\psi_B^2\psi_C^2}{(1-\rho_A^2)\phi_O^2}, ~\b{\frac{\psi_w\psi_\eta\psi_C^2\psi_A^2}{\phi_u\phi_C^2\phi_R^2(1-\rho_A^2)}}^4\log^4(r)\log^8\b{\frac{1}{\delta}}\right\},\\
        \tilde{\kappa}_4 = {}& \frac{\psi_\eta}{\phi_u\phi_C^2\phi_R^2}\sqrt{\log\frac{1}{\delta}}.
    \end{split}\end{equation}
    Here $\phi_\calH=\phi_\calH(\delta)$ and $\psi_\calH=\psi_\calH(\delta)$ are defined in \Cref{eq:oracle_2}.
\end{corollary}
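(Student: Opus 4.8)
The plan is to obtain \Cref{cor:e2e_full} as a direct specialization of \Cref{thm:1single_full}, with the abstract $\ac{oracle}$ instantiated by the Ho-Kalman routine (\Cref{alg:hokalman}) whose finite-sample guarantee is \Cref{thm:9hokalman}. First I would check that the hypotheses of \Cref{thm:1single_full} hold for this instantiation. Since $\calM$ has isotropic observation noise $\sigma_\eta^2 I$ and $\wh{\Phi}_C$ has orthonormal columns, the projected noise $\wh{\Phi}_C\t\eta_t\sim\calN(0,\sigma_\eta^2 I_{r_c})$ stays isotropic, so the auxiliary system $\wh{\calM}$ lies in the regime where \Cref{alg:hokalman} is a legitimate $\ac{oracle}$ in the sense of \Cref{def:idoracle}; moreover $\wh{\calM}$ shares the matrix $A$ with $\calM$ and hence inherits \Cref{assmp:sys1} with the same $\psi_A,\rho_A$.

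The crux is to re-express the oracle constants $\kappa_1(\wh{\calM}),\kappa_2(\wh{\calM})$ delivered by \Cref{thm:9hokalman} in terms of the original system's constants, because the statement asks for bounds depending only on $\calM$ and $\calU_1\cup\calU_2$. Here I would exploit the block factorization $\calH_d(\wh{\calM})=\diag(\wh{\Phi}_C\t,\dots,\wh{\Phi}_C\t)\,\calH_d(\calM)$, and likewise for $\calH_d^-$, where the block diagonal carries $d$ copies of $\wh{\Phi}_C\t$. Since $\wh{\Phi}_C$ is orthonormal this gives at once $\psi_\calH(\wh{\calM})\le\psi_\calH(\calM)$ and $\psi_C(\wh{\calM})=\sigma_1(\wh{\Phi}_C\t C)\le\psi_C$, while $\psi_B$ is unchanged. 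For the matching lower bounds I would invoke \Cref{lem:1col_full}, valid under $T_1\gtrsim\tilde{\kappa}_3 n^2r^3$, to get $\rank(\wh{\Phi}_C)=\rank(C)$ and $\Delta_\Phi\coloneqq\norm{\wh{\Phi}_C^\perp\t\Phi_C}\le\tfrac12$. Writing every Hankel block $CA^iB\in\col(C)$ as $\Phi_C(\Phi_C\t CA^iB)$ yields $\calH_d^-(\wh{\calM})=\diag(\wh{\Phi}_C\t\Phi_C,\dots)\,G$ with $G\coloneqq\diag(\Phi_C\t,\dots)\calH_d^-(\calM)$ an isometric reparametrization of $\calH_d^-(\calM)$; since $\sigma_{\min}(\wh{\Phi}_C\t\Phi_C)=\sqrt{1-\Delta_\Phi^2}\ge\tfrac{\sqrt3}{2}$, this forces $\phi_\calH(\wh{\calM})\gtrsim\phi_\calH(\calM)$ and, by the identical argument on $C$ itself, $\phi_C(\wh{\calM})\gtrsim\phi_C$. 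Finally the observation dimension of $\wh{\calM}$ is $r_c=\rank(C)\le r$, so every $\poly(r,r_c,m)$ collapses to $\poly(r,m)$.

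With these comparisons I would substitute the Ho-Kalman expressions for $\kappa_1,\kappa_2$ into the conclusion of \Cref{thm:1single_full}, replacing the $\wh{\calM}$-quantities by their $\calM$-counterparts up to absolute constants; using $\psi_\eta=\sigma_\eta^2$, $\psi_w=\sigma_w^2+\sigma_u^2\psi_B^2$, $\phi_u=\sigma_u^2$ this reproduces the stated $\tilde{\kappa}_1,\tilde{\kappa}_2$. The column-space constant $\tilde{\kappa}_4$ is inherited verbatim from $\kappa_4$ of \Cref{thm:1single_full}, while the sample-size constant $\tilde{\kappa}_3$ is taken large enough to simultaneously (i) meet the \Cref{lem:1col_full} threshold, (ii) enforce $\Delta_\Phi\le\tfrac12$ so both the Hankel comparisons above and the Step~1 observability argument inside the proof of \Cref{thm:1single_full} go through, and (iii) absorb the extra $\phi_\calH(\calM)^{-2}$ and $\psi_B^2$ factors that surface when the oracle's admissibility requirement $T_2\gtrsim\kappa_1(\wh{\calM})\,r^3(r+m)$ is rewritten through $\calM$-constants.

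I expect the main obstacle to be precisely this constant-chasing rather than any new probabilistic content: the delicate point is the lower bound on $\phi_\calH(\wh{\calM})$, which hinges on $\rank(\wh{\Phi}_C)=\rank(C)$ together with the fact that all Markov parameters $CA^iB$ live in $\col(C)$, so that projection by $\wh{\Phi}_C\t$ degrades the smallest Hankel singular value only by the dimension-independent factor $\sigma_{\min}(\wh{\Phi}_C\t\Phi_C)$. Everything dynamical and statistical is already packaged in \Cref{thm:1single_full}, \Cref{lem:1col_full}, and \Cref{thm:9hokalman}; what remains is careful algebra verifying that each $\wh{\calM}$-dependent factor is controlled by its $\calM$-analogue.
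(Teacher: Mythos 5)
Your proposal is correct and follows the same skeleton as the paper's proof: apply \Cref{lem:1col_full} to control $\Delta_\Phi=\norm{\wh{\Phi}_C^\perp\t\Phi_C}$, instantiate \Cref{thm:1single_full} with the Ho--Kalman oracle of \Cref{thm:9hokalman}, and reduce everything to translating the oracle constants $\kappa_1(\wh{\calM},\calU_2,\delta),\kappa_2(\wh{\calM},\calU_2,\delta)$ into constants of $\calM$ (the paper's \Cref{lem:9hokalman_1}), where the upper bounds $\psi_C(\wh{\calM})\le\psi_C$, $\psi_\calH(\wh{\calM})\le\psi_\calH$ and the collapse of $\poly\b{r,r_c,m}$ to $\poly\b{r,m}$ are immediate and the only delicate point is the lower bound on $\phi_\calH(\wh{\calM})$. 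You handle that one point by a genuinely different argument. The paper (\Cref{lem:hokalman_1}) uses an additive Weyl perturbation, writing $\calH_d^-(\wh{\calM})$ as $\calH_d^-(\calM)$ minus $\diag\b{\wh{\Phi}_C^\perp\b{\wh{\Phi}_C^\perp}\t,\dots}\calH_d^-(\calM)$ and bounding the perturbation by $4\psi_A^2\psi_B\psi_C\Delta_\Phi/(1-\rho_A^2)$; this is why it must impose the stronger threshold $\Delta_\Phi\le\phi_\calH(1-\rho_A^2)/(8\psi_A^2\psi_B\psi_C)$, which is precisely the origin of the $\phi_\calH^{-2}\psi_A^4\psi_B^2\psi_C^2$ factor in $\tilde{\kappa}_3$ --- note this is a condition on $T_1$, not (as your item (iii) suggests) an artifact of rewriting the $T_2$ admissibility requirement, although that slip is cosmetic. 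Your multiplicative factorization $\calH_d^-(\wh{\calM})=\diag\b{\wh{\Phi}_C\t\Phi_C,\dots}\cdot\diag\b{\Phi_C\t,\dots}\calH_d^-(\calM)$, combined with the principal-angle identity $\sigma_{\min}(\wh{\Phi}_C\t\Phi_C)=\sqrt{1-\Delta_\Phi^2}$ and the fact that an invertible square left factor degrades the $r$-th singular value by at most its smallest singular value, is valid (it needs $\rank(\wh{\Phi}_C)=\rank(C)$, which \Cref{lem:1col_full} supplies, and the observation that every Markov parameter $CA^iB$ lies in $\col(C)$ so that $\diag\b{\Phi_C\t,\dots}\calH_d^-(\calM)$ has the same nonzero singular values as $\calH_d^-(\calM)$). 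What it buys is a dimension- and condition-number-free requirement $\Delta_\Phi\le 1/2$, hence a factor-for-factor cheaper $\tilde{\kappa}_3$; since the corollary is stated with the larger $\tilde{\kappa}_3$, your route still proves it as written. The remaining constant-chasing you describe matches the paper's.
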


\begin{proof}
    From \Cref{eq:e2e_1}, $T_1 \gtrsim \tilde{\kappa}_3 n^2r^3$, which satisfies the condition of \Cref{lem:1col_full}. We apply \Cref{lem:1col_full} on $(\calD_1, \Sigma_\eta)$ and get the following with probability at least $1-\frac{\delta}{2}$ for some constant $c_1$
    \begin{equation}
        \norm{\wh{\Phi}_C^\perp\t \Phi_C} \leq c_1 \tilde{\kappa}_4 \sqrt{\frac{n}{T_1}} \leq \frac{\phi_\calH(1-\rho_A^2)}{8\psi_A^2\psi_B\psi_C}.
    \end{equation}
    Then from \Cref{lem:9hokalman_1} we know $\kappa_1\b{\wh{\calM}, \calU_2,\delta}
    \leq 4\kappa_1\b{\calM, \calU_2,\delta}, \quad \kappa_2\b{\wh{\calM}, \calU_2,\delta} \leq 
    4\kappa_2\b{\calM, \calU_2,\delta}$.
    Here $\wh{\calM}=(r,\rank(\wh{\Phi}_C\t C),m,A,B,\wh{\Phi}_C\t C,\sigma_w^2I, \sigma_\eta^2I)$.
    Moreover, from their definitions, we know that $\kappa_1(\calM,\calU_2,\delta) \leq \tilde{\kappa}_1$ and $\kappa_2(\calM,\calU_2,\delta) \leq \tilde{\kappa}_2$.
    Combining with \Cref{eq:e2e_1} gives
    \begin{equation}\begin{split}
        T_2 \gtrsim {}& \tilde{\kappa}_1r^3(r+m) \geq \kappa_1\b{\calM, \calU_2,\delta} r^3(r+m)\\
        \gtrsim {}& \kappa_1\b{\wh{\calM}, \calU_2,\delta} r^3(r+\rank(\wh{\Phi}_C\t C)+m).
    \end{split}\end{equation} 
    From the proof of \Cref{thm:1single} (step 1), we know that we can apply \Cref{alg:hokalman} on $\wh{\calM}$. Applying \Cref{alg:hokalman} gives the following with probability at least $1-\frac{\delta}{2}$ from \Cref{thm:9hokalman}
    \begin{equation}\begin{split}
        {}& \max\left\{\norm{S^{-1}AS - \htA}, \norm{S^{-1}B - \htB}, \norm{\wh{\Phi}_C\t CS - \tilC}\right\}\\
        \lesssim {}& \kappa_2(\wh{\calM},\calU_2,\delta)\cdot\sqrt{\frac{r^5(r+\rank(\wh{\Phi}_C\t C)+m)}{T_2}}\\
        \lesssim {}& \kappa_2(\calM,\calU_2,\delta)\cdot\sqrt{\frac{r^5(r+m)}{T_2}}\\
        \lesssim {}& \tilde{\kappa}_2\cdot \sqrt{\frac{r^5(r+m)}{T_2}}\\
    \end{split}\end{equation}
    Then from the proof of \Cref{thm:1single} (step 2), we know that 
    \begin{equation}\begin{split}
        {}& \max\left\{\norm{S^{-1}AS - \htA}, \norm{S^{-1}B - \htB}, \norm{CS - \htC}\right\}\\
        \lesssim {}& \tilde{\kappa}_2\cdot\sqrt{\frac{r^5(r+m)}{T_2}} + \tilde{\kappa}_4 \cdot \sqrt{\frac{n}{T_1}}\norm{\htC}.
    \end{split}\end{equation}
\end{proof}

\subsection{Other Lemmas for $\kappa_1$ and $\kappa_2$}

\begin{lemma}\label{lem:9hokalman_1}
    Fix any $\delta$. Consider system $\calM=(r,n,m,A,B,C,\sigma_w^2I,\sigma_\eta^2)$ and independent inputs $\calU=\{u_t\}_{t=0}^{T-1}$ with $u_t\sim\calN(0,\sigma_u^2I)$. Let $\wh{\calM}=(r,\rank(\wh{\Phi}_C\t C),m,A,B,\wh{\Phi}_C\t C,\sigma_w^2I, \sigma_\eta^2I)$, where $\wh{\Phi}_C$ is a orthonormal matrix satisfying
    \begin{equation}
        \norm{\b{\wh{\Phi}_C^\perp}\t \Phi_C} \leq \Delta_{\Phi} \leq  \frac{\phi_{\calH}(1-\rho_A^2)}{8\psi_A^2\psi_B\psi_C}.
    \end{equation}
    Let $\kappa_1(\cdot, \calU,\delta), \kappa_2(\cdot, \calU,\delta)$ be defined as in \Cref{eq:hokalman_1,eq:hokalman_2}. Then we have
    \begin{equation}\begin{split}
        \kappa_1\b{\wh{\calM}, \calU,\delta}
        \leq 4\kappa_1\b{\calM, \calU,\delta}, \quad \kappa_2\b{\wh{\calM}, \calU,\delta} \leq 
        4\kappa_2\b{\calM, \calU,\delta}
    \end{split}\end{equation}
\end{lemma}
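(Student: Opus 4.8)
The plan is to track exactly which ingredients of the closed forms for $\kappa_1,\kappa_2$ in \Cref{thm:9hokalman} actually change when $\calM$ is replaced by its projection $\wh{\calM}$. Since $\wh{\calM}$ shares the matrices $A,B$ and the scalars $\sigma_w^2,\sigma_\eta^2,\sigma_u^2$ (hence also $\psi_A,\rho_A,\psi_B$) with $\calM$, the only system-dependent quantities that can change are $\psi_C=\sigma_1(C)$, $\psi_\calH=\sigma_1\b{\calH_d(\calM)}$, $\phi_\calH=\sigma_{\min}\b{\calH_d^-(\calM)}$, together with the observation dimension inside the logarithms, which for $\wh{\calM}$ equals $\rank\b{\wh{\Phi}_C\t C}\le r\le n$ and therefore can only shrink the log factors. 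So it suffices to establish $\psi_C(\wh{\calM})\le\psi_C(\calM)$, $\psi_\calH(\wh{\calM})\le\psi_\calH(\calM)$, and $\phi_\calH(\wh{\calM})\ge\tfrac12\phi_\calH(\calM)$.

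The first two bounds are immediate and monotone. Since $\wh{\Phi}_C$ has orthonormal columns, $\norm{\wh{\Phi}_C\t}=1$, so $\sigma_1\b{\wh{\Phi}_C\t C}\le\sigma_1(C)$. For the Hankel matrices, note that replacing each Markov block $CA^iB$ by $\wh{\Phi}_C\t CA^iB$ amounts to $\calH_d(\wh{\calM})=\diag\b{\wh{\Phi}_C\t,\dots,\wh{\Phi}_C\t}\,\calH_d(\calM)$, and the block-diagonal factor has operator norm $1$, giving $\sigma_1\b{\calH_d(\wh{\calM})}\le\sigma_1\b{\calH_d(\calM)}$. With the normalization convention $\psi\ge1$ these inequalities survive the $\max\{1,\cdot\}$ clipping.

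The crux — and the one place the smallness hypothesis on $\Delta_\Phi$ is used — is the lower bound on $\phi_\calH(\wh{\calM})$. The structural fact to exploit is that every Markov block $CA^iB$ has its columns in $\col(C)=\col(\Phi_C)$, so every column of $\calH_d^-(\calM)$ lies in the block-repeated space $\col\b{\diag(\Phi_C,\dots,\Phi_C)}$. On that subspace the block projection $\diag\b{\wh{\Phi}_C\t,\dots,\wh{\Phi}_C\t}$ acts as $\diag\b{\wh{\Phi}_C\t\Phi_C,\dots,\wh{\Phi}_C\t\Phi_C}$, and from the orthogonal decomposition $\norm{\Phi_C u}^2=\norm{\wh{\Phi}_C\t\Phi_C u}^2+\norm{\b{\wh{\Phi}_C^\perp}\t\Phi_C u}^2$ together with $\norm{\b{\wh{\Phi}_C^\perp}\t\Phi_C}\le\Delta_\Phi$ one gets $\sigma_{\min}\b{\wh{\Phi}_C\t\Phi_C}\ge\sqrt{1-\Delta_\Phi^2}$. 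Consequently $\norm{\calH_d^-(\wh{\calM})v}\ge\sqrt{1-\Delta_\Phi^2}\,\norm{\calH_d^-(\calM)v}$ for every $v$; this both preserves the kernel (so $\rank\b{\calH_d^-(\wh{\calM})}=r$ and $\wh{\calM}$ stays a minimal rank-$r$ realization) and yields $\phi_\calH(\wh{\calM})\ge\sqrt{1-\Delta_\Phi^2}\,\phi_\calH(\calM)$. Under the hypothesis $\Delta_\Phi\le\frac{\phi_\calH(1-\rho_A^2)}{8\psi_A^2\psi_B\psi_C}$, the conventions $\phi_\calH\le1$, $\psi_A,\psi_B,\psi_C\ge1$ and $1-\rho_A^2<1$ force $\Delta_\Phi\le\tfrac18$, so $\sqrt{1-\Delta_\Phi^2}\ge\tfrac12$ and hence $\phi_\calH(\wh{\calM})\ge\tfrac12\phi_\calH(\calM)$. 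Equivalently, one may pass through the full-dimensional system $\wh{\Phi}_C\wh{\Phi}_C\t\calM$, which shares the nonzero singular values of $\wh{\calM}$, and bound $\norm{\calH_d^-(\wh{\Phi}_C\wh{\Phi}_C\t\calM)-\calH_d^-(\calM)}$ blockwise by $\norm{\b{\wh{\Phi}_C^\perp}\t CA^iB}\le\Delta_\Phi\psi_C\psi_B\psi_A\rho_A^{i-1}$ and invoke Weyl's inequality; this route consumes the precise constant in the hypothesis.

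Finally I would substitute these three bounds into the formulas of \Cref{thm:9hokalman}. In $\kappa_1$ the factor $\psi_C^2$ is non-increasing, $1/\phi_\calH^2$ increases by at most a factor $4$, and the logarithms do not grow; in $\kappa_2$ the factors $\psi_\calH$ and $\psi_C$ are non-increasing while $1/\phi_\calH^2$ again increases by at most $4$. Every remaining factor is identical for $\calM$ and $\wh{\calM}$, so $\kappa_1(\wh{\calM},\calU,\delta)\le4\kappa_1(\calM,\calU,\delta)$ and $\kappa_2(\wh{\calM},\calU,\delta)\le4\kappa_2(\calM,\calU,\delta)$, as claimed. The main obstacle is entirely the $\phi_\calH$ lower bound: one must certify that projecting the observations onto the estimated column space does not collapse the minimum nonzero singular value of the Hankel operator, which is exactly what the smallness assumption on $\Delta_\Phi$ guarantees.
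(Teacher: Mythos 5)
Your proposal is correct, and the bookkeeping part (only $\psi_C$, $\psi_\calH$, $\phi_\calH$ and the dimension inside the logarithms can change; the first two are non-increasing because $\diag\b{\wh{\Phi}_C\t,\dots,\wh{\Phi}_C\t}$ has operator norm $1$) matches the paper's proof essentially verbatim. Where you genuinely diverge is the key step, the lower bound on $\phi_\calH(\wh{\calM})$, which the paper delegates to \Cref{lem:hokalman_1}: there the authors write $\calH_d^-(\wh{\calM})$ (up to the left orthonormal factor) as $\calH_d^-(\calM)-\diag\b{\wh{\Phi}_C^\perp\wh{\Phi}_C^{\perp\intercal},\dots}\calH_d^-(\calM)$, bound the perturbation by factoring it into the observability-type and controllability-type blocks with geometric decay (yielding $4\psi_A^2\psi_B\psi_C\Delta_\Phi/(1-\rho_A^2)$), and invoke Weyl's inequality --- this is exactly your ``equivalent'' second route, and it is the one that actually consumes the specific constant $\frac{\phi_\calH(1-\rho_A^2)}{8\psi_A^2\psi_B\psi_C}$ in the hypothesis. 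Your primary route instead exploits that every column of $\calH_d^-(\calM)$ lies in $\col\b{\diag(\Phi_C,\dots,\Phi_C)}$, so the projection acts on it through $\wh{\Phi}_C\t\Phi_C$, whose smallest singular value is at least $\sqrt{1-\Delta_\Phi^2}$ by the orthogonal decomposition $1=\|\wh{\Phi}_C\t\Phi_C u\|^2+\|(\wh{\Phi}_C^\perp)\t\Phi_C u\|^2$; this gives the multiplicative bound $\phi_\calH(\wh{\calM})\geq\sqrt{1-\Delta_\Phi^2}\,\phi_\calH(\calM)$, needs only $\Delta_\Phi\leq 1/8$ (which the hypothesis implies via the normalization conventions), avoids the $\psi_A,\psi_B,\rho_A$ constants entirely, is tighter than the paper's factor $1/2$, and also immediately gives the kernel/rank preservation that the paper handles separately in Step 1 of \Cref{thm:1single_full}. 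Both arguments are valid; the paper's additive Weyl route is the one its hypothesis is calibrated to, while yours buys a cleaner and dimension-constant-free perturbation bound at no cost.
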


\begin{proof}
    For system $\calM$, we list all related parameters as follows\footnote{Here we omit $\sigma_u$ since it doesn't change between the two systems.}.
    \begin{equation}
        \sigma_w^2, \sigma_\eta^2, \rho_A, \psi_A, \psi_B, \psi_C, \phi_{\calH}, \psi_{\calH}. 
    \end{equation}
    For the projected system $\wh{\calM}$, we denote corresponding parameters as     \begin{equation}
        \wh{\sigma}_w^2, \wh{\sigma}_\eta^2, \wh{\rho}_A, \wh{\psi}_A, \wh{\psi}_B, \wh{\psi}_C, \wh{\phi}_{\calH}, \wh{\psi}_{\calH}. 
    \end{equation}
    It is clear that $\wh{\sigma}_w^2, \wh{\sigma}_\eta^2, \wh{\rho}_A, \wh{\psi}_A, \wh{\psi}_B$ remain unchanged. We only need to consider parameters $\wh{\psi}_C, \wh{\phi}_{\calH}, \wh{\psi}_{\calH}$. 
    We know that
    \begin{equation}\begin{split}
        \wh{\psi}_C = {}& \norm{\wh{\Phi}_C\t C} \leq \norm{\wh{\Phi}_C} \norm{C} = \psi_C.\\
    \end{split}\end{equation}
    Moreover, for $d=\max\{r,\lceil \frac{1}{\delta}\rceil\}$, we have
    \begin{equation}\begin{split}
        \norm{\calH_d^+(\wh{\calM})} = \norm{\diag\b{{\wh{\Phi}_C\t, \dots, \wh{\Phi}_C\t}}\calH_d^+(\calM)} \leq \norm{\wh{\Phi}_C}\norm{\calH_d^+(\calM)} = \psi_{\calH}.
    \end{split}\end{equation}
    Therefore, $\wh{\psi}_{\calH} \leq \psi_{\calH}$. 
    For $\wh{\phi}_{\calH}$, we know from \Cref{lem:hokalman_1} that $\wh{\phi}_{\calH} \geq \phi_{\calH}/2$ for $\Delta_{\Phi} \leq  \frac{\phi_{\calH}(1-\rho_A^2)}{8\psi_A^2\psi_B\psi_C}$. 
    
    These bounds imply that
    \begin{equation}\begin{split}
    \kappa_2\b{\calM, \calU,\delta} 
    \geq {}& \frac{1}{4}\kappa_2\b{\wh{\calM}, \calU,\delta}, \quad 
    \kappa_1\b{\calM, \calU,\delta} 
    \geq \frac{1}{4}\kappa_1\b{\wh{\calM}, \calU,\delta}.
    \end{split}\end{equation}
\end{proof}

\begin{lemma}\label{lem:hokalman_1}
    Consider the same setting as \Cref{lem:9hokalman_1}. Consider constants $\phi_{\calH}$ and $\wh{\phi}_{\calH}$ of $\calM$ and $\wh{\calM}$. 
    Then if $\Delta_\Phi \leq  \frac{\phi_{\calH}(1-\rho_A^2)}{8\psi_A^2\psi_B\psi_C}$, then $\wh{\phi}_{\calH} \geq \frac{\phi_{\calH}}{2}$.
\end{lemma}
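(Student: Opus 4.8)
The plan is to exploit the fact that the projected lower Hankel matrix is merely a row-compression of the original one, and to control the loss incurred by this compression through an additive singular-value perturbation bound. Concretely, I would write $P = \diag(\wh{\Phi}_C\t,\dots,\wh{\Phi}_C\t)$ (with $d$ diagonal blocks), so that the lower Hankel block of $\wh{\calM}$ factors as $\calH_d^-(\wh{\calM}) = P\,\calH_d^-(\calM)$. Since $\wh{\Phi}_C$ has orthonormal columns, $P$ has orthonormal rows, hence $P\t$ is an isometry and $\calH_d^-(\wh{\calM})$ shares all of its nonzero singular values with $P\t P\,\calH_d^-(\calM) = \Pi\,\calH_d^-(\calM)$, where $\Pi = \diag(\wh{\Phi}_C\wh{\Phi}_C\t,\dots)$. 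Thus $\wh{\phi}_{\calH} = \sigma_{\min}(\Pi H)$ with $H \coloneqq \calH_d^-(\calM)$, and writing $\Pi = I - \Pi^\perp$ with $\Pi^\perp = \diag(\wh{\Phi}_C^\perp(\wh{\Phi}_C^\perp)\t,\dots)$ reduces the whole claim to bounding the perturbation $\norm{\Pi^\perp H}$.

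The second step is to bound $\norm{\Pi^\perp H}$ in operator norm. The key structural observation is the factorization $H = O_d R_d$ into the truncated observability matrix $O_d$ (with block rows $CA^i$) and controllability matrix $R_d = \begin{bmatrix} B & AB & \cdots & A^{d-1}B\end{bmatrix}$, so that $\norm{\Pi^\perp H} \le \norm{\Pi^\perp O_d}\,\norm{R_d}$. Using $C = \Phi_C\Phi_C\t C$ together with the subspace-angle bound $\norm{(\wh{\Phi}_C^\perp)\t\Phi_C} \le \Delta_\Phi$, each block satisfies $\norm{(\wh{\Phi}_C^\perp)\t CA^i} \le \Delta_\Phi\psi_C\norm{A^i}$; summing the squares and invoking \Cref{assmp:sys1} gives $\norm{\Pi^\perp O_d}^2 = \norm{O_d\t\Pi^\perp O_d} \le \Delta_\Phi^2\psi_C^2\sum_{i\ge 0}\norm{A^i}^2 \lesssim \Delta_\Phi^2\psi_C^2\psi_A^2/(1-\rho_A^2)$, and the same geometric-series argument yields $\norm{R_d}^2 \lesssim \psi_B^2\psi_A^2/(1-\rho_A^2)$. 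Multiplying the two estimates produces $\norm{\Pi^\perp H} \le \frac{2\psi_A^2\psi_B\psi_C}{1-\rho_A^2}\Delta_\Phi$, which under the hypothesis $\Delta_\Phi \le \frac{\phi_{\calH}(1-\rho_A^2)}{8\psi_A^2\psi_B\psi_C}$ is at most $\phi_{\calH}/4$.

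Finally, since minimality makes $H$ rank $r$ with $\sigma_r(H) = \phi_{\calH}$, Weyl's inequality for singular values gives $\sigma_r(\Pi H) \ge \sigma_r(H) - \norm{\Pi^\perp H} \ge \phi_{\calH} - \phi_{\calH}/4 = \tfrac{3}{4}\phi_{\calH}$; as this is strictly positive it is precisely the smallest nonzero singular value $\wh{\phi}_{\calH}$, and in particular $\wh{\phi}_{\calH} \ge \phi_{\calH}/2$, as claimed. I expect the main subtlety to lie in the first step, namely justifying that the row-compression by $P$ preserves nonzero singular values so that one may pass from $\sigma_{\min}(PH)$ to $\sigma_{\min}(\Pi H)$, and in correctly matching singular-value indices in Weyl's inequality given that $H$ is heavily rank-deficient ($\rank H = r \ll dm$). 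The operator-norm estimates of the second step, while the technical core, become routine geometric sums once the $O_d R_d$ factorization is in place.
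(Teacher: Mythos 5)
Your proposal is correct and follows essentially the same route as the paper's proof: pass from the row-compression $P\calH_d^-(\calM)$ to the projection $\Pi\calH_d^-(\calM)$, factor the perturbation $\Pi^\perp\calH_d^-(\calM)$ through the truncated observability and controllability matrices, bound each factor by a geometric series via Assumption \ref{assmp:sys1}, and conclude with a singular-value perturbation (Weyl/Stewart) bound. The only quibble is that the product of your two intermediate estimates gives $\tfrac{4\psi_A^2\psi_B\psi_C}{1-\rho_A^2}\Delta_\Phi$ rather than the stated $\tfrac{2\psi_A^2\psi_B\psi_C}{1-\rho_A^2}\Delta_\Phi$, but the hypothesis on $\Delta_\Phi$ still yields $\wh{\phi}_{\calH}\geq\phi_{\calH}/2$ exactly as in the paper.
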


\begin{proof}
    Recall that $d = $ From the definition of $\wh{\phi}_\calH$, we know that 
    \begin{equation}\begin{split}
        {}& \sigma_{\min}\b{\diag\b{{\wh{\Phi}_C\t, \dots \wh{\Phi}_C\t}}\calH_d^-(\calM)} = \sigma_{\min}\b{\diag\b{\wh{\Phi}_C\wh{\Phi}_C\t, \dots, \wh{\Phi}_C\wh{\Phi}_C\t}\calH_d^-(\calM)}\\
        = {}& \sigma_{\min}\b{\calH_d^-(\calM) - \diag\b{\wh{\Phi}_C^\perp\wh{\Phi}_C^\perp\t, \dots, \wh{\Phi}_C^\perp\wh{\Phi}_C^\perp\t} \calH_d^-(\calM)}\\
        \geq {}& \sigma_{\min}\b{\calH_d^-(\calM)} - \sigma_{1}\b{\diag\b{\wh{\Phi}_C^\perp\wh{\Phi}_C^\perp\t, \dots, \wh{\Phi}_C^\perp\wh{\Phi}_C^\perp\t} \calH_d^-(\calM)}\\
    \end{split}\end{equation}
    Notice that 
    \begin{equation}\begin{split}
        \diag\b{\wh{\Phi}_C^\perp\wh{\Phi}_C^\perp\t, \dots, \wh{\Phi}_C^\perp\wh{\Phi}_C^\perp\t} \calH_d^-(\calM) = \begin{bmatrix}
            \wh{\Phi}_C^\perp\wh{\Phi}_C^\perp\t C\\
            \wh{\Phi}_C^\perp\wh{\Phi}_C^\perp\t CA\\
            \vdots\\
            \wh{\Phi}_C^\perp\wh{\Phi}_C^\perp\t CA^{d-1}
        \end{bmatrix} \begin{bmatrix}
            B & AB & \cdots & A^{d-1}B
        \end{bmatrix}
    \end{split}\end{equation}
    From Equation \ref{eq:1_col_full_2}, we directly know the following 
    \begin{equation}\begin{split}
        \norm{\begin{bmatrix}
            \wh{\Phi}_C^\perp\wh{\Phi}_C^\perp\t C\\
            \wh{\Phi}_C^\perp\wh{\Phi}_C^\perp\t CA\\
            \vdots\\
            \wh{\Phi}_C^\perp\wh{\Phi}_C^\perp\t CA^{d-1}
        \end{bmatrix}}
        \leq {}& \frac{2\psi_A\psi_C}{\sqrt{1-\rho_A^2}}\Delta_\Phi.
    \end{split}\end{equation}
    Moreover, 
    \begin{equation}
        \norm{\begin{bmatrix}
            B & AB & \cdots A^{d-1}B
        \end{bmatrix}} \leq \sqrt{\sum_{i=0}^{d-1}\norm{A^iBB\t A^i\t}} \leq \frac{2\psi_A\psi_B }{\sqrt{1-\rho_A^2}}.
    \end{equation}
    Substituting back gives
    \begin{equation}
        \wh{\phi}_{\calH} \geq \phi_{\calH} - 4\frac{\psi_A^2\psi_B\psi_C}{(1-\rho_A^2)}\Delta_{\Phi}. 
    \end{equation}
    Since $\Delta_\Phi \leq \frac{\phi_{\calH}(1-\rho_A^2)}{8\psi_A^2\psi_B\psi_C}$, we know $\wh{\phi}_{\calH} \geq \phi_{\calH}/2$.
\end{proof}

\end{document}